\documentclass[11pt]{article}

\usepackage[utf8]{inputenc} 
\usepackage[T1]{fontenc} 
\usepackage{amsthm}
\usepackage{mathtools}

\usepackage{graphicx} 
\usepackage{array} 

\usepackage{amsmath, amssymb, amsfonts, verbatim}
\usepackage{hyphenat,epsfig,subcaption,multirow}
\usepackage{nicefrac}
\usepackage{paralist}
\usepackage{enumitem}

\usepackage[font=footnotesize,labelfont=bf]{caption}

\usepackage[usenames,dvipsnames]{xcolor}
\usepackage[ruled]{algorithm2e}

\DeclareFontFamily{U}{mathx}{\hyphenchar\font45}
\DeclareFontShape{U}{mathx}{m}{n}{
      <5> <6> <7> <8> <9> <10>
      <10.95> <12> <14.4> <17.28> <20.74> <24.88>
      mathx10
      }{}
\DeclareSymbolFont{mathx}{U}{mathx}{m}{n}
\DeclareMathSymbol{\bigtimes}{1}{mathx}{"91}

\usepackage{tcolorbox}
\tcbuselibrary{skins,breakable}
\tcbset{enhanced jigsaw}

\usepackage[normalem]{ulem}
\usepackage[compact]{titlesec}

\definecolor{DarkRed}{rgb}{0.5,0.1,0.1}
\definecolor{DarkBlue}{rgb}{0.1,0.1,0.5}
\definecolor{RURed}{rgb}{0.8,0.1,0.1}

\newcommand{\chen}[1]{\textcolor{RURed}{[\textbf{Chen:} #1]}}

\usepackage{nameref}
\definecolor{ForestGreen}{rgb}{0.1333,0.5451,0.1333}
\definecolor{Red}{rgb}{0.9,0,0}
\usepackage[linktocpage=true,
	pagebackref=true,colorlinks,
	linkcolor=DarkRed,citecolor=ForestGreen,
	bookmarks,bookmarksopen,bookmarksnumbered]
	{hyperref}
\usepackage[noabbrev,nameinlink]{cleveref}
\crefname{property}{property}{Property}
\creflabelformat{property}{(#1)#2#3}
\crefname{equation}{eq}{Eq}
\creflabelformat{equation}{(#1)#2#3}

\usepackage{bm}
\usepackage{url}
\usepackage{xspace}
\usepackage[mathscr]{euscript}

\usepackage{tikz}
\usetikzlibrary{arrows}
\usetikzlibrary{arrows.meta}
\usetikzlibrary{shapes}
\usetikzlibrary{backgrounds}
\usetikzlibrary{positioning}
\usetikzlibrary{decorations.markings}
\usetikzlibrary{patterns}
\usetikzlibrary{calc}
\usetikzlibrary{fit}
\usetikzlibrary{snakes}
\tikzset{vertex/.style={circle, black, fill=Yellow, line width=1pt, draw, minimum width=8pt, minimum height=8pt, inner sep=0pt}}
	
	
\usepackage{mdframed}

\usepackage[noend]{algpseudocode}
\makeatletter
\def\BState{\State\hskip-\ALG@thistlm}
\makeatother

\usepackage{cite}
\usepackage{enumitem}
\usepackage{todonotes}

\usepackage[margin=1in]{geometry}

\newtheorem{theorem}{Theorem}
\newtheorem{lemma}{Lemma}[section]
\newtheorem{proposition}[lemma]{Proposition}
\newtheorem{corollary}[lemma]{Corollary}
\newtheorem{claim}[lemma]{Claim}
\newtheorem{fact}[lemma]{Fact}

\newtheorem{definition}[lemma]{Definition}

\newtheorem{problem}{Problem}

\newtheorem*{claim*}{Claim}
\newtheorem*{theorem*}{Theorem}
\newtheorem*{proposition*}{Proposition}
\newtheorem*{lemma*}{Lemma}
\newtheorem*{problem*}{Problem}

\crefname{lemma}{Lemma}{Lemmas}
\crefname{claim}{Claim}{Claims}

\newtheorem{mdresult}{Result}
\newenvironment{result}{\begin{mdframed}[backgroundcolor=lightgray!40,topline=false,rightline=false,leftline=false,bottomline=false,innertopmargin=2pt, innerleftmargin=10pt]\begin{mdresult}}{\end{mdresult}\end{mdframed}}

\newtheorem{remark}[lemma]{Remark}
\newtheorem*{remark*}{Remark}
\newtheorem{observation}[lemma]{Observation}

\newtheoremstyle{restate}{}{}{\itshape}{}{\bfseries}{~(restated).}{.5em}{\thmnote{#3}}
\theoremstyle{restate}

\theoremstyle{definition}
\newtheorem{mdalg}{algorithm}

\newtheorem{mddist}{Distribution}

\allowdisplaybreaks

\renewcommand{\qed}{\nobreak \ifvmode \relax \else
      \ifdim\lastskip<1.5em \hskip-\lastskip
      \hskip1.5em plus0em minus0.5em \fi \nobreak
      \vrule height0.75em width0.5em depth0.25em\fi}

\setlength{\parskip}{3pt}


\newcommand{\tvd}[2]{\ensuremath{\norm{#1 - #2}_{\mathrm{tvd}}}}
\newcommand{\Ot}{\ensuremath{\widetilde{O}}}
\newcommand{\Omt}{\ensuremath{\widetilde{\Omega}}}
\newcommand{\eps}{\ensuremath{\varepsilon}}

\newcommand{\bracket}[1]{\left[#1\right]}
\newcommand{\paren}[1]{\ensuremath{\left(#1\right)}\xspace}
\newcommand{\card}[1]{\left\vert{#1}\right\vert}
\newcommand{\Omgt}{\ensuremath{\widetilde{\Omega}}}

\newcommand{\IR}{\ensuremath{\mathbb{R}}}

\newcommand{\norm}[1]{\ensuremath{\|#1\|}}

\newcommand{\expect}[1]{\Exp\bracket{#1}}

\newcommand{\set}[1]{\ensuremath{\left\{ #1 \right\}}}
\newcommand{\poly}{\mbox{\rm poly}}
\newcommand{\polylog}{\textnormal{polylog}\xspace}

\newcommand{\OPT}{\ensuremath{\mbox{\sc opt}}\xspace}

\newcommand{\ALG}{\ensuremath{\mbox{\sc alg}}\xspace}
\newcommand{\alg}{\ensuremath{\mathcal{A}}\xspace}

\DeclareMathOperator*{\Exp}{\ensuremath{{\mathbb{E}}}}
\DeclareMathOperator*{\Prob}{\ensuremath{\textnormal{Pr}}}
\renewcommand{\Pr}{\Prob}

\newenvironment{tbox}{\begin{tcolorbox}[
		enlarge top by=5pt,
		enlarge bottom by=5pt,
		 breakable,
		 boxsep=2pt,
                  left=5pt,
                  right=7pt,
                  top=10pt,
                  arc=0pt,
                  boxrule=1pt,toprule=1pt,
                  colback=white
                  ]
	}
{\end{tcolorbox}}

\newcommand{\event}{\ensuremath{\mathcal{E}}}

\newcommand{\supp}[1]{\ensuremath{\textnormal{\text{supp}}(#1)}}
\newcommand{\distribution}[1]{\ensuremath{\textnormal{dist}(#1)}\xspace}

\newcommand{\kl}[2]{\ensuremath{\mathbb{D}(#1~||~#2)}}
\newcommand{\II}{\ensuremath{\mathbb{I}}}

\newcommand{\mireal}[1][]{
  \ifx\relax#1\relax%
    \II(\mione \,; \mitwo)%
  \else%
    \II(\mione \,; \mitwo\mid #1)%
  \fi
}



\newcommand{\prot}{\pi}
\newcommand{\Prot}{\Pi}

\newcommand{\cost}{\ensuremath{\mathsf{C}}}

\newcommand{\bias}{\ensuremath{\mathrm{bias}}}

\newcommand{\TT}{\ensuremath{\mathcal{T}}}

\renewcommand{\cost}{\ensuremath{\textnormal{\textbf{cost}}}\xspace}

\newcommand{\leaves}{\ensuremath{\textnormal{\textbf{leaf-nodes}}}\xspace}

\newcommand{\cut}{\ensuremath{\textnormal{\textbf{cut}}}\xspace}

\renewcommand{\OPT}{\textnormal{\textsf{OPT}}\xspace}
\newcommand{\hf}{\ensuremath{\hat{f}}}

\title{Hierarchical Clustering in Graph Streams:\\ Single-Pass Algorithms and Space Lower Bounds}
 \author{Sepehr Assadi\footnote{(sepehr.assadi@rutgers.edu) Department of Computer Science, Rutgers University. Research supported in part by a NSF CAREER Grant CCF-2047061, a gift from Google Research, 
 and a Fulcrum award from Rutgers Research Council. 
 } \and 
 Vaggos Chatziafratis\footnote{(vaggos@stanford.edu). Department of Computer Science and Engineering, University of California -- Santa Cruz (UCSC). Part of the work done while this author was a FODSI fellow at  MIT and Northeastern.} \and
   Jakub Łącki\footnote{(jlacki@google.com) Google Research.} \and
 Vahab Mirrokni\footnote{(mirrokni@google.com) Google Research.} \and
 Chen Wang\footnote{(chen.wang.cs@rutgers.edu) Department of Computer Science, Rutgers University. Research supported in part by a NSF CAREER Grant CCF-2047061, and a gift from Google Research.}
}

\date{}

\begin{document}

\maketitle

\pagenumbering{roman}

\begin{abstract}
	The Hierarchical Clustering (HC) problem consists of building a hierarchy of clusters to represent a given dataset. Motivated by the modern large-scale applications, we study the problem in the \emph{streaming model}, in which the memory is heavily limited and only a single or very few passes over the input are allowed. 
    Specifically, we investigate whether a good  hierarchical clustering can be obtained, or at least whether we can approximately estimate the value of the optimal hierarchy. To measure the quality of a hierarchy, we use the HC minimization objective introduced by Dasgupta~\cite{Dasgupta16}. Assuming that the input is an $n$-vertex weighted graph whose edges arrive in a stream, we derive the following results on space-vs-accuracy tradeoffs:
	\begin{itemize}
	    \item  With $O(n\cdot \polylog\,{n})$ space, we develop a single-pass algorithm, whose approximation ratio matches the currently best \textit{offline} algorithm of~\cite{charikar2017approximate}.
	    \item When the space is more limited, namely, $n^{1-o(1)}$, we prove that no algorithm can even estimate the value of optimum hierarchical tree to within an $o(\frac{\log{n}}{\log\log{n}})$ factor, even when allowed $\polylog{\,{n}}$ passes over the input and exponential time. 
	    \item In the most stringent setting of $\polylog\,{n}$ space, studied extensively in the literature, we rule out algorithms that can even distinguish between ``highly''-vs-``poorly'' clusterable graphs, namely, graphs that have an $n^{1/2-o(1)}$ factor gap between their HC objective value. 
	    \item Finally, we prove that any single-pass streaming algorithm that computes an optimal HC clustering requires to store almost the entire input even if allowed exponential time. 
	\end{itemize}
	Our algorithmic results establish a general structural result that proves that cut sparsifiers of input graph can preserve cost of ``balanced'' hierarchical trees to within a constant factor, and thus can be used in place of the original (dense) graphs when solving HC. Our lower bound results include a new streaming lower bound for a novel problem ``One-vs-Many-Expanders'', which can be of independent interest.

\end{abstract}

\clearpage

\setcounter{tocdepth}{3}
\tableofcontents

\clearpage

\pagenumbering{arabic}
\setcounter{page}{1}

\section{Introduction}
\label{sec:intro}
Motivated by a variety of data mining and computational biology applications, Hierarchical Clustering (HC) 
is the canonical problem of 
building a hierarchy of clusters to represent a dataset. This hierarchy takes the form of a rooted binary tree (also called a ``dendrogram'') whose leaves are in one-to-one correspondence with the data points, thus capturing their relationships at various levels of granularity. Representing a dataset as a tree structure offers several advantages: there is no need to specify the number of clusters in advance, HC is easy to interpret and visualize, and there are simple-to-implement HC algorithms available (e.g., either top down divisive or bottom up linkage methods). As a result, HC has played a prominent role both in theory and in practice across different domains, with canonical applications ranging from biology and statistics to finance and sociology~\cite{cavalli1967phylogenetic,berkhin2006survey,eisen1998cluster,felsenstein2004inferring,hastie2009,tumminello2010correlation,NIPS2017-affinity,mann2008use}.

Deploying HC algorithms in practice however is a challenging task. In particular, a major challenge is achieving good scalability. With the rise of data-intensive applications, there is dire need to solve HC for extremely large datasets. Additionally, these datasets are typically evolving over time (e.g., new queries/users/videos added in a platform), thus making said scaling issues even harder to deal with. The best known algorithms for some commonly used linkage methods, such as Average Linkage, suffer from quadratic runtime (in the number of data points) which is prohibitive in modern settings. 
To overcome these issues, recent efforts have focused on accelerating 
bottom-up linkage methods~\cite{loewenstein2008efficient,abboud2019subquadratic,bateni2017affinity,monath2019scalable,monath2021scalable,sumengen2021scaling,dhulipala2021hierarchical} or top-down divisive methods~\cite{avdiukhin2019multi} and on exploiting geometric embedding techniques~\cite{naumov2021objective,rajagopalan2021hierarchical,nickel2017poincare}.

In this paper, we study HC in the \textit{graph streaming model}, 
which is a canonical model designed to capture the essence of large-scale computation. Graph streaming algorithms 
process their input by making one (or few) sequential pass(es) 
over their edges while using a limited memory, much smaller than the input size. These constraints capture several challenges of processing massive graphs such as I/O-efficiency or monitoring evolving graphs; see, e.g.~\cite{Muthukrishnan05,FeigenbaumKMSZ05,McGregor14} and references therein. The main motivation behind our work is  the following question: 
\begin{center}
\textit{If we are allowed only a single sequential pass over the data and a limited space, how good a hierarchical clustering can we compute? In general, what are the space-\emph{vs}-accuracy tradeoffs?}
\end{center}

We present several algorithmic and impossibility results that address this question. On the algorithmic front, we design a single-pass HC algorithm minimizing Dasgupta's HC cost function~\cite{Dasgupta16}, that matches the guarantees of known non-streaming algorithms~\cite{charikar2017approximate,cohen2019hierarchical}, while using memory proportional to the number of data points (and thus quadratically smaller than the input size that contains pairwise similarities of the data points). On the lower bounds front, we give several impossibility results across a range of various (sublinear) memory regimes, providing tradeoffs for the space required in order to obtain ``good'' HC trees or to estimate their values, as measured by Dasgupta's objective~\cite{Dasgupta16}. We elaborate more on our results in~\Cref{sec:results-sum}. 

To the best of our knowledge, we are the first to provide theoretical guarantees for streaming HC under Dasgupta's cost function in the general graph similarity setting (i.e., the input need not satisfy triangle inequality), and/or under memory limitations or single-pass/few-pass desiderata. In contrast, recent results in~\cite{rajagopalan2021hierarchical} hold only for metric data in $\mathbb{R}^d$ and their focus is on maximization HC objectives~\cite{moseley2017approximation,cohen2019hierarchical} (which are provably shown to be easier to approximate~\cite{charikar2019hierarchical,alon2020hierarchical,naumov2021objective}).

\subsection{Background, Problem Definition, and Related Work}
Before stating our results in more detail, we start with a brief description of prior work in the literature of \emph{optimization-based} hierarchical clustering. The main motivating question here is ``how does one evaluate the quality of a hierarchical tree on a given dataset?''.

Despite its popularity and importance, HC is underdeveloped 
from a theoretical perspective. In particular, many heuristics for HC are defined procedurally rather than in terms of an optimization objective; as such they lack theoretical analyses on their performance guarantees.
Indeed, until recently, there was no global objective function for HC to evaluate how good or bad a proposed solution is, in stark contrast with the multitude of objectives we typically encounter in ``flat'' clustering (e.g., $k$-means, $k$-medians, $k$-multicut, correlation clustering, etc.). Having an appropriate objective allows us to evaluate the performance of different algorithms, to quantify their success or failures, and in some cases, to add explicit constraints for the hierarchy~\cite{kernelNIPS,dasguptaICML,chatziafratis2018hierarchical}, similar to ``must-link/cannot-link'' constraints in $k$-means~\cite{wagstaff2000clustering,wagstaff2001constrained}.

In an influential work, Dasgupta~\cite{Dasgupta16} proposed a minimization objective for HC based on pairwise similarity information on $n$ data points. Under this objective, the data is embedded as a graph $G=(V,E,w)$, where the vertices are the data points, and edges are obtained by pairwise similarity. The clustering is represented by a rooted tree $\TT$, where each leaf node contains a single vertex, and each non-leaf node of $\TT$ induces a cluster (as such, the root contains $V$). The inclusion of sub-clusters is characterized by the clusters induced by child nodes. The total cost is measured by the summation of the (weighted) pairwise costs, where the cost between vertex pair $(u,v)$ is defined as the (weighted) number of leaf nodes induced by the subtree rooted at the lowest common ancestor between $u$ and $v$ (see \Cref{prob:HC} for the formal definition).

Dasgupta~\cite{Dasgupta16} gave a poly-time $O(\alpha(n) \cdot \log{n})$-approximation algorithm for the aforementioned hierarchical clustering problem, where $\alpha(n)$ denotes the best approximation ratio possible for the \emph{Sparsest Cut} problem (currently, $\alpha(n) = O(\sqrt{\log{n}})$~\cite{arora2009expander}). Follow-up works improved on this result by proving an~$O(\log n)$ approximation via linear programming~\cite{roy2016hierarchical} and an $O(\sqrt{\log n})$ approximation via semidefinite programming~\cite{charikar2017approximate}. In addition,~\cite{charikar2017approximate,cohen2019hierarchical} improved the analysis of~\cite{Dasgupta16} based on sparsest cut problem to achieve an $O(\alpha(n))$-approximation (\!\cite{charikar2017approximate} also provides a similar algorithm using \emph{Balanced Cut} as a subroutine instead of sparsest cut).  
On the hardness front,~\cite{charikar2017approximate} proved that under the \emph{Small Set Expansion (SSE) Hypothesis}, there is no constant factor approximation algorithm for Dasgupta's hierarchical clustering problem in polynomial time. 

More generally, Dasgupta's objective has led to a flurry of both theoretical and empirical results about the computational complexity and optimization of HC, expanding our understanding of HC and mirroring the important progress made in the ``flat'' clustering literature over the past several decades. Such results include approximation guarantees for old linkage algorithms~\cite{moseley2017approximation,cohen2019hierarchical}, explaining success of existing methods~\cite{charikar2019hierarchical,avdiukhin2019multi}, designing novel approaches to HC~\cite{roy2016hierarchical,chatziafratis2018hierarchical}, characterizing its computational complexity and inapproximability~\cite{charikar2017approximate,chatziafratis2021maximizing}, and novel connections to hyperbolic embeddings~\cite{chami2020trees, monath2019gradient}.

In this work, we study Dasgupta's hierarchical clustering problem in the graph streaming model, wherein the edges of the input graph $G$ are arriving one by one in an arbitrary order, and the algorithm is allowed to make a single pass over these edges and compute an HC tree $\TT$ that (approximately) 
minimizes $\cost_G(\TT)$ (or estimate $\cost_G(\TT)$ studied in some of our lower bounds). See~\Cref{sec:prelim} for more formal definitions. 

\subsection{Our Contributions}
\label{sec:results-sum}
We provide a comprehensive treatment of HC in the graph streaming model. Our first result gives an algorithm for obtaining an approximation ratio proportional to the best non-streaming algorithm, while using only $\Ot(n) \coloneqq O(n \cdot \polylog{(n)})$\footnote{Throughout, we use $\Ot(\cdot)$ and $\Omt(\cdot)$ notation to suppress $\poly\log{(n)}$ factors.} space, referred to as \emph{Semi-Streaming} space restriction~\cite{FeigenbaumKMSZ05}, the so-called `sweet spot' for graph streaming algorithms. 

\begin{result}
\label{rst:upper-bound}
There exists a single-pass streaming algorithm for hierarchical clustering (\Cref{prob:HC}) that uses $O(n\cdot \polylog\,{n})$ space and achieves an $O(\sqrt{\log{n}})$-approximation in polynomial time or $O(1)$-approximation 
in exponential time. 
\end{result}
\Cref{rst:upper-bound} gives us the best of both worlds: approximation ratio asymptotically matching best non-streaming algorithm of~\cite{charikar2017approximate} (by using it in a black-box way) and space complexity that is only larger than the output clustering by $\poly\log{n}$ factors. Moreover, as we describe later, this result can use many other HC algorithms or heuristics as a black-box in place of~\cite{charikar2017approximate} (e.g., to gain faster runtime), while achieving asymptotically the same approximation ratio as  the black-box algorithm.  

As we shall explain more in~\Cref{subsec:technique}, our~\Cref{rst:upper-bound} is based on a general \emph{sparsification} approach that can be used in a variety of other settings as well. For instance, it also 
implies a $2$-round Massively Parallel Computation (MPC) algorithm for HC on machines of memory $\Ot(n)$; see, e.g.~\cite{KarloffSV10,AhnGM12b,BeameKS13,KumarMVV13,CzumajLMMOS18,AssadiCK19a,AssadiBBMS19} and references therein for more details on the MPC model and its connection to streaming, among others. 

The space complexity of our algorithm in~\Cref{rst:upper-bound} is nearly optimal as any streaming algorithm that outputs a clustering of input points requires $\Omega(n\log{n})$ bits of space just to store the answer. However, in many scenarios, one is interested in algorithms that can \emph{distinguish} between ``highly clusterable'' inputs versus ``poorly clusterable'' ones; in other words, be able to only \emph{estimate} the cost of the best HC tree in Dasgupta's hierarchical clustering problem (see, e.g.~\cite{KapralovKS15,KapralovK19,GuruswamiT19,AssadiKSY20,ChouGV20,AssadiN21} for a vibrant area of research on these streaming estimation problems for property testing or constraint satisfaction problems). While our algorithm in~\Cref{rst:upper-bound} clearly also works for the estimation problem, its space can no longer be considered nearly-optimal a priori. Our next result addresses this. 

\begin{result}
\label{rst:lb-near-linear}
Any streaming algorithm with a memory of $n^{1-o(1)}$ cannot estimate the hierarchical clustering objective \underline{value} to within an $o(\frac{\log{n}}{\log\log{n}})$ factor even if allowed $\polylog{(n)}$ passes and exponential time. 
\end{result}
This result effectively rules out any algorithm with $n^{1-o(1)}$ memory to achieve an approximation ratio as competitive as~\Cref{rst:upper-bound} for the estimation problem (even when allowed exponential time and an ``unreasonably large'' number of passes). It is  worth noting that while we obtain this result by a reduction from  streaming lower bounds of~\cite{AssadiN21}, this is the first application of these techniques to proving lower bounds for $\omega(1)$ approximation factors as well as $\omega(\log{n})$ passes. 

While quite strong in terms of space (and passes), \Cref{rst:lb-near-linear} still leaves out possibility of algorithms with approximation ratio  of $O(\log{n})$, which are quite acceptable for HC. On the other end of the spectrum, one can ask how well of an approximation can we hope for on the most stringent restriction of $\polylog{(n)}$ space and one pass? (this is the setting most focused on in ``classical'' streaming literature starting from~\cite{AlonMS96}, as well as aforementioned line of work on estimation problems in graph streams). Our next result suggests that the answer is \emph{``not much''}.  
\begin{result}
\label{rst:lb-polylog-space}
Any single-pass streaming algorithm with $\polylog{(n)}$ space cannot estimate the hierarchical clustering \underline{value} with an approximation ratio of $n^{1/2-\delta}$ for any constant $\delta>0$. 
\end{result}

Proof of~\Cref{rst:lb-polylog-space} turned out to be the most technically challenging part of our paper, as we can no longer rely on reductions from existing streaming lower bounds. En route to proving this result, we establish a general streaming lower bound: no $\polylog{(n)}$ space streaming algorithm  can distinguish between inputs consisting of a single \emph{expander} versus a collection of many \emph{small} vertex-disjoint expanders. This problem is the ``expander-variant'' of the by-now famous \emph{gap cycle counting} problem of~\cite{VerbinY11} (where instead of expanders, we have \emph{cycles} in the input) that has found numerous applications in streaming lower bounds (including our~\Cref{rst:lb-near-linear}); see, e.g.~\cite{VerbinY11,KapralovKS15,AssadiKSY20,AssadiN21,KapralovMTWZ22} and references therein. Our expander-variant of this problem seems versatile enough to find other applications and is therefore interesting in its own right. 

Finally, going back to~\Cref{rst:upper-bound} and the $\Ot(n)$-space regime, we can ask whether settling for approximation was even necessary for this problem. In particular, can we match the performance of best non-streaming algorithms \emph{exactly} (not asymptotically) or better yet obtain an exact optimal solution in exponential time? Our final result rules out this possibility also as long as the space of our algorithm is less than the input size (at which point, we can trivially store the entire input and solve the problem offline in exponential time).  

\begin{result}
\label{rst:lb-exact-solution}
Any single-pass streaming algorithm for finding an optimal hierarchical clustering tree (or even determining its cost) requires a memory of $\Omega(n^2)$ bits. 
\end{result}

This concludes the description of our main results. Putting these results together, our paper has the following message. It is possible to solve HC with asymptotically the same approximation ratio as that of best known non-streaming algorithms, while using only $\Ot(n)$ space (\Cref{rst:upper-bound}). But, reducing the space to $n^{1-o(1)}$ prohibits us from getting competitive approximations even in $\polylog{(n)}$ passes (\Cref{rst:lb-near-linear}), and reducing the space further to $n^{o(1)}$ prohibits us from even distinguishing between inputs with $n^{1/2-o(1)}$ factor gap between their optimal HC cost (\Cref{rst:lb-polylog-space}). Finally, even increasing the space to $o(n^2)$ is not going to remove the need for approximation (\Cref{rst:lb-exact-solution}). 

\subsection{Our Techniques}
\label{subsec:technique}
\paragraph{Algorithmic results.} 
Dasgupta's work~\cite{Dasgupta16} on introducing the objective cost for HC resulted in beautiful connections between HC and standard cut-based graph problems such as sparsest cut and balanced cut. For instance, the algorithm proposed by~\cite{Dasgupta16} for HC is to recursively partition the vertices of the graph across an approximate sparsest cut at each level of the hierarchical tree until we reach the leaf-nodes. The work of~\cite{Dasgupta16} shows that approximation ratio of this algorithm is $O(\alpha(n) \cdot \log{n})$ where $\alpha(n)$ is the approximation ratio of the black-box sparsest cut algorithm we use, and follow up works in~\cite{charikar2017approximate,cohen2019hierarchical} improved the analysis to an $O(\alpha(n))$ approximation. 

When it comes to graph streaming, many of cut-based problems including sparsest cut and balanced cut have a standard solution using \emph{Cut Sparsifiers}~\cite{BenczurK15}: these are (re-weighted) subgraphs of the input graph that preserve the value of every \emph{global} cut approximately, while being quite sparse with only $\Ot(n)$ edges. By now, there are simple streaming algorithms for recovering cut sparsifiers in $\Ot(n)$ space (see, e.g.~\cite{McGregor14}), and it is easy to see that running a non-streaming algorithm for the cut-based problem on this sparsifier, results also in solutions of approximately the same quality on the original graph. Yet, this recipe does \emph{not} apply to HC: in the aforementioned connection of sparsest cut and HC, one needs to solve sparsest cut recursively on \emph{induced} subgraphs of the input after the first level of recursion -- this in turn requires our cut sparsifier to not only preserve global cuts but also induced cuts, i.e., the weight of edges between any two subsets $(A,B)$ of vertices (not only $(A,\bar{A})$). It is easy to see that such a ``sparsifier'' requires to store all edges of the graph! 

Our main algorithmic contribution in this paper is to bypass this challenge. Instead of considering each separate (induced) cut that may appear when running standard HC algorithms such as~\cite{Dasgupta16,charikar2017approximate}, we prove a ``global'' structural property of cut sparsifiers for HC directly: the HC cost of any \emph{balanced} hierarchical tree\footnote{By a balanced tree, we mean a tree where at every node, the size of sub-trees of each child-node is within a constant factor of the other ones. See \Cref{def:beta-tree} for the formal definition.} as a whole remains almost the same between the original graph and its cut sparsifier (even though costs of some sub-trees can deviate dramatically). Given that the HC algorithm  of~\cite{charikar2017approximate} also optimizes only over balanced hierarchical trees, we obtain that running that algorithm over the sparsifier, instead of the entire graph, will result in a solution with only a constant factor worse approximation guarantee. This way, we get a general recipe for solving HC using cut sparsifiers also which is applicable to graph streaming among other models such as MPC mentioned earlier (we further show that any HC problem admits an $O(1)$-approximation balanced solution, so restricting ourselves to balanced solutions is never going to cost us much). 

\paragraph{Lower bound results.} The starting point of our lower bound in~\Cref{rst:lb-near-linear} is the streaming lower bound for the (noisy) gap cycle counting problem of~\cite{AssadiN21}\footnote{We note that we use a slight variation of the problem that follows immediately from~\cite{AssadiN21} but is somewhat different from the description in that work.}. Informally speaking,~\cite{AssadiN21} proved that any $\polylog{(n)}$ pass algorithm that can distinguish between graphs composed of vertex-disjoint cycles of length $\Theta(n)$ or vertex-disjoint cycles of length $\polylog{(n)}$ requires $n^{1-o(1)}$ space (the actual problem definition involves also some ``noisy'' paths; see~\Cref{sec:lower}). Using the result of~\cite{Dasgupta16} that characterizes the HC cost of vertex-disjoint graphs as well as cycles, one can show that the cost optimal hierarchical tree differs by a factor of $\Theta(\frac{\log{n}}{\log\log{n}})$ between these two family of graphs, which implies our desired lower bound as well via a reduction to~\cite{AssadiN21}.

Our~\Cref{rst:lb-polylog-space} is considerably more involved and is our main contribution on the lower bound front. The main challenge is that to prove a strong approximation lower bound, we can no longer rely on using ``loosely-connected'' graphs such as cycles (as their optimal HC cost is not going to be that different between the two cases). Because of this, we introduce the \emph{One vs. Many Expanders (OvME)} problem wherein the goal is to distinguish between graphs consisted of a single expander with $\Theta(\log{n})$-degree and $\Theta(\log{n})$-edge expansion (see \Cref{def:graph-expansion}), versus $n^{1/2+o(1)}$ vertex-disjoint expanders with the same guarantees. A simple argument, using properties of expanders, allows to bound the difference in the optimal HC cost between these two families with an $n^{1/2-o(1)}$ factor. The bulk of our effort is then to prove the lower bound for this family of input graphs which are inherently different from cycles\footnote{E.g., being expanders they are way-more well connected and have much shorter diameter; see~\cite{KapralovKS15,AssadiN21,KapralovMTWZ22} for the role of these parameters in prior lower bounds. Note also that the result of~\cite{KapralovKS15,KapralovK19} can be seen as proving a lower bound for distinguishing between a single expander versus \emph{two} expanders as opposed to $n^{1/2-o(1)}$ many in our work.}. On a (very) high level, the proof of this lower bound is by $(i)$ designing a multi-party \emph{communication game} in spirit of~\cite{KapralovKS15,ChouGV20,ChouGSV21a} and reducing it to a two-party one using a standard hybrid argument in~\cite{KapralovKS15}, $(ii)$ applying a \emph{decorrelation step} to this game to reduce the problem to proving a low-probability-of-success lower bound in spirit of~\cite{AssadiN21}, and $(iii)$ using a Fourier analytic method originated in~\cite{GavinskyKKRW07} based on \emph{KKL inequality}~\cite{KahnKL88}, to establish the lower bound (our decorrelation step, based on a new notion of ``advantage'' of protocols using KL-divergence, is the one that greatly deviates from prior work in~\cite{KapralovKS15,ChouGV20,ChouGSV21a,AssadiN21} and allows us to use Fourier analytic tools to analyze our final problem, despite its considerable differences from prior problems). 

Finally,~\Cref{rst:lb-exact-solution} is established using a reduction from the \emph{Index} communication game~\cite{Ablayev93}. We create a family of graphs consisting of  $\Theta(n)$-vertex ``near-cliques'' with few edges between them so that the  value of optimum HC cost depends on a single edge in the input graph, which cannot be detected by an $o(n^2)$-space streaming algorithm. Our proof of this part extends prior work of~\cite{Dasgupta16} on characterizing optimal HC costs on paths and cliques, to slightly more complex graphs. 

\paragraph{Recent Independent Work.} Independently of our work,~\cite{AgarwalKLP22} also studied HC under Dasgupta's cost function in the settings similar to our paper. Whereas our focus has been primarily in the streaming setting
and space complexity of algorithms,~\cite{AgarwalKLP22} focused on designing sublinear {time} algorithms in the query model and sublinear communication algorithms in the MPC model. 
But, similar to our~\Cref{rst:upper-bound} (and~\Cref{thm:sparsification} specifically), 
they also prove a general structural result that shows that a cut sparsifier can be used to recover a $(1+o(1))$-approximation to the underlying HC instance, which is stronger than our $O(1)$-approximation guarantee. As a result, they can also recover our~\Cref{rst:upper-bound} with improved leading constant in the approximation. This improvement also applies to the MPC model where they show that $\tilde{O}(n)$ memory per machine suffices to get a $2$-round algorithm that achieves $(1+o(1))$-approximation (they  prove that any \emph{one}-round $\polylog{(n)}$-approximation MPC algorithm requires $\Omega(n^{4/3-o(1)})$ memory per machine). Beside this algorithmic connection, the rest of our work and~\cite{AgarwalKLP22} are entirely disjoint.

\section{Preliminaries}
\label{sec:prelim}

\label{subsec:prob-def}
In this section, we define the notation to be used throughout the paper, and introduce the notion of hierarchical clustering trees and define Dasgupta's cost function~\cite{Dasgupta16}. 
\paragraph{Notation.} As standard in the literature, we denote a graph $G=(V,E,w)$ with $V$ as the set of the vertices, $E$ as the set of the edges, and $w:E\rightarrow \mathbb{R}^{+}$ be the edge weights. For any subset of vertices $A \subseteq V$, we use $\bar{A}=V \setminus A$ to denote the complementary set of vertices in $G$. We refer to any disjoint sets $A,B \subseteq V$ of vertices in V as a \emph{cut} $(A,B)$. If we further have $B=\bar{A}$ (i.e., cut $(A, V\setminus A)$), then the cut is called a \emph{global} cut. For a cut $(A,B)$, the set of \emph{cut edges} is the set of edges that are between $A$ and $B$, denoted by $\delta(A,B)$. We denote the weight of a cut as $w(A,B) = \sum_{e \in \delta(A,B)} w(e)$.

\subsection{Problem Definition}

Let $G=(V,E,w)$ be an input weighted (undirected) graph, and let $\TT$ be a rooted tree whose leaf nodes correspond to the vertices of $V$. Furthermore, let each internal node $z$ of $\TT$ induce a cluster, and the child nodes of $z$ induce the inclusion-wise sub-clusters. Throughout the rest of the paper, we say that any such tree $\TT$ is a \emph{hierarchical clustering tree} (\emph{HC-tree} for short) of $G$. For any node $z$ in this tree, we define $\TT[z]$ as the sub-tree of $\TT$ rooted at $z$, and let $\leaves(\TT[z]) \subseteq V$ denote the set of leaf-nodes of $\TT[z]$.
We define the set of \emph{clusters induced by $\TT$} to be $\{C \subseteq V \mid \textrm{there is a node } z \textrm{ of } \TT \textrm{ such that } C = \leaves(\TT[z])\}$. 

\begin{problem}[HC under Dasgupta's cost function]\label{prob:HC}
Given an $n$-vertex weighted graph $G=(V,E,w)$ with vertices corresponding to data points and edges measuring their similarity, create a rooted tree $\TT$ whose leaf-nodes are $V$. The goal is to \emph{minimize} the cost of this tree $\TT$ defined as 
    \begin{align}
    \label{eq:hc-cost}
       \cost_G(\TT) := \sum_{e=(u,v) \in E} w(e) \cdot |\textnormal{leaf-nodes}(\TT[u \vee v])|,
    \end{align}
    where $|\text{leaf-nodes}(\TT[u \vee v])|$ is the number of leaf-nodes in the sub-tree of $\TT$ rooted at the lowest common ancestor of $u$ and $v$, denoted by $u \vee v$. 
    
    We use $\OPT(G)$ to denote the cost of an optimal tree for the graph $G$. 
\end{problem}

\subsection{Standard Results on Dasgupta's Hierarchical Clustering Cost}
\label{subsec:hc-stanard-facts}
There have been a fruitful collection of results on understanding the cost function in \Cref{eq:hc-cost} since the work of Dasgupta~\cite{Dasgupta16}. In this section, we present some known results for hierarchical clustering that lay the foundations of our paper. 

\subsubsection*{Optimal Hierarchical Clustering Trees}
We first give a collection of lemmas that characterize the behavior for optimal HC trees. These proofs can all be found in~\cite{Dasgupta16} (or follow immediately from there). 

We start with the following observations for the optimal costs of the HC trees.
\begin{observation}[\!\!\cite{Dasgupta16}] 
\label{obs:partition}
	Suppose $G$ is any graph, $A$ and $\bar{A}$ are 
	two disjoint subsets of vertices in $G$, and $G_{A}$ and $G_{\bar{A}}$ are induced subgraphs of $G$ on vertices $A$ and $\bar{A}$, respectively. Then, 
	\[
		\OPT(G_A) + \OPT(G_{\bar{A}}) \leq \OPT(G). 
	\]
\end{observation}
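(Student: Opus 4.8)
The plan is to exhibit a specific HC-tree for $G$ whose cost is at least $\OPT(G_A) + \OPT(G_{\bar A})$, which suffices since $\OPT(G)$ is a minimum over all trees. First I would take optimal HC-trees $\TT_A$ for $G_A$ and $\TT_{\bar A}$ for $G_{\bar A}$, and glue them together: create a new root whose two children are the roots of $\TT_A$ and $\TT_{\bar A}$. Call this combined tree $\TT$; its leaf set is $A \cup \bar A = V$, so it is a valid HC-tree of $G$. Then I would show $\cost_G(\TT) \geq \cost_{G_A}(\TT_A) + \cost_{G_{\bar A}}(\TT_{\bar A}) = \OPT(G_A) + \OPT(G_{\bar A})$, and conclude $\OPT(G) \leq \cost_G(\TT)$ — wait, that inequality goes the wrong way, so instead I should argue directly that \emph{every} HC-tree $\TT$ of $G$ has cost at least $\OPT(G_A) + \OPT(G_{\bar A})$, by restricting $\TT$ to $A$ and to $\bar A$.

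Concretely, let $\TT$ be any HC-tree of $G$. Define $\TT_A$ to be the tree obtained from $\TT$ by deleting all leaves not in $A$ and contracting degree-one internal nodes; this is a valid HC-tree of $G_A$, and similarly define $\TT_{\bar A}$. The key observation is that for any edge $e = (u,v)$ with both endpoints in $A$, the subtree $\TT[u \vee v]$ contains all leaves of $\TT_A[u \vee v]$, hence $|\leaves(\TT[u\vee v])| \geq |\leaves(\TT_A[u\vee v])|$ — removing leaves only shrinks subtrees, and the lowest common ancestor of $u,v$ in $\TT_A$ corresponds to (a descendant-or-equal of) their lca in $\TT$. Therefore the contribution of the $A$-internal edges to $\cost_G(\TT)$ is at least $\cost_{G_A}(\TT_A) \geq \OPT(G_A)$, and likewise the $\bar A$-internal edges contribute at least $\OPT(G_{\bar A})$. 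Since the edges of $G$ between $A$ and $\bar A$ contribute a nonnegative amount, summing gives $\cost_G(\TT) \geq \OPT(G_A) + \OPT(G_{\bar A})$. Taking $\TT$ to be optimal for $G$ yields the claim.

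The main thing to be careful about is the comparison of subtree sizes under leaf deletion and contraction: one must verify that for $u,v \in A$, the lca of $u$ and $v$ in the pruned tree $\TT_A$ sits at or below their lca in $\TT$ in the sense that $\leaves(\TT_A[u\vee_{\TT_A} v]) \subseteq \leaves(\TT[u \vee_\TT v]) \cap A$, so that the leaf-count only decreases. This is essentially immediate from the fact that pruning and contracting preserves ancestry relations among the surviving leaves, but it is the one step that needs a clean statement. Everything else is bookkeeping. (This is exactly the argument in~\cite{Dasgupta16}; I would just cite it, but the above is the self-contained version.)
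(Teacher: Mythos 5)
Your final argument is correct and is exactly the standard restriction-and-pruning proof from~\cite{Dasgupta16}, which the paper cites without reproducing; the key monotonicity step $\leaves(\TT_A[u\vee_{\TT_A} v]) = \leaves(\TT[u\vee_{\TT} v])\cap A$ is stated and justified properly. The initial ``glue two optimal trees'' attempt indeed yields the inequality in the wrong direction, and you correctly abandon it, so just delete that false start from the final write-up.
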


\begin{observation}[\!\!\cite{Dasgupta16}]
\label{obs:opt-binary}
	Let $G$ be any graph and $\TT$ be a HC tree for $G$. Then, for every $\TT$ that is not binary, there exists a binary $\TT'$ such that $\cost_G(\TT')\leq \cost_G(\TT)$.
\end{observation}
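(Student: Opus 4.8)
The plan is a standard local exchange (uncrossing) argument: repeatedly ``binarize'' a single internal node of too-high degree, each step without increasing the cost, and use a potential function to guarantee that the process terminates with a binary tree. First I would assume without loss of generality that every internal node of $\TT$ has at least two children, since an internal node with a single child can be contracted with its unique child; this leaves $\leaves(\TT[z'])$ unchanged for every surviving node $z'$ and hence does not change $\cost_G(\TT)$ by \eqref{eq:hc-cost}. Define the potential $\Phi(\TT) := \sum_{z}(c(z)-2)$, where the sum ranges over internal nodes $z$ and $c(z)$ denotes the number of children of $z$. Then $\Phi(\TT)\ge 0$, and $\Phi(\TT)=0$ exactly when $\TT$ is binary. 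So it suffices to prove the following single-step claim: if $\TT$ is not binary, there is an HC-tree $\TT'$ (again with all internal nodes having at least two children) with $\cost_G(\TT')\le\cost_G(\TT)$ and $\Phi(\TT')<\Phi(\TT)$; iterating the claim at most $\Phi(\TT)$ times then yields the statement.

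For the single step, pick an internal node $z$ with children $c_1,\dots,c_k$ where $k\ge 3$, and write $S_i:=\leaves(\TT[c_i])$ and $S:=\leaves(\TT[z])=S_1\cup\cdots\cup S_k$. Obtain $\TT'$ from $\TT$ by leaving everything untouched except the ``star'' at $z$: give $z$ the two children $c_1$ and a fresh internal node $z'$, and make $z'$ the parent of $c_2,\dots,c_k$. All of the subtrees $\TT[c_i]$, and the entire portion of the tree strictly above $z$, are literally unchanged; in particular $\leaves(\TT'[z])=S$ and $\leaves(\TT'[z'])=S_2\cup\cdots\cup S_k\subseteq S$.

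Next I would compare $\cost_G(\TT')$ and $\cost_G(\TT)$ term by term over the edges in \eqref{eq:hc-cost}. For $e=(u,v)\in E$ with at most one endpoint in $S$, the node $u\vee v$ lies weakly above $z$ (or is unrelated to $\TT[z]$ entirely), and its leaf set is unchanged, so the term is identical. For $u,v$ in the same $S_i$, the node $u\vee v$ lies inside $\TT[c_i]=\TT'[c_i]$ and is unchanged, so the term is identical. For $u\in S_i$ and $v\in S_j$ with $i\ne j$, we have $u\vee v=z$ in $\TT$ with term $w(e)\cdot|S|$, whereas in $\TT'$ the least common ancestor of $u$ and $v$ is either $z$ (when $1\in\{i,j\}$) or $z'$, giving a term of $w(e)\cdot|S|$ or $w(e)\cdot|S_2\cup\cdots\cup S_k|\le w(e)\cdot|S|$. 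Hence no term increases and $\cost_G(\TT')\le\cost_G(\TT)$. For the potential, only $z$ and $z'$ are affected: $z$ goes from contributing $c(z)-2=k-2$ to contributing $0$, and $z'$ contributes $(k-1)-2=k-3$, for a net change of $-1<0$. This establishes the single-step claim, and iterating completes the proof.

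I do not expect any serious obstacle here; the argument is elementary. The one point that requires care is the case analysis in the cost comparison: the fact that makes binarization cost-nonincreasing is exactly that pushing a least common ancestor strictly downward in the tree can only shrink (never enlarge) the leaf set of the subtree rooted there. One should also double-check that the described rearrangement genuinely leaves every other subtree and every ancestor leaf set intact, so that only the three enumerated edge types can possibly change their contribution.
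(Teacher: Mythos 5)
Your proof is correct, and it is essentially the standard argument that the paper defers to \cite{Dasgupta16}: split one high-arity node into two, observe that the only affected terms are cross-pairs whose lowest common ancestor moves strictly downward to a node with a (weakly) smaller leaf set, and terminate via the potential $\sum_z (c(z)-2)$. The case analysis and the potential-decrease accounting both check out, so nothing further is needed.
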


In \Cref{obs:partition}, the `equals to' relation is attained by graphs of vertex-disjoint disconnected components. More formally, we have

\begin{lemma}[\!\!\cite{Dasgupta16}]\label{lem:disjoint}
	Let $G$ be a vertex-disjoint union of graphs $A,B$. Then, 
	\[
	\OPT(G) = \OPT(A) + \OPT(B).
	\] 
\end{lemma}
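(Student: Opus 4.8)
The plan is to prove the two inequalities separately. The direction $\OPT(G) \geq \OPT(A) + \OPT(B)$ is immediate from \Cref{obs:partition}: since $G$ is the vertex-disjoint union of $A$ and $B$, the vertex set of $B$ is exactly the complement of the vertex set of $A$ in $G$, and $A$, $B$ are the corresponding induced subgraphs, so \Cref{obs:partition} directly gives $\OPT(A) + \OPT(B) \leq \OPT(G)$. So the only real content is the reverse inequality $\OPT(G) \leq \OPT(A) + \OPT(B)$.

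For that direction, I would take an optimal HC-tree $\TT_A$ for $A$ and an optimal HC-tree $\TT_B$ for $B$, and build an HC-tree $\TT$ for $G$ by introducing a fresh root whose two children are the roots of $\TT_A$ and $\TT_B$. The leaf-nodes of $\TT$ are then exactly $V(A) \cup V(B) = V(G)$, as required. I would then evaluate $\cost_G(\TT)$ using \Cref{eq:hc-cost}, splitting the sum over edges according to whether an edge lies in $A$ or in $B$; because $G$ is a \emph{vertex-disjoint} (hence disconnected) union, there are no edges across the two parts, so these two cases exhaust $E(G)$.

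The key observation for the edge sums is that grafting $\TT_A$ and $\TT_B$ under a common new root does not alter any subtree strictly below that root: for any edge $e=(u,v)$ with $u,v \in V(A)$, the lowest common ancestor $u \vee v$ in $\TT$ is the same node as in $\TT_A$, and the subtree of $\TT$ rooted there has precisely the same leaf set $\leaves(\TT_A[u\vee v])$. Hence the per-edge contribution $w(e)\cdot |\leaves(\TT[u\vee v])|$ is identical to its contribution in $\cost_A(\TT_A)$, and symmetrically for edges inside $B$. This yields $\cost_G(\TT) = \cost_A(\TT_A) + \cost_B(\TT_B) = \OPT(A) + \OPT(B)$, and since $\OPT(G) \leq \cost_G(\TT)$, the reverse inequality follows. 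Combining the two inequalities gives the claimed equality.

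This argument is elementary and I do not anticipate a genuine obstacle; the only point to state carefully is the invariance of subtrees-below-a-node under grafting (so that LCA depths and induced leaf sets are preserved), together with the fact that vertex-disjointness kills all cross edges, leaving no additional terms in the cost. One may also remark, though it is not needed, that by \Cref{obs:opt-binary} we may assume $\TT_A,\TT_B$ (and hence $\TT$) are binary, so the construction stays within binary HC-trees.
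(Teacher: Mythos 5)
Your proof is correct, and it is the standard argument (the paper itself cites \cite{Dasgupta16} for this lemma rather than proving it): the $\geq$ direction is exactly \Cref{obs:partition}, and the $\leq$ direction follows by grafting optimal trees for $A$ and $B$ under a fresh root and noting that, with no cross edges, every edge's lowest common ancestor and induced leaf set are unchanged. No gaps.
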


As a result of \Cref{lem:disjoint}, for any optimal HC tree on vertex-disjoint union of graphs $A,B$, the top-level node always splits $A$ and $B$.

The following lemmas capture the optimal HC costs on paths and cycles. 
\begin{lemma}[\!\!\cite{Dasgupta16}]
\label{lem:path}
	Let $P_m$ denote a path of length $m$. Then, $\OPT(P_m) = m\log{m} + O(m)$. 
\end{lemma}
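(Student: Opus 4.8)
The plan is to prove the two matching bounds $m\log_2 m \le \OPT(P_m) \le m\log_2 m + m$, which gives the claim once we read $\log$ as $\log_2$ (the usual convention here); I take $P_m$ to be the unit-weight path on the $m$ vertices $1,2,\dots,m$, the ``$m$ edges'' convention only perturbing lower-order terms.

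\textbf{Upper bound.} I would build an HC-tree by recursive bisection: split $P_m$ at its middle edge into two sub-paths on $\lceil m/2\rceil$ and $\lfloor m/2\rfloor$ vertices, and recurse on each. Only the single middle edge has the root as its lowest common ancestor, so it contributes exactly $1\cdot m$, and the cost $g(m)$ of this tree satisfies $g(1)=0$ and $g(m)\le m+g(\lceil m/2\rceil)+g(\lfloor m/2\rfloor)$. Unrolling the recurrence (a standard induction showing $g(m)\le m\lceil\log_2 m\rceil$) gives $\OPT(P_m)\le m\log_2 m + m$.

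\textbf{Lower bound.} I would argue $\OPT(P_m)\ge m\log_2 m$ by strong induction on $m$, with base case $\OPT(P_1)=0$. Fix an optimal HC-tree $\TT$ for $P_m$; by \Cref{obs:opt-binary} it may be assumed binary, so its root splits $V$ into nonempty $A$ and $\bar A$. Let $k\ge 1$ be the number of path edges crossing this cut (at least one, since $P_m$ is connected and $A\notin\{\emptyset,V\}$); each such edge has the root as its LCA and contributes $m$. Deleting these $k$ edges breaks $P_m$ into $k+1$ sub-paths with vertex counts $m_1,\dots,m_{k+1}$ summing to $m$, and $G_A,G_{\bar A}$ are exactly vertex-disjoint unions of these pieces; by \Cref{lem:disjoint} the two subtrees of $\TT$ below the root cost $\OPT(G_A)+\OPT(G_{\bar A})=\sum_{i=1}^{k+1}\OPT(P_{m_i})$. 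Hence $\OPT(P_m)=km+\sum_{i=1}^{k+1}\OPT(P_{m_i})\ge km+\sum_{i=1}^{k+1} m_i\log_2 m_i$ by the inductive hypothesis (each $m_i<m$). Writing $m_i=\alpha_i m$, one gets $\sum_i m_i\log_2 m_i = m\log_2 m - m\,H(\alpha) \ge m\log_2 m - m\log_2(k+1)$, since the entropy satisfies $H(\alpha)\le\log_2(k+1)$ (equivalently, by convexity of $x\mapsto x\log_2 x$ the sum is minimized when all pieces are equal). Therefore $\OPT(P_m)\ge m\log_2 m + m\bigl(k-\log_2(k+1)\bigr)\ge m\log_2 m$, using $2^k\ge k+1$ for every integer $k\ge 1$, which closes the induction.

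\textbf{Main obstacle.} The upper bound is routine. In the lower bound the one delicate point is that the top split of $\TT$ may cut many edges, not just one: a crude estimate (each extra cut edge adds $\ge m$ while the fragments are ignored) is too weak once $m\log_2 m\gg 2m$. The resolution is the convexity/entropy bound on $\sum_i m_i\log_2 m_i$ together with the elementary inequality $k\ge\log_2(k+1)$, which exactly offsets the gain $+km$ against the fragmentation loss $-m\log_2(k+1)$; checking these two facts is the crux, with everything else being bookkeeping against \Cref{obs:opt-binary}, \Cref{obs:partition}, and \Cref{lem:disjoint}.
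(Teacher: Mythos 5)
Your proof is correct. The paper does not actually prove \Cref{lem:path} --- it cites it from \cite{Dasgupta16} and only sketches the idea (``split as balanced as possible''), which is exactly your upper-bound construction; unrolling $g(m)\le m+g(\lceil m/2\rceil)+g(\lfloor m/2\rfloor)$ gives $m\lceil\log_2 m\rceil$ as you say. The substantive content you add is the matching lower bound, and that argument is sound: the decomposition $\OPT(P_m)\ge km+\sum_i\OPT(P_{m_i})$ is justified because every non-cut edge of the path lies entirely inside one of the $k+1$ fragments (so \Cref{lem:disjoint} applies to $G_A$ and $G_{\bar A}$), and the entropy bound $\sum_i m_i\log_2 m_i\ge m\log_2 m-m\log_2(k+1)$ combined with $2^k\ge k+1$ exactly absorbs the fragmentation loss into the gain $km$, closing the induction. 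The only cosmetic point is that the cost of the two subtrees is a priori only $\ge\OPT(G_A)+\OPT(G_{\bar A})$ (with equality by optimality of $\TT$), but the inequality is all you use, so nothing breaks.
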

The basic idea for \Cref{lem:path} is that since an optimal tree is always binary (\Cref{obs:opt-binary}), the optimal strategy is to `balance' the cost at each level and the cost it incurred for all lower levels. Therefore, by a balanced-tree recursion argument, the optimal cost for splitting a line is to \emph{always split as balanced as possible}, which results in a cost of  $m\log{m} + O(m)$. The following is a simple analogue of this lemma for cycles and follows immediately from~\Cref{lem:path}. 

\begin{lemma}
\label{lem:cycle}
	Let $C_m$ denote a cycle of length $m$. Then, $\OPT(C_m) = m\log{m} + O(m)$. 
\end{lemma}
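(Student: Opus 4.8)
The plan is to sandwich $\OPT(C_m)$ between the optimal costs of two essentially identical paths, exploiting the fact that a cycle is a path plus one edge. Concretely, fix an edge $e^* = (u^*, v^*)$ of $C_m$ and let $P$ be the path on the same vertex set $V(C_m)$ obtained by deleting $e^*$; since $P$ has $m-1$ edges, \Cref{lem:path} gives $\OPT(P) = m\log m + O(m)$ (the shift from $m$ to $m-1$ in the parameter is absorbed into the $O(m)$ term).

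For the \emph{lower bound}, I would use monotonicity of the cost under edge addition: for any HC-tree $\TT$ whose leaves are $V(C_m)$, every summand in \Cref{eq:hc-cost} is nonnegative, so $\cost_{C_m}(\TT) = \cost_P(\TT) + |\leaves(\TT[u^* \vee v^*])| \ge \cost_P(\TT)$; minimizing over all $\TT$ yields $\OPT(C_m) \ge \OPT(P) = m\log m - O(m)$. For the \emph{upper bound}, I would take an optimal HC-tree $\TT^*$ for $P$; its leaf set is exactly $V(C_m)$, so it is also a valid HC-tree for $C_m$, and the only extra cost relative to $P$ is the contribution of $e^*$, namely $|\leaves(\TT^*[u^* \vee v^*])|$, which is at most the total number of leaves $m$. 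Hence $\OPT(C_m) \le \cost_{C_m}(\TT^*) = \cost_P(\TT^*) + |\leaves(\TT^*[u^* \vee v^*])| \le \OPT(P) + m = m\log m + O(m)$.

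Combining the two inequalities gives $\OPT(C_m) = m\log m + O(m)$, as claimed. There is no real obstacle here: the argument is immediate from \Cref{lem:path} together with the trivial bounds ``deleting an edge does not increase the cost'' and ``an LCA-subtree has at most $m$ leaves.'' The only point requiring a moment's care is the bookkeeping around the meaning of ``length'' (number of edges versus number of vertices) when invoking \Cref{lem:path} on the deleted-edge path, but any such off-by-one only perturbs the estimate by $O(m)$ and is swallowed by the error term.
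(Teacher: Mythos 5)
Your proof is correct, but it takes a different route from the paper. The paper argues structurally about the cycle itself: the top cut of any HC-tree on $C_m$ severs (at least) two edges and leaves two paths, giving the recurrence $\cost(C_m) = 2m + \min_{m_1+m_2=m-2}\paren{\OPT(P_{m_1})+\OPT(P_{m_2})}$, after which \Cref{lem:path} is applied to each piece. You instead sandwich $\OPT(C_m)$ between $\OPT(P)$ and $\OPT(P)+m$ for the path $P = C_m - e^*$, using only two generic facts: deleting an edge cannot increase the cost of any fixed tree (each summand in \Cref{eq:hc-cost} is nonnegative), and a single edge contributes at most $w(e)\cdot m$ to any tree. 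This is arguably cleaner — it sidesteps the (minor but real) need to justify that the optimal first cut of a cycle splits it into two contiguous arcs, which the paper's recurrence implicitly assumes — and it generalizes immediately to the statement that $\card{\OPT(G)-\OPT(G-e)} \le n\cdot w(e)$ for any graph. What you lose is the exact recurrence, which pins down the constant in the $O(m)$ term more tightly; but since the lemma only claims $m\log m + O(m)$, nothing is lost here. Both arguments are valid, and yours is self-contained given \Cref{lem:path}.
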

\begin{proof}
	The first cut of $C_m$ has to cut two edges and partition into two paths, thus, 
	\begin{align*}
	\cost(C_m) &= 2m + \min_{\substack{0\leq m_{1} \leq m-2\\ 0\leq m_{2} \leq m-2 \\ m_{1}+m_{2}=m-2}}\OPT(P_{m_{1}}) + \OPT(P_{m_{2}}),
	\end{align*}
	Applying~\Cref{lem:path} for each of $P_{m_1}$ and $P_{m_2}$ finalizes the proof. 
\end{proof}

Finally, we have the following trivial upper bound on the maximum costs of HC on any graph, by simply splitting all edges in the first level.  
\begin{fact}\label{lem:hc-max-cost-ub}
	For any graph $G$ with $m$ edges and $n$ vertices, $\OPT(G) \leq m \cdot n$. 
\end{fact}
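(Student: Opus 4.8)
The plan is to produce one explicit HC-tree whose cost is at most $m \cdot n$; since $\OPT(G)$ is by definition the minimum of $\cost_G(\cdot)$ over all HC-trees of $G$, exhibiting such a witness suffices. Concretely, I would take the \emph{flat} tree $\TT_{\mathrm{flat}}$: a single root whose $n$ children are exactly the $n$ leaves, one per vertex of $V$. (If a binary witness is preferred, one can invoke \Cref{obs:opt-binary} to replace $\TT_{\mathrm{flat}}$ by a binary tree of no larger cost, but this is not needed for the inequality.)

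Next I would evaluate $\cost_G(\TT_{\mathrm{flat}})$ directly from \Cref{eq:hc-cost}. For every edge $e = (u,v) \in E$, the lowest common ancestor $u \vee v$ in $\TT_{\mathrm{flat}}$ is the root, whose subtree contains all $n$ leaves, so $|\leaves(\TT_{\mathrm{flat}}[u \vee v])| = n$. Hence $\cost_G(\TT_{\mathrm{flat}}) = \sum_{e \in E} w(e)\cdot n = n \cdot \sum_{e \in E} w(e)$, which equals $m\cdot n$ in the unit-weight case (and, more generally, $n$ times the total edge weight, the quantity playing the role of $m$ in the lower-bound sections where this fact is applied). Since $\OPT(G) \le \cost_G(\TT_{\mathrm{flat}})$, the bound follows.

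Equivalently, and even more directly, one can argue term-by-term without naming a particular tree: for \emph{any} HC-tree $\TT$ and any edge $e = (u,v)$ we have $|\leaves(\TT[u \vee v])| \le n$, simply because a subtree of $\TT$ has at most $n$ leaves; summing over $E$ gives $\cost_G(\TT) \le n\sum_{e \in E} w(e)$ for every $\TT$, in particular for the optimal one. There is essentially no obstacle here: the only point to get right is the trivial observation that the per-edge multiplier in \Cref{eq:hc-cost} never exceeds $n$, so the whole argument is a two-line calculation.
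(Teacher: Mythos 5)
Your proposal is correct and matches the paper's (one-line) argument: the paper justifies this fact "by simply splitting all edges in the first level," i.e., exactly your flat tree in which every edge's LCA is the root and hence carries multiplier $n$, giving cost $n\sum_{e}w(e)=m\cdot n$ for unit weights. Your additional term-by-term observation that the multiplier in \Cref{eq:hc-cost} never exceeds $n$ for \emph{any} tree is a fine equivalent phrasing of the same triviality.
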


\subsubsection*{Hierarchical clustering cost as a function of cuts}
We now show that the cost function in \Cref{eq:hc-cost} can be represented as a function of cuts in the subgraphs of $G$. 
By \Cref{obs:opt-binary}, we can assume w.log. that the HC-tree is binary. For each non-leaf-node $z$ of $\TT$, we associate a cut $(A,B)$, denoted by $\cut(\TT[z])$.  Let $z_1$ and $z_2$ be the child nodes of $z$ in $\TT$, such that $|\leaves(\TT(z_1))| \leq |\leaves(\TT(z_2))|$. Then, we set $A := \leaves(\TT(z_1))$ and $B:= \leaves(\TT(z_2))$. 
Observe that in \Cref{eq:hc-cost}, the multiplicative factor of $w(e)$ for each edge $e=(u,v)\in E$ is equal to the minimum cluster size for $u$ and $v$ to be in the same cluster. Hence, we can alternatively write $\cost_G(\TT)$ in~\Cref{eq:hc-cost} as follows: 
\begin{align}
	\cost_G(\TT) = \sum_{\substack{\text{$(A,B):= \cut(\TT[z])$ for} \\ \text{internal nodes $z$ of $\TT$}}} w(A,B) \cdot \card{A \cup B}. \label{eq:hc-cost-cut}
\end{align}

\subsubsection*{Approximately optimal hierarchical clustering trees as balanced trees}

Dasgupta's work proved that finding the optimal trees for the hierarchical clustering function is NP-hard \cite{Dasgupta16}. Therefore, major efforts to study efficient HC algorithms have been devoted to \emph{approximation algorithms}. It is known that we can find an approximation of the optimal hierarchical clustering by recursively applying approximate \emph{balanced minimum cuts} on the graph. More formally, we define balanced cuts and balanced trees as follows.

\begin{definition}[\textbf{$\beta$-Balanced Cuts and Trees}]\label{def:beta-tree}
For any parameter $\beta$ such that $0 < \beta < 1$, we say that a cut $(A,B)$ is \emph{$\beta$-balanced} if 
\[ \max\{\card{A}, \card{B}\} \leq (1-\beta) \cdot \card{A \cup B}.\] 
A $\beta$-balanced cut $(A,B)$ is said to be a $\beta$-balanced minimum cut if for any $\beta$-balanced cut $(A',B')\neq (A,B)$, there is $w(A,B)\leq w(A', B')$. Moreover, we say a HC tree $\TT$ is a \emph{$\beta$-balanced tree} if for every non-leaf node $z$ of $\TT$, $\cut(\TT[z])$ is $\beta$-balanced.
\end{definition}

One way to create $\beta$-balanced trees is to recursively apply the $\beta$-balanced minimum cuts to the induced subgraphs, formally defined as follows.

\begin{definition}[\textbf{Recursive $\beta$-balanced Min-cut Procedure}]\label{def:beta-min-procedure}
We say a HC tree $\TT$ is obtained by the \emph{recursive $\beta$-balanced min-cut procedure} on $G$ if for each non-leaf node $z$ of $\TT$, the $\cut(\TT[z])$ is obtained by a $\beta$-balanced minimum cut $(A,B)$ on the subgraph induced by $\TT[z]$. 
\end{definition}

It is known by \cite{charikar2017approximate} that if one applies the procedure in \Cref{def:beta-min-procedure}, it is possible to get a constant approximation of the optimal HC tree. 
\begin{lemma}[cf. \cite{charikar2017approximate}]\label{thm:beta-exist}
	For any graph $G=(V,E,w)$, let $\TT_{balanced}$ be a $(1/3)$-balanced tree obtained by the procedure in \Cref{def:beta-min-procedure} with $\beta=\frac{1}{3}$. There is 
	\[
	\cost_G(\TT_{balanced}) \leq 9 \cdot \OPT(G).
	\]
\end{lemma}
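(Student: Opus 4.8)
The plan is to compare $\TT_{balanced}$ against a fixed \emph{binary} optimal HC-tree $\TT^{*}$ of $G$ (binarity is without loss of generality by \Cref{obs:opt-binary}), using the cut reformulation \eqref{eq:hc-cost-cut}. Write $V_z:=\leaves(\TT_{balanced}[z])$ and $(A_z,B_z):=\cut(\TT_{balanced}[z])$ for internal nodes $z$ of $\TT_{balanced}$, so that $(A_z,B_z)$ is a minimum-weight $\tfrac13$-balanced cut of $G_{V_z}$ (by \Cref{def:beta-min-procedure}) and $\cost_G(\TT_{balanced})=\sum_z w(A_z,B_z)\cdot|V_z|$. The goal is to charge each summand $w(A_z,B_z)\cdot|V_z|$ against the contribution of a well-chosen set of edges to $\cost_G(\TT^{*})=\OPT(G)=\sum_{e=(u,v)}w(e)\cdot m^{*}_e$, where $m^{*}_e:=|\leaves(\TT^{*}[u\vee_{\TT^{*}}v])|$, so that no edge is over-charged.

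First I would establish a ``balanced cut from a heavy path'' claim: for any internal $z$, restrict $\TT^{*}$ to the leaf set $V_z$ (keep only leaves in $V_z$, suppress degree-two nodes) to get a binary tree on $|V_z|\ge 2$ leaves; walking from its root repeatedly into the child with more leaves reaches a node whose leaf set $S_z\subseteq V_z$ satisfies $|S_z|\in(\tfrac13|V_z|,\tfrac23|V_z|]$, with $S_z=V_z\cap\leaves(\TT^{*}[\tilde y_z])$ for the corresponding node $\tilde y_z$ of $\TT^{*}$. Then $(S_z,V_z\setminus S_z)$ is a $\tfrac13$-balanced cut of $G_{V_z}$, so by minimality $w(A_z,B_z)\le w_{G_{V_z}}(S_z,V_z\setminus S_z)$. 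Expanding this weight over its cut edges and swapping the order of summation gives
\[
\cost_G(\TT_{balanced})\;\le\;\sum_{e=(u,v)\in E}w(e)\sum_{z\in Z_e}|V_z|,\qquad
Z_e:=\bigl\{z\text{ internal}:\ u,v\in V_z,\ |\{u,v\}\cap\leaves(\TT^{*}[\tilde y_z])|=1\bigr\}.
\]

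The crux is to show $\sum_{z\in Z_e}|V_z|\le 9\,m^{*}_e$ for every edge $e=(u,v)$. Two facts do it. (i) Every $z$ with $u,v\in V_z$ lies on the root-to-$\mathrm{lca}_{\TT_{balanced}}(u,v)$ path of $\TT_{balanced}$, and each step down that path shrinks $|V_z|$ by a factor at most $\tfrac23$ by $\tfrac13$-balancedness, so the values $|V_z|$ along it decrease geometrically. (ii) If $z\in Z_e$, then exactly one endpoint of $e$ lies in $\leaves(\TT^{*}[\tilde y_z])$; a node that is an ancestor of exactly one endpoint of $e$ must lie strictly below $u\vee_{\TT^{*}}v$ in $\TT^{*}$, so $V_{\tilde y_z}\subsetneq V_{u\vee_{\TT^{*}}v}$ and hence $|V_z|<3|S_z|\le 3|\leaves(\TT^{*}[\tilde y_z])|<3\,m^{*}_e$. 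Combining, $\sum_{z\in Z_e}|V_z|$ is dominated by a geometric series with largest term $<3m^{*}_e$ and ratio $\le\tfrac23$, so it is $<3\cdot 3m^{*}_e=9m^{*}_e$. Summing over edges, $\cost_G(\TT_{balanced})\le 9\sum_e w(e)m^{*}_e=9\,\OPT(G)$.

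I expect the main obstacle to be \emph{finding} this charging rather than verifying it. The natural ``local'' recursion --- bound $\cost_{G_{A_z}}(\cdot)\le 9\,\OPT(G_{A_z})$ by induction and add up the $\OPT$'s of induced subgraphs level by level using \Cref{obs:partition} --- loses a $\Theta(\log n)$ factor, since it effectively pays one $\OPT(G)$ per level of $\TT_{balanced}$ and there are $\Theta(\log n)$ levels; so a genuinely global argument is needed. The balanced-cut-from-a-heavy-path construction is exactly what supplies both ingredients at once: the comparison $w(A_z,B_z)\le w_{G_{V_z}}(S_z,\cdot)$ and the size bound $|V_z|<3m^{*}_e$ on every node charging $e$; the rest reduces to the geometric-series estimate. (Degenerate cases --- $|V_z|\le 2$, or $G_{V_z}$ having no internal edges --- only make the relevant summand zero or trivially balanced, and should be noted in passing.)
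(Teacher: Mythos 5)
Your proof is correct, and it takes a genuinely different route from the one in the paper. The paper (following~\cite{charikar2017approximate}) works through the ``edge footprint'' machinery of \Cref{def:edge-footprint}: it rewrites $\cost_G(\TT^*)$ as a double sum of footprints over all scales $t$ (\Cref{clm:cost-to-footprint}, costing one factor of $3$), bounds each balanced min-cut by the footprints of $\TT^*$ at scale $\tfrac23 r$ via a regrouping of the \emph{maximal} $\TT^*$-clusters of size at most $\tfrac23 r$ into two balanced sides (\Cref{clm:split-cost-charge-balanced-min}, costing the other factor of $3$ through $r \le 3s$), spreads that charge over the range of scales $t \in (\card{S_2}, \card{S_1}+\card{S_2}]$, and finishes with a disjointness-of-ranges lemma (\Cref{clm:cluster-disjoint}). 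You instead extract a \emph{single} balanced separator node of $\TT^*$ restricted to $V_z$ (the heavy-path argument) to witness $w(A_z,B_z)\le w_{G_{V_z}}(S_z,V_z\setminus S_z)$, and then do per-edge accounting directly: every node charging $e$ has $\card{V_z}<3\,m^*_e$ because the witnessing $\TT^*$-node sits strictly below $u\vee_{\TT^*}v$, and the charges along the root-to-lca path of $\TT_{balanced}$ decay geometrically with ratio $\tfrac23$ by balancedness, giving $3\cdot 3=9$. Your version buys a shorter, self-contained argument with no footprint bookkeeping and no disjointness lemma (your two factors of $3$ come from the separator-size bound and the geometric series rather than from \Cref{clm:cost-to-footprint} and $r\le 3s$); the paper's version buys a scale-indexed charging that it reuses verbatim, with one inequality changed, to prove the $O(\sqrt{\log n})$ algorithmic statement in \Cref{prop:beta-compute} — though your comparison step $w(A_z,B_z)\le \alpha\cdot w(S_z,V_z\setminus S_z)$ would adapt to approximate balanced cuts just as easily, yielding $9\alpha$. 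Your diagnosis that the naive level-by-level recursion via \Cref{obs:partition} loses a $\Theta(\log n)$ factor is exactly what the paper's weaker \Cref{prop:beta-exist-weak} does.
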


\Cref{thm:beta-exist} was previously proved in \cite{charikar2017approximate} with an unspecified constant ($O(1)$), and we provide a self-contained proof with the exact constant in \Cref{app:proof-beta-tree-approx}. 

Note that \Cref{thm:beta-exist} is structural and computing balanced minimum cut itself is not an easy task. Indeed, finding the exact balanced minimum cut is a NP-hard problem. However, it is known that one can compute an $O(\sqrt{\log{n}})$ approximation for balanced minimum cut in polynomial time by~\cite{AroraHK04}. As such, we can obtain an $O(\sqrt{\log{n}})$-approximation algorithm for $\OPT(G)$ by recursively applying the $O(\sqrt{\log{n}})$-approximate $1/3$-balanced cut.
\begin{proposition}[cf. \cite{charikar2017approximate}]
\label{prop:beta-compute}
There exists a polynomial-time algorithm that given a weighted undirected graph $G=(V,E,w)$, computes a $\frac{1}{3}$-balanced HC-tree $\TT$ such that 
\begin{align*}
\cost_G(\TT) \leq O(\sqrt{\log{n}})\cdot \OPT(G).
\end{align*}
\end{proposition}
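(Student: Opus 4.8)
The plan is to instantiate the recursive $\tfrac13$-balanced min-cut procedure of \Cref{def:beta-min-procedure} with an \emph{approximate} balanced cut in place of the (NP-hard) exact one, and then to carry the approximation factor through the cost bound of \Cref{thm:beta-exist}. Concretely, I would start from the polynomial-time $O(\sqrt{\log k})$ approximation of~\cite{AroraHK04} for balanced minimum cut: given a weighted graph $H$ on $k$ vertices, it returns a $\tfrac13$-balanced cut $(A,B)$ of $H$ (up to adjusting the balance constant, which will not affect the conclusion; see the last paragraph) whose weight satisfies $w(A,B) \le O(\sqrt{\log k}) \cdot \phi$, where $\phi$ denotes the minimum weight of a $\tfrac13$-balanced cut of $H$. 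Running this oracle recursively on induced subgraphs, exactly as in \Cref{def:beta-min-procedure}, produces an HC-tree $\TT$ which is $\tfrac13$-balanced by construction. This runs in polynomial time: the balance guarantee forces every child subgraph to have at most $\tfrac23$ of its parent's vertices, so the recursion has depth $O(\log n)$ and makes $\poly(n)$ oracle calls overall, each taking $\poly(n)$ time; moreover each call is on a subgraph with at most $n$ vertices, so its approximation factor is uniformly $O(\sqrt{\log n})$.

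For the cost bound I would reuse the analysis behind \Cref{thm:beta-exist} (carried out in \Cref{app:proof-beta-tree-approx}). That argument controls $\cost_G$ of the tree from \Cref{def:beta-min-procedure} using, at each internal node $z$, only $(i)$ that the chosen cut is $\tfrac13$-balanced and $(ii)$ an \emph{upper} bound on its weight in terms of the minimum $\tfrac13$-balanced cut of the induced subgraph at $z$; it never uses that the cut is exactly minimum. Replacing the exact cut by our approximate oracle keeps $(i)$ verbatim and relaxes $(ii)$ to ``$\le O(\sqrt{\log n})$ times the minimum''. Since the bound of \Cref{thm:beta-exist} is monotone and linear in these per-node cut weights, the final guarantee is inflated by exactly that factor, giving $\cost_G(\TT) \le 9 \cdot O(\sqrt{\log n}) \cdot \OPT(G) = O(\sqrt{\log n}) \cdot \OPT(G)$; equivalently, one re-runs the analysis of \Cref{thm:beta-exist} with the approximate cut weights in place of the exact ones.

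The step I expect to be the main obstacle is verifying that this $\sqrt{\log n}$ loss is \emph{multiplicative} and does not compound over the $O(\log n)$ recursion levels. This is precisely the reason \Cref{thm:beta-exist} insists on \emph{balanced} cuts: its analysis is already engineered to avoid a per-level $\log n$ loss --- each level's contribution is charged against $\OPT$ of a \emph{disjoint} family of induced subgraphs, whose optima sum to at most $\OPT(G)$ by \Cref{obs:partition} --- and this charging is oblivious to whether the cuts are exact or merely $\rho$-approximate, so the $\rho$ factors cleanly out. A secondary, minor wrinkle is that the routine of~\cite{AroraHK04} is technically a pseudo-approximation: it may return a cut balanced to a constant slightly different from $\tfrac13$ while comparing its weight to a slightly more balanced optimum. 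This is harmless since a version of \Cref{thm:beta-exist} holds for any fixed constant balance parameter (with the leading constant depending on it); choosing parameters so that the returned cuts are $\tfrac13$-balanced --- or simply stating \Cref{prop:beta-compute} with whatever constant balance the routine achieves --- leaves the $O(\sqrt{\log n})$ conclusion intact.
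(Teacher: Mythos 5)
Your proposal is correct and follows essentially the same route as the paper: the paper's proof in \Cref{app:proof-beta-tree-approx} likewise runs the recursive procedure with the $O(\sqrt{\log n})$-approximate balanced cut of~\cite{AroraHK04}, observes that minimality of the cut enters the analysis of \Cref{thm:beta-exist} only through the single inequality \Cref{equ:balanced-min-charge} (which it relaxes to \Cref{equ:balanced-min-approx-charge} with the extra $O(\sqrt{\log n})$ factor), and then repeats the same charging argument, together with the same polynomial-time accounting via the $\poly(n)$ bound on the number of internal nodes. Your remark on the pseudo-approximation nature of the balanced-cut routine is a reasonable extra caveat that the paper glosses over but that does not affect the conclusion.
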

We provide a self-contained proof of \Cref{prop:beta-compute} in \Cref{app:proof-beta-tree-approx}.

\subsection{Basic Graph Algorithms Backgrounds}
\label{subsec:def}

In this section, we review a few standard graph algorithm definitions and results related to graph expansion and cut sparsifiers. 
\subsubsection*{Graph expansion} 
For a weighted graph $G=(V,E,w)$, we define the graph (edge) expansion as follows.
\begin{definition}
\label{def:graph-expansion}
The edge expansion of a graph $G=(V,E,w)$ is 
\begin{align*}
\Phi_{G} = \min_{\substack{S \subseteq V\\ \card{S}\leq \frac{n}{2}}}\frac{w(S, \bar{S})}{\card{S}}, 
\end{align*}
where $w(S, \bar{S})$ is the total edge weights between $S$ and $\bar{S}$. 
\end{definition}

The notion of edge expansion gives us a convenient tool to control the upper and lower bound of the hierarchical clustering cost, which is crucial to our proof in \Cref{sec:lb-polylog-memory}.

\subsubsection*{Cut sparsifiers}
We now describe the notion of cut sparsifers. On the high level, a cut sparsifier aims to `sparsify' the edges by redistributing the weights to certain `key edges'. By only storing a substantially smaller number of edges, the resulting graph can still maintain the weight of any \emph{global} cut by a small approximation factor. Formally, 

\begin{definition}[\textbf{Cut Sparsifier}]\label{def:cutsparse}
	Given a graph $G=(V,E,w_G)$, we say that a weighted subgraph $H :=(V,E_H,w_H)$ is a $(1\pm\eps)$-cut sparsifier of $G$ if for all non-empty $A \subset V$, the following holds:
	\[
	(1-\eps)\cdot w_G(A,\bar{A}) \leq w_H(A,\bar{A}) \leq (1+\eps)\cdot  w_G(A,\bar{A}),
	\]
	where $w_G(A,\bar{A})$ (resp. $w_H(A,\bar{A})$) denotes the  weight of cut-edges in $(A,\bar{A})$ in $G$ (resp. in $H$). 
\end{definition}

The work by Bencz{\'{u}}r and Karger \cite{BenczurK96} first shows that such a spasifier exists for any graph, and it can be constructed in polynomial time. Furthermore, in the graph streaming model, it is known that with $\tilde{O}(n)$ memory, one can achieve an $\eps$-sparsifier in a \emph{single} pass. 

\begin{proposition}[\!\!\cite{AhnG09,McGregor14}]\label{prop:streaming-cut-sparsifier}
There exists a single-pass streaming algorithm that given a graph $G=(V,E,w)$, computes a $(1\pm\eps)$-cut spasifier of $G$ with a memory of $O(\frac{n\log^3(n)}{\eps^2})$ words. 
\end{proposition}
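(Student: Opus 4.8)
The plan is to reduce the streaming task to the \emph{offline} Bencz\'ur--Karger construction \cite{BenczurK96} via the standard \emph{merge-and-reduce} (logarithmic-method) scheme, so that the only graph-theoretic input is the fact that every weighted $n$-vertex graph admits a $(1\pm\delta)$-cut sparsifier with $O(n\log n/\delta^2)$ edges, computable in polynomial time. Two elementary closure properties of \Cref{def:cutsparse} do all of the work: (i) \emph{composition} --- if $H$ is a $(1\pm\alpha)$-cut sparsifier of $G$ and $H'$ is a $(1\pm\beta)$-cut sparsifier of $H$, then $H'$ is a $(1\pm(\alpha+\beta+\alpha\beta))$-cut sparsifier of $G$; and (ii) \emph{additivity} --- if $G_1,G_2$ are edge-disjoint graphs on the same vertex set and $H_i$ is a $(1\pm\alpha)$-cut sparsifier of $G_i$, then the weighted union $H_1\cup H_2$ is a $(1\pm\alpha)$-cut sparsifier of $G_1\cup G_2$, since the weight of every cut is the sum of its weights in the two edge-disjoint parts. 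Both follow directly from the definition. I also record $m:=\card{E}\le\binom{n}{2}$, hence $\log m=O(\log n)$.

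The algorithm I would use behaves like an incrementing binary counter. Set $\delta:=c\,\eps/\log m$ for a small absolute constant $c$. Conceptually, place the $m$ stream edges at the leaves of a balanced binary tree of depth $d=\ceil{\log m}=O(\log n)$, with the graph $G_v$ at an internal node $v$ being the edge-disjoint union of the edges below $v$; the target output is a cut sparsifier of $G_{\mathrm{root}}=G$. The streaming implementation maintains at most one object $S_i$ per level $i\in\{0,\dots,d\}$, where an occupied $S_i$ is a $(1\pm\delta)^{i}$-cut sparsifier of the union of some $2^{i}$ already-seen stream edges and stores $O(n\log n/\delta^2)$ edges (for the bottom few levels this is simply the raw edge set). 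A newly arrived edge enters as an exact level-$0$ object. Whenever two objects collide at a level $i$, we form their weighted union (property (ii) preserves the accuracy exponent), run \cite{BenczurK96} on the resulting $O(n\log n/\delta^2)$-edge graph to reduce it back down while multiplying the error once more (property (i) raises the exponent to $i+1$), and promote the result to level $i+1$. When the stream ends, fold all occupied levels together bottom-up and apply \cite{BenczurK96} one final time. The stream length need not be known in advance, and $d=O(\log n)$ holds regardless since $m<n^2$.

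For correctness I would argue by induction on the counter dynamics that every object ever held at level $i$ is a $(1\pm\delta)^{i}$-cut sparsifier of the edge-disjoint union of the stream edges it was built from; since $i\le d+1=O(\log n)$ and $\delta=c\eps/\log m$, we get $(1\pm\delta)^{O(\log n)}=1\pm O(c\eps)\subseteq(1\pm\eps)$ once $c$ is small enough, so the final object is a valid $(1\pm\eps)$-cut sparsifier of $G$; all computation is polynomial time. For space, at any instant we hold $O(d)=O(\log n)$ objects, each of $O(n\log n/\delta^2)=O(n\log^3 n/\eps^2)$ edges, i.e.\ that many machine words; performing merges one at a time and reusing scratch space, the working memory is dominated by a single level's sparsifier, giving the stated $O(n\log^3 n/\eps^2)$ words (a looser accounting gives one extra $\log n$ factor absorbed into \polylog).

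The main obstacle will be the error accounting across the $\Theta(\log n)$ nested ``reduce'' steps: one must verify that re-sparsifying a \emph{weighted} sparsifier --- rather than the original graph --- is legitimate, which it is because \cite{BenczurK96} applies verbatim to weighted inputs, and that the errors compound multiplicatively in the controlled way of property (i) rather than accumulating adversarially. A secondary, more bookkeeping-heavy point is pinning the memory to exactly $O(n\log^3 n/\eps^2)$ words; the cleanest route to this (and, as a bonus, to handling edge deletions) is to replace merge-and-reduce by the linear-sketching cut-sparsifier construction of \cite{AhnGM12b,AhnG09}, which maintains a single $O(n\log^3 n/\eps^2)$-word sketch of the edge set and recovers the sparsifier at the end. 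I would present merge-and-reduce as the primary argument, since it is entirely self-contained given \cite{BenczurK96}.
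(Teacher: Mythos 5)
The paper offers no proof of this proposition; it is imported verbatim from \cite{AhnG09,McGregor14}, and your merge-and-reduce argument is precisely the standard route taken there (McGregor's survey presents exactly this scheme, resting on the same two closure properties of \Cref{def:cutsparse} plus the offline Bencz\'ur--Karger construction). So in approach you are aligned with the source, and the correctness half of your argument --- composition, additivity, the induction on the counter, and the choice $\delta=\Theta(\eps/\log m)$ so that $(1\pm\delta)^{O(\log n)}\subseteq(1\pm\eps)$ --- is sound.

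The one genuine flaw is the final space accounting. When the binary counter reads all-ones, levels $0$ through $d-1$ are \emph{simultaneously} occupied, each by a sparsifier of $O(n\log n/\delta^2)=O(n\log^3 n/\eps^2)$ edges; ``performing merges one at a time and reusing scratch space'' only controls the transient doubling during a single merge, not the $\Theta(\log n)$ resident objects. So your primary argument honestly yields $O(n\log^4 n/\eps^2)$ words, not the $O(n\log^3 n/\eps^2)$ claimed in the statement, and choosing level-dependent error parameters $\delta_i$ does not help (optimizing $\sum_i n\log n/\delta_i^2$ subject to $\sum_i\delta_i\le\eps$ again gives the uniform choice and the same $\log^4$ bound). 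Your fallback --- invoking the sketch-based construction of \cite{AhnG09,AhnGM12b} for the tighter constant --- is essentially circular here, since that is the very result being cited. None of this affects the paper downstream, where $\eps$ is a constant and $\polylog$ factors are absorbed, but as a proof of the proposition as literally stated you should either present merge-and-reduce with the weaker $O(n\log^4 n/\eps^2)$ bound or genuinely open up the cited constructions.
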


\subsection{Standard Definitions from Information-Theory and Fourier Analysis}\label{subsec:info}

Finally, we review basic definitions from information-theory and Fourier analysis that we use in our paper. \Cref{sub-app:info-theoretic-facts} contains the details on these definitions and their key properties for us. 

\begin{definition}[\textbf{KL-divergence}]
\label{def:kl-div}
Let $X$ and $Y$ be two discrete random variables supported over the domain $\Omega$ with distributions $\mu_{X}$ and $\mu_{Y}$. The \textnormal{\textbf{KL-divergence}} between $X$ and $Y$ is defined as 
\begin{align*}
\kl{X}{Y} := \sum_{\omega\in \Omega} \mu_{X}(\omega)\log\paren{\frac{\mu_{X}(\omega)}{\mu_{Y}(\omega)}}.
\end{align*}
\end{definition}

We shall note that KL-divergence does \emph{not} satisfy triangle inequality; however, it does admit a chain-rule which plays an important role in our proofs. 

\begin{definition}
\label{def:tvd}
Let $X$ and $Y$ be two discrete random variables supported over the domain $\Omega$ with distributions $\mu_{X}$ and $\mu_{Y}$. The \textnormal{\textbf{total variation distance (TVD)}} between $X$ and $Y$ is defined as 
\begin{align*}
\tvd{X}{Y} := \frac{1}{2}\sum_{\omega\in \Omega}\card{\mu_X(\omega)-\mu_Y(\omega)}. 
\end{align*}
\end{definition}

The total variation distance is a metric and satisfies triangle inequality; it is also closely related to the probability of success of a \emph{Maximum Likelihood Estimator} (MLE) for distinguishing a source of a sample. 
TVD can be upper bound via KL-divergence by Pinsker's inequality. 

Finally, we use the following definition of Fourier transform on Boolean hypercube. 

\begin{definition}\label{def:fourier}
The \textnormal{\textbf{Fourier transform}} of a function $f: \set{0,1}^{n} \rightarrow \IR$ is a function $\hf : 2^{[n]} \rightarrow \IR$: 
\[
	\hf(S) := \sum_{x \in \set{0,1}^n} \frac{1}{2^n} \cdot f(x) \cdot \mathcal{X}_S(x),
\]
where $\mathcal{X}_S(x) := (-1)^{\sum_{i \in S} x_i}$. We refer to each $\hf(S)$ as a \textnormal{\textbf{Fourier coefficient}}. 
\end{definition}


\newcommand{\ftE}[3]{\ensuremath{f^{#1}_{\,#2}\paren{\,#3}}}

\section{A Semi-Streaming Algorithm for Hierarchical Clustering}
\label{sec:upper-bound}

We introduce our main upper bound result in this section, which gives a single-pass streaming algorithm that uses a memory of $\tilde{O}(n)$ words and asymptotically matches the approximation factor of the best \emph{offline} HC algorithms. As mentioned before, the high-level idea of our algorithm is to maintain a $(1\pm \eps)$-cut sparsifier throughout the stream, and run offline HC algorithms on the sparsifier graph. Since \Cref{eq:hc-cost-cut} gives a way of expressing the cost of $\TT$ as sum of costs of a series of induced cuts, the cut sparsifier intuitively `preserves' the quality of cut-based heuristic algorithms. However, the main roadblock for such an idea is that the cut sparsifier only (approximately) preserves the value of \emph{global} cuts and \emph{not} necessarily the \emph{induced} ones (as in \Cref{eq:hc-cost-cut}). In this section, we settle the problem in \Cref{sec:sparsification} by establishing the relationship between global cuts and the cost of the HC-trees. We then present the main algorithm in \Cref{sec:streaming}.

\subsection{A Sparsification Result for Hierarchical Clustering}\label{sec:sparsification}

We now give the formal statement for the relationship between the costs of HC trees on $G$ and on its $(1\pm \eps)$-cut sparsifier $H$ as follows. 

\begin{theorem}\label{thm:sparsification}
	Let $G =(V,E,w_G)$ be any weighted undirected graph, $H =(V,E_H,w_H)$ be an $(1\pm\eps)$-cut sparsifier of $G$ for some $\eps \in (0,1)$, and $\TT$ be any $\beta$-balanced HC-tree on vertices $V$. Then, 
	\begin{align*}
		(1-\eps) \cdot \beta \cdot \cost_G(\TT) \leq \cost_H(\TT) \leq (1+\eps) \cdot \frac{1}{\beta} \cdot \cost_G(\TT). 
	\end{align*}
\end{theorem}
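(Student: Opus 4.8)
The plan is to work with the cut-based expression for the HC cost given in \Cref{eq:hc-cost-cut}. Since $\TT$ is fixed and $\beta$-balanced, write
\[
\cost_G(\TT) = \sum_{z} w_G(A_z,B_z)\cdot \card{A_z \cup B_z}
\qquad\text{and}\qquad
\cost_H(\TT) = \sum_{z} w_H(A_z,B_z)\cdot \card{A_z \cup B_z},
\]
where $(A_z,B_z) := \cut(\TT[z])$ ranges over internal nodes $z$ of $\TT$ and the cardinality factors $\card{A_z\cup B_z}$ are identical in the two sums because they depend only on $\TT$, not on the edge weights. So it suffices to compare $w_G(A_z,B_z)$ with $w_H(A_z,B_z)$ term by term. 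The obstacle is that the pair $(A_z,B_z)$ is an \emph{induced} cut in general ($B_z \ne \overline{A_z}$), so the cut-sparsifier guarantee of \Cref{def:cutsparse} does not apply directly to $w(A_z,B_z)$.

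The key idea is to express each induced cut $w(A_z,B_z)$ as a linear combination of \emph{global} cut values, to which the sparsifier guarantee does apply. The standard identity is inclusion–exclusion: for disjoint $A,B$,
\[
w(A,B) = \tfrac12\bigl(w(A,\overline A) + w(B,\overline B) - w(A\cup B, \overline{A\cup B})\bigr).
\]
Applying the $(1\pm\eps)$ sandwich from \Cref{def:cutsparse} to each of the three global cuts on the right gives $w_H(A_z,B_z) \le (1+\eps)w_G(A_z,B_z) + \eps\cdot(\text{correction})$; the correction terms are where the balancedness must be used. More precisely, I would not try to bound a single induced cut in isolation, but rather sum the identity over all nodes $z$ on a single level of $\TT$, or better, reorganize the whole sum $\sum_z w(A_z,B_z)\card{A_z\cup B_z}$ so that what appears are global cuts $w(C,\overline C)$ for clusters $C$ induced by $\TT$, each weighted by a telescoping coefficient coming from the cardinalities at $C$ and its parent/children. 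This is exactly where $\beta$-balancedness enters: it guarantees $\card{A_z}, \card{B_z} \ge \beta\card{A_z\cup B_z}$, so when a global cut term $w(C,\overline C)$ gets split between the "downward" contribution (at node $C$ itself) and the "upward" contribution (at $C$'s parent), the ratio of cluster sizes is controlled by a factor $1/\beta$ (and $\beta$), which is what produces the asymmetric $\beta$ and $1/\beta$ factors in the statement.

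Concretely, the steps I would carry out: (1) rewrite $\cost_G(\TT)$ purely in terms of global cuts $w_G(C,\overline C)$ over $C \in \{\text{clusters of }\TT\}$, by using the inclusion–exclusion identity at every internal node and collecting coefficients — each cluster $C$ (other than $V$) appears once "positively" from the node where it is a child and once "negatively" from the node where it is split, yielding a coefficient that is a difference of two cluster sizes; (2) observe all these coefficients are nonnegative and, crucially, lie between (roughly) $\beta\card{C}$ and $\card{C}/\beta$ times something uniform, using \Cref{def:beta-tree}; (3) apply the $(1\pm\eps)$ cut-sparsifier bound to each global term $w_G(C,\overline C)$; (4) re-collect to get $\cost_H(\TT)$ back, absorbing the coefficient distortion into the $\beta$ and $1/\beta$ factors and the $\eps$ distortion into $(1\pm\eps)$. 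The main obstacle — and the part that needs genuine care rather than routine algebra — is Step (1)–(2): showing that the global-cut reorganization has \emph{nonnegative} coefficients with the right two-sided bounds, so that a per-term $(1\pm\eps)$ bound can be applied without sign issues. If a clean closed form for the coefficients is elusive, the fallback is to prove the two inequalities separately by a top-down induction on $\TT$, at each internal node $z$ using that $w_H(A_z,B_z)$ is sandwiched between $(1\pm\eps)$ times $w_G$ plus an error expressible via the global cuts of $A_z$, $B_z$ themselves, which are in turn handled by the inductive hypothesis applied to the subtrees $\TT[z_1], \TT[z_2]$ together with the balancedness of $z$.
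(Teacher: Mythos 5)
Your approach is sound but genuinely different from the paper's, and it is worth seeing how the two diverge at the key step. The paper's proof (\Cref{lem:W(G)}) defines $W(G) := \sum_z \frac12\bigl(w_G(A_z,\bar{A_z})+w_G(B_z,\bar{B_z})\bigr)\card{A_z\cup B_z}$, i.e.\ it simply \emph{drops} the negative term $-w_G(A_z\cup B_z,\overline{A_z\cup B_z})$ from your inclusion--exclusion identity. This makes $W(G)$ a nonnegative combination of global cuts for free (so the sparsifier guarantee applies termwise), at the price of over-counting: each edge $e=(u,v)$ is charged not only at $u\vee v$ but also at every strict ancestor of $u$ and $v$ below $u\vee v$, and it is exactly here that $\beta$-balancedness is used, to sum the resulting geometric series of cluster sizes and bound the over-count by $\frac{1}{\beta}\cost_G(\TT)$ --- this is where the asymmetric $\beta$ and $1/\beta$ factors come from. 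Your route keeps the negative term and telescopes exactly; the concern you flag about nonnegativity in fact resolves trivially and without any balancedness assumption: after regrouping, the coefficient of $w(C,\bar{C})$ for a cluster $C$ of $\TT$ is $\frac12\bigl(\card{\leaves(\TT[\textrm{pa}(C)])}-\card{\leaves(\TT[C])}\bigr)$ if $C$ is an internal non-root node (it appears once positively as a child of its parent and once negatively as the node being split), $\frac12\card{\leaves(\TT[\textrm{pa}(C)])}$ if $C$ is a leaf, and the root term vanishes since $w(V,\emptyset)=0$; every coefficient is nonnegative because a parent always has strictly more leaves than its child. Applying the $(1\pm\eps)$ guarantee termwise to this exact identity then yields $(1-\eps)\cost_G(\TT)\le\cost_H(\TT)\le(1+\eps)\cost_G(\TT)$ for \emph{every} binary HC-tree, balanced or not --- strictly stronger than \Cref{thm:sparsification}, and precisely the $(1+o(1))$-preservation result of the independent work \cite{AgarwalKLP22} that the paper acknowledges improves on its $O(1)$ guarantee. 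So your step (2) is wrong only in a favorable direction: the coefficients do not need to be sandwiched using $\beta$ at all, and the inductive fallback is unnecessary. The one caution is that, as you note, the per-node application of $(1\pm\eps)$ fails because of the minus sign; the argument must go through the fully telescoped global identity, not node by node.
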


The key step to prove \Cref{thm:sparsification} is the following lemma, which `massages' the cost function in~\Cref{eq:hc-cost-cut} to a series of global cuts, albeit with some loss. This is done by crucially using the balanced property of the tree $\TT$. On the high level, such a `massage' is possible from balanced HC trees in the following sense. Suppose for every cut $(A,B)$, instead of charging $w_{G}(A,B)$ with a $\card{A\cup B}$ multiplicative factor, let us additionally charge the edges in $w_{\text{extra}}(A,B) :=  w_{G}(A, \bar{A})\cup w_{G}(B, \bar{B}) \setminus w_{G}(A,B)$ also with a $\card{A\cup B}$ multiplicative factor. Indeed, this introduces some extra terms to the cost. We will show that the extra costs introduced as such is at most a \emph{constant} factor of the hierarchical clustering cost.

Fix a cut $(A^*,B^*)$ and suppose it is associated with node $u$ in $\TT$. Note that by \Cref{eq:hc-cost-cut}, the edges in $w_{G}(A^*,B^*)$ \emph{never} incur any costs outside the induced subtree $\TT[u]$. Furthermore, for nodes inside the induced subtree $\TT[u]$ (other than $u$ itself), edges in $w_{G}(A^*,B^*)$ inccur costs by contributing to $w_{\text{extra}}(A,B)$, where either $A\cup B\subseteq A^*$ or $A\cup B\subseteq B^*$. Crucially, since $\TT$ is \emph{balanced}, the multiplicative factor $\card{A\cup B}$ on $e \in w_{G}(A^*,B^*)$ decreases \emph{exponentially}. Therefore, the extra contribution for edges in  $w_{G}(A^*,B^*)$ on the internal nodes of $\TT[u]$ other than $u$ follows a geometric series. As such, the overhead of the cost introduced by the global cut terms is at most an $O(1)$ multiplicative factor of the HC cost.

We now formalize the above intuition as the following lemma.
	
\begin{lemma}\label{lem:W(G)}
		Let $G=(V,E,w)$ be any arbitrary graph and $\TT$ be a $\beta$-balanced HC-tree of $G$. Define: 
		\[
			W(G) := \hspace{-1cm}\sum_{\substack{\\ \\ \text{$(A,B):= \cut(\TT[u])$ for} \\ \text{internal nodes $u$ of $\TT$}}} \hspace{-1cm} \!\!\frac12 \cdot (w_G(A,\bar{A})+w_G(B,\bar{B})) \cdot \card{A \cup B}
		\]
		then, 
		\[
			\cost_G(\TT) \leq W(G) \leq \frac1{\beta} \cdot \cost_G(\TT). 
		\]
\end{lemma}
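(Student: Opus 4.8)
## Proof Proposal for Lemma \ref{lem:W(G)}

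The plan is to establish the two inequalities separately, both by exploiting the $\beta$-balanced structure of $\TT$ together with the cut-based cost expression in \Cref{eq:hc-cost-cut}.

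\textbf{Lower bound $\cost_G(\TT) \le W(G)$.} This direction should be immediate: for every internal node $u$ with associated cut $(A,B) := \cut(\TT[u])$, the cut edges $\delta(A,B)$ are a subset of $\delta(A,\bar A)$ and also of $\delta(B,\bar B)$ (since $B \subseteq \bar A$ and $A \subseteq \bar B$), so $w_G(A,B) \le \tfrac12(w_G(A,\bar A) + w_G(B,\bar B))$. Summing over all internal nodes $u$ and comparing term-by-term against \Cref{eq:hc-cost-cut} gives $\cost_G(\TT) \le W(G)$.

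\textbf{Upper bound $W(G) \le \tfrac1\beta \cdot \cost_G(\TT)$.} This is the substantive direction and follows the intuition sketched before the lemma. The plan is a \emph{charging/reindexing argument}: in $W(G)$, each edge $e = (x,y)$ is charged, at every internal node $u$, a contribution of $\tfrac12(w\text{-membership in }\delta(A,\bar A)\text{ or }\delta(B,\bar B)) \cdot |A\cup B|$. Fix an edge $e$ and let $u_e$ be the lowest common ancestor of $x$ and $y$ in $\TT$, with $\cut(\TT[u_e]) = (A^*, B^*)$; note $e \in \delta(A^*,B^*)$. The edge $e$ contributes to the $W(G)$-sum at node $u$ only if $e \in \delta(A,\bar A) \cup \delta(B,\bar B)$ where $(A,B) = \cut(\TT[u])$; I claim this forces $u$ to lie on the root-to-$u_e$ path or to be $u_e$ itself. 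Indeed, if $u$ is strictly below $u_e$ then both endpoints of $e$ lie in $\leaves(\TT[u_e])$ but at most one lies in $\leaves(\TT[u]) = A \cup B$, so $e$ separates $A\cup B$ from its complement only if exactly one endpoint is in $A\cup B$ — in which case $e \in \delta(A,\bar A)\cup\delta(B,\bar B)$ and contributes $\tfrac12 |A\cup B|$. If $u$ is not an ancestor of either endpoint, $e$ contributes nothing. If $u = u_e$, the contribution is $\tfrac12(\ldots)\cdot|A^*\cup B^*|$ and since $e \in \delta(A^*,B^*)\subseteq \delta(A^*,\bar{A^*})\cap\delta(B^*,\bar{B^*})$ this equals exactly $|A^*\cup B^*| = |\leaves(\TT[u_e])|$, matching precisely the contribution of $e$ to $\cost_G(\TT)$ in \Cref{eq:hc-cost} up to the weight $w(e)$.

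It therefore remains to bound the \emph{extra} contributions of $e$ at the proper ancestors $u$ of... wait — more precisely, at the nodes $u$ on the path from $u_e$ down to the leaf $x$ (and similarly down to $y$), since those are the nodes strictly below $u_e$ whose leaf-set $A\cup B$ can be separated by $e$. Along the $u_e \to x$ branch, the sizes $|\leaves(\TT[u])|$ form a strictly decreasing sequence, and $\beta$-balancedness (\Cref{def:beta-tree}) gives that each such size is at most $(1-\beta)$ times the size at its parent. Hence, writing $\ell := |\leaves(\TT[u_e])|$, the total extra contribution of $e$ along the $u_e\to x$ branch is at most $\tfrac12 \sum_{j\ge 1} (1-\beta)^j \ell = \tfrac{1-\beta}{2\beta}\,\ell$, and the same bound holds along the $u_e \to y$ branch; together with the exact contribution $\ell$ at $u_e$ we get that $e$'s total charge in $W(G)$ is at most $\big(1 + \tfrac{1-\beta}{\beta}\big)\ell = \tfrac1\beta\,\ell$. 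Multiplying by $w(e)$ and summing over all edges $e \in E$ — using that $e$'s charge in $\cost_G(\TT)$ is exactly $w(e)\cdot\ell$ with the same $\ell = |\leaves(\TT[u_e])|= |\leaves(\TT[x\vee y])|$ — yields $W(G) \le \tfrac1\beta\,\cost_G(\TT)$.

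\textbf{Main obstacle.} The routine bookkeeping is in verifying exactly which nodes $u$ an edge $e$ gets charged at and that the per-edge charge at $u_e$ in $W(G)$ coincides with the full $\cost_G$ contribution (no double counting, correct factor of $\tfrac12$). The one genuine subtlety is the geometric-decay step: I must make sure the balancedness bound $|A\cup B| \le (1-\beta)|A'\cup B'|$ for a child versus parent is applied along a \emph{single} root-leaf branch so the sizes genuinely decrease geometrically, and that summing the two branches ($\to x$ and $\to y$) does not overcount the node $u_e$ (it does not, since $u_e$ is handled separately). Everything else is a direct term-by-term comparison with \Cref{eq:hc-cost,eq:hc-cost-cut}.
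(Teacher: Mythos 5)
Your proof is correct and follows essentially the same route as the paper: the easy direction via $\delta(A,B) \subseteq \delta(A,\bar{A}) \cap \delta(B,\bar{B})$, and the upper bound by reindexing $W(G)$ per edge over the nodes strictly between the lowest common ancestor and the two leaves, then summing the geometric series forced by $\beta$-balancedness to get the $1 + \frac{1-\beta}{\beta} = \frac{1}{\beta}$ factor. The only blemish is the mis-stated intermediate claim that the contributing nodes lie on the \emph{root}-to-$u_e$ path (they lie on the $u_e$-to-leaf paths), which you immediately self-correct, so the final accounting matches the paper's.
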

\begin{proof}
Firstly, by~\Cref{eq:hc-cost-cut}, 
	\begin{align*}
		\cost_G(\TT) &= \sum_{\substack{\text{$(A,B):= \cut(\TT[u])$ for} \\ \text{internal nodes $u$ of $\TT$}}} w_G(A,B) \cdot \card{A \cup B} \\ 
		&\leq \hspace{-1cm} \sum_{\substack{\\ \\ \text{$(A,B):= \cut(\TT[u])$ for} \\ \text{internal nodes $u$ of $\TT$}}} \hspace{-1cm} \frac12 \cdot (w_G(A,\bar{A})+w_G(B,\bar{B})) \cdot \card{A \cup B} = W(G), 
	\end{align*}
	simply because $w_G(A,\bar{A}),w_G(B,\bar{B}) \geq w_G(A,B)$ as the set of edges in each term of the LHS is a superset of edges in RHS. This proves the first (and easy) part of the lemma. 

We now show that the parameter $W(G)$ is also not much larger than $\cost_G(\TT)$. Fix an edge $e = (u,v) \in E$. 	Let $P(u)$ and $P(v)$ denote the leaf-to-root paths of $u$ and $v$ in $\TT$, respectively. 
Additionally, let 
\[
P^*(u) := \set{w \in P(u) \mid w \neq u, v \notin \leaves(\TT[w])},
\]
\[
P^*(v) := \set{w \in P(v) \mid w \neq v, u \notin \leaves(\TT[w])}.
\]
That is, the paths $P^*(u)$ and $P^*(v)$ are the portions of $P(u)$ and $P(v)$, which are \emph{strictly} between the leaves and $u \vee v$. 

With these definitions, we can alternatively write $W(G)$ as: 
\begin{align*}
	 &W(G) = \sum_{e = (u,v) \in E} w(e) \cdot \paren{\card{\leaves(\TT[u \vee v])} + \frac12 \sum_{w \in P^*(u) \cup P^*(v)} \card{\leaves(\TT[w])}}. 
\end{align*}
This is because in each of the nodes $w \in P^*(u)$, $u$ belongs to either $A$ or $B$, while $v$ does not belong to either, and thus we get a contribution of $\frac12 w(e)$ in exactly one of $w(A,\bar{A})$ or $w(B,\bar{B})$; this is similarly the case for nodes in $P^*(v)$; finally, $u \vee v$ is the only other node that splits $u$ and $v$ and in this case $e$ contributes $\frac12 w(e)$ to both $w(A,\bar{A})$ and $w(B,\bar{B})$. See~\Cref{fig:W(G)} for an illustration. 

\begin{figure}[h!]
	\centering
	\includegraphics[scale=0.2]{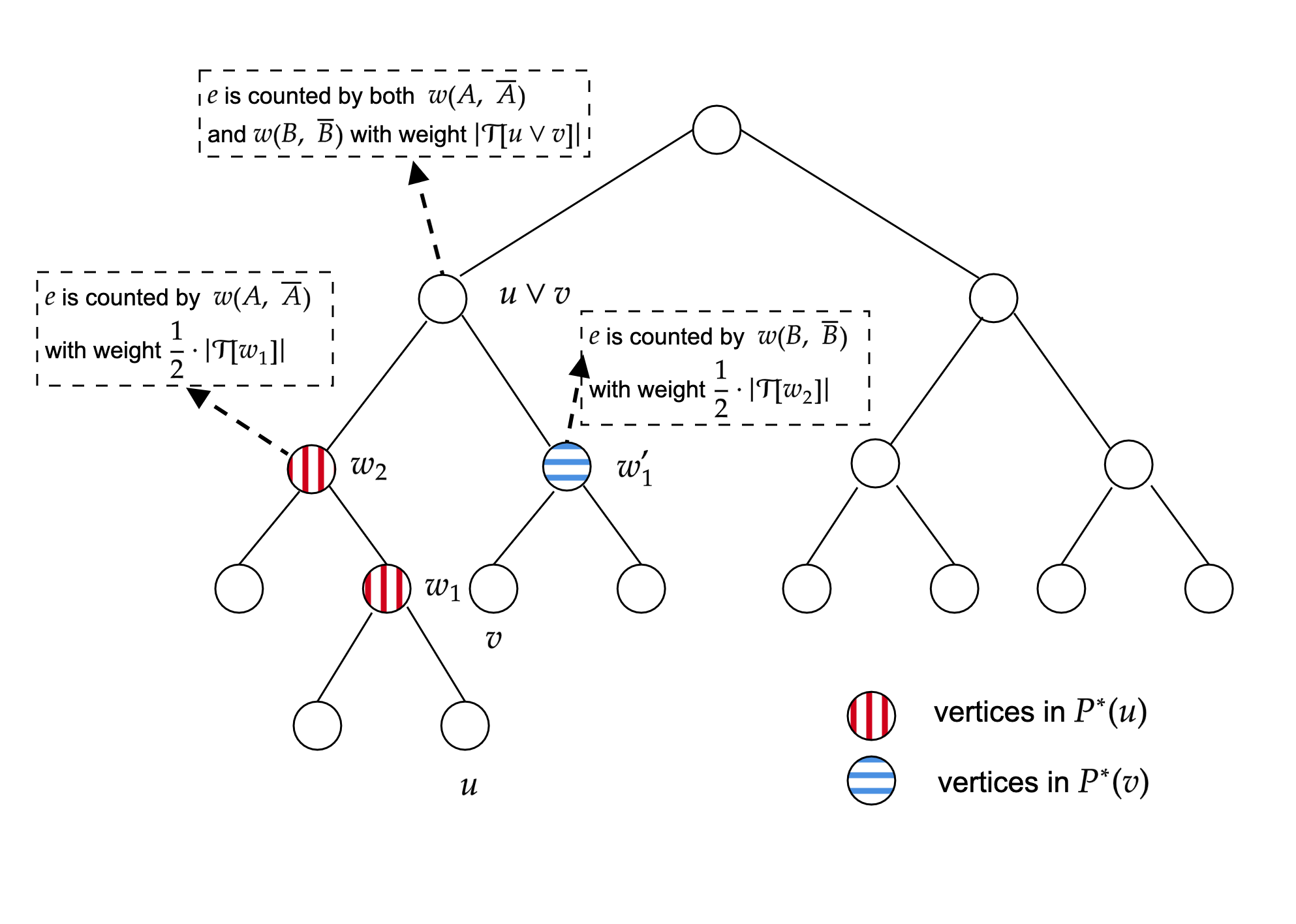}
	\caption{An illustration of the alternative equation for $W(G)$.}\label{fig:W(G)}
\end{figure}

We now use the balancedness of $\TT$ to simplify the above bound further. Let $P^*(u) = (w_1,\ldots,w_k)$, with $w_k$ being a child-node of $u \vee v$, which, for simplicity of notation, we denote by $w_{k+1}$. Considering $\TT$ is 
$\beta$-balanced, we have that for every $i \in [k]$, 
\[
	\card{\leaves(\TT[w_i])} \leq (1-\beta) \cdot \card{\leaves(\TT[w_{i+1}])}. 
\]
As such, $\card{\leaves(\TT[w_i])}$ forms a geometric series and we thus have, 
\begin{align*}
	\sum_{i=1}^{k} \card{\leaves(\TT[w_i])} & \leq \sum_{i=1}^{\infty} (1-\beta)^{i} \card{\leaves(\TT[w_{k+1}])} \\
	&= \frac{1-\beta}{\beta} \cdot \card{\leaves(\TT[w_{k+1}])} \\
	& =  \frac{1-\beta}{\beta} \cdot \card{\leaves(\TT[u \vee v])}. 
\end{align*}
This can similarly be done for $P^*(v)$, thus leaving us with: 
\[
	\frac12 \sum_{w \in P^*(u) \cup P^*(v)} \card{\leaves(\TT[w])} \leq  \frac{1-\beta}{\beta} \cdot \card{\leaves(\TT[u \vee v])}. 
\]
Plugging in this bound in the equation above gives us
\begin{align*}
	W(G) \leq  (1+\frac{1-\beta}{\beta}) \cdot \sum_{e = (u,v) \in E} w(e) \cdot \card{\leaves(\TT[u \vee v])} = \frac{1}{\beta} \cdot \cost_G(\TT), 
\end{align*}
where the final equality is by~\Cref{eq:hc-cost}. 
\end{proof}

\begin{proof}[Proof of~\Cref{thm:sparsification}]
	Consider the values $W(G)$ and $W(H)$ as defined by~\Cref{lem:W(G)} for graphs $G$ and $H$, respectively. Since $W(\cdot)$ is a linear function of weights of global cuts and $H$ is an $\eps$-sparsifier of $G$, 
	we have that, 
	\[
		(1-\eps) \cdot W(G) \leq W(H) \leq (1+\eps) \cdot W(G). 
	\]
	By applying~\Cref{lem:W(G)} for $\cost_G(\TT)$ and $\cost_H(\TT)$, we have that, 
	\begin{align*}
		&\cost_H(\TT) \leq W(H) \leq (1+\eps) \cdot W(G) \leq \frac{1+\eps}{\beta} \cdot \cost_G(\TT), \\
		&\cost_H(\TT) \geq \beta \cdot W(H) \geq \beta \cdot (1-\eps) \cdot W(G) \geq\beta \cdot (1-\eps)  \cdot \cost_G(\TT), 
	\end{align*}
	finalizing the proof. 
\end{proof}

\subsection{A Semi-Streaming Algorithm for Hierarchical Clustering}\label{sec:streaming} 
\Cref{thm:sparsification} implies that a HC tree $\TT$ that works well on the $(1\pm \eps)$-cut sparsifier $H$ also performs well on $G$, provided $\TT$ is balanced. Therefore, we can obtain an algorithm by first maintaining a $(1\pm \eps)$-cut sparsifier, and then finding a balanced HC tree with a good approximation factor on the sparsifier graph. This leads to our main algorithm, presented as follows.

\begin{theorem}\label{thm:streaming}
	There is a single-pass (deterministic) semi-streaming algorithm for hierarchical clustering that uses $O({n\log^3(n)})$ space and achieves an $O(\sqrt{\log{n}})$-approximation in polynomial time and an $O(1)$ approximation in exponential time.
\end{theorem}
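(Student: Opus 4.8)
The plan is to instantiate the general recipe advertised after \Cref{thm:sparsification}: maintain a cut sparsifier in the single pass and then run an offline \emph{balanced-tree} HC algorithm on it in post-processing. Concretely, during the pass we maintain a $(1\pm\eps)$-cut sparsifier $H=(V,E_H,w_H)$ of the input $G$ with the constant $\eps=\tfrac12$, which by \Cref{prop:streaming-cut-sparsifier} uses $O(n\log^3 n)$ words of space. Once the stream ends we compute, using only the stored $H$, a $\tfrac13$-balanced HC-tree $\TT$ on $H$ and output it: in polynomial time we invoke the $O(\sqrt{\log{n}})$-approximate recursive balanced-cut algorithm of \Cref{prop:beta-compute}, and in exponential time we instead run the \emph{exact} recursive $\tfrac13$-balanced min-cut procedure of \Cref{def:beta-min-procedure} on $H$, which by \Cref{thm:beta-exist} satisfies $\cost_H(\TT)\le 9\,\OPT(H)$. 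In either case $\TT$ is $\tfrac13$-balanced by construction, so \Cref{thm:sparsification} applies to it.

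The analysis is a short chain of inequalities, and the only subtle point is that \Cref{thm:sparsification} transfers costs only for \emph{balanced} trees — hence $\OPT(H)$ is \emph{not} a priori comparable to $\OPT(G)$ — so every tree in the argument must be kept balanced. First, transfer in the direction $G\to H$: let $\TT_{balanced}$ be the $\tfrac13$-balanced tree from \Cref{thm:beta-exist} with $\cost_G(\TT_{balanced})\le 9\,\OPT(G)$; applying \Cref{thm:sparsification} (with $\beta=\tfrac13$, $\eps=\tfrac12$) gives $\OPT(H)\le\cost_H(\TT_{balanced})\le \tfrac{1+\eps}{\beta}\,\cost_G(\TT_{balanced})=O(\OPT(G))$. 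Second, transfer in the direction $H\to G$ applied to the tree $\TT$ we actually computed: $\cost_G(\TT)\le \tfrac{1}{(1-\eps)\beta}\,\cost_H(\TT)\le \tfrac{c}{(1-\eps)\beta}\,\OPT(H)$, where $c$ is the offline approximation factor on $H$ (either $O(\sqrt{\log{n}})$ or $9$). Composing the two bounds yields $\cost_G(\TT)\le \tfrac{9c(1+\eps)}{(1-\eps)\beta^2}\,\OPT(G)$; with $\beta=\tfrac13$ and $\eps=\tfrac12$ this is an $O(c)$-approximation, i.e.\ $O(\sqrt{\log{n}})$ in polynomial time and $O(1)$ in exponential time.

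It remains to tally resources. The space is that of \Cref{prop:streaming-cut-sparsifier} at constant $\eps$, namely $O(n\log^3 n)$ words, since the offline post-processing touches only $H$ and makes no further use of the stream; the running time is polynomial, resp.\ exponential (the latter because exact $\tfrac13$-balanced min-cut is NP-hard), as claimed; and the algorithm is deterministic provided the cut sparsifier underlying \Cref{prop:streaming-cut-sparsifier} is maintained by a deterministic semi-streaming construction. The one genuine obstacle — already handled by \Cref{thm:sparsification} together with \Cref{thm:beta-exist} — is precisely this balanced-tree restriction: one must resist arguing via $\OPT(H)\approx\OPT(G)$ (false in general) and instead sandwich the offline routine between two applications of the sparsification theorem, one in each direction, anchored by a near-optimal \emph{balanced} witness tree for $G$. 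After that, the proof is only a matter of fixing the constants and tracking which side of each inequality is which.
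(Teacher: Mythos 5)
Your proposal is correct and follows essentially the same route as the paper: maintain a constant-$\eps$ cut sparsifier via \Cref{prop:streaming-cut-sparsifier}, run the (approximate or exact) recursive $\tfrac13$-balanced min-cut procedure on it, and transfer costs with \Cref{thm:sparsification}. The only difference is that you make explicit the two-directional sandwich (using \Cref{thm:beta-exist} to produce a balanced witness tree for $G$ so that $\OPT(H)=O(\OPT(G))$), a step the paper's proof leaves implicit but certainly relies on.
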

\begin{proof}
Throughout the stream, we simply maintain a $(1\pm\eps)$-cut sparsifier $H$ of the input graph $G$ using the algorithms with $O(n\log^3\!{n})$ space, as prescribed in \Cref{prop:streaming-cut-sparsifier} (set $\eps$ as a constant). We then compute an $O(\sqrt{\log{n}})$-approximation to the best $(1/3)$-balanced HC-tree of $H$ using the algorithm in \Cref{prop:beta-compute}.

To analyze the approximation ratio, the resulting $(1/3)$-balanced HC-tree by the algorithm in \Cref{prop:beta-compute} is an $O(\sqrt{\log{n}})$ approximation of the optimal HC tree of $H$. Furthermore, by \Cref{thm:sparsification}, the cost of any $(1/3)$-balanced tree in $H$ remains within an $O(1)$-factor of its cost in $G$. Hence, we have an $O(\sqrt{\log{n}})$ approximation for $\OPT(G)$.

Finally, in exponential time, we can brute-force find the exact minimum $(1/3)$-balanced cut on every subgraph of $H$ induced by the HC tree. By~\Cref{thm:beta-exist}, the HC-tree is a $9$-approximation of the optimal cost on $H$, which provides an $O(1)$-approximation for $\OPT(G)$ by \Cref{thm:sparsification}.
\end{proof}

\begin{remark}
	The algorithms can be extended to dynamic streams by increasing the space by some $\polylog{(n)}$ factors and using randomization -- we simply use a dynamic streaming algorithm of \cite{AhnGM12b} for finding a cut sparsifier instead.
\end{remark}

\section{A Lower Bound for Algorithms with $o(n)$ Memory}\label{sec:lower} 

In \Cref{rst:upper-bound}, we showed that there is a semi-streaming algorithm for the hierarchical clustering problem that asymptotically achieves 
the best approximation ratio possible for offline hierarchical clustering on any graph. The number of passes used by this algorithm is clearly optimal and its space is just within log-factors of its output size, the HC-tree, and is thus again near-optimal. 

Nevertheless, one could consider a potentially more space-efficient algorithm (e.g. $o(n)$-memory) for a simpler variant of the problem where the goal is to simply measure the ``clusterability'' of the input graph, i.e., estimate the \emph{value} (cost) of the optimal solution as opposed 
to returning the entire tree. In this section, we prove that this seemingly easier problem still does not admit a better solution even when allowing $\poly\!\log{(n)}$-passes over the input! Formally, 

\begin{theorem}\label{thm:streaming-lower}
	Any streaming algorithm that can estimate the value of optimal hierarchical clustering on every $n$-vertex graphs with approximation ratio $o(\frac{\log{n}}{\log\log{n}})$ and $\poly\!\log{(n)}$-passes requires $\Omega(n/\poly\!\log{(n)})$ space. 
\end{theorem}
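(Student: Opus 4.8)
The plan is to prove \Cref{thm:streaming-lower} by a reduction from the streaming lower bound for the (noisy) gap cycle counting problem of~\cite{AssadiN21}. First I would state the result of~\cite{AssadiN21} in the precise form needed (the ``slight variation'' alluded to in the introduction): there are two distributions $\mathcal{D}_{\mathsf{Yes}}$ and $\mathcal{D}_{\mathsf{No}}$ over $n$-vertex graphs presented as edge streams such that (i)~a graph $G\sim\mathcal{D}_{\mathsf{Yes}}$ is a vertex-disjoint union of cycles of length $\Theta(n)$ together with a collection of ``noise'' paths each of length at most $\poly\!\log{(n)}$; (ii)~a graph $G\sim\mathcal{D}_{\mathsf{No}}$ is a vertex-disjoint union of cycles of length $\poly\!\log{(n)}$ together with a similarly-distributed family of noise paths; and (iii)~any (possibly randomized) algorithm that makes $\poly\!\log{(n)}$ passes over the stream and distinguishes $\mathcal{D}_{\mathsf{Yes}}$ from $\mathcal{D}_{\mathsf{No}}$ with probability at least $2/3$ must use $\Omega(n/\poly\!\log{(n)})$ space.

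The second step is to compute, up to constant factors, the optimal Dasgupta cost of a graph drawn from each distribution, using only the structural facts of~\Cref{subsec:hc-stanard-facts}. Since each instance is a vertex-disjoint union of cycles and paths, \Cref{lem:disjoint} lets us sum the optimal costs of the components, and \Cref{lem:cycle,lem:path} give $\OPT(C_m)=m\log m+O(m)$ and $\OPT(P_m)=m\log m+O(m)$. For $G\sim\mathcal{D}_{\mathsf{Yes}}$, the $\Theta(1)$ cycles of length $\Theta(n)$ contribute $\Theta(n\log n)$ and the noise paths contribute $O(n\log\log n)$, so $\OPT(G)=\Theta(n\log n)$; for $G\sim\mathcal{D}_{\mathsf{No}}$, every cycle and every noise path has length $\poly\!\log{(n)}$, so the components together contribute $\Theta(n\log\log n)$, i.e.\ $\OPT(G)=\Theta(n\log\log n)$. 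Hence the optimal HC value differs by a factor $\Theta\big(\tfrac{\log n}{\log\log n}\big)$ between the two families; writing $\OPT(G_{\mathsf{Yes}})\ge a\,n\log n$ and $\OPT(G_{\mathsf{No}})\le b\,n\log\log n$ for suitable constants $a,b>0$ and all large $n$ records the quantitative gap we need.

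Third, I would turn any streaming algorithm $\mathcal{A}$ that uses $s$ space, makes $\poly\!\log{(n)}$ passes, and outputs an estimate $\widehat{V}\in[\OPT(G),\,c(n)\cdot\OPT(G)]$ with $c(n)=o\big(\tfrac{\log n}{\log\log n}\big)$ into a distinguisher for $\mathcal{D}_{\mathsf{Yes}}$ vs.\ $\mathcal{D}_{\mathsf{No}}$: run $\mathcal{A}$ on the input stream and output $\mathsf{Yes}$ iff $\widehat{V}>\tau$, where $\tau:=\tfrac12\,a\,n\log n$. If the input is drawn from $\mathcal{D}_{\mathsf{Yes}}$ then $\widehat{V}\ge\OPT(G)\ge a\,n\log n>\tau$; if it is drawn from $\mathcal{D}_{\mathsf{No}}$ then $\widehat{V}\le c(n)\cdot b\,n\log\log n<\tau$, the last inequality holding for all sufficiently large $n$ precisely because $c(n)=o(\tfrac{\log n}{\log\log n})$ forces $c(n)\,b\log\log n<\tfrac12 a\log n$. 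Thus the distinguisher is correct whenever $\mathcal{A}$ succeeds, uses $O(s+\log n)$ space and $\poly\!\log{(n)}$ passes, so part~(iii) yields $s=\Omega(n/\poly\!\log{(n)})$, which is exactly the claim. (A two-sided estimate $\widehat{V}\in[\OPT(G)/c(n),\,\OPT(G)\cdot c(n)]$ is handled identically after rescaling $\tau$ and $c(n)$ by constant factors.)

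I expect the main obstacle to be the second step, and within it the bookkeeping for the noise paths: the argument only goes through if the total HC cost of the noise is $O(n\log\log n)$, since otherwise it would dominate the $\Theta(n\log\log n)$ contributed by the short cycles and collapse the gap to $\Theta(1)$. This forces us to use a version of~\cite{AssadiN21} in which the noise paths have length at most $\poly\!\log{(n)}$ (equivalently, the noise covers only an $o(1)$ fraction of the vertices); isolating exactly such a statement from~\cite{AssadiN21} and checking that it still delivers the $\Omega(n/\poly\!\log{(n)})$ space bound against $\poly\!\log{(n)}$ passes is the delicate part. Once the instance and its cost computation are pinned down, the remaining arithmetic and the reduction itself are routine.
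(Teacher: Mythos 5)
Your proposal is correct and takes essentially the same route as the paper: a reduction from the noisy gap cycle counting lower bound of Assadi and N., using \Cref{lem:disjoint,lem:path,lem:cycle} to show the optimal HC cost is $\Theta(n\log n)$ in the long-cycle case versus $\Theta(n\log\log n)$ in the short-cycle case (with the noise paths contributing only $O(n\log\log n)$), and then thresholding the estimate; the paper instantiates this with short cycles and noise paths of length $k=\Theta(\log^c n)$ and $p=\sqrt{k}$ passes to extract the $\Omega(n/\polylog{(n)})$ bound. One minor correction: in the actual instance the noise paths cover a \emph{constant} fraction of the vertices ($3n/4$ of them), not an $o(1)$ fraction as your parenthetical suggests — what matters is only that each path has length $\polylog{(n)}$, so its per-vertex cost is $O(\log\log n)$, which is exactly what your computation uses.
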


To prove this theorem, we use a reduction from the following variant of the \emph{noisy cycle counting (NOC)} problem of Assadi and N. \cite{AssadiN21}.
\begin{proposition}\label{prop:NOC}
	For infinitely many choices of $n,k$ such that $k < \sqrt{n}$, the following is true. Suppose $\ALG$ is a $p$-pass $s$-space algorithm that distinguishes  the following two families of graphs: 
	\begin{itemize}
		\item[$-$] a vertex-disjoint collection of $2$ cycles of length $n/8$ each and $\frac{3n}{4k}$ paths of length $k$ each; 
		\item[$-$] a vertex-disjoint collection of $\frac{n}{8k}$ cycles of length $2k$ each and $\frac{3n}{4k}$ paths of length $k$ each.
	\end{itemize}
	Then, we have that, 
	\[
		s = \Omega(\frac{1}{p^5} \cdot (n/k)^{1-\gamma \cdot \nicefrac{p}{k}}),
	\]
	for some absolute constant $\gamma \in (0,1)$.\footnote{The extra $k$-paths in the above family are what one considers ``noise''; they are seemingly necessary for the proof of~\Cref{prop:NOC} itself and thus we need
	to prove the reductions \emph{despite} the existence of these extra paths not \emph{because} of their existence.}  
\end{proposition}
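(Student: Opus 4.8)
The plan is to obtain \Cref{prop:NOC} as (essentially) a re-instantiation of the streaming lower bound of Assadi and N.~\cite{AssadiN21} for the noisy gap-cycle-counting problem, which they prove via their streaming XOR lemma; the only real work is to set the combinatorial parameters of their hard construction so that the two resulting distributions are exactly the two families in the statement, and to absorb the cosmetic difference that our ``long'' case has two cycles rather than one. Recall the shape of the \cite{AssadiN21} instance: the stream is revealed in $\Theta(k)$ ``layers'', each layer exposing a partial perfect matching on a fixed vertex set, and a hidden bit-string controls, at the boundary of each cyclic block of layers, whether the partial walks ``close up'' into a cycle or keep growing. In the \textsf{Yes} configuration the walks close only after covering $\Theta(n)$ vertices, producing an $O(1)$ number of long cycles, whereas in the \textsf{No} configuration each block closes after $\Theta(k)$ vertices, producing $\Theta(n/k)$ short cycles; in both cases a constant fraction of the vertices sit on ``dangling'' paths of length $\Theta(k)$ that never close (the noise). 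Their theorem bounds the distinguishing advantage of any $p$-pass $s$-space algorithm by $o(1)$ unless $s=\Omega(p^{-5}(n/k)^{1-\gamma p/k})$, and the quantitative core of this argument --- a Fourier/KKL estimate on a decorrelated two-party subgame --- is insensitive to the exact number of long cycles, the exact cycle lengths, and the exact number of noise paths.

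First I would do the parameter bookkeeping. Reserve $n/4$ vertices for the cycle-forming part and $3n/4$ for the noise part; lay the noise part out as $\frac{3n}{4k}$ vertex-disjoint paths of length $k$, exactly as in \cite{AssadiN21}. On the cycle-forming part, use $\Theta(k)$ layers of matchings and choose the ``closing'' coordinates of the hidden string so that in the \textsf{Yes} case the $n/4$ cycle vertices split into exactly two cycles of $n/8$ vertices each (the walks on two symmetric halves of the vertex set close only after $n/8$ vertices), while in the \textsf{No} case every block closes immediately, yielding $\frac{n}{8k}$ cycles of length $2k$. A quick check gives $2\cdot(n/8)+\frac{3n}{4k}\cdot k=n$ and $\frac{n}{8k}\cdot 2k+\frac{3n}{4k}\cdot k=n$, so both families use exactly $n$ vertices and are internally consistent; the divisibility and $k<\sqrt n$ constraints are inherited verbatim from \cite{AssadiN21}, which is exactly why the claim holds only for infinitely many $(n,k)$. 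With these choices the streaming XOR lemma of \cite{AssadiN21} applies unchanged --- its hypotheses concern only the per-layer base problem and the layer count $\Theta(k)$ --- and yields the stated tradeoff $s=\Omega(p^{-5}(n/k)^{1-\gamma p/k})$ with the same absolute constant $\gamma$.

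The one place that needs a genuine check is the reduction of \cite{AssadiN21} from a multi-party game to a two-party game and the subsequent decorrelation step: I must confirm that having \emph{two} long cycles instead of one does not damage the hybrid/decorrelation argument. This is fine because the two long cycles can be placed on two disjoint halves of the cycle-forming vertex set, glued by independent copies of the same closing gadget; a short hybrid over the (at most $O(1)$) gluing events, plus the observation that each copy perturbs the vertex/layer accounting by only an additive $O(1)$, shows the distinguishing advantage is at most an $O(1)$ factor times that of the single-long-cycle instance analyzed in \cite{AssadiN21}. The noise paths require no argument at all: they are precisely the dangling components already present there, and the lower bound is proved \emph{despite} (not because of) them, so importing it with our parameters is immediate --- which is the point of the footnote to \Cref{prop:NOC}. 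In short, the main obstacle is structural bookkeeping rather than analysis: re-deriving, rather than reusing verbatim, the \cite{AssadiN21} hybrid so that it tolerates an $O(1)$ number of long cycles and the precise noise-path count; the KKL-based heart of their bound is then invoked as a black box.
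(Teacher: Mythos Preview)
The paper does not prove \Cref{prop:NOC} at all: it is stated as a cited result, introduced as ``the following variant of the \emph{noisy cycle counting (NOC)} problem of Assadi and N.~\cite{AssadiN21},'' with a footnote elsewhere noting that the authors ``use a slight variation of the problem that follows immediately from~\cite{AssadiN21} but is somewhat different from the description in that work.'' So there is nothing to compare your proposal against on the paper's side.

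That said, your proposal is a reasonable and essentially correct gloss on why the variant follows from~\cite{AssadiN21}: you correctly identify that the only differences are cosmetic (two long cycles versus one, and the specific vertex-count split between cycle and noise parts), that the core Fourier/KKL argument is insensitive to these, and that the parameter arithmetic checks out. This is more justification than the paper itself offers. If anything, you are over-engineering the ``two long cycles'' point --- the \cite{AssadiN21} construction already naturally produces a small constant number of long cycles rather than exactly one Hamiltonian cycle, so no additional hybrid is needed.
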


The proof of~\Cref{thm:streaming-lower} is by showing that the value of best HC-tree for the two different families of graphs in~\Cref{prop:NOC} differ considerably (for proper choice of parameter $k$). The proof of this separation of the costs crucially relies on the auxiliary lemmas for optimal HC trees in \Cref{subsec:hc-stanard-facts}. 

\begin{proof}[Proof of \Cref{thm:streaming-lower}]
	Let $k = \Theta(\log^{c}{(n)})$ for some fixed constant $c \geq 2$ and suppose $G$ is an $n$-vertex graph from one of the families of graphs in~\Cref{prop:NOC}. Using 
	\Cref{lem:path,lem:cycle,lem:disjoint}, we can infer the following. 
	
	Note that in both cases, each of the $k$-length paths induces a cost of $k\log{k} + O(k)$. In case one, there are two cycles, and each of them incurs a cost of at least $\frac{n}{16} \cdot \log{(\frac{n}{8})} + O(n)$ (the lower bound side of \Cref{lem:cycle}). As such, the total cost is at least
	\begin{align*}
		\OPT(G) \geq 2 \cdot \paren{\frac{n}{16} \cdot \log{(\frac{n}{8})} + O(n)} + \frac{3n}{4k} \cdot \paren{k\log{k} + O(k)} \geq \frac{n}{8}\cdot \log{n}. 
	\end{align*}
	
	On the other hand, in case two, each of the length-$2k$ cycles incurs a cost of at most $2k \cdot \log{(2k)} + O(k)$ (the upper bound side of \Cref{lem:cycle}). Therefore, the total cost is at most
	\begin{align*}
		\OPT(G) \leq \frac{n}{8k} \cdot \paren{2k \cdot \log{(2k)} + O(k)} + \frac{3n}{4k} \cdot \paren{k\log{k} + O(k)} \leq 2n\cdot \log{k}. 
	\end{align*}
	As such, any streaming algorithm that can estimate the value of $\OPT(G)$ to within a factor better than $\frac{\log{n}}{16 \cdot \log{k}}$ can distinguish between the two cases for $G$. 
	
	Considering the choice of $k$, any $o\!\paren{\frac{\log{n}}{\log\log{n}}}$-approximation algorithm would distinguish the graph families of~\Cref{prop:NOC}. Suppose the number of passes of the algorithm is $\sqrt{k} = O(\log^{c/2}{(n)})$. 
	Thus by~\Cref{prop:NOC}, we get that the space of the algorithm is
	\[
		\Omega(\frac{1}{p^5} \cdot (n/k)^{1-\gamma \cdot \nicefrac{p}{k}}) = \Omega(\frac{1}{\poly\!\log{(n)}} \cdot (n/\poly\!\log{(n)})^{1-\gamma \cdot 1/\poly\!\log{(n)}}) = \Omega(n/\poly\!\log{(n)}). 
	\]
	As we can set $c$ to be any arbitrary large constant, we obtain that any $\poly\!\log{(n)}$-pass streaming algorithm for hierarchical clustering requires $\Omega(n/\poly\!\log{(n)})$ space. 
\end{proof}

\newcommand{\nsub}{\ensuremath{n_\textnormal{sub}}\xspace}
\newcommand{\pa}[1]{\ensuremath{\textsf{pa}\left(#1\right)}\xspace}

\newcommand{\OME}{\ensuremath{\textnormal{\textsf{OvME}}}\xspace}
\newcommand{\HLP}{\ensuremath{\textnormal{\textsf{HLP}}}\xspace}
\newcommand{\HLPs}{\ensuremath{\textnormal{\textsf{HLP}}^{\star}}\xspace}

\newcommand{\kplayergame}{\textsf{k-One-vs-Many-Expander}\xspace}
\newcommand{\kplayerabb}{\textsf{k-OvME}\xspace}
\newcommand{\twoplayergame}{\textsf{2-One-vs-Many-Expander}\xspace}
\newcommand{\twoplayerabb}{\textsf{2-OvME}\xspace}
\newcommand{\singleedgegame}{\textsf{Single-Edge-One-vs-Many-Expander}\xspace}
\newcommand{\singleedgeabb}{\textsf{SE-OvME}\xspace}

\newcommand{\baselinecomprob}{\textsf{\chen{UNKNOWN-NAME}}\xspace}

\renewcommand{\prot}{\ensuremath{\Pi}}
\newcommand{\msg}[1]{\ensuremath{\textnormal{\texttt{msg}}(#1)}\xspace}
\newcommand{\out}[1]{\ensuremath{\textnormal{\texttt{out}}(#1)}\xspace}

\newcommand{\protOME}{\prot_{\OME}}
\newcommand{\protHLP}{\prot_{\HLP}}
\newcommand{\protHLPs}{\prot_{\HLPs}}

\newcommand{\unif}{\ensuremath{\mathcal{U}}}

\newcommand{\adv}[1]{\ensuremath{\textnormal{advantage}(#1)}}

\section{A Lower Bound for Algorithms with $\polylog\, {n}$ Memory}
\label{sec:lb-polylog-memory}
In this section, we prove another lower bound that shows that when the space of the algorithm is restricted to just $\poly\!\log\!{(n)}$ bits,  
even distinguishing between `highly clusterable' inputs versus ones that are `very far from being clusterable' is not possible. In particular, we show that,
\begin{theorem}
\label{thm:lb-polylog-space}
Any streaming algorithm that uses $\polylog\, {(n)}$ space cannot estimate the \emph{value} of hierarchical clustering with an approximation ratio of $n^{1/2-\delta}$ for any constant $\delta > 0$ with constant probability strictly better than half. 
\end{theorem}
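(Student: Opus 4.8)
The plan is to prove \Cref{thm:lb-polylog-space} by a reduction from a new communication/streaming problem that I call \emph{One-vs-Many-Expanders} (\OME): the $n$-vertex input graph is promised to be either \textbf{(Yes)} a single $d$-regular graph on $(1-o(1))\,n$ vertices with edge expansion $\Phi=\Omega(d)$, or \textbf{(No)} a vertex-disjoint union of $n^{1/2+o(1)}$ graphs, each on $n^{1/2-o(1)}$ vertices and each $d$-regular with edge expansion $\Omega(d)$, where $d=\Theta(\log n)$. The proof then splits into a \emph{cost-gap} step and a \emph{hardness} step, combined in the standard way: a $\polylog(n)$-space streaming algorithm that $n^{1/2-\delta}$-approximates $\OPT$ would, via the cost-gap step, solve \OME{} in $\polylog(n)$ space with the same success probability, contradicting the hardness step. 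Note \OME{} instances have only $O(n\log n)$ edges, so the reduction is space-cheap.

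\textbf{Cost gap.} The core claim is that any $d$-regular graph $H$ on $m$ vertices with $\Phi_H=\Omega(d)$ has $\OPT(H)=\Theta(d\,m^2)$. The upper bound is immediate from \Cref{lem:hc-max-cost-ub} since $H$ has $O(dm)$ edges. For the lower bound, fix any HC-tree $\TT$ and consider, for each leaf, the unique maximal node of $\TT$ whose leaf set has size $<m/2$; the corresponding leaf sets $P_1,\dots,P_r$ partition $V$ into parts each of size $<m/2$, so by expansion the number of edges crossing this partition is $\tfrac12\sum_i w(P_i,\bar P_i)\ge\tfrac12\Phi_H\,m=\Omega(dm)$, and every such crossing edge $e=(u,v)$ has $|\leaves(\TT[u\vee v])|\ge m/2$ (its LCA is a strict ancestor of the maximal node containing $u$), hence by \eqref{eq:hc-cost-cut} it contributes $\ge m/2$ to $\cost_H(\TT)$; summing gives $\cost_H(\TT)=\Omega(dm^2)$. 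Plugging $m=\Theta(n)$ in the Yes case gives $\OPT=\Theta(n^2\log n)$, while \Cref{lem:disjoint} in the No case gives $\OPT=n^{1/2+o(1)}\cdot\Theta(n^{1-o(1)}\log n)=n^{3/2+o(1)}\log n$; the ratio is $n^{1/2-o(1)}$, so for any constant $\delta>0$ an $n^{1/2-\delta}$-approximation separates the two families for all large $n$.

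\textbf{Hardness of \OME.} The heart of the proof is a $\polylog(n)$-space streaming lower bound for \OME, established through the multi-party-to-Fourier pipeline sketched in \Cref{subsec:technique}. First, cast \OME{} as a $k$-party one-way communication game (\kplayergame) in which the edges of the hidden expander(s) are split into $k$ slices, one per player, chosen so that each player's slice is marginally distributed identically under Yes and No inputs; a standard argument converts a low-space streaming algorithm into a low-communication protocol, and a hybrid argument à la Kapralov--Khanna--Sudan reduces the $k$-party game to a two-party one. Second, apply a \emph{decorrelation step}: define the \emph{advantage} $\adv{\Pi}$ of a protocol $\Pi$ via the KL-divergence between the transcript distributions on the relevant input families, and show that any $s$-bit protocol with non-negligible advantage yields an $s$-bit protocol that beats the trivial baseline on a cleaner ``base'' instance, reducing matters to a low-probability-of-success bound in the spirit of Assadi--N. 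Third, on that base problem invoke a Fourier-analytic argument based on the KKL inequality (following Gavinsky--Kempe--Kerenidis--Raz--Wigderson): the quantity a short message must be correlated with in order to tell ``one big expander'' from ``many small expanders'' behaves like a parity over many edge-coordinates, and the KKL bound then forces the message length — hence the streaming space — to be $n^{\Omega(1)}$, contradicting $\polylog(n)$.

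\textbf{Main obstacle.} I expect the bulk of the difficulty to lie in the design of the input distribution together with the decorrelation step, precisely because expanders are structurally very unlike the vertex-disjoint cycles underlying \Cref{prop:NOC}/\Cref{thm:streaming-lower}: they have $O(\log n)$ diameter and are globally well-connected, so there is no ``local indistinguishability'' to exploit directly and the standard gap-cycle-counting machinery does not apply off the shelf. The point of the KL-divergence-based notion of advantage is to repackage the \OME{} distinction into a form on which the Fourier/KKL toolkit applies; making the random expander construction, the $k$-way slicing, the hybrid step, and the decorrelation step compose cleanly — while preserving $\Phi=\Omega(d)$ on every piece and keeping the parameters consistent with $k=n^{1/2+o(1)}$ and $d=\Theta(\log n)$ — is where essentially all of the technical work is concentrated.
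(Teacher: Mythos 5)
Your proposal follows essentially the same route as the paper: the same One-vs-Many-Expanders distinguishing problem with $\Theta(\log n)$-degree, $\Omega(\log n)$-expansion pieces, the same cost-gap-plus-hardness decomposition, and the same four-stage hardness pipeline ($k$-party game $\to$ two-party hybrid $\to$ KL-divergence decorrelation $\to$ KKL/Fourier bound), which is exactly how the paper proves \Cref{thm:OME}. The one place where you genuinely deviate is the lower bound $\OPT(H)=\Omega(d m^2)$ for an expander: the paper derives it from \Cref{thm:beta-exist} (the recursive $1/3$-balanced min-cut tree is a $9$-approximation, and its first cut already costs $\Omega(n^2\log n)$ by expansion), whereas you argue directly on an arbitrary tree $\TT$ via the partition into maximal nodes with fewer than $m/2$ leaves, each crossing edge paying at least $m/2$. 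Your argument is correct and slightly more self-contained, since it does not invoke the balanced-tree approximation machinery; the paper's version is shorter given that \Cref{thm:beta-exist} is already available. The upper bound in the No case and the use of \Cref{lem:disjoint} and \Cref{lem:hc-max-cost-ub} match the paper. Be aware, though, that the entire technical weight of the theorem sits in the hardness of \OME, which you have only outlined (correctly identifying the steps and the obstacle, but proving none of them); as written, your text establishes the cost gap but defers the communication lower bound to exactly the argument the paper spends Section 5 executing.
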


The proof of~\Cref{thm:lb-polylog-space} is by establishing a novel streaming lower bound of its own independent interest: no $\polylog\, {(n)}$-space streaming algorithm can distinguish between inputs consisting of a single \emph{expander} on the entire set of 
vertices versus a collection of $k = n^{1/2-o(1)}$ vertex-disjoint expanders. It is easy then to prove that the objective value of hierarchical clustering differs by a factor of $n^{1/2-o(1)}$ between the two cases which concludes the proof. Thus, the main
contribution of our work on this front is to establish the mentioned streaming lower bound, formalized as follows. 
 
\begin{theorem}\label{thm:OME}
	For any $\delta \in (0,1/2)$, any streaming algorithm with $o(n^{\delta}/\log{n})$ space cannot distinguish these two families of $n$-vertex (multi-)graphs\footnote{For technical reasons, we allow multi-graphs with edge multiplicity $O(1)$, which is standard; see, e.g. \cite{KapralovKS14}.} with constant probability better than half: 
	\begin{itemize}[leftmargin=15pt]
		\item \textbf{Case 1:} A single expander $G$ on $n$ vertices and $m=10\,n\log{n}$ edges; 
		\item \textbf{Case 2:} A collection of $t :=n^{1/2-\delta}$ vertex-disjoint expanders $G_i$ each on $n_i := n/t = n^{1/2+\delta}$ vertices and $m_i := 10\,n\log{n}/t = 10\,n^{1/2+\delta} \cdot \log{n}$ edges. 
	\end{itemize}
	Here, by an expander, we mean a (multi-)graph with \emph{edge expansion} of $\Omega(\log{n})$ as in~\Cref{def:graph-expansion}. 
\end{theorem}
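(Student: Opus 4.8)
The plan is to exhibit a hard input distribution over the two cases, turn a space-$s$ streaming algorithm into a bounded‑communication one‑way protocol for distinguishing them, and then prove a communication lower bound via a decorrelation argument followed by Fourier‑analytic tools. For the distribution, fix a partition of $V$ into $t=n^{1/2-\delta}$ blocks of equal size $n^{1/2+\delta}$ and set $p=20\log n$. In both cases the graph is the union of $p$ edge sets $M_1,\dots,M_p$: in Case~2 each $M_j$ is an independent uniformly random perfect matching that \emph{respects} the common (hidden) block partition, while in Case~1 each $M_j$ is an independent uniformly random perfect matching on all of $V$. A union of $\Theta(\log n)$ random perfect matchings is, with high probability, a collection of graphs whose every component has edge expansion $\Omega(\log n)$ as in \Cref{def:graph-expansion} — a single component in Case~1 and $t$ of them in Case~2 — and it has exactly $pn/2=10n\log n$ edges (with $O(1)$ multiplicities coming from occasional collisions between matchings), so this distribution is statistically close to the two target families. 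Feeding $M_1,\dots,M_p$ in this order to a single-pass, $s$-space streaming algorithm yields a $p$-player one-way protocol with $s$-bit messages (player $j$'s input being $M_j$) that distinguishes the two cases, so it suffices to show any such protocol needs $s=\Omega(n^{\delta}/\log n)$.

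To bound the communication I would first collapse the $p$-player game to a two-player game by a hybrid argument in the style of Kapralov–Khanna–Sudan. Consider the intermediate distributions in which $M_1,\dots,M_i$ are generated to respect the partition and $M_{i+1},\dots,M_p$ are global (so $i=0$ is Case~1 and $i=p$ is Case~2); if the protocol has distinguishing advantage $\eps_0$ then some consecutive pair $H_{i-1},H_i$ is separated with advantage $\eps_0/p=\eps_0/\Theta(\log n)$. Those two distributions differ only in the matching $M_i$, which gives the two-player game: Alice holds the partition $\mathcal P$ (equivalently, all the information the first $i-1$ players see), sends an $s$-bit message, and Bob holds a perfect matching that is either partition-respecting or global and must decide which, with the later matchings $M_{i+1},\dots,M_p$ available as public randomness. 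This $\Theta(\log n)$ slack in the hybrid is exactly the source of the $\log n$ loss in the final space bound.

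The crux is then a decorrelation step that reduces this two-player game to a \emph{single-edge} primitive — Alice holds $\mathcal P$, Bob holds a single vertex pair $(u,v)$, and he must decide whether $(u,v)$ is monochromatic under $\mathcal P$ or a uniformly random pair — together with a proof that any constant advantage on the whole $n/2$-edge matching forces only a low-probability-of-success statement for the single-edge primitive (i.e.\ success barely above $1/2$) to be violated. The novel ingredient here, deviating from all previous cycle/path-based lower bounds, is to measure a protocol's advantage through the \emph{KL-divergence} between the distribution of Bob's input conditioned on Alice's message under the two cases: the chain rule for KL lets this quantity be distributed across the edges of the matching revealed one at a time (with earlier revealed edges as side information), isolating a single edge, and the formulation is structure-agnostic and so survives on expanders, where the locally tree-like arguments used for cycles are unavailable. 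I expect this to be the main obstacle — designing the notion of advantage so that it composes correctly under the hybrid and the edge-by-edge revelation, controlling the side information carried by the revealed prefix, and calibrating the parameters so that the Fourier bound below kicks in at the right scale.

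Finally, for the single-edge primitive I would invoke the Fourier-analytic method of Gavinsky–Kempe–Kerenidis–Raz–Wigderson built on the KKL inequality: Alice's $s$-bit message confines $\mathcal P$ to a set of density at least $2^{-s}$ in the space of partitions, and Bob's advantage from one random pair is governed by the low-level Fourier mass of the indicator of that set; a level-$k$/KKL-type inequality (in product-space form, to handle $t$-colorings rather than $\pm1$ labelings) bounds this mass so that the single-edge advantage is $n^{-\Omega(1)}$ whenever $s=n^{o(1)}$, which is far below what the decorrelation step demands once $s=o(n^{\delta}/\log n)$. Chaining the three steps — hybrid, KL-based decorrelation, and Fourier/KKL — then yields the claimed $\Omega(n^{\delta}/\log n)$ space lower bound and establishes \Cref{thm:OME}; combined with an easy calculation that the optimal HC cost differs by an $n^{1/2-o(1)}$ factor between one expander and $t$ disjoint expanders, this gives \Cref{thm:lb-polylog-space}.
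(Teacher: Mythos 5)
Your proposal follows essentially the same four-step architecture as the paper's proof: a $\Theta(\log n)$-party one-way game built from random matchings whose union is one expander (global matchings) or $t$ disjoint expanders (partition-respecting matchings); a hybrid argument collapsing it to a two-party game with Alice holding the hidden partition and Bob one matching, at a $\Theta(\log n)$ loss; a KL-divergence/chain-rule decorrelation isolating a single edge of Bob's matching; and a GKKRW-style Fourier bound via KKL on the single-edge advantage. The parameter accounting ($m=\Theta(n/t)$, target advantage $\tilde\Theta(1/n)$, hence $c=\tilde\Omega(n^{\delta})$) also matches.

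The one place where you are under-specified and the paper does something different is the final Fourier step. You propose to apply a ``product-space form'' of KKL directly to the $t$-coloring $\mathcal P$, which is a nontrivial extension: the bias-to-Fourier-coefficient identity underlying the GKKRW argument is set up for Boolean inputs, and a $t$-ary analogue is not off-the-shelf. The paper avoids this entirely in the decorrelation step (\Cref{lem:reduction-single-edge}): after fixing the revealed prefix $M^*$ and the classes of its vertices, it conditions on all classes except two randomly chosen ones $j_1,j_2$, and identifies $\Sigma^*_{j_1},\Sigma^*_{j_2}$ with a balanced bipartition $(U_0,U_1)$ of $m=\Theta(n/t)$ vertices. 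The resulting single-edge problem (\Cref{prb:HLPs}) has Boolean Alice-input $x\in\{0,1\}^m$, so the standard KKL inequality (\Cref{prop:kkl}) applies verbatim in \Cref{lem:HLP-lower}. If you want to complete your plan, you should either carry out this two-class reduction (which costs only a convexity argument for KL and an averaging step over $j_1,j_2$) or supply and verify the $t$-ary Fourier machinery; the former is the path of least resistance and is what the paper does.
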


The problem in~\Cref{thm:OME} is qualitatively similar to the \emph{gap cycle counting} problem studied extensively in the streaming literature (see, e.g., \cite{VerbinY11,KapralovKS14,AssadiKSY20,AssadiN21,KapralovMTWZ22}) wherein the goal is to distinguish between a single Hamiltonian cycle (or a few `long' cycles) and a collection of vertex-disjoint `short' cycles. Owing to its wide range of applications, the gap cycle counting problem has become a staple in graph streaming lower bounds. 
We believe our lower bound for the `expander-variant' of this problem  appears flexible enough to find other applications and is therefore interesting in its own right.

In the following, we first show how~\Cref{thm:lb-polylog-space} follows easily from~\Cref{thm:OME} and then concentrate the bulk of our effort in this section to proving the latter theorem. 

\begin{proof}[Proof of~\Cref{thm:lb-polylog-space} (assuming~\Cref{thm:OME})]
	Suppose we have a streaming algorithm $\alg$ that can estimate the value of hierarchical clustering for every graph $G$ to within a factor $o(n^{1/2-\delta})$ with probability strictly more than half. 
	
	First, consider a graph $G$ according to Case $1$ of~\Cref{thm:OME}. We argue that in this case, $\OPT(G) = \Omega(n^2 \cdot \log{n})$. 
	By \Cref{thm:beta-exist}, we know that the algorithm 
	that picks the minimum $1/3$-balanced cut repeatedly achieves an $O(1)$-approximation to $\OPT(G)$. At the same time, since edge expansion of $G$ is $\Omega(\log{n})$, for any $1/3$-balanced cut $S$, 
	we have that $\card{\delta(S)} = \Omega(n\log{n})$. Thus, the cost of that algorithm on its first level is already $\Omega(n^2\log{n})$, which gives $\OPT(G) = \Omega(n^2\cdot\log{n})$. 
	
	Now, consider a graph $G$ according to Case $2$ of~\Cref{thm:OME}. We have, 
	\[
		\OPT(G) = \sum_{i=1}^{t} \OPT(G_i) \leq t \cdot (n/t) \cdot (10n\log{n}/t) = O(n^2 \cdot \log{n}/t),
	\]
	where the first equality is by \Cref{lem:disjoint} as $G_i$'s are vertex-disjoint components of $G$, and the inequality is by \Cref{lem:hc-max-cost-ub} as each $G_i$ contains $(n/t)$ vertices and $(10n\log{n}/t)$ edges.
		
	Combining the above two arguments, we have that $\OPT(G)$ differs by an $\Omega(t) = \Omega(n^{1/2-\delta})$ factor between the two cases of the problem in~\Cref{thm:OME}. Thus, 
	$\alg$ should be able to distinguish between these two cases with probability strictly more than half. By~\Cref{thm:OME}, we get that $\alg$ has to have space $\Omega(n^{\delta}/\log{n}) \gg \polylog{(n)}$, 
	concluding the proof. 
\end{proof}

\subsection{A High-Level Overview of Proof of~\Cref{thm:OME}}

The proof of~\Cref{thm:OME} is via \emph{communication complexity}, and then using the standard fact that communication complexity can lower bound the space  of streaming algorithms. The communication complexity lower bound
itself goes through several steps as we elaborate below.

\paragraph{Step one: a $k$-party communication problem.} For integers $n,k,t \geq 1$, we define a $k$-party communication problem \emph{One-vs-Many-Expander} ($\OME_{n,k,t}$) on $n$-vertex graphs $G$. 
In $\OME_{n,k,t}$, we have $k$ players and each player $P_i$ receives a \emph{matching} $M_i$ of size $n/4$ on $n$ vertices. In addition, there exists a labeling $\Sigma$ of vertices of $G$ into 
$t$  equal-size classes $\Sigma_1,\ldots,\Sigma_t$. Then, 
\begin{itemize}[leftmargin=15pt]
\item In the \textbf{Yes} case, the input matching of each player is chosen randomly, independent of $\Sigma$. 
\item In the \textbf{No} case, the input matching of each 
player is chosen randomly so that it contains $n/4t$ random edges inside each class $\Sigma_j$ for $j \in [t]$.
\end{itemize} The goal is for the players starting from $P_1$ to each send a message to the next player, so that the last player $P_k$ can output which case the input belongs to. 

We show that proving an $o(n^{\delta}/\log{n})$ communication lower bound for $\OME_{n,k,t}$ for $k=40\log{n}$ and $t=n^{1/2-\delta}$ implies~\Cref{thm:OME}. The proof is by showing that, {with high probability}, the \textbf{Yes}-case of $\OME$ 
results in $G$ corresponding to Case $1$ of~\Cref{thm:OME}, while the \textbf{No}-case is the Case $2$ of that theorem. This argument itself is a simple exercise in random graph theory. 

\paragraph{Step two: a $2$-party communication problem.} In order to prove the lower bound for $\OME_{n,k,t}$, we use a common approach (see, e.g., \cite{KapralovKS14}) and reduce it to a $2$-party problem which we call the \emph{Hidden Labeling Problem} ($\HLP_{n,t})$
on $n$-vertex graphs $G$. In $\HLP_{n,t}$, Alice is given a labeling $\Sigma$ of vertices of $G$ into $t$ equal-size classes $\Sigma_1,\ldots,\Sigma_t$ and Bob is given a single matching $M$ of size $n/4$. The distribution of these labeling $\Sigma$ 
and matching $M$ is the same as the ones in $\OME_{n,k,t}$ (where $M$ can correspond to the input of any one player). 

We prove that an $n^{\delta-o(1)}$ communication lower bound for $\HLP_{n,t}$ for protocols with probability of success $1/2+\Omega(1/k)$ implies our desired lower bound in the previous part for $\OME_{n,k,t}$. 
The proof is via a hybrid argument over the input of $k$ players in $\OME_{n,k,t}$ similar to \cite{KapralovKS14}. 

\paragraph{Step three: a decorrelation step.} We note that the $\HLP_{n,t}$ problem is qualitatively similar to the famous \emph{Boolean Hidden Matching} problem of \cite{GavinskyKKRW07} and many of its variants such as \emph{Boolean Hidden Partition} \cite{KapralovKS14}, or \emph{$p$-ary Hidden Matching} \cite{GuruswamiT19}, and alike (see, e.g., \cite{GuruswamiVV17,ChouGV20}). However, quantitatively, this problem is quite different from all these problems. For instance, all aforementioned problems admit an $\Omega(\sqrt{n})$ 
communication lower bound, while there is a protocol for solving $\HLP$ using $O(\sqrt{n/t} \cdot \log{t})$ communication by focusing only on one class $\Sigma_i$ in Alice's input\footnote{Alice sends $O(\sqrt{n/t})$ vertices of 
her input that belong to the class $\Sigma_1$; in the \textbf{Yes}-case, Bob is unlikely to have any edges inside this set, while in the \textbf{No}-case, one of Bob's edges will belong to this set with a high constant probability.}. As a result, while our lower proof for $\HLP$ borrows ideas from this line of work, and in particular the Fourier-analytic method of \cite{GavinskyKKRW07}, it also requires its own different ideas. 

To prove the lower bound, we first `break the (strong) correlation' on the edges of $M_i$ in the input distribution (see, e.g., \cite{AssadiN21} for a similar argument). This gives us yet another reduction to the following problem, which we denote by $\HLPs_{m}$: 
Alice is given an equipartition $U_0,U_1$ of $m$ vertices and Bob is given a single edge $e$: In \textbf{Yes}-case, the edge $e$ is chosen uniformly among all edges possible on $U_0 \cup U_1$, while in the \textbf{No}-case, the edge 
$e$ is chosen uniformly from either edges entirely in $U_0$ or entirely in $U_1$. We show that for some $m=\Theta(n/t)$, any communication lower bound for $\HLPs_{m}$ for protocols with 
(quite low but non-trivial) probability of success of ${1}/{2}+\tilde{\Theta}(1/{n})$, also implies the same  lower bound for protocols for $\HLP_{n,t}$ that succeed with probability $1/2+\tilde{\Theta}(1)$. We shall note that technically speaking, 
here, we will not consider protocols that solve $\HLPs_{m}$ with certain probability, but rather the ones wherein \emph{KL-divergence} of final `view' of Bob in Yes- and No-cases 
differ by at least $\tilde{\Theta}(1/n)$. This will be crucial for the proof of our next step.

\paragraph{Step four: a low-probability-of-success lower bound.} The very final step of our approach is to prove a lower bound for $\HLPs_m$ that rules out protocols where Bob's view is slightly different between Yes- and No-cases, namely, 
by $\tilde{\Theta}(1/n)$ in KL-divergence.  
This is done using a Fourier-analytic approach initiated in \cite{GavinskyKKRW07}, using the celebrated \emph{KKL inequality} of \cite{KahnKL88}, that allows us to argue any protocol with $c$ bits of communication for $\HLPs_m$ 
can only lead to an advantage of $O((c/m)^2)$ in changing Bob's view of which case the input belongs based on Alice's message. 

Tracing back these parameters implies that to get an advantage of $\tilde{\Theta}(1/n)$ in solving $\HLPs_m$ (as dictated by step three), we need $c$ to be: 
\[
	c = \tilde{\Omega}(\frac{m}{\sqrt{n}}) = \tilde{\Omega}(\frac{n}{t \sqrt{n}}) = \tilde{\Omega}(\frac{\sqrt{n}}{t}) = \tilde{\Omega}(n^{\delta}),
\]
by the choice of $t=n^{1/2-\delta}$ in step one. By plugging in these bounds in the steps two and three, we get a lower bound of $\Omgt(n^{\delta})$ communication for $\OME_{n,k,t}$ for any $k=\tilde{\Theta}(1)$ and $t=n^{1/2-\delta}$. 
Finally, such a lower bound by step one implies our desired streaming lower bound in~\Cref{thm:OME}. This concludes the high level overview of the proof of~\Cref{thm:OME}. 

\subsection{Step One: The One-vs-Many-Expanders Problem}
\label{subsec:kucg-to-HC}

We first give the formal definition of \emph{One-vs-Many-Expanders} ($\OME$) problem, and prove its connection to approximating HC. The problem is defined as follows.
\begin{problem}[One-vs-Many-Expanders (\OME)]
\label{prb:k-player-game}
For  $n,k,t \geq 1$, $\OME_{n,k,t}$ is a 
communication game between $k$ players $P_{1}, P_{2}, \cdots, P_{k}$. The input is a graph $G=(V, E)$ on $n$ vertices. Additionally, there is a labeling $\Sigma$ that partitions vertices of $V$ into $t$ equal-sized classes $(\Sigma_1,\ldots,\Sigma_t)$. The labeling $\Sigma$ is \emph{unknown} to the players. 
For $i \in [k]$, player $P_i$ is given a matching $M_i$ of size $n/4$. We are promised that the input to belongs to one of the following two classes, chosen uniformly at random: 
\begin{itemize}
    \item \textbf{Yes}-case: The input matching $M_i$ to every player $P_i$ is chosen uniformly at random over all possible matchings on $V$. 
    \item \textbf{No}-case: The input matching $M_i$ to every player $P_i$ is chosen uniformly at random from all matchings that have exactly $n/4t$ edges from each class $\Sigma_j$ for $j \in [t]$. 
\end{itemize}

Starting from $P_1$, each player sends a message to the next one and the goal is for $P_k$ to determine whether the input is in \textbf{Yes}-case or the \textbf{No}-case.
\end{problem}

We allow multi-graphs to be created by the definition of~\Cref{prb:k-player-game} so as to not introduce unnecessary correlation between input of players in the Yes-case. 

In the following, we first prove that the inputs  in $\OME_{n,k,t}$, with high probability, results in one expander in \textbf{Yes}-case and $t$ expanders in \textbf{No}-case.

\begin{lemma}\label{lem:OME-expander}
    In $\OME_{n,k,t}$, for $k \geq 40\log(n)$, and $t \leq n^{1/2}$, with probability $1-o(1)$, we have: 
    \begin{itemize}
        \item A graph $G$ sampled from \textnormal{\textbf{Yes}}-case consists is a single expander with edge expansion $\Omega(\log{n})$ and $nk/4$ edges. 
        \item A graph $G$ sampled from \textnormal{\textbf{No}}-case consists of $t$ expanders, each on $n/t$ vertices and $nk/4t$ edges, with edge expansion $\Omega(\log{n})$.
    \end{itemize}
\end{lemma}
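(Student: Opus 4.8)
The plan is a first-moment argument: union-bound over all cuts, with essentially all of the work concentrated in one statement about the union of independent random matchings.

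\emph{Edge counts and the \textbf{No}-case structure (routine).} In the \textbf{Yes}-case $G$ is the (multi-)union of $k$ matchings of size $n/4$, hence has $nk/4$ edges. In the \textbf{No}-case, conditioned on the labeling $\Sigma$, every edge of every $M_i$ lies inside a single class, so $G$ is literally the vertex-disjoint union $G[\Sigma_1]\cup\cdots\cup G[\Sigma_t]$; each $\Sigma_j$ has $n/t$ vertices, and since every player contributes $n/(4t)$ edges to each class, $G[\Sigma_j]$ has $nk/(4t)$ edges. Thus the only substantive claims are the expansion bounds, and the \textbf{No}-case bound will follow from the \textbf{Yes}-case bound applied to each class.

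\emph{The \textbf{Yes}-case expander bound.} Fix a cut $(S,\bar S)$ with $|S|=s\le n/2$. First I would lower bound the expected number of crossing edges: in a single random matching of size $n/4$ on $n$ vertices, a fixed $v\in S$ is matched with probability $1/2$ and, given that, its partner lands in $\bar S$ with probability $|\bar S|/(n-1)\ge 1/2$, so $v$ is the $S$-endpoint of a crossing edge with probability $\ge 1/4$; summing over $v\in S$ (each crossing edge counted once) and over the $k$ independent matchings gives $\E[w_G(S,\bar S)]\ge sk/4 = 10\,s\log n$. Next I would show that, except with probability $n^{-\omega(s)}$ uniformly in $s$, one has $w_G(S,\bar S)\ge 5\,s\log n=\Omega(\log n)\cdot|S|$; then a union bound over the $\binom ns\le n^{s}$ cuts of each size, and over $s=1,\dots,n/2$, yields total failure $o(1)$. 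Taking $s=1$ also shows every vertex has degree $\Omega(\log n)$, so $G$ is connected; hence $G$ is a single graph with edge expansion $\Omega(\log n)$ in the sense of \Cref{def:graph-expansion}. (This is exactly the standard random-graph fact that the union of $\Theta(\log n)$ random near-matchings — equivalently, a random $\Theta(\log n)$-regular multigraph — has edge expansion a constant fraction of its degree with high probability; the constant $40$ in ``$k\ge 40\log n$'' is precisely what makes the tail bound beat the $\binom ns$ union bound.)

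\emph{The \textbf{No}-case bound, and the main obstacle.} Conditioned on $\Sigma$, the restrictions $\{M_i|_{\Sigma_j}\}_{i\in[k]}$ are independent uniformly random matchings of size $n/(4t)$ on the $n/t\ge\sqrt n$ vertices of $\Sigma_j$ — precisely the \textbf{Yes}-case distribution with $n$ replaced by $n/t$, and still with $k=40\log n\ge 40\log(n/t)$ matchings. Re-running the \textbf{Yes}-case analysis on each $G[\Sigma_j]$ gives that every cut internal to $\Sigma_j$ has $\Omega(\log n)\cdot(\text{smaller side})$ crossing edges (note we keep $k=40\log n$ matchings, so the expansion is $\Omega(\log n)$, not merely $\Omega(\log(n/t))$), and a union bound over the $t\le\sqrt n$ classes finishes this case. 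The one genuinely delicate point — the main obstacle — is the concentration step inside the \textbf{Yes}-case argument: to survive the union bound over all $\le 2^{n}$ cuts one needs a lower-tail bound for $w_G(S,\bar S)$ whose exponent grows like $s\cdot k=\Omega(s\log n)$, and treating $w_G(S,\bar S)=\sum_{i=1}^{k}(\text{crossings in }M_i)$ as a sum of $k$ independent terms is far too lossy, since each term has range $\Theta(s)$ and Hoeffding/Bernstein then only give an exponent of $O(\log^2 n/k)=O(1)$, independent of $s$. One must instead expose the randomness at the granularity of individual edges — e.g.\ argue that the per-vertex crossing indicators $\{\mathbf 1[v\text{'s partner}\in\bar S]\}_{v\in S}$ within one matching are negatively associated (they compete for the $\bar S$-vertices), so that across the $k$ independent matchings the whole collection of $\le sk$ Bernoulli indicators is negatively associated and the Chernoff bound gives $\exp(-\Omega(sk))=n^{-\Omega(s)}$; a Doob-martingale exposure of the partners of $S$ across all $k$ matchings, with $O(1)$-bounded and $O(1)$-variance increments, is an alternative route via Freedman's inequality. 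Everything else in the proof is bookkeeping.
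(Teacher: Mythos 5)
Your proposal is correct and follows essentially the same route as the paper: the paper also reduces everything to a Chernoff bound for negatively correlated indicators (its Claim on the variables $X_{u,v,i}$ plus the generalized Chernoff bound), unions over the $\binom{n}{s}\le n^s$ cuts of each size, and handles the \textbf{No}-case by observing that each class $\Sigma_j$ reproduces the \textbf{Yes}-case distribution on $n/t$ vertices. The only cosmetic difference is that the paper establishes negative correlation for the per-pair indicators $X_{u,v,i}$ by a direct ratio computation, whereas you propose negative association of per-vertex crossing indicators (or a martingale exposure); both correctly address the obstacle you identify, namely that concentrating at the granularity of whole matchings is too lossy.
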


Before proving this lemma, we need to introduce a standard result. 
\begin{claim}\label{clm:neg-correlation}
	For any pairs of vertices $u,v$ in $G$ and $i \in [k]$, define an indicator random variable $x_{u,v,i}$ which is $1$ iff $(u,v)$ is sampled in $M_i$. Then, the set of 
	random variables $\set{X_{u,v,i}}_{u,v \in G, i \in [k]}$ are negatively correlated conditioned on $\Sigma$ in both \textnormal{\textbf{Yes}} and \textnormal{\textbf{No}} cases $\OME_{n,k,t}$. 
\end{claim}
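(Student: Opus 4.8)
The plan is to establish the (weak) form of negative correlation that suffices for the upper‑tail Chernoff bound used to prove expansion in~\Cref{lem:OME-expander}: for every set $S$ of triples,
\[
\E\Big[\textstyle\prod_{(u,v,i)\in S} X_{u,v,i}\Big]\ \le\ \prod_{(u,v,i)\in S}\E\big[X_{u,v,i}\big],
\]
where every expectation is taken conditioned on the labeling $\Sigma$ (in the \textbf{Yes}-case this conditioning is vacuous, since there the matchings do not depend on $\Sigma$). The first observation I would record about the input distribution is its product structure. Conditioned on $\Sigma$, the matchings $M_1,\dots,M_k$ of the $k$ players are mutually independent; moreover, in the \textbf{No}-case, a size-$n/4$ matching that contains exactly $n/4t$ edges inside each class $\Sigma_j$ has \emph{no} cross-class edges (it already uses all $n/4$ of its edges within classes), so conditioned on $\Sigma$ it is exactly a product of $t$ independent uniformly random matchings, one inside each $\Sigma_j$, each of size $n/4t$ on the $n/t$ vertices of that class.

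Using this, I would first factor over players: writing $F_i:=\{(u,v):(u,v,i)\in S\}$, independence of $M_1,\dots,M_k$ gives $\E[\prod_{(u,v,i)\in S}X_{u,v,i}]=\prod_{i}\Pr[F_i\subseteq M_i]$ and $\prod_{(u,v,i)\in S}\E[X_{u,v,i}]=\prod_i\prod_{e\in F_i}\Pr[e\in M_i]$, so it suffices to prove $\Pr[F\subseteq M]\le\prod_{e\in F}\Pr[e\in M]$ for a single player's matching $M$ and an arbitrary set $F$ of vertex-pairs. If $F$ is not itself a matching (two pairs share a vertex) then $\Pr[F\subseteq M]=0$ and we are done, so take $F$ to be a partial matching of size $j$. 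In the \textbf{Yes}-case, $M$ is uniform among matchings of size $m:=n/4$ on $N:=n$ vertices. In the \textbf{No}-case, if $F$ has a cross-class edge then $\Pr[F\subseteq M]=0$; otherwise $F=\bigsqcup_j F^{(j)}$ with $F^{(j)}\subseteq\binom{\Sigma_j}{2}$ and, by the independence-across-classes structure, both sides factor over $j$, reducing to the same statement for a uniform matching of size $m:=n/4t$ on $N:=n/t$ vertices. In every case we are left with the single-matching statement in the regime $N=4m$, and by symmetry (linearity of expectation over the $\binom{N}{2}$ possible edges) $\Pr[e\in M]=m/\binom{N}{2}=\tfrac{2m}{N(N-1)}$ for every edge $e$.

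The combinatorial core is then: for $M$ uniform among size-$m$ matchings on $N=4m$ vertices and $F$ a partial matching of size $j$ (with $j\le m$, else the probability is $0$),
\[
\Pr[F\subseteq M]=2^{j}\cdot\frac{m(m-1)\cdots(m-j+1)}{N(N-1)\cdots(N-2j+1)}\ \le\ \Big(\tfrac{2m}{N(N-1)}\Big)^{j}=\prod_{e\in F}\Pr[e\in M],
\]
the equality on the left being a direct count of size-$m$ matchings containing the fixed $2j$ vertices of $F$. To prove the inequality I would divide the two sides and regroup the quotient as a product of $j$ factors,
\[
\frac{\Pr[F\subseteq M]}{\prod_{e\in F}\Pr[e\in M]}=\prod_{i=0}^{j-1}\left[\Big(1-\tfrac{i}{m}\Big)\cdot\frac{N(N-1)}{(N-2i)(N-2i-1)}\right],
\]
and show each bracketed factor is at most $1$. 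Setting $N=4m$ and clearing denominators, the factor-$i$ inequality becomes $4(m-i)(4m-1)\le(4m-2i)(4m-2i-1)$, and expanding both sides the difference (right minus left) equals exactly $2i(2i-1)$, which is $\ge 0$ for every integer $i\ge 0$ (with equality only at $i=0$). Hence every factor is $\le 1$, the whole quotient is $\le 1$, and the claim follows.

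I expect the only mildly delicate point to be the bookkeeping in the \textbf{No}-case reduction — specifically the observation that the within-class edge counts force the matching to have no cross-class edges, so that conditioned on $\Sigma$ it genuinely is a product of $t$ independent uniform within-class matchings, which is what lets both sides factor over classes and reduces everything to the single $N=4m$ computation. Once that structural point is in place, the remaining inequality is short, self-contained, and is the same estimate for both the global matching (\textbf{Yes}) and each within-class matching (\textbf{No}), since in both $N=4m$.
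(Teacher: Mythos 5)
Your proof is correct, and it reaches the same structural reductions as the paper's argument: conditioned on $\Sigma$, factor over the $k$ independent players, and in the \textbf{No}-case factor further over the $t$ classes (using exactly the observation that the within-class edge counts sum to $n/4$, so there are no cross-class edges and the matching is a product of independent within-class uniform matchings). Where you genuinely diverge is in how the single-matching negative correlation is verified. The paper only computes $\Exp[X_{u,v,i}]$ and $\Exp[X_{u,v,i}X_{u',v',i}]$ explicitly, i.e.\ checks the pairwise case, and then appeals to ``an inductive argument'' to extend to all subsets. You instead write down the exact closed form
\[
\Pr[F\subseteq M]=2^{j}\cdot\frac{m(m-1)\cdots(m-j+1)}{N(N-1)\cdots(N-2j+1)}
\]
for a partial matching $F$ of size $j$, divide by $\prod_{e\in F}\Pr[e\in M]=\bigl(\tfrac{2m}{N(N-1)}\bigr)^{j}$, and show each of the $j$ factors of the resulting product is at most $1$ via the identity $(4m-2i)(4m-2i-1)-4(m-i)(4m-1)=2i(2i-1)\ge 0$ at $N=4m$. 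This buys two things: it actually proves the inequality for \emph{every} subset of triples (which is what the generalized Chernoff bound in \Cref{prop:chernoff-general} needs, and what the paper's sketch leaves implicit), and it does so uniformly for both cases since $N=4m$ holds both globally (\textbf{Yes}) and within each class (\textbf{No}). Your version is the more complete of the two; no gaps.
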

\begin{proof}
	Firstly, conditioned on the \textbf{Yes} or \textbf{No} case and $\Sigma$, the choice of $M_i$ and $M_i$ for $i \neq j \in [k]$ are independent. Thus, 
	we only need to show that for every $i \in [k]$, $\set{X_{u,v,i}}_{u,v \in G}$ are negatively correlated. We prove this for any pairs of 
	random variables and one can inductively prove it for all $\set{X_{u,v,i}}_{u,v \in G}$ for every $i \in [k]$ as well.  
	Consider the \textbf{Yes} case first. We have, 
	\[
		(\expect{X_{u',v',i}} =) \expect{X_{u,v,i}} =  \Pr\paren{(u,v) \in M_i} = \frac{n/4}{{{n}\choose{2}}} =  \frac{1}{2 \cdot (n+1)},
	\]
	where the probability calculation holds as each matching is of size $n/4$ chosen uniformly at random. 
	
	Similarly, we have, 
	\begin{align*}
		\expect{X_{u,v,i} \cdot X_{u',v',i}} &= \Pr\paren{(u,v) \in M_i \wedge (u',v') \in M_i} \\
		&\leq \frac{(n/4) \cdot (n/4-1)}{{{n}\choose{2}} \cdot {{n-2}\choose{2}}} \tag{the inequality is tight if $\set{u,v} \cap \set{u',v'} = \emptyset$ and otherwise the probability is zero}\\
		&< \expect{X_{u,v}} \cdot \expect{X_{u',v'}} \tag{by a direct calculation of the bounds}.
	\end{align*}
	By and inductive argument, we can extend this to all subsets of $\set{X_{u,v,i}}_{u,v \in G}$, proving the negative correlation of the variables in this case. 
	
	In the \textbf{No} case, we can repeat the same argument for each individual class $\Sigma_j$ for $j \in [t]$ instead. This finalizes the proof. 
\end{proof}

\begin{proof}[Proof of~\Cref{lem:OME-expander}]
We first prove the result in the \textbf{Yes} case. Let us fix any partition $(S, V\setminus S)$ such that $\card{S}\leq\frac{n}{2}$. 
Consider the random variables $X_{u,v,i}$ for $i \in [k]$ in~\Cref{clm:neg-correlation} for any pairs of vertices $u \in S$ and $v \in V \setminus S$. Define $X_S := \sum_{i=1}^{k} \sum_{u \in S,v \in V \setminus S} X_{u,v}$. 
We have, 
\[
	\expect{X_S} = \sum_{u \in S,v \in V \setminus S} \sum_{i=1}^{k} \expect{X_{u,v}} \leq \card{S} \cdot \frac{n}{2} \cdot \frac{k}{2 \cdot (n+1)} \leq \card{S} \cdot 5\log{n},
\]
as $\card{S} \leq n/2$, by the bound on expectation in~\Cref{clm:neg-correlation}, and since $k \geq 40\log{n}$. 

As $X_S$ is a sum of negatively correlated $\set{0,1}$-random variables by~\Cref{clm:neg-correlation}, we can apply Chernoff bound for negatively correlated random variables (\Cref{prop:chernoff-general}), to have
\begin{align*}
\Pr\paren{X_{S} \leq \card{S} \cdot \log(n)}\leq \exp\paren{-4\card{S} \cdot \log(n)}.
\end{align*}
As such, one can apply union bound for all partitions with size $\card{S}:=s \geq 1$ as
\begin{align*}
\Pr\paren{\exists S \,\, s.t. \,\, (\card{S}=s) \land (X_{S} \leq s \cdot \log(n))}& \leq {n \choose s} \exp\paren{-4s \cdot \log(n)}\leq n^{s}\cdot \exp\paren{-4s\cdot \log(n)} \leq \frac{1}{n^2}. 
\end{align*}
Finally, one can apply union bound for all size $s\leq \frac{n}{2}$, and finalize the statement for the \textbf{Yes} case.

For the proof for the \textbf{No} case, we first observe that no edge will ever be added between two classes $\Sigma_{i}$ and $\Sigma_{j}$, and the edges inside each $\Sigma_{i}$ are exactly of the size $\frac{nk}{4t}$. 
Moreover, distribution of each graph induced on $\Sigma_i$ matches that of \textbf{Yes} case on the whole graph. Thus, we can apply the same argument as before to each $\Sigma_i$ individually and obtain
the same lower bound of $\Omega(\log{n})$ on their expansion. This concludes the proof. 
\end{proof}

By~\Cref{lem:OME-expander}, in order to prove~\Cref{thm:OME}, we need to lower bound the communication complexity of $\OME_{n,k,t}$ for $k=40\log(n)$ and $t=n^{1/2-\delta}$. It is well-known that the one-way communication lower bound implies a single-pass streaming memory lower bound on the same input distribution: the reduction is to simply let each player run the streaming algorithm and send the memory as the message. Thus, the space of the algorithm would be an upper bound 
on the communication in the protocol.

\subsection{Step Two: The Hidden Labeling Problem ($\HLP$)}
\label{subsec:2ucg-to-HC}
To prove a lower bound for $\OME$, we define an intermediate two-player communication problem. 
\begin{problem}[\emph{Hidden Labeling Problem} ($\HLP$)]
\label{prb:2-player-game}
For  $n,t \geq 1$, $\HLP_{n,t}$ is a two player 
communication game between Alice and Bob. We have a 
graph $G=(V,E)$, Alice is given a labeling $\Sigma$ of $V$ into $t$ equal-size classes $(\Sigma_1,\ldots,\Sigma_t)$. 
Bob is given a single matching $M$ of size $n/4$. 
We are promised that the input is one of the following two cases chosen uniformly at random: 

\begin{itemize}
    \item \textnormal{\textbf{Yes}}-case: The matching $M$ of Bob is chosen uniformly at random from all matchings on $V$.
    \item \textnormal{\textbf{No}}-case: The matching $M$ of Bob is chosen uniformly at random from all matchings that contain exactly $n/4t$ edges from each class $\Sigma_j$ for $j \in [t]$. 
\end{itemize}
The goal is for Alice to send a  message to Bob, and Bob outputs which case the input belongs to. 
\end{problem}

Intuitively, if there is a protocol that solves $\OME$ with high probability, it should gain some information about the distribution we used in $\HLP$ also that help outperform random guessing. We formalize this as the following lemma in this step.
\begin{lemma}
\label{lem:reduction-k-2}
Suppose there exists a (possibly randomized) communication protocol that uses $c$ bits and solves $\OME_{n,k,t}$ (\Cref{prb:k-player-game}) with probability at least $1/2+\eps$ for some $\eps > 0$. Then, there exists a deterministic communication protocol that that uses $c$ bits and solves $\HLP_{n,t}$ (\Cref{prb:2-player-game}) correctly with probability at least ${1}/{2}+\eps/k$.
\end{lemma}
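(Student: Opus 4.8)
The plan is to use a standard hybrid (or ``embedding'') argument that reduces the $k$-party problem $\OME_{n,k,t}$ to the $2$-party problem $\HLP_{n,t}$, in the spirit of the reduction in~\cite{KapralovKS14}. Let $\prot$ be the assumed $c$-bit protocol for $\OME_{n,k,t}$ succeeding with probability $\geq 1/2+\eps$. By averaging (Yao's principle / fixing randomness), we may assume $\prot$ is deterministic at the cost of keeping the same success probability over the input distribution. The key observation is that the inputs $M_1,\dots,M_k$ in $\OME$ are, conditioned on the case (Yes/No) and on $\Sigma$, mutually independent and identically distributed, and moreover the marginal distribution of any single $M_i$ together with $\Sigma$ is exactly the $\HLP_{n,t}$ input distribution. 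So one can ``plant'' the $\HLP$ instance $(\Sigma, M)$ into a random coordinate $i^\star \in [k]$ and fill the remaining $k-1$ coordinates with fresh samples drawn from the appropriate conditional distribution.

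Concretely, the $\HLP$ protocol works as follows. Alice holds $\Sigma$ and Bob holds $M$. They share public randomness consisting of: a uniformly random index $i^\star \in [k]$, and — here is the subtlety — a random guess $b \in \{\textsf{Yes},\textsf{No}\}$ for which case the true input belongs to, and enough randomness to sample matchings $M_j$ for $j \neq i^\star$ from the distribution conditioned on case $b$ and on $\Sigma$. Since the matchings $M_j$ for $j<i^\star$ depend on $\Sigma$, Alice can compute them; the matchings $M_j$ for $j>i^\star$ are not needed by Bob to produce his message. Alice simulates players $P_1,\dots,P_{i^\star-1}$ using $\Sigma$ and the sampled $M_1,\dots,M_{i^\star-1}$, obtaining the message $m_{i^\star-1}$ that would be passed to player $P_{i^\star}$; she sends $m_{i^\star-1}$ to Bob (this is $c$ bits). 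Bob, holding $M = M_{i^\star}$, runs player $P_{i^\star}$ to produce $m_{i^\star}$, then simulates players $P_{i^\star+1},\dots,P_k$ using his own sampled $M_{i^\star+1},\dots,M_k$ (which he can draw from public randomness conditioned on $b$; he does not have $\Sigma$, but in the Yes case these are $\Sigma$-independent, and in the No case he can instead — see below — arrange the sampling so it does not require $\Sigma$), and outputs whatever $P_k$ outputs.

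The analysis is the standard hybrid telescoping. Define, for $j \in \{0,1,\dots,k\}$, the hybrid distribution $H_j$ in which the first $j$ coordinates are drawn from the true distribution (all consistent with the same hidden $\Sigma$ and the true case) and the last $k-j$ coordinates are drawn from the ``guessed-case'' distribution; $H_k$ is the real $\OME$ input and $H_0$ is pure noise. On input $H_j$ versus $H_{j-1}$ the protocol $\prot$ behaves identically except in how coordinate $j$ is distributed, and one shows that the difference in $\prot$'s output between consecutive hybrids, averaged over $j$, is exactly what the planted $\HLP$ instance detects. Because $\prot$ distinguishes $H_k$ (true) from a mixture dominated by the wrong-case hybrid with advantage $\eps$ and there are $k$ hybrid steps, a telescoping/averaging argument gives that for a random $i^\star$ the induced behavior on coordinate $i^\star$ distinguishes Yes from No with advantage at least $\eps/k$; combined with the random guess $b$ (which is correct with probability $1/2$ and which, when wrong, contributes no bias) this yields overall success probability $\geq 1/2 + \eps/k$ for the $\HLP$ protocol. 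Finally, one derandomizes by fixing the public randomness ($i^\star, b$, and the auxiliary matchings) to its best value, preserving the success probability and keeping the communication at $c$ bits.

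**Main obstacle.** The delicate point is the No-case simulation of the coordinates $j \neq i^\star$ by the party who does not know $\Sigma$: Bob needs matchings $M_{j}$ ($j > i^\star$) that are correlated with $\Sigma$ in the No case, yet Bob has no access to $\Sigma$. The clean fix (and the reason the reduction only needs coordinates $< i^\star$ to be simulated by the $\Sigma$-holder) is to route all the ``post-$i^\star$'' simulation back to Alice: since the message chain is sequential, Alice can just as well simulate $P_1,\dots,P_{i^\star-1}$, have Bob return $m_{i^\star}$ — wait, that costs a second round. A better fix, which I expect is the one used, is to only ever plant the $\HLP$ instance so that Bob is the \emph{last} player to matter: pick $i^\star$, let Alice simulate $P_1,\dots,P_{i^\star-1}$ \emph{and} pre-compute the contribution of $P_{i^\star+1},\dots,P_k$'s \emph{inputs} is impossible since those come after Bob's message. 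The actual resolution is that $P_k$'s output depends on $m_{k-1}$, which depends on $M_{i^\star+1},\dots,M_k$; in the Yes case these are $\Sigma$-independent so Bob samples them freely, and in the No case one observes that one can instead reduce to $\HLP$ where the planted instance sits at coordinate $k$ (so Bob needs no further coordinates) — but then only the advantage of $\prot$'s \emph{last} player message is exploited. Reconciling this with getting the $1/k$ loss (rather than needing $\prot$'s intermediate messages to already be informative) is precisely the content of the hybrid argument, and getting the conditional sampling for the intermediate coordinates to be performable by Alice (who has $\Sigma$) while Bob only needs $\Sigma$-free samples — by always placing the hybrid boundary at $i^\star$ so everything ``after'' Bob is in the fixed guessed case and everything Alice simulates is ``before'' and uses $\Sigma$ — is the technical crux. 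I would present this carefully, matching the bookkeeping of~\cite{KapralovKS14}.
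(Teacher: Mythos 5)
Your proposal has the right skeleton (a hybrid argument over the $k$ coordinates, planting the $\HLP$ instance at an informative index $i^*$, with Alice simulating the prefix and Bob the suffix), and you correctly identify the crux: the party who does not hold $\Sigma$ cannot sample \textbf{No}-case matchings. But the proposal does not actually resolve this crux, and the mechanism you do write down — a public random guess $b$ for the case, with the coordinates $j \neq i^*$ sampled conditioned on $b$ — does not work as stated: when $b=\textsf{No}$, Bob still cannot sample the suffix matchings $M_{i^*+1},\dots,M_k$ without $\Sigma$, and your claim that a wrong guess ``contributes no bias'' is unjustified. Your closing paragraph essentially concedes that the reconciliation is left open.

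The paper's fix is to make the hybrids themselves asymmetric, which removes the need for any case-guessing. Define $\mu(f_i)$ to be the distribution where the first $i$ matchings are drawn from the \textbf{No} distribution (conditioned on $\Sigma$) and the last $k-i$ from the \textbf{Yes} distribution. Then $(i)$ the true $\OME$ input distribution is exactly $\tfrac12(\mu(f_0)+\mu(f_k))$, so success $1/2+\eps$ forces total variation distance $\Omega(\eps)$ between the endpoints and telescoping yields an index $i^*$ with $\tvd{\cdot}{\cdot} \ge \eps/2k$ between $\mu(f_{i^*-1})$ and $\mu(f_{i^*})$; $(ii)$ in the reduction, Alice \emph{always} samples $M_1,\dots,M_{i^*-1}$ from the \textbf{No} distribution (she has $\Sigma$), Bob plugs his $\HLP$ matching in as $M_{i^*}$, and Bob \emph{always} samples $M_{i^*+1},\dots,M_k$ from the \textbf{Yes} distribution (which is $\Sigma$-free) — so neither party ever needs randomness it cannot generate, and the induced distribution is $\mu(f_{i^*-1})$ or $\mu(f_{i^*})$ according to whether $(\Sigma,M)$ is a \textbf{Yes} or \textbf{No} instance of $\HLP$. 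Bob then outputs the maximum-likelihood answer on the final transcript, and derandomization is a single averaging over public coins against the fixed hard distribution at the very end. Without this asymmetric hybrid construction your argument has a genuine gap.
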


We prove~\Cref{lem:reduction-k-2} by a standard argument. We first use a hybrid argument to show the existence of an `informative index' among the message between the players over a hybrid distribution. More formally, we define the distributions $\{\mu(f_{i})\}_{i=0}^{k}$ as follows: 
\begin{itemize}
\item For each $\mu(f_{i})$, let matchings $M_1,\ldots,M_i$ be sampled from the \textbf{No} case of~\Cref{prb:k-player-game}, and the latter $k-i$ matchings be sampled from the \textbf{Yes} distribution. We have, 
\end{itemize}

In the following, fix a protocol $\protOME$ that solves $\OME$ with probability at least $1/2+\eps$. We use $\msg{\protOME}$ to denote the random variable for the messages in $\protOME$, including the final answer. 

\begin{lemma}
\label{lem:hybrid-inf-ind-k-2}
There exists an \textnormal{\textbf{informative index}} $i^{*}\in [k]$ such that 
\begin{align*}
\tvd{(\msg{\protOME} \mid \mu(f_{i^{*}-1}))}{(\msg{\protOME}\mid \mu(f_{i^{*}}))}\geq \frac{\eps}{2k}.
\end{align*}
\end{lemma}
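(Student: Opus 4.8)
\textbf{Proof proposal for Lemma~\ref{lem:hybrid-inf-ind-k-2}.}

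The plan is the standard hybrid (telescoping) argument. I would start from the observation that $\mu(f_0)$ is exactly the \textbf{Yes}-case distribution of $\OME_{n,k,t}$ (all $k$ matchings drawn from the Yes distribution), while $\mu(f_k)$ is exactly the \textbf{No}-case distribution (all $k$ matchings drawn from the No distribution). Since $\protOME$ solves $\OME$ with probability at least $1/2 + \eps$, and the two cases are chosen with probability $1/2$ each, a Maximum Likelihood Estimator applied to $\msg{\protOME}$ distinguishes the two cases with advantage $\eps$; by the characterization of TVD via the success probability of the optimal distinguisher, this forces
\[
\tvd{(\msg{\protOME}\mid\mu(f_0))}{(\msg{\protOME}\mid\mu(f_k))}\geq 2\eps.
\]
(Here I am using the promise that the input is one of the two cases uniformly at random, so the distinguishing advantage translates to TVD between the two message distributions; the constant $2$ comes from $\Pr[\text{success}] = \tfrac12 + \tfrac12\tvd{\cdot}{\cdot}$ for the optimal test.)

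Next, I would apply the triangle inequality for total variation distance along the telescoping chain $\mu(f_0), \mu(f_1), \ldots, \mu(f_k)$:
\[
\tvd{(\msg{\protOME}\mid\mu(f_0))}{(\msg{\protOME}\mid\mu(f_k))}
\;\leq\; \sum_{i=1}^{k}\tvd{(\msg{\protOME}\mid\mu(f_{i-1}))}{(\msg{\protOME}\mid\mu(f_i))}.
\]
Combining the two displays, the sum of the $k$ consecutive TVD gaps is at least $2\eps$, so by averaging there must exist an index $i^\star \in [k]$ for which the single term is at least $2\eps/k \geq \eps/(2k)$, which is the claimed bound (in fact with room to spare). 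Replacing the randomized protocol by a deterministic one of the same communication cost — by fixing the best random string for this fixed input distribution, which is what is actually needed in Lemma~\ref{lem:reduction-k-2} — is a routine averaging step and does not affect this argument.

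I do not anticipate a genuine obstacle here; the only points requiring a little care are (i) confirming that the endpoints $\mu(f_0)$ and $\mu(f_k)$ really coincide with the Yes- and No-distributions of $\OME$ as the hybrids are defined (which is immediate from the construction of $\mu(f_i)$), and (ii) getting the constant right in the passage from ``distinguishes with probability $\tfrac12+\eps$'' to ``TVD $\geq 2\eps$'', which is the elementary identity for the optimal distinguisher. Everything else is the triangle inequality plus pigeonhole. If one wanted to be slightly more economical one could even drop the factor $2$ and just claim TVD $\geq \eps$ between the endpoints, still yielding a term of size at least $\eps/k \geq \eps/(2k)$.
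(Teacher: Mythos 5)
Your proof is correct and follows essentially the same route as the paper: identify $\mu(f_0)$ and $\mu(f_k)$ with the \textbf{Yes}- and \textbf{No}-distributions, convert the protocol's success probability into a lower bound on the TVD of the message distributions at the endpoints via the optimal-distinguisher identity, telescope with the triangle inequality, and average. Your constant ($2\eps$ at the endpoints) is in fact tighter than the $\eps/2$ the paper extracts from the same fact, but both comfortably yield the claimed $\eps/(2k)$.
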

\begin{proof}
Note that by the definition of our hybrid distribution, the distribution in~\Cref{prb:k-player-game} is  
$
\nu = \frac{1}{2}\cdot \paren{\mu(f_{0})+\mu(f_{k})}.$
 Hence, by~\Cref{fact:distinguish-tvd}, to determine whether a draw of $\nu$ is from $\mu(f_{0})$ or $\mu(f_{k})$ with probability at least $1/2+\eps$, there must be that
\begin{align*}
\tvd{(\msg{\protOME} \mid \mu(f_{0}))}{(\msg{\protOME}\mid \mu(f_k))}\geq \frac{\eps}{2}.
\end{align*}
On the other hand, by the triangle inequality of total variation distance, we have
\begin{align*}
&\tvd{(\msg{\protOME} \mid \mu(f_{0}))}{(\msg{\protOME}\mid \mu(f_k))} \\
&\hspace{10pt}\leq \sum_{i=1}^{k} \tvd{(\msg{\protOME} \mid \mu(f_{i-1}))}{(\msg{\protOME}\mid \mu(f_{i}))}.
\end{align*}
An averaging argument now concludes the proof. 
\end{proof}

Based on~\Cref{lem:hybrid-inf-ind-k-2}, we can now design a protocol $\protHLP$ for Alice and Bob to gain an advantage of $\Omega(\frac{\eps}{k})$ for $\HLP$. In what follows, we will use $\msg{\protOME}_{i}$ as the message of player $P_i$ 
and $\msg{\protOME}_{<i}$ as the messages of players $P_1,\ldots,P_{i-1}$ in $\protOME$. 

\clearpage

\begin{tbox}
$\protHLP$: a communication protocol for~\Cref{prb:2-player-game}.

\smallskip

\paragraph{Input:} 
\begin{itemize}
\item $\protOME$: a communication protocol for~\Cref{prb:k-player-game} that uses $c$ bits of communication.
\item The informative index $i^*$ of~\Cref{lem:hybrid-inf-ind-k-2} for $\protOME$
\item $(\Sigma,M)$: the inputs of Alice and Bob as prescribed in~\Cref{prb:2-player-game}.
\end{itemize}
\smallskip

\paragraph{Alice:} 
\begin{enumerate}[label=$(\roman*)$]
    \item Alice samples the first $i^{*}-1$ matchings $M_1,\ldots,M_{i^{*}-1}$ of $\OME$ as input to players $P_1,\ldots,P_{i^*-1}$ following the \textbf{No} distribution conditioned on her input labeling $\Sigma$.
    \item Alice runs $\protOME$ for the first $i^*-1$ players and sends $\msg{\protOME}_{<i^{*}}$ to Bob.
\end{enumerate}

\paragraph{Bob:} 
\begin{enumerate}[label=$(\roman*)$]
    \item Bob picks the input of player $P_{i^*}$ to be the matching $M$ in his input to $\HLP$. 
    \item Bob further samples the inputs to players $P_{i^*+1},\ldots,P_k$ by sampling the matchings $M_{i+1},\ldots,M_k$ from the \textbf{Yes}-distribution (using the fact that in this case, the matchings are independent of $\Sigma$ which is unknown to Bob). 
    \item Bob continues running $\protOME$ on the remaining players using $\msg{\protOME}_{<i^*}$ get the final message $\msg{\protOME}$.
    \item Bob returns the MLE of $\msg{\protOME}$ between the distributions $\mu(f_{i^*-1})$ and $\mu(f_{i^*})$, i.e., returns 
    \textbf{Yes} if $\mu(f_{i^*-1})(\msg{\protOME})\geq \mu(f_{i^*})(\msg{\protOME})$ and \textbf{No} otherwise.
\end{enumerate}
\end{tbox}

It is easy to see that $\protHLP$ is a valid communication protocol with $c$ bits of communication. 
We now prove the correctness of $\protHLP$. 
\begin{claim}
\label{clm:k-inf-ind-dist-equivalence}
The graph $G$ created in $\protHLP$ follows $\mu(f_{i^{*}-1})$ if $(\Sigma,M)$ is a \textnormal{\textbf{Yes}} case, and follows $\mu(f_{i^{*}})$ if $(\Sigma,M)$ is a \textnormal{\textbf{No}} case.
\end{claim}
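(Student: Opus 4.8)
The plan is to show, separately in the two cases, that the joint distribution of the labeling $\Sigma$ and the tuple of matchings $(M_1,\dots,M_k)$ that $\protHLP$ feeds into $\protOME$ is exactly one of the hybrid distributions $\mu(f_{i^*-1})$, $\mu(f_{i^*})$; since the graph $G$ is a deterministic function of $(M_1,\dots,M_k)$ (their union, kept as a multi-graph), matching the laws of the matchings suffices. Recall that under $\mu(f_i)$ the labeling $\Sigma$ is a uniformly random equipartition of $V$ into $t$ classes, the matchings $M_1,\dots,M_i$ are — mutually independently, conditioned on $\Sigma$ — drawn from the \textbf{No}-distribution of~\Cref{prb:k-player-game} (each with exactly $n/4t$ edges in each class), and $M_{i+1},\dots,M_k$ are — mutually independently, and independently of $\Sigma$ — drawn from the \textbf{Yes}-distribution (uniform matchings of size $n/4$).

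First I would note that the marginal of $\Sigma$ is identical in $\HLP_{n,t}$ and in both hybrids, namely a uniformly random equipartition into $t$ equal classes, and that this is the very $\Sigma$ Alice uses in $\protHLP$ to sample $M_1,\dots,M_{i^*-1}$ from the \textbf{No}-distribution conditioned on $\Sigma$; hence these first $i^*-1$ matchings already have the law prescribed for the first $i^*-1$ coordinates of both $\mu(f_{i^*-1})$ and $\mu(f_{i^*})$. Next, Bob sets $M_{i^*}:=M$ and samples $M_{i^*+1},\dots,M_k$ independently from the \textbf{Yes}-distribution; this is legitimate precisely because in the \textbf{Yes}-case of~\Cref{prb:k-player-game} the matchings are independent of $\Sigma$, so Bob need not know $\Sigma$, and these matchings realize the last $k-i^*$ coordinates of both hybrids. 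It then remains to identify the law of $M_{i^*}$. If $(\Sigma,M)$ is a \textbf{Yes}-instance of $\HLP_{n,t}$, then $M$ is a uniform matching of size $n/4$, i.e.\ a \textbf{Yes}-type draw, so players $1,\dots,i^*-1$ hold \textbf{No}-type matchings and players $i^*,\dots,k$ hold \textbf{Yes}-type matchings — this is exactly $\mu(f_{i^*-1})$. If $(\Sigma,M)$ is a \textbf{No}-instance, then $M$ is drawn from the \textbf{No}-distribution conditioned on Alice's labeling $\Sigma$, the same $\Sigma$ used throughout $\protHLP$, so players $1,\dots,i^*$ hold \textbf{No}-type matchings and players $i^*+1,\dots,k$ hold \textbf{Yes}-type ones — this is exactly $\mu(f_{i^*})$.

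The only point requiring genuine care — and the one I would spell out explicitly — is the conditional independence structure: both the hybrid distributions and the distribution produced by $\protHLP$ treat $M_1,\dots,M_k$ as mutually independent given $\Sigma$ (with $M_{i^*+1},\dots,M_k$ moreover independent of $\Sigma$), and this is consistent with~\Cref{prb:k-player-game}, where, conditioned on the case and on $\Sigma$, the $M_i$ are independent — as already observed in the proof of~\Cref{clm:neg-correlation}. No spurious correlation is introduced because the protocol is permitted to form multi-graphs, exactly as in the definition of $\OME_{n,k,t}$. Once this is in place, the claim is just the coordinate-by-coordinate matching above, so I do not expect a real obstacle: the whole content is the bookkeeping of which player indices are \textbf{Yes}-type versus \textbf{No}-type in each case.
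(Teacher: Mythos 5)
Your proposal is correct and follows essentially the same route as the paper: a coordinate-by-coordinate identification of Alice's \textbf{No}-samples with the first $i^*-1$ matchings, Bob's input $M$ with the $i^*$-th, and Bob's \textbf{Yes}-samples with the rest, noting that the \textbf{Yes}-distribution's independence from $\Sigma$ is what lets Bob sample without knowing the labeling. The extra detail you give on the conditional independence structure is a harmless elaboration of what the paper leaves implicit.
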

\begin{proof}
Consider the process of drawing from $\mu(f_{i^{*}-1})$ or $\mu(f_{i^{*}})$: the first $\paren{i^{*}-1}$ coordinates follows the \textbf{No} distribution, which is exactly sampled by Alice. The last $\paren{k- i^{*}}$ coordinates follows the \textbf{Yes} distribution, which is exactly sampled by Bob. The $i^*$-th matching depends on whether $(\Sigma,M)$ is a \textbf{Yes}-case or a \text{No}-case, as desired. 
\end{proof}

\begin{proof}[Proof of Lemma~\ref{lem:reduction-k-2}]
Consider the distribution $\frac{1}{2}\cdot (\mu(f_{i^{*}-1})+\mu(f_{i^{*}}))$ and note that by~\Cref{clm:k-inf-ind-dist-equivalence}, this is 
 the distribution of  graph $G$ when $(\Sigma,M)$ is sampled in $\HLP$. 
By \Cref{lem:hybrid-inf-ind-k-2}, we have 
\begin{align*}
\tvd{(\msg{\protOME} \mid \mu(f_{i^{*}-1}))}{(\msg{\protOME}\mid \mu(f_{i^{*}}))}\geq \frac{\eps}{2k}.
\end{align*}
By \Cref{fact:distinguish-tvd}, we know that Bob can distinguish whether $\msg{\protOME}$ follows $\mu(f_{i^{*}-1})$ or $\mu(f_{i^{*}})$ with probability at least ${1}/{2}+{\eps}/{k}$, which
implies that $\protHLP$ will output the correct answer with the same probability at least. 

Finally, note that the protocol $\protHLP$ designed earlier is randomized. However, given that we are measuring success of the protocol against a fixed hard input distribution, we 
can simply fix its public randomness by an averaging argument and obtain a deterministic protocol with the same probability of success (or equivalently, apply the easy direction of Yao's minimax principle). 
This concludes the proof. 
\end{proof}

\subsection{Step Three: Decorrelation of $\HLP$}
\label{subsec:2ucg-decorrelate}

The challenge in analyzing \HLP directly is that the edges of Bob are highly correlated. As such, we use another type of hybrid argument to decorrelate these edges. That is, we show that if there is a protocol that solves $\HLP$ with 
advantage $\Omega(\frac{1}{k})$, we can construct a protocol that distinguishes a \emph{single edge} from the \textbf{Yes} and \textbf{No} cases of $\HLP$, albeit with an advantage which is roughly a factor $n$ smaller. 
We define the following intermediate problem. 

\begin{problem}[\emph{Single-Edge Labeling Problem} ($\HLPs$)]
\label{prb:HLPs}
For integer $m \geq 1$, $\HLPs_{m}$ is a two player 
communication game between Alice and Bob. We have a 
graph $G=(V,E)$ on $m$ vertices. Alice is given a partitioning of $V$ into two equal-size sets $U_0$ and $U_1$. Bob is given a single edge $e$.
We are promised that the input is one of the following two cases chosen uniformly at random: 
\begin{itemize}
    \item \textnormal{\textbf{Yes}}-case: The edge $e$ of Bob is chosen uniformly at random from all pairs of vertices that are between $U_0$ and $U_1$.
    \item \textnormal{\textbf{No}}-case: The edge $e$ of Bob is chosen uniformly at random from all pairs of vertices that are either both belong to $U_0$ or both to $U_1$. 
\end{itemize}
The goal is for Alice to send a  message to Bob, and Bob outputs which case the input belongs to. 
\end{problem}

We now show that a protocol for $\HLP$ also implies a protocol for $\HLPs$ with non-trivial performance (albeit not measured in terms of the success probability of the protocol). To do so, we need the following definition. 
\begin{itemize}
	\item For any protocol $\protHLPs$ of $\HLPs$ on $(U_0,U_1,e)$, we define the \textbf{advantage} of $\protHLPs$ as 
	\[
		\adv{\protHLPs} := \Exp_{msg}\kl{e \mid \msg{\protHLPs} = msg,\textbf{Yes}}{e \mid \msg{\protHLPs} = msg,\textbf{No}},
	\]
	where $\msg{\protHLPs}$ is the message of Alice to Bob in $\protHLPs$ and $\kl{\cdot}{\cdot}$ is the KL-divergence. 
\end{itemize}

Roughly speaking, the advantage of a protocol is a measure of success of the protocol not in terms of probability of outputting the answer, but rather KL-divergence of the distributions of Bob's view of the input between \textbf{Yes} and \textbf{No} cases, \emph{conditioned} 
on the message he receives from Alice. 
We prove that a good protocol for $\HLP$ in terms of probability of success implies a protocol for $\HLPs$ (on a somewhat smaller instance) with a non-trivial advantage. 

\begin{lemma}
\label{lem:reduction-single-edge}
Let $n,t \geq 1$ be such that $t \leq n^{1/2}$. Suppose there exists a deterministic  protocol $\protHLP$ that uses $c$ bits of communication and solves $\HLP_{n,t}$ (\Cref{prb:2-player-game}) with probability at least $1/2+\eps$ for some $\eps > 1/n$. Then, there also exists a deterministic  protocol $\protHLPs$ for $\HLPs_{m}$ for some $m = \Theta(n/t)$ (\Cref{prb:HLPs}) with $c$ bits of communication and $\adv{\protHLPs} \geq {\eps^2 }/{2n}$. 
\end{lemma}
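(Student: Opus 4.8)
The plan is to decorrelate Bob's matching $M$ in $\HLP_{n,t}$ by a hybrid argument over its $n/4$ edges, isolate a single ``informative'' edge on which Bob's conditional view (given Alice's message) already differs between the two cases, and then recognize the resulting single-edge distinguishing task as an instance of $\HLPs_m$ with $m = \Theta(n/t)$. Since we ultimately need to feed a \emph{KL-type} quantity into the Fourier/KKL step (step four), the accounting is carried out through the $\adv{\cdot}$ notion rather than success probability: its chain rule is exactly what lets the distinguishing ``budget'' split evenly across the $n/4$ edges.

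First I would fix the deterministic protocol $\protHLP$ and introduce hybrid distributions $\mu_0,\mu_1,\dots,\mu_{n/4}$ on $(\Sigma,M)$: in $\mu_i$ the labeling $\Sigma$ is sampled as usual, the first $i$ edges of $M$ are placed as in the \textbf{No}-case (sequentially, each uniform inside a not-yet-saturated class of $\Sigma$) and the last $n/4-i$ edges as in the \textbf{Yes}-case (sequentially, each uniform among currently unused vertices), so $\mu_{n/4}$ is the \textbf{No}-distribution and $\mu_0$ the \textbf{Yes}-distribution. Writing $\Pi$ for Alice's message (a function of $\Sigma$ alone, hence with the same marginal under every $\mu_i$), the fact that $\protHLP$ succeeds with probability $1/2+\eps$ together with Pinsker's inequality gives $\kl{(\Pi,M)\mid\mu_{n/4}}{(\Pi,M)\mid\mu_0}=\Omega(\eps^2)$ — taking the divergence in the direction that keeps it finite, i.e.\ the more-constrained \textbf{No}-distribution first, whose support of each $i$-th edge lies inside that of the \textbf{Yes}-distribution. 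The chain rule for KL-divergence along the filtration $\Pi,e_1,\dots,e_{n/4}$ then gives
\[
\kl{(\Pi,M)\mid\mu_{n/4}}{(\Pi,M)\mid\mu_0}\;=\;\sum_{i=1}^{n/4}\E_{\Pi,\,e_{<i}\sim\mu_{n/4}}\kl{e_i\mid\Pi,e_{<i},\mu_i}{e_i\mid\Pi,e_{<i},\mu_{i-1}},
\]
because $\mu_i$ and $\mu_{i-1}$ agree on the conditional law of everything but the $i$-th edge, and for a \textbf{No}-like prefix $e_{<i}$ the conditional of $e_i$ under $\mu_{n/4}$ (resp.\ $\mu_0$) coincides with that under $\mu_i$ (resp.\ $\mu_{i-1}$). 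Averaging over the $n/4$ terms yields an index $i^*$ for which the corresponding expected KL is $\Omega(\eps^2/n)$; this is precisely a statement that Alice's message, together with a \textbf{No}-like prefix, shifts the conditional law of a single edge by $\Omega(\eps^2/n)$ in KL.

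Next I would convert this into a protocol $\protHLPs$ for $\HLPs_m$. At a high level, Alice simulates $\HLP$-Alice: she extends her equipartition $(U_0,U_1)$ of $m$ vertices to a full $t$-labeling $\Sigma$ of $n$ vertices with $U_0,U_1$ among the classes, fixes the remaining classes by a rule known to Bob, and runs $\protHLP$'s message. Bob plugs his single edge $e$ into the role of $e_{i^*}$, samples the $n/4-1$ other edges of $M$ so that, conditioned on Alice's message, the matching is distributed as $\mu_{i^*-1}$ when $e$ is a crossing edge of $\HLPs_m$ and as $\mu_{i^*}$ when $e$ is a within-side edge (the non-informative edges live inside the fixed classes, so Bob can produce them without knowing $(U_0,U_1)$), and then runs the rest of $\protHLP$ and outputs accordingly. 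The payoff is that $\adv{\protHLPs}$ equals the informative-index quantity, hence is $\Omega(\eps^2/n)$; tuning constants and using $\eps>1/n$ to absorb lower-order discrepancies gives $\adv{\protHLPs}\ge\eps^2/(2n)$. Finally, fixing any auxiliary randomness by an averaging (Yao) argument makes $\protHLPs$ deterministic with $c$ bits and the same advantage.

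The step I expect to be the main obstacle is making this simulation \emph{literally} match the definition of $\HLPs_m$ while losing only a constant factor, and in particular pinning down $m=\Theta(n/t)$. The delicate points are: (i) ordering the \textbf{No}-like placement so that after the prefix $e_{<i^*}$ the only non-saturated classes are the two that realize $(U_0,U_1)$, so that the within-side case of $\HLPs_m$ exactly reproduces $\mu_{i^*}$'s law of $e_{i^*}$; (ii) arranging the non-informative edges and the labeling extension so that the crossing case of $\HLPs_m$ reproduces $\mu_{i^*-1}$'s law of $e_{i^*}$ up to constants, which is where an averaging over \emph{which} two classes of $\Sigma$ play the role of $(U_0,U_1)$ and the restriction of the \textbf{Yes}-like edge to $\Theta(n/t)$ vertices enter; and (iii) checking that the conditionings on $(\Pi,e_{<i^*})$ behave well enough for the KL chain rule and the advantage bookkeeping. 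All of this must be engineered so that the final advantage is $\Theta(\eps^2/n)$ rather than, say, $\eps^2/(nt)$ or $\eps^2/(nt^2)$; the KL-based $\adv{\cdot}$ notion is what keeps this accounting clean and passes exactly the right quantity to the next step.
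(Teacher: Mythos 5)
Your outline does follow the same skeleton as the paper's proof (Pinsker to get $\kl{\cdot}{\cdot}\geq \eps^2/2$, chain rule over the $n/4$ edges of $M$, averaging to an informative index $i^*$, then embedding $\HLPs$ at that index), but the step you defer as ``the main obstacle'' is exactly where the paper's argument does its real work, so as written there is a genuine gap. At the informative index, the two conditional laws of $e_{i^*}$ are \emph{uniform pair among unused vertices} versus \emph{uniform within-class pair}; neither is the \textbf{Yes}-case of $\HLPs_m$, which is a uniformly random \emph{crossing} pair between two fixed sides. The paper bridges this by (a) conditioning on the high-probability event that the prefix $M^*$ leaves at least $n/3t$ unmatched vertices in every class (Chernoff for negatively correlated variables), which is what pins down $m=\Theta(n/t)$; (b) writing the uniform edge, conditioned on its two incident classes $\Sigma^*_{j_1},\Sigma^*_{j_2}$, as the explicit mixture $\frac{n+3}{2n+3}\mu(0)+\frac{n}{2n+3}\mu(1)$ of a within-class edge and a crossing edge; and (c) using convexity of KL-divergence to peel off the $\mu(0)$ component and isolate $\kl{\mu(1)}{\mu(0)}$ at the cost of a factor $2$, followed by an averaging argument fixing $j_1,j_2$ and the rest of $\Sigma$. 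Without (b) and (c) you have reduced to a ``uniform vs.\ within-class'' distinguishing task, not to $\HLPs_m$, and the Fourier/KKL bound of the next step is proved only for the crossing-vs-within problem.

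A second, more technical divergence: you run the chain rule in the direction $\kl{\textbf{No}}{\textbf{Yes}}$, so your prefix $e_{<i^*}$ is drawn from the \textbf{No}-distribution, which is correlated with $\Sigma$. The paper instead uses $\kl{\textbf{Yes}}{\textbf{No}}$ (truncated by $\min\set{1,\cdot}$ to handle the support mismatch), precisely so that $M_{<i^*}$ is a uniform matching independent of $\Sigma$ and of Alice's message; this independence is what lets one fix $M^*$ by averaging, argue that the classes of the remaining vertices are still well distributed, and let Bob sample the later \textbf{Yes}-like matchings without knowing $\Sigma$. Your reversed direction also produces the divergence $\kl{e\mid\textbf{No}}{e\mid\textbf{Yes}}$, which is not the quantity $\adv{\protHLPs}$ as defined (KL is asymmetric), so the stated conclusion would not follow without either flipping the direction throughout or re-deriving the bias bound for the reversed advantage.
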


The proof of this lemma is also based on a hybrid argument, although quite different from that of~\Cref{lem:reduction-k-2}. For the rest of the proof, fix a protocol $\protHLP$ as in~\Cref{lem:reduction-single-edge}. 
By~\Cref{fact:distinguish-tvd},
\[
	\tvd{(M,\msg{\protHLP} \mid \textbf{Yes})}{(M,\msg{\protHLP} \mid \textbf{No})} \geq \frac{\eps}{2}, 
\]
since given only $(M,\msg{\protHLP})$, Bob is able to solve $\HLP$ with probability of success at least $1/2+\eps$. 
By Pinsker's inequality (\Cref{fact:pinsker}), this implies that 
\begin{align}
	\min\set{1,\kl{M,\msg{\protHLP} \mid \textbf{Yes}}{M,\msg{\protHLP} \mid \textbf{No}}}  \geq \frac{\eps^2}{2}, \label{eq:hlp-advantage} 
\end{align}
where we also used the trivial upper bound of $1$ on the total variation distance. 
By the chain rule of KL-divergence (\Cref{fact:kl-chain-rule}), for the LHS of~\Cref{eq:hlp-advantage}, we have, 
\begin{align*}
	&\kl{M,\msg{\protHLP} \mid \textbf{Yes}}{M,\msg{\protHLP} \mid \textbf{No}} \\
	&\hspace{30pt}= \kl{\msg{\protHLP} \mid \textbf{Yes}}{\msg{\protHLP} \mid \textbf{No}}  \\
	&\hspace{60pt}+ \Exp_{msg \mid \textbf{Yes}} \kl{M \mid \msg{\protHLP}=msg, \textbf{Yes}}{M \mid \msg{\protHLP}=msg, \textbf{No}} \\
	&\hspace{30pt}= \Exp_{msg} \kl{M \mid \msg{\protHLP}=msg, \textbf{Yes}}{M \mid \msg{\protHLP}=msg, \textbf{No}},
\end{align*}
as the marginal distribution of $\msg{\protHLP}$ is the same under \textbf{Yes} and \textbf{No} cases (recall that Alice on her own can only guess the correct answer with probability half).

We denote $M=(e_1,\ldots,e_{n/4})$ where $e_i$ is the $i$-th edge we sample in the matching $M$. We further write $M_{<i}$ to denote $e_1,\ldots,e_{i-1}$. Another application of chain rule implies that 
\begin{align*}
	&\Exp_{msg} \kl{M \mid \msg{\protHLP}=msg, \textbf{Yes}}{M \mid \msg{\protHLP}=msg,\textbf{No}} \\
	 &\hspace{20pt}= \sum_{i=1}^{n/4}  \Exp_{msg} \Exp_{M_{<i} \mid msg, \textbf{Yes}} \kl{e_i \mid \msg{\protHLP}=msg, M_{<i}, \textbf{Yes}}{e_i \mid \msg{\protHLP}=msg, M_{<i}, \textbf{No}}.
\end{align*}
Recall that the distribution of $M_{<i} \mid \textbf{Yes}$ is the uniform distribution over all matchings of size $i-1$, independent the input (and thus message) of Alice. We denote this distribution by $\unif_{<i}$ (or $\unif$ if it is clear from the 
context). 
Combining this equation with~\Cref{eq:hlp-advantage} and an averaging argument implies that there exists an index $i^* \in [n/4]$ such that 
\begin{align}
  &\hspace{180pt} \Exp_{M_{<i^*} \sim \unif}  \quad \Exp_{msg} \notag \\
  &\hspace{10pt}\min\set{1,\kl{e_{i^*} \mid \msg{\protHLP}=msg, M_{<i^{*}}, \textbf{Yes}}{e_{i^*} \mid \msg{\protHLP}=msg, M_{<i^*}, \textbf{No}}} \geq \frac{\eps^2}{2n}. \label{eq:info-index-2}
\end{align}
In the rest of the proof, we denote $M_{<i^*}$ and $e_{i^*}$ by $M^*$ and $e^*$  to avoid the clutter. Our goal is now to ``massage'' the LHS of~\Cref{eq:info-index-2} into a more suitable form for obtaining a protocol for $\HLPs$. 
For any matching $M^*$, we define:
\begin{itemize}
	\item  $\Sigma(M^*)$ as the choice of classes of vertices of $M^*$; 
	\item $\event(M^*,\Sigma(M^*))$: the event that the matching $M^*$ matches at most $2n/3t$ vertices from each class $\Sigma_j$ for $j \in [t]$. Note that this event is fully determined by $(M^*,\Sigma(M^*))$.
\end{itemize}
We have the following claim that is based on the fact that $\event(M^*,\Sigma(M^*))$ is quite a likely event. 
\begin{claim}\label{clm:event-M*}
	There exists a choice of $(M^*,\Sigma(M^*))$ such that $\event(M^*,\Sigma(M^*))$ holds and
	\begin{align*}
		 &\hspace{180pt}\Exp_{msg \mid M^*,\Sigma(M^*)} \\
		 &\kl{e^* \mid \msg{\protHLP}=msg, M^*, \Sigma(M^*), \textnormal{\textbf{Yes}}}{e^* \mid \msg{\protHLP}=msg, M^*, \Sigma(M^*),\textnormal{\textbf{No}}} \geq \frac{\eps^2}{4n}.
	\end{align*}
\end{claim}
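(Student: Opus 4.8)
The plan is to start from~\eqref{eq:info-index-2} and \emph{refine} the conditioning so that it also records the partial labeling $\Sigma(M^*)$, and then extract one good fixed pair $(M^*,\Sigma(M^*))$ by a one-line averaging argument, checking along the way that $\event(M^*,\Sigma(M^*))$ comes for free.

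For the refinement step, fix $M^*$ and a transcript value $msg$. In the \textbf{Yes}-case the input matchings are independent of the labeling, so the conditional law of $e^*$ given $(msg,M^*)$ does not depend on $\Sigma(M^*)$ at all, i.e.\ $\big(e^*\mid msg,M^*,\sigma,\textbf{Yes}\big)$ is the \emph{same} distribution for every value $\sigma$ of $\Sigma(M^*)$; in the \textbf{No}-case, $\big(e^*\mid msg,M^*,\textbf{No}\big)$ is precisely the mixture over $\sigma\sim\big(\Sigma(M^*)\mid msg,M^*,\textbf{No}\big)$ of the laws $\big(e^*\mid msg,M^*,\sigma,\textbf{No}\big)$. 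By convexity of KL-divergence (one side is a $\sigma$-mixture, the other is constant across $\sigma$),
\[
\kl{e^*\mid msg,M^*,\textbf{Yes}}{e^*\mid msg,M^*,\textbf{No}}\ \le\ \Exp_{\sigma\sim(\Sigma(M^*)\mid msg,M^*,\textbf{No})}\ \kl{e^*\mid msg,M^*,\sigma,\textbf{Yes}}{e^*\mid msg,M^*,\sigma,\textbf{No}}.
\]
Using $\min\{1,x\}\le x$ on the left and taking the expectation over $(M^*,msg)$ exactly as in~\eqref{eq:info-index-2}, then re-indexing the resulting expectation over the triple $(M^*,\Sigma(M^*),msg)$ — a legitimate joint law, obtained by sampling $M^*\sim\unif$, then $msg$ from its common marginal, then $\Sigma(M^*)$ from its \textbf{No}-posterior given $(msg,M^*)$ — so that $msg$ becomes innermost, I would get
\[
\Exp_{(M^*,\Sigma(M^*))}\ \Exp_{msg\mid M^*,\Sigma(M^*)}\ \kl{e^*\mid msg,M^*,\Sigma(M^*),\textbf{Yes}}{e^*\mid msg,M^*,\Sigma(M^*),\textbf{No}}\ \ge\ \frac{\eps^2}{2n},
\]
where $msg\mid M^*,\Sigma(M^*)$ is the conditional law under this joint; the inner quantity is exactly the left-hand side of the claim.

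Next I would note that $\event(M^*,\Sigma(M^*))$ holds for \emph{every} pair $(M^*,\sigma)$ in the support of this joint: $\sigma$ is drawn from the \textbf{No}-posterior, so $(M^*,\sigma)$ must be consistent with some full labeling $\Sigma\supseteq\sigma$ for which $M^*$ is a prefix of a valid \textbf{No}-matching, which forces each class to contain at most $n/4t$ edges of $M^*$, hence at most $n/2t\le 2n/3t$ of its vertices. Finally, since a non-negative quantity has expectation at least $\eps^2/(2n)$, some pair $(M^*,\Sigma(M^*))$ in the support attains value at least $\eps^2/(2n)\ge\eps^2/(4n)$; as just observed it also satisfies $\event$, proving the claim. (The factor-$2$ slack leaves room to instead crudely bound $\Pr[\neg\event]$ — e.g.\ a Chernoff/union bound for a uniform partial matching over-loading a class — rather than observe it vanishes.)

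The step I expect to need the most care is the convexity move together with the re-indexing of expectations: one must correctly identify that it is the \textbf{Yes}-side of the divergence that is held fixed across the $\sigma$-mixture (by independence of the input matching and the labeling in the \textbf{Yes}-case, together with the already-established fact that the transcript has the same marginal under \textbf{Yes} and \textbf{No}), and verify that the ``mixed'' joint law of $(M^*,\Sigma(M^*),msg)$ — built from \textbf{Yes}-marginals of $M^*$ and $msg$ but the \textbf{No}-posterior of $\Sigma(M^*)$ — is a bona fide distribution so that the re-indexing and the final averaging are valid. Everything else is routine.
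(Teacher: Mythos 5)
Your proposal follows the same skeleton as the paper's proof (refine the conditioning in~\eqref{eq:info-index-2} to include $\Sigma(M^*)$, check $\event$, then average), but executes the key step differently, and the difference cuts both ways. The paper introduces $\Sigma(M^*)$ by a blanket appeal to ``conditioning cannot decrease KL-divergence'' with $\Sigma(M^*)$ drawn from its \emph{prior} (uniform $\Sigma$, independent of $M^*\sim\unif$), and then pays for $\overline{\event(M^*,\Sigma(M^*))}$ with a Chernoff bound, using the truncation at $1$ and $\eps>1/n$ to absorb a $1/\poly(n)$ loss --- this is where the factor $2$ between $\eps^2/2n$ and $\eps^2/4n$ goes. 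Your convexity argument is in fact the \emph{rigorous} way to justify that conditioning step: since $\Sigma(M^*)\mid msg,M^*$ has different laws under \textbf{Yes} and \textbf{No}, the naive form of \Cref{fact:kl-conditioning} does not apply directly (it would incur an extra $\kl{\Sigma(M^*)\mid\textbf{Yes}}{\Sigma(M^*)\mid\textbf{No}}$ term); your observation that the \textbf{Yes}-side is constant in $\sigma$, so that both arguments can be written as mixtures with the \emph{No}-posterior weights, is exactly what makes joint convexity applicable. As a bonus, on the support of the No-posterior the event $\event$ holds deterministically (a prefix of a valid \textbf{No}-matching has at most $n/4t$ edges, hence $n/2t\le 2n/3t$ matched vertices, per class), so you avoid the Chernoff bound entirely.

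The one point you must not gloss over is that your averaging is taken under a \emph{tilted} joint law of $(M^*,\Sigma(M^*),msg)$: the conditional of $msg$ given the fixed pair under your law $Q$ is proportional to $\Pr(msg\mid\Sigma(M^*))/\Pr_{\textbf{No}}(M^*\mid msg)$, which is generally \emph{not} the natural law used implicitly in the claim and needed downstream. \Cref{clm:next-step-M*} and the analysis of $\protHLPs$ in \Cref{lem:reduction-single-edge} require that $\distribution{\msg{\protHLPs}}$ --- i.e., the message distribution induced by the actual $\HLPs$ input distribution after Alice plugs in $U_0,U_1$ --- coincide with the conditional law of $\msg{\protHLP}$ given the fixed $(M^*,\Sigma(M^*))$ appearing in the claim; with your tilt this identity breaks, so you would have to either propagate the reweighting through the rest of the reduction or re-derive the claim under the paper's prior law (reinstating the Chernoff step your parenthetical alludes to). Relatedly, your convexity step silently assumes $(msg,M^*)$ has positive \textbf{No}-probability so that the No-posterior of $\Sigma(M^*)$ is defined; for pairs outside the \textbf{No}-support the left-hand KL is infinite and only the truncation $\min\{1,\cdot\}$ saves the inequality, so you should handle those pairs separately (the paper's route has the analogous issue but the truncation-plus-Chernoff bookkeeping deals with it uniformly).
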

\begin{proof}
	We can write the LHS of~\Cref{eq:info-index-2} as 
	\begin{align*}
		  &\Exp_{M^*,msg} \min\set{1,\kl{e^* \mid \msg{\protHLP}=msg, M^*,  \textnormal{\textbf{Yes}}}{e^* \mid \msg{\protHLP}=msg, M^*, \textnormal{\textbf{No}}}} \leq \\
		  &\hspace{220pt} \hspace{-25pt}\Exp_{M^*,msg,\Sigma(M^*)}\hspace{-25pt}\\
		  &\hspace{10pt} \min\set{1,\kl{e^* \mid \msg{\protHLP}=msg, M^*, \Sigma(M^*), \textnormal{\textbf{Yes}}}{e^* \mid \msg{\protHLP}=msg, M^*, \Sigma(M^*), \textnormal{\textbf{No}}}}
		  \tag{as conditioning cannot decrease the KL-divergence (\Cref{fact:kl-conditioning})}. 
	\end{align*}
	Consider the probability of the event $\event(M^*,\Sigma(M^*))$. Given that $M^*$ is a matching of size at most $n/4$ chosen over $[n]$ uniformly at random in this case (independent of $\Sigma$), we have that, 
	\[
		\Exp\card{V(M^*) \cap \Sigma_j} = n/2t, 
	\]
	for every $j \in [t]$. Given that $n/t = \omega(\log{n})$ in~\Cref{lem:reduction-single-edge}, an application of Chernoff bound for negatively correlated random variables (which holds by~\Cref{clm:neg-correlation})
	plus union bound implies that 
	\[
		\Pr\paren{\overline{\event(M^*,\Sigma(M^*))}} \leq \sum_{j=1}^{t} \Pr\paren{\card{V(M^*) \cap \Sigma_j} > \frac{4}{3} \cdot \Exp\card{V(M^*) \cap \Sigma_j}} \ll 1/\poly{(n)}. 
	\]
	Thus, we have, 
	\begin{align*}
	&\hspace{220pt} \hspace{-15pt}\Exp_{M^*,msg,\Sigma(M^*)}\hspace{-25pt} \\
	&\hspace{10pt} \min\set{1,\kl{e^* \mid \msg{\protHLP}=msg, M^*, \Sigma(M^*), \textnormal{\textbf{Yes}}}{e^* \mid \msg{\protHLP}=msg, M^*, \Sigma(M^*), \textnormal{\textbf{No}}}} \\
	&\hspace{170pt}\leq\Exp_{M^*,\Sigma(M^*) \mid \event(M^*,\Sigma(M^*))} \quad \Exp_{msg \mid M^*, \Sigma(M^*)} \\
	&\hspace{10pt} \kl{e^* \mid \msg{\protHLP}=msg, M^*, \Sigma(M^*), \textnormal{\textbf{Yes}}}{e^* \mid \msg{\protHLP}=msg, M^*, \Sigma(M^*), \textnormal{\textbf{No}}} + 1/\poly{(n)},
	\end{align*}
	where the $1/\poly{(n)}$ term accounts for the contribution of KL-divergence terms whenever the event $\event(M^*,\Sigma(M^*))$ does not happen, given that we always truncate the value of KL-divergence by at most one in the prior terms. 
	This bound, together with the fact that $\eps > 1/n$ in~\Cref{lem:reduction-single-edge} implies, 
	\begin{align*}
		&\hspace{170pt} \Exp_{M^*,\Sigma(M^*) \mid \event(M^*,\Sigma(M^*))} \quad \Exp_{msg \mid M^*, \Sigma(M^*)} \hspace{-20pt} \\
		&\hspace{10pt}\kl{e^* \mid \msg{\protHLP}=msg, M^*, \Sigma(M^*), \textnormal{\textbf{Yes}}}{e^* \mid \msg{\protHLP}=msg, M^*, \Sigma(M^*), \textnormal{\textbf{No}}} \geq \frac{\eps}{4n}. 
	\end{align*}
	An averaging argument allows us to fix the choice of $M^*$ and conclude the proof. 
\end{proof}

Let us now consider the distribution of underlying variables after we condition on $(M^*,\Sigma(M^*))$ in~\Cref{clm:event-M*}. Define $\Sigma' := (\Sigma'_1,\ldots,\Sigma'_t)$ to be a random 
variable for the choice of classes for remaining vertices. By the definition of $\event(M^*,\Sigma(M^*))$, we have that $\card{\Sigma'_j} \geq n/3t$. We can consider the choice of $(e^*,\Sigma')$ conditioned on $M^*,\Sigma(M^*)$ as 
follows\footnote{The following analogy may help provide more intuition for this step: suppose we have two red balls and three green balls and we want to pick one ball uniformly at random from one of the two colors; we can first choose one random red ball and one random green ball, and then pick one of these two balls uniformly at random.}: 
\begin{enumerate}[label=$(\roman*)$]
	\item Sample $\Sigma^*=(\Sigma^*_1,\ldots,\Sigma^*_t)$ each of size exactly $n/3t$ vertices, conditioned on $M^*,\Sigma(M^*)$ and the event that both endpoints of $e^*$ belong to $\Sigma^*$. Note that by the definition of 
	$\event(M^*,\Sigma(M^*))$ this is valid as each $\Sigma^*_j$ is chosen from a set of \emph{at least} $n/3t$ vertices. 
	\item Sample $j_1,j_2 \in [t]$ uniformly at random. 
	\item Let $f_0$ be an edge chosen uniformly at random with both endpoints from either only in $\Sigma^*_{j_1}$ or only in $\Sigma^*_{j_2}$. Let $f_1$ be an edge chosen uniformly at random with one endpoint from $\Sigma^*_{j_1}$ 
	and another from $\Sigma^*_{j_2}$. 
	\item Define $\mu(0)$ as the distribution of $f_0$ and $\mu(1)$ as the distribution of $f_1$. 
\end{enumerate}
By this construction, we have, 
\begin{itemize}
	\item Distribution of $e^*$ in $\paren{\HLP \mid M^*,\Sigma(M^*), \textbf{Yes}}$  is 
	\[
		\frac{n+3}{2n+3} \cdot \mu(0) + \frac{n}{2n+3} \cdot \mu(1).
	\]
	\item Distribution of $e^*$ in $\paren{\HLP \mid M^*,\Sigma(M^*), \textbf{No}}$ is $\mu(0)$. 
\end{itemize}
To see why this is the case, notice that after committing to $(\Sigma^*_1,\ldots,\Sigma^*_t)$ conditioned on $e^*$ being incident on them, to choose $e^*$ in $\HLP \mid M^*,\Sigma(M^*), \textbf{Yes}$, 
we can first pick two classes uniformly at random, and then pick an edge uniformly at random over these vertices. This way, the edge will have both endpoints inside one of the classes with probability
\[
	\frac{2 \cdot {{n/3}\choose{2}}}{{{2n/3}\choose{2}}} = \frac{2 \cdot (n/3) \cdot (n/3+1)}{(2n/3) \cdot (2n/3+1)} = \frac{n+3}{2n+3}. 
\]
Thus, picking $f_0$ with this probability and $f_1$ with one minus this probability is equivalent to sampling $e^*$ under these conditions. The second case also holds analogously.

We have the following claim using convexity of KL-divergence and an averaging argument. 

\begin{claim}\label{clm:next-step-M*}
	There exists a choice of $j_1,j_2 \in [t]$ and $\Sigma^*_{-(j_1,j_2)} := \Sigma - \set{\Sigma^*_{j_1},\Sigma^*_{j_2}}$, such that 
	\begin{align*}
		 &\hspace{140pt}\Exp_{msg \mid M^*,\Sigma(M^*),j_1,j_2,\Sigma^*_{-(j_1,j_2)}} \\
		 &\kl{f_1 \mid \msg{\protHLP}=msg, j_1,j_2,\Sigma^*_{-(j_1,j_2)}}{f_0 \mid \msg{\protHLP}=msg, j_1,j_2,\Sigma^*_{-(j_1,j_2)}} \geq \frac{\eps^2}{2n}.
	\end{align*}
\end{claim}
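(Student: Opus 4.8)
The plan is to take the bound of \Cref{clm:event-M*} and transport its left-hand side, at the cost of only a constant factor, into the form appearing in \Cref{clm:next-step-M*}. Everything below is implicitly conditioned on the fixed pair $(M^*,\Sigma(M^*))$ selected in \Cref{clm:event-M*} (on which $\event(M^*,\Sigma(M^*))$ holds), and I abbreviate the ``context'' random variable produced by steps $(i)$--$(iii)$ of the sampling process by $\theta := (j_1,j_2,\Sigma^*_{-(j_1,j_2)})$; thus $\theta$ records everything about the instance except the two distinguished sub-classes $\Sigma^*_{j_1},\Sigma^*_{j_2}$ and the edge $e^*$.

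The first step is to observe that, conditioned on $(M^*,\Sigma(M^*))$, the joint law of $(\Sigma,\theta,\msg{\protHLP})$ is the \emph{same} in the \textbf{Yes} and \textbf{No} cases. Indeed, the two cases differ only through the conditional law of $e^*$, and by the construction in steps $(i)$--$(iv)$ --- crucially, $\Sigma^*$ is drawn so as to contain the endpoints of $e^*$ while each $\Sigma^*_j$ has the fixed size $n/3t$ --- the marginal of $\Sigma^*$, hence of $\theta$, hence of $\msg{\protHLP}$ (a function of $\Sigma$ alone), is identical in both cases. Granting this, ``conditioning cannot decrease the KL-divergence'' (\Cref{fact:kl-conditioning}), applied with $\theta$ as a common mixing variable, upgrades the bound of \Cref{clm:event-M*} to
\[
\Exp_{msg,\theta}\;\kl{e^* \mid msg,\theta,\textbf{Yes}}{e^* \mid msg,\theta,\textbf{No}}\;\ge\;\frac{\eps^2}{4n}.
\]

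For each fixed $(msg,\theta)$ I would then invoke the two mixture identities stated just before the claim, which hold \emph{conditionally} on $(msg,\theta)$ for the same reason they hold unconditionally --- the ``coin'' choosing $e^*=f_0$ versus $e^*=f_1$ in the \textbf{Yes} case has probability $\alpha := \tfrac{n+3}{2n+3}$ irrespective of $\Sigma^*_{j_1},\Sigma^*_{j_2}$, and $\msg{\protHLP}$ sees neither this coin nor $e^*$. Hence the law of $e^*$ given $(msg,\theta)$ equals $\alpha\,\mu(0)_{msg,\theta}+(1-\alpha)\,\mu(1)_{msg,\theta}$ in the \textbf{Yes} case and $\mu(0)_{msg,\theta}$ in the \textbf{No} case, where $\mu(0)_{msg,\theta},\mu(1)_{msg,\theta}$ are the conditional laws of $f_0,f_1$. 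Writing $\mu(0)_{msg,\theta}=\alpha\,\mu(0)_{msg,\theta}+(1-\alpha)\,\mu(0)_{msg,\theta}$ and using joint convexity of KL-divergence (a consequence of the chain rule, \Cref{fact:kl-chain-rule}, together with data processing) gives
\[
\kl{e^*\mid msg,\theta,\textbf{Yes}}{e^*\mid msg,\theta,\textbf{No}}\;\le\;(1-\alpha)\,\kl{\mu(1)_{msg,\theta}}{\mu(0)_{msg,\theta}}\;\le\;\tfrac12\,\kl{f_1\mid msg,\theta}{f_0\mid msg,\theta},
\]
the last step using $1-\alpha=\tfrac{n}{2n+3}\le\tfrac12$. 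Combining this with the previous display yields $\Exp_{msg,\theta}\kl{f_1\mid msg,\theta}{f_0\mid msg,\theta}\ge \frac{\eps^2}{2n}$, and a final averaging over $\theta$ fixes a value $\theta=(j_1,j_2,\Sigma^*_{-(j_1,j_2)})$ attaining at least this average, which is exactly the statement of \Cref{clm:next-step-M*}.

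I expect the real obstacle to be the first step: verifying carefully that the sampling procedure $(i)$--$(iv)$ --- together with the event $\event(M^*,\Sigma(M^*))$, which is precisely what ensures that sub-classes $\Sigma^*_j$ of size $n/3t$ can always be carved out --- genuinely concentrates \emph{all} of the \textbf{Yes}/\textbf{No} discrepancy into the single coin flip selecting $e^*\in\{f_0,f_1\}$, so that $(\Sigma,\theta,\msg{\protHLP})$ is identically distributed in the two cases and the convexity steps can be applied with a common mixing measure. The remaining manipulations are routine; I would also note that every inequality used is monotone and appears only as a lower bound, so nothing breaks if some KL-divergences are infinite (as happens, e.g., once $\msg{\protHLP}$ effectively pins down $\Sigma^*_{j_1},\Sigma^*_{j_2}$).
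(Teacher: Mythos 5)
Your proposal is correct and follows essentially the same route as the paper's proof: condition on $Y=(j_1,j_2,\Sigma^*_{-(j_1,j_2)})$ via \Cref{fact:kl-conditioning}, rewrite the \textbf{Yes}/\textbf{No} laws of $e^*$ as the mixture $\frac{n+3}{2n+3}\mu(0)+\frac{n}{2n+3}\mu(1)$ versus $\mu(0)$, apply convexity of KL (\Cref{fact:kl-convexity}) to lose only a factor $2$, and finish by averaging over $Y$. The only (harmless) difference is that you make explicit the check that the marginal of $(\Sigma,\theta,\msg{\protHLP})$ agrees in the two cases, which the paper leaves implicit in its use of the conditioning fact.
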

\begin{proof}
	To avoid clutter, we define $Y := j_1,j_2,\Sigma^*_{-(j_1,j_2)}$. 
	We expand the LHS of~\Cref{clm:event-M*} as follows: 
	\begin{align*}
		 &\hspace{200pt}\Exp_{msg \mid M^*,\Sigma(M^*)} \\
		 &\kl{e^* \mid \msg{\protHLP}=msg, M^*, \Sigma(M^*), \textnormal{\textbf{Yes}}}{e^* \mid \msg{\protHLP}=msg, M^*, \Sigma(M^*),\textnormal{\textbf{No}}} \\
		 &\hspace{160pt} \leq \Exp_{Y \mid M^*,\Sigma(M^*)} \quad \Exp_{msg \mid M^*,\Sigma(M^*),Y} \\
		 &\kl{e^* \mid \msg{\protHLP}=msg, M^*,Y, \textnormal{\textbf{Yes}}}{e^* \mid \msg{\protHLP}=msg, M^*, Y,\textnormal{\textbf{No}}} \tag{as conditioning cannot
		  decrease KL-divergence (\Cref{fact:kl-conditioning})} \\
			 &\hspace{160pt} \leq \Exp_{Y \mid M^*,\Sigma(M^*)} \quad \Exp_{msg \mid M^*,\Sigma(M^*),Y} \\
		 &\kl{\frac{(n+3) \cdot \mu(0) + n \cdot \mu(1)}{2n+3} \mid \msg{\protHLP}=msg, M^*, Y}{\mu(0) \mid \msg{\protHLP}=msg, M^*, Y} 
		\tag{by the construction stated above} \\
		 &\hspace{160pt} \leq \Exp_{Y \mid M^*,\Sigma(M^*)} \quad \Exp_{msg \mid M^*,\Sigma(M^*),Y} \\
		 &\frac{1}{2} \cdot \kl{\mu(1) \mid \msg{\protHLP}=msg, M^*, Y}{\mu(0) \mid \msg{\protHLP}=msg, M^*, j_1,j_2,\Sigma^*_{-(j_1,j_2)}} 
		\tag{by the convexity of KL-divergence (\Cref{fact:kl-convexity}) and since $(n/2n+3) < 1/2$}. 
	\end{align*}
	This, combined with the RHS of~\Cref{clm:event-M*} and expanding the definition of $Y$ implies that 
	\begin{align*}
		 &\hspace{140pt} \Exp_{j_1,j_2,\Sigma^*_{-(j_1,j_2)} \mid M^*,\Sigma(M^*)} \quad \Exp_{msg \mid M^*,\Sigma(M^*),Y} \\
		 &\kl{f_1 \mid \msg{\protHLP}=msg, M^*, j_1,j_2,\Sigma^*_{-(j_1,j_2)}}{f_0 \mid \msg{\protHLP}=msg, M^*, j_1,j_2,\Sigma^*_{-(j_1,j_2)}} \geq \frac{\eps^2}{2n},
	\end{align*}
	which, together with an averaging argument concludes the proof. 
\end{proof}

 We are now ready to design a protocol $\protHLPs$ 
for $\HLPs_{m}$ for $m=2n/3t$ as follows: 

\begin{tbox}
$\protHLPs$: a communication protocol for~\Cref{prb:HLPs}.

\smallskip

\paragraph{Input:} 
\begin{itemize}
\item $\protHLP$: a communication protocol for~\Cref{prb:2-player-game} that uses $c$ bits of communication.
\item A choice of $M^*,j_1,j_2,\Sigma^*_{-(j_1,j_2)}$ as prescribed by~\Cref{clm:event-M*} and~\Cref{clm:next-step-M*}.
\item $(U_0,U_1,e)$: the inputs of Alice and Bob as prescribed in~\Cref{prb:HLPs}.
\end{itemize}
\smallskip

\paragraph{Alice:} 
\begin{enumerate}[label=$(\roman*)$]
	\item Alice sets $\Sigma^*_{j_1} = U_0$ and $\Sigma^*_{j_2} = U_1$ to obtain the entire input of Alice-player in $\protHLP$. 
	\item She then sends the same exact message as $\protHLP$ on this input to Bob. 
\end{enumerate}
\end{tbox}

Note that as we are only interested in the advantage of the protocol $\protHLPs$ (and not its output), Bob has no task in this protocol. 

\begin{proof}[Proof of~\Cref{lem:reduction-single-edge}]
	By the definition of distributions $\mu(0)$ and $\mu(1)$, as well as the randomness of $U_0$ and $U_1$, we have that in the protocol $\protHLPs$: 
	\begin{align*}
		\distribution{U_0,U_1,e} = \distribution{\Sigma_{j_1},\Sigma_{j_2},e^* \mid M^*,j_1,j_2,\Sigma^*_{-(j_1,j_2)}},
	\end{align*}
	which implies that
	\begin{align*}
		\distribution{\msg{\protHLPs}} &= \distribution{\msg{\protHLP} \mid M^*,j_1,j_2,\Sigma^*_{-(j_1,j_2)}}, \\
		\distribution{e \mid \msg{\protHLPs}, \textbf{Yes}} &=  \distribution{f_1 \mid \msg{\protHLP}, M^*,j_1,j_2,\Sigma^*_{-(j_1,j_2)}} \\
		\distribution{e \mid \msg{\protHLPs}, \textbf{No}} &=  \distribution{f_0 \mid \msg{\protHLP}, M^*,j_1,j_2,\Sigma^*_{-(j_1,j_2)}} \\
	\end{align*}
	As such,
	\begin{align*}
		\adv{\protHLPs} &= \Exp_{msg \sim \protHLPs}\kl{e \mid \msg{\protHLPs} = msg,\textbf{Yes}}{e \mid \msg{\protHLPs} = msg,\textbf{No}} \\
		&=  \hspace{100pt} \Exp_{msg \sim \msg{\protHLP} \mid M^*,j_1,j_2,\Sigma^*_{-(j_1,j_2)}} \\
		&\hspace{1pt} \kl{f_1 \mid \msg{\protHLP}=msg, j_1,j_2,\Sigma^*_{-(j_1,j_2)}}{f_0 \mid \msg{\protHLP}=msg, j_1,j_2,\Sigma^*_{-(j_1,j_2)}},
	\end{align*}
	which is at least $\eps^2/2n$ by~\Cref{clm:next-step-M*}. This proves the bound on $\adv{\protHLPs}$. Given that $\protHLPs$ only communicates 
	a subset of messages communicated by~$\protHLP$, we have that $\protHLPs$ also used at most $c$ bits of communication, concluding the proof (note that $\protHLPs$ is deterministic 
	as long as $\protHLP$ was deterministic). 
\end{proof}

\subsection{Step Four: A Lower Bound for $\HLPs$}

The last step of the proof is the following lemma that gives a lower bound for $\HLPs$. Recall the notion of advantage of a protocol for $\HLPs$ defined in the previous subsection.

\renewcommand{\bias}{\textnormal{bias}}

\begin{lemma}\label{lem:HLP-lower}
    For any integer $m \geq 1$, any deterministic protocol $\protHLPs$ for $\HLPs_m$ with $4\log{m} \leq c \leq m/4$ bits of communication has
    \[
    	\adv{\protHLPs} \leq O(1) \cdot (\frac{c}{m})^2.
    \]
    
\end{lemma}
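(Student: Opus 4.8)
I would fix Alice's message, reduce the per-message divergence to a level-$2$ Fourier quantity of the indicator of the consistent inputs, control that quantity by a hypercontractive (KKL-type) inequality, and finally average over messages using that the message has only $c$ bits.

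\emph{Step 1 (per-message setup).} Since $\protHLPs$ is deterministic, Alice's message is a function of her input $x \in \{0,1\}^m$, where we encode her partition by $x_i = b$ iff vertex $i \in U_b$, so $x$ has Hamming weight $m/2$. Hence, conditioned on $\msg{\protHLPs}=msg$, the input $x$ is uniform over some set $A = A_{msg}$ in the middle slice; write $\mu := \card{A}/2^m$ and, for each pair $\{i,j\}$, set $\beta_{ij} := \Pr_{x \sim A}[x_i \neq x_j] - \tfrac12$. Averaging over $x \sim A$ and then over a uniformly random crossing pair (there are exactly $(m/2)^2$ of them, independent of $x$) or a uniformly random internal pair (there are $2\binom{m/2}{2}$) gives the closed forms $\Pr[e=\{i,j\}\mid msg,\textbf{Yes}] = \frac{1/2+\beta_{ij}}{(m/2)^2}$ and $\Pr[e=\{i,j\}\mid msg,\textbf{No}] = \frac{1/2-\beta_{ij}}{2\binom{m/2}{2}}$. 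Moreover $\beta_{ij}$ is, up to scaling, a level-$2$ Fourier coefficient of the indicator $1_A$ on $\{0,1\}^m$: a direct computation gives $\widehat{1_A}(\{i,j\}) = \E_x\!\left[1_A(x)(-1)^{x_i+x_j}\right] = -2\mu\,\beta_{ij}$, so $\sum_{i<j}\beta_{ij}^2 = \tfrac{1}{4\mu^2}\sum_{\card{S}=2}\widehat{1_A}(S)^2$.

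\emph{Step 2 (per-message KL via hypercontractivity).} Both conditional distributions of $e$ are multiplicative $1\pm O(1/m)$ perturbations of the uniform distribution on the $\binom{m}{2}$ pairs (the non-uniformity coming only from the $\beta_{ij}$), so I would bound the KL by the $\chi^2$-divergence, $\kl{P}{Q} \le \chi^2(P\,\|\,Q) = \sum_\omega P(\omega)^2/Q(\omega) - 1$, substitute the closed forms, and collect lower-order terms (the slice-induced $\sum_{i<j}\beta_{ij} = m/4$ produces an additive $O(1/m^2)$) to get $\kl{e\mid msg,\textbf{Yes}}{e\mid msg,\textbf{No}} \le \tfrac{O(1)}{m^2}\sum_{i<j}\beta_{ij}^2 + \tfrac{O(1)}{m^2} = \tfrac{O(1)}{m^2\mu^2}\sum_{\card{S}=2}\widehat{1_A}(S)^2 + \tfrac{O(1)}{m^2}$. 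Then apply the level-$2$ hypercontractive inequality (a standard consequence of Bonami--Beckner, as used in \cite{GavinskyKKRW07,KahnKL88}): for every $A \subseteq \{0,1\}^m$, $\sum_{\card{S}=2}\widehat{1_A}(S)^2 \le O(1)\cdot \mu^2 \log^2(e/\mu)$ --- hypercontractivity handles $\mu \le 1/e$ and the crude bound $\sum_{\card{S}=2}\widehat{1_A}(S)^2 \le \mu(1-\mu)$ handles $\mu > 1/e$. This yields the clean estimate $\kl{e\mid msg,\textbf{Yes}}{e\mid msg,\textbf{No}} \le O(1)\cdot \frac{\log^2(e/\mu(A_{msg}))}{m^2}$.

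\emph{Step 3 (averaging over messages).} Since $x$ is uniform on the slice, $\Pr[\msg{\protHLPs}=msg] = \card{A_{msg}}/\binom{m}{m/2}$, and by Stirling $m - \log_2\binom{m}{m/2} = O(\log m)$; hence $\log(e/\mu(A_{msg})) = \log(1/\Pr[msg]) + O(\log m)$ and $\adv{\protHLPs} = \E_{msg}\!\left[\kl{e\mid msg,\textbf{Yes}}{e\mid msg,\textbf{No}}\right] \le \tfrac{O(1)}{m^2}\!\left(\log^2 m + \E_{msg}\!\left[\log^2(1/\Pr[msg])\right]\right)$. The last expectation is $O(c^2)$ because $\msg{\protHLPs}$ takes at most $2^c$ values: for any distribution $(p_z)$ on at most $2^c$ atoms, atoms with $p_z \ge 2^{-2c}$ contribute at most $4c^2\sum_z p_z \le 4c^2$, while for atoms with $p_z < 2^{-2c}$ the map $p \mapsto p\log^2(1/p)$ is decreasing so each contributes at most $4c^2 2^{-2c}$ and there are at most $2^c$ of them. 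Since $c \ge 4\log m$ makes $\log^2 m \le c^2$, we conclude $\adv{\protHLPs} \le O(c^2/m^2) = O(1)\cdot (c/m)^2$.

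\emph{The main obstacle.} The crux is the hypercontractive bound on the level-$2$ Fourier weight: the naive $\sum_{\card{S}=2}\widehat{1_A}(S)^2 \le \mu$ is far too weak (after averaging it would give an exponential-in-$c$ bound), and extracting the $\mu^2\log^2(1/\mu)$ dependence is exactly where the Fourier/KKL machinery of \cite{GavinskyKKRW07,KahnKL88} is needed; this also pins down the two hypotheses, $c \ge 4\log m$ to absorb the $O(\log m)$ ``slice overhead'' into $c^2$ and $c \le m/4$ to keep the conditional edge-distributions close enough to uniform for the $\chi^2$-linearization. A secondary point needing care: if, under some $A_{msg}$, a pair is always crossing (resp.\ always internal) then $\beta_{ij} = \pm\tfrac12$ and the KL can blow up; in the applications of the lemma this is excluded by the extra conditioning already in place (every class retains $\Omega(n/t)$ unmatched vertices), and one may alternatively work with the $\min\{1,\cdot\}$-truncated advantage used elsewhere in the reduction.
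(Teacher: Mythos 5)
Your proposal is correct and follows essentially the same route as the paper: the advantage is reduced, per message, to the level-$2$ Fourier weight of the indicator of the message's pre-image (the paper's $\bias(\Prot,i,j)$ and \Cref{clm:adv-bias}), that weight is bounded by $O(\mu^2\log^2(1/\mu))$ via the KKL inequality (\Cref{clm:bias-fourier} with \Cref{prop:kkl}), and the average over the at most $2^c$ messages is handled exactly as in \Cref{clm:typical-M-large}. Your $\chi^2$-linearization and entropy-style averaging are only cosmetic variants of the paper's Bayes-rule expansion and large-pre-image event, and the extreme-bias caveat you flag applies equally to the paper's own linearization step.
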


    To prove this lemma, we need some definition. Let us denote the input of Alice with a string $x \in \set{0,1}^m$ with $\norm{x}_0 = m/2$ where $x_i = 1$ means the $i$-th vertex belongs to $U_1$ and $x_i=0$ means it is in $U_0$. 
    The input of Bob can also be seen as a pair $(i,j)$ sampled from $[m]$. Define a random variable $Z = x_i \oplus x_j$. When the input is a \textbf{Yes}-case, we have that 
    $Z=1$ and when the input is a \textbf{No}-case, $Z=0$. We denote the message of Alice by $\Prot$ in this proof.  
    For a message $\Prot$, let $X(\Prot)$ denote the set of inputs $x$ that are mapped to the message $\Prot$. By the definition of $\HLPs$, conditioned on a message $\Prot$, the input of Alice is chosen uniformly at random from $X(\Prot)$. 
	Define:
	\[
	\bias(\Prot,i,j) := \Pr_{x \sim X(\Prot)}(x_i \oplus x_j=1) - \Pr_{x \sim X(\Prot)}(x_i \oplus x_j = 0),
	\]
	which also gives that 
	\[
		\Pr\paren{Z=1 \mid \Prot,(i,j)} = \frac{1}{2} + \frac{\bias(\Prot,i,j)}{2} \qquad \text{and} \qquad \Pr\paren{Z=0 \mid \Prot,(i,j)} = \frac{1}{2} - \frac{\bias(\Prot,i,j)}{2}.
	\]
	The first part of the proof is to relate $\adv{\HLPs}$ to the $\bias$ of underlying variables. 
	
	\begin{claim}[``advantage is bounded by squared-biases'']\label{clm:adv-bias}
		\[
			\adv{\HLPs} = O(1) \cdot \Exp_{\Prot,(i,j)} \bracket{\bias(\Prot,i,j)^2}.
		\]
	\end{claim}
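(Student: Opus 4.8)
The plan is to prove \Cref{clm:adv-bias} by freezing the message $\Prot$ of Alice, computing $\kl{e\mid\Prot,\textbf{Yes}}{e\mid\Prot,\textbf{No}}$ exactly as a function of the biases $\{\bias(\Prot,i,j)\}_{(i,j)}$, and then averaging over $\Prot$ (whose marginal law is the same in the \textbf{Yes}- and \textbf{No}-cases, since it is a function of Alice's input alone); so it suffices to prove the bound per message.

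Conditioned on $\Prot$, Alice's string $x$ is uniform over $X(\Prot)$, and as $\norm{x}_0=m/2$ every such $x$ has exactly $N_1:=m^2/4$ crossing pairs and $N_0:=m^2/4-m/2$ non-crossing pairs, with $N_0+N_1=\binom m2$. Writing $\beta_{ij}:=\bias(\Prot,i,j)$, so that $\Pr_{x\sim X(\Prot)}(x_i\oplus x_j=1)=\tfrac{1+\beta_{ij}}2$, the conditional distribution of $e=(i,j)$ is $q^Y(i,j)=\tfrac{1+\beta_{ij}}{2N_1}$ under \textbf{Yes} and $q^N(i,j)=\tfrac{1-\beta_{ij}}{2N_0}$ under \textbf{No}. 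Hence, using that $q^Y$ is a probability distribution so the coefficients $\tfrac{1+\beta_{ij}}{2N_1}$ sum to $1$,
\[
\kl{q^Y}{q^N}=\sum_{(i,j)}\frac{1+\beta_{ij}}{2N_1}\,\log\frac{1+\beta_{ij}}{1-\beta_{ij}}+\log\frac{N_0}{N_1}.
\]

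The heart of the matter is a cancellation between these two terms. From $\log\tfrac{1+\beta}{1-\beta}=2\beta+\tfrac23\beta^3+\cdots$ one gets $(1+\beta)\log\tfrac{1+\beta}{1-\beta}=2\beta+2\beta^2+O(\beta^3)$, while the identity $\sum_{(i,j)}q^Y(i,j)=1$ forces $\sum_{(i,j)}\beta_{ij}=2N_1-\binom m2=m/2$; therefore the linear-in-$\beta$ part of the first sum equals $\tfrac1{N_1}\cdot\tfrac m2=\tfrac2m$, which is exactly minus the leading term of $\log\tfrac{N_0}{N_1}=\log(1-\tfrac2m)=-\tfrac2m-\tfrac2{m^2}-\cdots$, and the leftover $-\tfrac2{m^2}-\cdots$ is negative. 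What survives is
\[
\kl{q^Y}{q^N}\ \le\ \frac1{N_1}\sum_{(i,j)}\Big(\beta_{ij}^2+O(\beta_{ij}^3)\Big)\ =\ O(1)\cdot\frac1{\binom m2}\sum_{(i,j)}\beta_{ij}^2\ =\ O(1)\cdot\Exp_{(i,j)}\big[\bias(\Prot,i,j)^2\big],
\]
using $\tfrac1{N_1}=\Theta\!\big(1/\binom m2\big)$; averaging over $\Prot$ then yields $\adv{\protHLPs}=\Exp_\Prot\kl{q^Y}{q^N}\le O(1)\cdot\Exp_{\Prot,(i,j)}\big[\bias(\Prot,i,j)^2\big]$, which is \Cref{clm:adv-bias}.

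The one step that needs real care — and which I expect to be the main obstacle — is controlling the tail $\sum_{(i,j)}O(\beta_{ij}^3)$ (and the higher-degree terms of the series) when some biases $\beta_{ij}$ approach $\pm1$: there the series for $\log\tfrac{1+\beta}{1-\beta}$ is no longer $O(\beta^2)$, and indeed a single-bit message asserting that $i,j$ lie on opposite sides makes $\beta_{ij}=1$ and $q^N(i,j)=0$, so the untruncated KL diverges. I would deal with this exactly as in Step Three: work with the KL truncated at $1$ throughout (which is all the downstream reduction needs, and matches the $\min\{1,\cdot\}$ already used there), so that each surviving term $\min\{1,\beta_{ij}^2+O(\beta_{ij}^3)\}$ is $O(\beta_{ij}^2)$ uniformly; equivalently, separate the pairs with $|\beta_{ij}|$ bounded away from $1$ from the few remaining pairs and absorb the latter into the $1/\poly(m)$ slack already present in the reduction. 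Beyond this, the proof is only the binomial bookkeeping for $N_0,N_1,\binom m2$ and the elementary power series above.
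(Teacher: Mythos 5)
Your proposal is correct in its core computation but reaches the bound by a genuinely different route than the paper. You freeze the message, write the two conditional laws of $e$ exactly as $q^Y(i,j)=\tfrac{1+\bias(\Prot,i,j)}{2N_1}$ and $q^N(i,j)=\tfrac{1-\bias(\Prot,i,j)}{2N_0}$, and kill the linear term through the \emph{per-message} combinatorial identity $\sum_{(i,j)}\bias(\Prot,i,j)=2N_1-\binom{m}{2}=m/2$ cancelling against $\log(N_0/N_1)$. The paper instead rewrites the KL via Bayes' rule in terms of $\Pr(Z=1\mid\Prot,(i,j))$, linearizes the logarithm pointwise ($\log(p/q)\le \log e\cdot(p-q)/q$), and only kills the linear term \emph{after averaging over $\Prot$}, using $\Exp_{\Prot,(i,j)}[\bias(\Prot,i,j)]=\Pr(Z=1)-\Pr(Z=0)=0$. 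Your cancellation is more explicit and does not need the outer expectation, at the price of the exact bookkeeping of $N_0,N_1$; both yield the same quadratic leading term up to constants (your uniform average over pairs and the paper's average under $(i,j)\mid\Prot$ differ only by $\Theta(1)$ factors).

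The large-bias issue you flag is real, and it is worth noting that the paper's proof suffers from it in exactly the same place: the step ``$(1/2+x)(1/2-x)^{-1}\le(1+2x)$ for small $x$'' fails when $\bias(\Prot,i,j)$ approaches $1$, where $q^N(i,j)=0$ and the untruncated $\kl{q^Y}{q^N}$ is genuinely $+\infty$ (e.g.\ if Alice's message reveals that $i$ and $j$ lie on opposite sides). So as literally stated, with $\adv{\protHLPs}$ defined via untruncated KL, \Cref{clm:adv-bias} needs the truncation you propose; your instinct to carry $\min\{1,\cdot\}$ through is the right repair. The one caveat is that this is not a purely local fix: the advantage would have to be redefined with the truncation, and the lower bound $\adv{\protHLPs}\ge\eps^2/2n$ delivered by \Cref{lem:reduction-single-edge} (whose proof, via \Cref{clm:event-M*,clm:next-step-M*}, drops the $\min\{1,\cdot\}$ present in \Cref{eq:info-index-2} along the way) must be re-derived for the truncated quantity so that Steps Three and Four still meet in the middle. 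That is routine but not free, so your proof should state the truncated definition up front rather than treat it as an afterthought.
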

	\begin{proof}
	By the definition of advantage, 
    \begin{align*}
    	\adv{\protHLPs} &= \Exp_{\Prot} \bracket{\kl{(i,j) \mid \Prot,\textbf{Yes}}{(i,j) \mid \Prot,\textbf{No}}} \\
	&= \Exp_{\Prot}\bracket{\kl{(i,j) \mid \Prot, Z=1}{(i,j) \mid \Prot, Z=0}}.
    \end{align*}
 We now expand this KL-divergence term for any choice of message $\Prot$ of $\protHLPs$: 
    \begin{align*}
    	&\kl{(i,j) \mid \Prot, Z=1}{(i,j) \mid \Prot, Z=0} \\
	&\hspace{10pt} =\sum_{(i,j)} \Pr\paren{(i,j) \mid \Prot, Z=1} \cdot \log{\paren{\frac{\Pr\paren{(i,j) \mid \Prot, Z=1}}{\Pr\paren{(i,j) \mid \Prot, Z=0}}}}  \tag{by the definition of KL-divergence} \\
	&\hspace{10pt} =\sum_{(i,j)} \frac{\Pr\paren{Z=1 \mid \Prot, (i,j)} \cdot \Pr\paren{(i,j) \mid \Prot}}{\Pr\paren{Z=1 \mid \Prot}} \cdot \log{\paren{\frac{\Pr\paren{Z=1 \mid \Prot, (i,j)} \cdot \Pr\paren{Z=0 \mid \Prot}}{\Pr\paren{Z=0 \mid \Prot, (i,j)} \cdot \Pr\paren{Z=1\mid \Prot}}}} 
	\tag{by Bayes' rule} \\
	&\hspace{10pt} =2 \cdot \sum_{(i,j)} {\Pr\paren{Z=1 \mid \Prot, (i,j)} \cdot \Pr\paren{(i,j) \mid \Prot}} \cdot \log{\paren{\frac{\Pr\paren{Z=1 \mid \Prot, (i,j)}}{\Pr\paren{Z=0 \mid \Prot, (i,j)}}}}, 
    \end{align*}
	as $\Pr\paren{Z=0 \mid \Prot} = \Pr\paren{Z=1 \mid \Prot} = 1/2$ since conditioned only on Alice's input/message, the answer, namely, $Z$, is still uniform.  

	By continuing the above expansion of KL-divergence,  we have, 
	   \begin{align*}
    	&\kl{(i,j) \mid \Prot, Z=1}{(i,j) \mid \Prot, Z=0} \\
	&\hspace{10pt} =2 \cdot \sum_{(i,j)}  \Pr\paren{(i,j) \mid \Prot} \cdot \Pr\paren{Z=1 \mid \Prot, (i,j)} \cdot \log{\paren{\frac{\Pr\paren{Z=1 \mid \Prot, (i,j)}}{\Pr\paren{Z=0 \mid \Prot, (i,j)}}}} \\
	&\hspace{10pt} =2 \cdot \sum_{(i,j)}  \Pr\paren{(i,j) \mid \Prot} \cdot \Pr\paren{Z=1 \mid \Prot, (i,j)}\cdot \log\paren{\exp\paren{\frac{\Pr\paren{Z=1 \mid \Prot, (i,j)}-\Pr\paren{Z=0 \mid \Prot,(i,j)}}{\Pr\paren{Z=0 \mid \Prot, (i,j)}}}} \tag{as $1+x \leq e^x$ for all $x \in \IR$} \\
	&\hspace{10pt} =2 \cdot \log{e} \sum_{(i,j)}  \Pr\paren{(i,j) \mid \Prot} \cdot \frac{\Pr\paren{Z=1 \mid \Prot, (i,j)}}{\Pr\paren{Z=0 \mid \Prot, (i,j)}} \cdot \paren{\Pr\paren{Z=1 \mid \Prot, (i,j)}-\Pr\paren{Z=0 \mid \Prot,(i,j)}} \\
	&\hspace{10pt} =2 \cdot \log{e} \sum_{(i,j)}  \Pr\paren{(i,j) \mid \Prot} \cdot \paren{{\frac{1}{2} + \frac{\bias(\Prot,i,j)}{2}}}\paren{{\frac{1}{2} - \frac{\bias(\Prot,i,j)}{2}}}^{-1} \cdot {\bias(\Prot,i,j)} \tag{by the definition of $\bias(\Prot,i,j)$}\\
	&\hspace{10pt} \leq 4 \sum_{(i,j)} \Pr\paren{(i,j) \mid \Prot} \cdot \paren{1+2 \cdot \bias(\Prot,i,j)} \cdot \bias(\Prot,i,j) \tag{as $\log{e} \leq 2$ and $(1/2+x) \cdot (1/2-x)^{-1} \leq (1+2x)$ for small $x$} 
    \end{align*}
  By bringing back the expectation over the choice of $M$, we have, 
  \begin{align*}
  	\Exp_{\Prot} \bracket{ \kl{(i,j) \mid \Prot, Z=1}{(i,j) \mid \Prot, Z=0}} &\leq 4\Exp_{\Prot} \bracket{\sum_{(i,j)} \Pr\paren{(i,j) \mid \Prot} \cdot \paren{1+2 \cdot \bias(\Prot,i,j)} \cdot \bias(\Prot,i,j)} \\
	&= 8\Exp_{\Prot} \Exp_{(i,j) \mid \Prot} \bracket{\bias(\Prot,i,j)^2},
  \end{align*}
  where we used the fact that 
  \begin{align*}
  	\Exp_{\Prot} \bracket{\sum_{(i,j)} \Pr\paren{(i,j) \mid \Prot} \cdot \bias(\Prot,i,j)} &= 	\Exp_{\Prot,(i,j)} \bracket{\Pr\paren{Z=1 \mid \Prot,(i,j)}} - \Exp_{M,(i,j)}\bracket{\Pr\paren{Z=0 \mid \Prot,(i,j)}} \\
	&= \Pr\paren{Z=1} - \Pr\paren{Z=0} = 0,
	\end{align*}
	by the law of total expectation and since $Z \in \set{0,1}$ is chosen uniformly with no conditioning. This concludes the proof of the claim.
	\end{proof}
	
The last main part of the argument is to bound the RHS of~\Cref{clm:adv-bias} using standard tools from Fourier analysis. 
\begin{claim}\label{clm:bias-fourier}
	For any message $\Prot$, 
	\[
	 \Exp_{(i,j)}[\bias(\Prot,i,j)^2] = O(\frac{\log(2^m/\card{X(\Prot)})}{m})^2
	 \]
\end{claim}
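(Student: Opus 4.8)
The plan is to recognize $\bias(\Prot,i,j)$ as a rescaled degree-two Fourier coefficient of the indicator of $X(\Prot)$, and then to control its mean square by the level-$k$ Fourier inequality at level $k=2$. Let $g := \mathbf{1}_{X(\Prot)}$ and let $\alpha := \card{X(\Prot)}/2^m$ be its density, so that $\log(2^m/\card{X(\Prot)}) = \log(1/\alpha)$. Conditioned on the message $\Prot$, Alice's string $x$ is uniform on $X(\Prot)$, and since $\mathcal{X}_{\{i,j\}}(x) = (-1)^{x_i+x_j} = (-1)^{x_i\oplus x_j}$ (\Cref{def:fourier}),
\[
\Pr_{x\sim X(\Prot)}\!\big(x_i\oplus x_j = 0\big) - \Pr_{x\sim X(\Prot)}\!\big(x_i\oplus x_j = 1\big) = \Exp_{x\sim X(\Prot)}\big[\mathcal{X}_{\{i,j\}}(x)\big] = \frac{1}{\card{X(\Prot)}}\sum_{x\in X(\Prot)}\mathcal{X}_{\{i,j\}}(x) = \frac{1}{\alpha}\,\widehat{g}(\{i,j\}),
\]
so that $\bias(\Prot,i,j) = -\frac{1}{\alpha}\,\widehat{g}(\{i,j\})$, and hence $\bias(\Prot,i,j)^2 = \frac{1}{\alpha^2}\,\widehat{g}(\{i,j\})^2$.

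Next I would average over $(i,j)$. Bob's edge $(i,j)$ ranges over unordered pairs of distinct coordinates, and its conditional law given $\Prot$ assigns every such pair probability $\Theta(1/m^2)$, hence lies within a constant factor of the uniform distribution on $\binom{[m]}{2}$ (replacing one by the other only affects the hidden constant). Consequently
\[
\Exp_{(i,j)}\big[\bias(\Prot,i,j)^2\big] = \Theta\!\Big(\frac{1}{\alpha^2 m^2}\Big)\cdot\!\!\sum_{\substack{S\subseteq[m]:\,\card{S}=2}}\!\!\widehat{g}(S)^2 = \Theta\!\Big(\frac{1}{\alpha^2 m^2}\Big)\cdot W^{=2}[g],
\]
where $W^{=2}[g] := \sum_{\card{S}=2}\widehat{g}(S)^2$ is the level-two Fourier weight of $g$. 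It therefore suffices to prove $W^{=2}[g] = O\big(\alpha^2\log^2(1/\alpha)\big)$; substituting this back gives exactly the claimed $O\big((\log(1/\alpha)/m)^2\big)$.

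Finally, to bound $W^{=2}[g]$ I would use hypercontractivity. Suppose first $\alpha \le 1/e$. For $\rho\in(0,1]$ let $T_\rho$ denote the noise operator; by the hypercontractive inequality and the fact that $g$ is $\{0,1\}$-valued with mean $\alpha$,
\[
\sum_{S}\rho^{2\card{S}}\widehat{g}(S)^2 = \norm{T_\rho g}_2^2 \le \norm{g}_{1+\rho^2}^2 = \alpha^{2/(1+\rho^2)} \le \alpha^2\,\alpha^{-2\rho^2}.
\]
Keeping only the weight at level two gives $\rho^4\, W^{=2}[g] \le \alpha^2\alpha^{-2\rho^2}$, and the choice $\rho^2 = 1/\ln(1/\alpha)\in(0,1]$ yields $\alpha^{-2\rho^2} = e^2$ and $\rho^{-4} = \ln^2(1/\alpha)$, hence $W^{=2}[g] \le e^2\alpha^2\ln^2(1/\alpha) = O(\alpha^2\log^2(1/\alpha))$. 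If instead $\alpha > 1/e$, Parseval gives the crude bound $W^{=2}[g] \le \norm{g}_2^2 = \alpha = O(1)$, so $\Exp_{(i,j)}[\bias^2] = O(1/m^2)$, which matches the claim whenever $\log(1/\alpha) = \Omega(1)$; in the residual regime $\alpha\to1$ one applies the first bound to the complement $\mathbf{1}_{\{0,1\}^m\setminus X(\Prot)}$ (of small density, with $\widehat{g}(S) = -\widehat{\mathbf{1}_{\{0,1\}^m\setminus X(\Prot)}}(S)$ for $S\neq\emptyset$), a contribution that is immaterial for the uses of this claim. Combining the three displays finishes the proof.

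The crux is the level-two Fourier bound $W^{=2}[g] = O(\alpha^2\log^2(1/\alpha))$ — this is precisely where the KKL-type / hypercontractivity machinery of \cite{KahnKL88,GavinskyKKRW07} enters, and one must also be a little careful with the density endpoint $\alpha\to1$. The reduction from $\bias$ to this Fourier quantity, once the conventions of \Cref{def:fourier} are fixed, is routine bookkeeping.
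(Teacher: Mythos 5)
Your proof is correct and follows essentially the same route as the paper's: both identify $\bias(\Prot,i,j)$ with a rescaled level-two Fourier coefficient of the indicator of $X(\Prot)$ and then bound the level-two Fourier weight by optimizing a noise parameter, the only difference being that the paper invokes the KKL level inequality (\Cref{prop:kkl}, with $\gamma = 2/\log(2^m/\card{X(\Prot)})$) as a black box where you rederive the same bound from hypercontractivity with $\rho^2=\gamma$ up to constants. Your worry about the endpoint $\alpha\to 1$ is moot in this setting --- Alice's inputs are balanced strings, so $\card{X(\Prot)}\le\binom{m}{m/2}$ and hence $\log(2^m/\card{X(\Prot)})=\Omega(\log m)$, keeping the noise parameter in range --- which is just as well, since your complement trick only yields $W^{=2}[g]=O\bigl((1-\alpha)^2\log^2(1/(1-\alpha))\bigr)$, which does not imply the claimed $O(\log^2(1/\alpha))$ bound as $\alpha\to1$ (and indeed the claim as literally stated fails for near-full support, e.g.\ $\card{X(\Prot)}=2^m-1$).
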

\begin{proof}
For any message $\Prot$, define: 
    \begin{itemize}
        \item $f_\Prot: \set{0,1}^m \rightarrow \set{0,1}$ as the characteristic function of $X(\Prot)$.
        \item $\mathcal{X}_{i,j}: \set{0,1}^m \rightarrow \set{0,1}$ as the character function over $[m]$ (see~\Cref{sub-app:info-theoretic-facts}). 
    \end{itemize}
 	We have that
    \begin{align*}
        \bias(\Prot,i,j) &= \Pr_{x \in X(\Prot)}\paren{x_i \oplus x_j = 1} -  \Pr_{x \sim X(\Prot)}\paren{x_i \oplus x_j = 0} = \dfrac{\card{\set{x : x_i \oplus x_j = 1}} - \card{\set{x: x_i \oplus x_j=0}}}{\card{X(\Prot)}}  \\
       &=  - \Exp_{x \sim X(\Prot)}[\mathcal{X}_{i,j}(x)] = \frac{-1}{\card{X(\Prot)}} \sum_{x \in \set{0,1}^m} f_\Prot(x) \cdot \mathcal{X}_{i,j}(x) = \frac{- 2^m \cdot \hat{f}_\Prot(i,j)}{\card{X(\Prot)}} \tag{by the definition of Fourier coefficients}  
    \end{align*}
    At the same time, by KKL inequality (\Cref{prop:kkl}), for any fixed message ${\card{X(\Prot)}}$,
    \begin{align*}
        \sum_{(i,j)} \bias(\Prot,i,j)^2 &= \frac{2^{2m}}{\card{X(\Prot)}^2} \cdot \sum_{(i,j)}\hat{f}_\Prot(i,j)^2 
        \leq \gamma^{-2} \cdot 
        \paren{\frac{2^m}{\card{X(\Prot)}}}^{\frac{2\gamma}{1+\gamma}} \tag{by the previous equation and KKL inequality in~\Cref{prop:kkl}}.
    \end{align*}
    By setting $\gamma = 2 \cdot \log{(2^m/\card{X(\Prot)})}^{-1}$, we get that, for any message $\Prot$, 
    \begin{align*}
        \Exp_{(i,j)}[\bias(\Prot,i,j)^2] = O(\frac{\log(2^m/\card{X(\Prot)})}{m})^2, 
    \end{align*}
    as desired.
\end{proof}

	To conclude the proof of~\Cref{lem:HLP-lower}, we need to consider the expectation in RHS of~\Cref{clm:bias-fourier} over the choices of $\Prot$. To do so, we need a short detour to bound the size of $X(\Prot)$ for a ``typical'' message $\Prot$. 
	Define the following event: 
	\begin{itemize}
		\item $\event(\Prot)$: the set of inputs mapped to the message $\Prot$ satisfies 
		\[
			\card{X(\Prot)} < 2^{-2c} \cdot {{m}\choose{m/2}};
		\]
		recall that ${{m}\choose{m/2}}$ is the number of choices for $x$ as input to Alice. 
	\end{itemize}
	\begin{claim}[``typical messages have large pre-image'']\label{clm:typical-M-large}
		\[
			\Pr\paren{\event(\Prot)} \leq 2^{-c}. 
		\]
	\end{claim}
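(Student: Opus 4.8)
\textbf{Proof plan for Claim~\ref{clm:typical-M-large}.} The plan is a direct counting argument. First I would recall the structure of the protocol: in $\protHLPs$ only Alice speaks, and she uses at most $c$ bits, so her message is a deterministic function of her input $x$ alone, and it takes at most $2^c$ distinct values. Consequently the family $\set{X(\Prot)}_{\Prot}$ (where $X(\Prot)$ is the set of inputs mapped to message $\Prot$) is a partition of Alice's input space into at most $2^c$ parts. Since Alice's input $x$ is uniform over the $\binom{m}{m/2}$ strings of Hamming weight $m/2$, the message $\Prot$ is a random variable with $\Pr(\Prot = \prot) = \card{X(\prot)} / \binom{m}{m/2}$ for each value $\prot$.

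Next I would just sum over the ``bad'' messages. Let $\mathcal{M}$ be the set of values $\prot$ for which $\event(\prot)$ holds, i.e.\ $\card{X(\prot)} < 2^{-2c}\binom{m}{m/2}$. Then
\[
\Pr\paren{\event(\Prot)} = \sum_{\prot \in \mathcal{M}} \Pr(\Prot = \prot) = \sum_{\prot \in \mathcal{M}} \frac{\card{X(\prot)}}{\binom{m}{m/2}} < \sum_{\prot \in \mathcal{M}} 2^{-2c} \leq \card{\mathcal{M}} \cdot 2^{-2c} \leq 2^c \cdot 2^{-2c} = 2^{-c},
\]
where the last inequality uses $\card{\mathcal{M}} \le 2^c$ since there are at most $2^c$ messages in total. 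This is exactly the claimed bound.

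There is essentially no technical obstacle here; the only point worth being careful about is the bookkeeping — that we are measuring probability over the random input (which fixes the random message), that distinct messages have disjoint pre-images summing to the whole weight-$m/2$ slice, and that $c$ bits of communication caps the number of messages at $2^c$. If one wanted, the same computation could be phrased as: the expected value of $2^{-2c}\binom{m}{m/2}/\card{X(\Prot)}$-type quantities, but the plain union-style sum above is the cleanest route and avoids any case analysis.
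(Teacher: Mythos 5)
Your proposal is correct and is essentially identical to the paper's proof: both express $\Pr(\event(\Prot))$ as $\binom{m}{m/2}^{-1}\sum_{\Prot:\event(\Prot)}\card{X(\Prot)}$, bound each pre-image by $2^{-2c}\binom{m}{m/2}$ via the definition of $\event$, and multiply by the at-most-$2^c$ possible messages. Nothing further is needed.
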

	\begin{proof}
	We have, 
    \begin{align*}
    	\Pr_{\Prot} (\event(\Prot)) &= {{m}\choose{m/2}}^{-1} \sum_{x} \mathbb{I}(\text{$x$ is mapped to some $\Prot$ such that $\event(\Prot)$ true}) \\
	&= {{m}\choose{m/2}}^{-1} \hspace{-15pt}\sum_{\Prot: \text{$\event(\Prot)$ is true}}\hspace{-15pt} \card{X(\Prot)} \\
	&\leq {{m}\choose{m/2}}^{-1} \hspace{-15pt}\sum_{\Prot: \text{$\event(\Prot)$ is true}}\hspace{-15pt} 2^{-2c} \cdot {{m}\choose{m/2}} \tag{by the definition of $\event(\Prot)$} \\
	&\leq {{m}\choose{m/2}}^{-1} \cdot 2^{c} \cdot 2^{-2c} \cdot {{m}\choose{m/2}} \tag{as there are at most $2^{c}$ messages in total}  \\
	&= 2^{-c}, 
    \end{align*}
    finalizing the proof. 
    \end{proof}
    We now have everything to conclude the proof of~\Cref{lem:HLP-lower}. 
    
    \begin{proof}[Proof of~\Cref{lem:HLP-lower}]
    	For any protocol $\protHLPs$, we have, 
	\begin{align*}
		\adv{\protHLPs} &\leq O(1) \cdot \Exp_{\Prot,(i,j)} \bracket{\bias(\Prot,i,j)^2} \tag{by~\Cref{clm:adv-bias}} \\
		&\leq \Pr\paren{\event(\Prot)} \cdot 1 +  \Exp_{\Prot \mid \overline{\event(\Prot)}} \Exp_{(i,j) \mid \Prot}[\bias(\Prot,i,j)^2] \tag{the event $\event(\Prot)$ is independent of $(i,j)$ and is only a function of $\Prot$} \\
	 &\leq 2^{-c} + O(1) \cdot \paren{\frac{\log(\frac{2^m}{{{m}\choose{m/2}}\cdot 2^{-2c}})}{m}}^2 \tag{by~\Cref{clm:typical-M-large} for the first term and~\Cref{clm:bias-fourier} for the second} \\
	 &\leq 2^{-c} + O(1) \cdot \paren{\frac{2c+\log{m}}{m}}^2 \tag{as ${{m}\choose{m/2}} \geq 2^{m}/m$ for $m \geq 5$ and $c < m/4$} \\
	 &\leq \frac{1}{m^2} + O(1) \cdot \paren{\frac{3c}{m}}^2 \tag{as $c \geq 4\log{m}$}  \\
	 &\leq O(1) \cdot \paren{\frac{c}{m}}^2. 
	\end{align*}
	   This concludes the proof.
    \end{proof}

\subsection{Putting Everything Together: Proof of~\Cref{thm:OME}}\label{sec:proof-thm-OME}

We now put all these last four steps together and prove~\Cref{thm:OME}. Suppose towards a contradiction that~\Cref{thm:OME} is not true. 
\begin{enumerate}
	\item By~\Cref{lem:OME-expander}, a streaming algorithm with $o(n^{\delta}/\log{n})$ space with success probability $2/3$ implies a communication protocol $\protOME$ for $\OME_{n,k,t}$ for $k=40\log{n}$ and $t=n^{1/2-\delta}$ with $o(n^{\delta}/\log{n})$ communication
	and $2/3-o(1)$ probability of success. 
	\item By~\Cref{lem:reduction-k-2}, $\protOME$ for $\OME_{n,k,t}$ for $k=40\log{n}$ and $t=n^{1/2-\delta}$ with $o(n^{\delta}/\log{n})$ communication implies a deterministic protocol $\protHLP$ for $\HLP_{n,t}$ with  the same communication 
	and $1/2+\Omega(1/\log{n})$ probability of success (take $\eps = 1/6-o(1)$ in the lemma). 
	\item By~\Cref{lem:reduction-single-edge}, $\protHLP$ for $\HLP_{n,t}$ with $o(n^{\delta}/\log{n})$ communication and $1/2+\Omega(1/\log{n})$ success probability, implies 
	a deterministic protocol $\protHLPs$ for $\HLPs_{m}$ for $m = \Theta(n/t)$ with same communication and $\adv{\protHLPs} = \Omega(1/n\log^2{n})$ (take $\eps = \Omega(1/\log{n})$ in the lemma). 
	\item By~\Cref{lem:HLP-lower}, any deterministic protocol $\protHLPs$ for $\HLPs_{m}$ of $m= \Theta(n/t) = \Theta(n^{1/2+\delta})$ with communication cost $o(n^{\delta}/\log{n})$ can only have
	\[
		\adv{\protHLPs} = O(1) \cdot \paren{\frac{o(n^{\delta}/\log{n})}{\Theta(n^{1/2+\delta})}}^{2} = o(\frac{1}{n\log^2{n}}).
	\]
	(Here, we could apply~\Cref{lem:HLP-lower} as $n^{\delta}/\log{n} = o(m)$ and $n^{\delta}/\log{n} = \omega(\log{m})$.)
\end{enumerate}
But now Lines 3 and 4 contradict each other, finalizing our proof by contradiction of~\Cref{thm:OME}.


\newcommand{\probone}{\ensuremath{\frac{49}{50}}\xspace}
\newcommand{\probtwo}{\ensuremath{\frac{99}{100}}\xspace}

\section{A Lower Bound for Exact Hierarchical Clustering Solution}\label{subsec:lower-exact}
The previous results have established a picture for approximation of HC in the streaming model. One might also be interested in using more memory to circumvent \emph{any} approximation factor. In particular, a natural question to ask is if we can obtain the \emph{exact} HC solution if we {increase} the memory to some value $o(n^2)$, which would still be non-trivial from space complexity perspective. 
We answer the above question in the negative in this section in the following theorem.
\begin{theorem}
\label{thm:exact-HC-strong}
Any single-pass streaming algorithm that outputs the optimal \emph{value} of hierarchical clustering with probability at least $2/3$ uses $\Omega(n^2)$ memory even with unbounded computation time. 
\end{theorem}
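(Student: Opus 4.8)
The plan is to reduce from the \textsc{Index} communication problem~\cite{Ablayev93}: Alice holds a string $x\in\{0,1\}^{N}$, Bob holds an index $\ell\in[N]$, Alice sends a single message to Bob, and Bob must output $x_{\ell}$; this requires $\Omega(N)$ bits even for randomized protocols with error $1/3$. I will take $N=\Theta(n^{2})$ and build a graph family for which a single-pass streaming algorithm that outputs $\OPT$ yields such a protocol. Fix a partition of the $n$ vertices into a constant number of blocks $C_{1},\dots,C_{b}$ (say $b=2$), each of size $n/b=\Theta(n)$, and identify $[N]$ with the set of within-block vertex pairs, so that $N=b\binom{n/b}{2}=\Theta(n^{2})$. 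Given $x$, Alice inserts the within-block edge indexed by $p$ exactly when $x_{p}=1$. Given $\ell$ — a within-block pair $\{u^{\star},v^{\star}\}$ contained in some block $C_{k^{\star}}$ — Bob inserts \emph{all} within-block pairs in \emph{every} block \emph{except} the single pair $\{u^{\star},v^{\star}\}$, and no inter-block edges. The resulting graph $G_{x,\ell}$ is then a vertex-disjoint union of (near-)cliques: every block $C_{k}$ with $k\neq k^{\star}$ induces $K_{n/b}$, while $C_{k^{\star}}$ induces $K_{n/b}$ if $x_{\ell}=1$ and $K_{n/b}$ minus the edge $\{u^{\star},v^{\star}\}$ if $x_{\ell}=0$. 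The point is that all of Alice's bits other than $x_{\ell}$ become irrelevant, since those edges are reinserted by Bob regardless of $x$.

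In the streaming reduction, for the (legal) arrival order in which Alice's edges precede Bob's, Alice runs the algorithm on her edges, sends the memory contents to Bob, Bob feeds in his edges, and the algorithm outputs $\OPT(G_{x,\ell})$; the message length equals the memory, and the argument ignores running time entirely. It remains to show that $\OPT(G_{x,\ell})$ determines $x_{\ell}$, which is the ``slightly more complex than cliques'' part extending~\cite{Dasgupta16}. For a unit-weight clique $K_{m}$ and any HC-tree $\TT$, grouping edges by their lowest common ancestor gives $\cost_{K_{m}}(\TT)=\sum_{z}a_{z}b_{z}(a_{z}+b_{z})$, where the sum is over internal nodes $z$ and $a_{z},b_{z}$ are the sizes of the two children's leaf-sets. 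Building $\TT$ bottom-up by $m-1$ merges and tracking the potential $\Phi=\sum_{\text{clusters }C}f(|C|)$ with $f(s)=-\tfrac{s^{3}-s}{3}$, each merge of parts of sizes $a,b$ changes $\Phi$ by $f(a{+}b)-f(a)-f(b)=-ab(a+b)$, so $\cost_{K_{m}}(\TT)=\Phi_{\mathrm{start}}-\Phi_{\mathrm{end}}=m\cdot f(1)-f(m)=\tfrac{m^{3}-m}{3}=:C_{m}$ for \emph{every} $\TT$ (clique rigidity). Hence for $e=(u,v)$ one has $\cost_{K_{m}-e}(\TT)=C_{m}-|\leaves(\TT[u\vee v])|$ for every $\TT$, which is minimized by taking any tree whose root separates $u$ from $v$, so $\OPT(K_{m}-e)=C_{m}-m$.

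Combining these exact computations with \Cref{lem:disjoint} applied to the disjoint union $G_{x,\ell}$ yields
$\OPT(G_{x,\ell}) = b\,C_{n/b} - (n/b)\cdot(1-x_{\ell})$.
Since $n$ and the fixed constant $b$ are known to Bob, he recovers $x_{\ell}$ from the algorithm's output; the two possible values of $\OPT$ differ by $n/b\geq 1$, so decoding is unambiguous. Thus a correct single-pass streaming algorithm succeeding with probability $2/3$ gives an \textsc{Index}${}_{N}$ protocol with the same success probability and communication equal to the memory, and the $\Omega(N)=\Omega(n^{2})$ lower bound for \textsc{Index} finishes the proof. The one place needing care is the exact clique / clique-minus-an-edge computation above (and confirming that Bob's two target values are distinct); the rest is a routine streaming-to-one-way-communication reduction.
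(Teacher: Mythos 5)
Your reduction has a genuine gap at the point where you claim that ``all of Alice's bits other than $x_{\ell}$ become irrelevant, since those edges are reinserted by Bob regardless of $x$.'' In a single-pass insertion stream, every within-block pair $p\neq\ell$ with $x_{p}=1$ appears \emph{twice} in the stream: once from Alice and once from Bob. The resulting input is therefore not the unit-weight union of (near-)cliques you analyze; it is that graph \emph{plus} an extra copy of Alice's edge set $H_{x}$ (equivalently, each such pair has weight $2$ rather than $1$). Writing $\cost_{G_{x,\ell}}(\TT)=\cost_{K-e_{\ell}}(\TT)+\cost_{H_{x}}(\TT)$, your clique-rigidity argument pins down only the first term; the second term depends on the tree and on \emph{all} of Alice's bits, and for a random $x$ it fluctuates by far more than the $n/b$ gap you need Bob to detect. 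So $\OPT(G_{x,\ell})$ does not determine $x_{\ell}$, and the decoding step fails. There is no easy patch within your construction: Bob cannot ``overwrite'' Alice's edges in an insertion-only stream, and he cannot condition his insertions on bits he does not know. This is precisely why the paper's construction makes Alice's edges a \emph{bipartite} graph between $L$ and $R$ while Bob's clique edges live inside $S_{1},\dots,S_{4}$ (disjoint from all of Alice's pairs), at the price of having to keep Alice's random edges in the final graph and prove the much more involved structural result (\Cref{prop:sparse-split-strong}) characterizing optimal trees on cliques joined by sparse random bipartite edge sets.

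That said, the ingredients you do prove are correct and worth keeping: the potential-function computation showing $\cost_{K_{m}}(\TT)=\frac{m^{3}-m}{3}$ for \emph{every} HC-tree $\TT$, hence $\cost_{K_{m}-e}(\TT)=C_{m}-\card{\leaves(\TT[u\vee v])}$ and $\OPT(K_{m}-e)=C_{m}-m$, is a clean and fully rigorous extension of~\cite{Dasgupta16}, and the outer \textsc{Index} framework with $N=\Theta(n^{2})$ is the right shape. The missing piece is a way to encode Alice's $\Theta(n^{2})$ bits so that her edge set and Bob's ``completion'' edges are disjoint as vertex pairs while the optimal value still isolates the single queried bit --- which is exactly the role played by the paper's four-clique gadget and its split-order analysis.
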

Our lower bound effectively rules out any streaming algorithm that (asymptotically) outperforms the naive algorithm that stores every edge and solves the problem offline in exponential time. This further justifies the `fitness' of our semi-streaming algorithm in \Cref{sec:upper-bound}. Note that similar to the previous lower bound, the lower bound in \Cref{thm:exact-HC-strong} is stronger than a standard lower bound for algorithms that output the \emph{hierarchical clustering tree}: for any single-pass streaming algorithm, our lower bound states that a memory of $\Omega(n^2)$ is necessary even to get the exact \emph{value}. 

\paragraph{High-level overview of the proof for \Cref{thm:exact-HC-strong}.} The lower bound follows from a reduction from the following variant of the well-known Index communication problem: let Alice's input be a random bipartite graph $G=(V,E)$ with
each edge appearing with probability half, and let Bob's input be a vertex pair $(i,j)$. Alice sends a message to Bob, and Bob is required to output whether $(i,j)\in E$. This problem is equivalent to the Index problem on a universe of size ${{n}\choose{2}}$
and thus requires $\Omega(n^2)$ communication~\cite{Ablayev93}.

We then reduce the problem to hierarchical clustering, which is the main technical step in the proof of \Cref{thm:exact-HC-strong}. We provide a new construction that reduces the existence of edge $(i,j)$ to the exact optimal HC cost by adding edges on Bob's side. In particular, for all vertices \emph{except} $i$ on the left partition, Bob connects them with a large clique; similarly, for all vertices \emph{except} $j$ in the right partition, Bob connects them with another large clique. Finally, Bob connects $i$ and $j$ respectively with a large clique, and he ensures the sizes of the four cliques are equal (see \Cref{fig:hc-reduction-4-clique}). Ideally, if we can \emph{control} the split pattern of the graphs constructed by the two players in the optimal HC tree, 
we can get that the optimal cost differ slightly based on the existence of edge $(i,j)$. As such, Bob can use the exact optimal cost as a signal to distinguish the corresponding Index problem.

What remains is to understand the pattern of splits for a graph prescribed as above. To this end, we show a structural lemma that characterizes optimal HC trees on such graphs: we prove that the optimal tree always \emph{first separates} the desired vertices pair into different components, and then split the rest of the graph in a \emph{fixed order}. This is a generalization of the previous work of~\cite{Dasgupta16} on computing the 
optimal HC trees of simpler graphs such as cliques and cycles.

En route to the proof of our main lower bound, we establish a weaker structural result that controls the pattern of split among \emph{two} sparsely-connected cliques. This weaker result is necessary for the main proof of \Cref{thm:exact-HC-strong}. We also note that the `two-clique' version of the structural result already gives us a (weaker) $\Omega(n^2)$ lower bound for streaming algorithms that output the hierarchical clustering tree \emph{with} the split costs.

\subsection{Warm-up: A Lower Bound for Outputting the Optimal HC Tree}
\label{subsec:warmup-exact-lb}

We first show a weaker lower bound for optimal HC algorithms that output the \emph{clustering and the split costs}. More concretely, we give the following lemma.
\begin{lemma}[Exact Hierarchical Clustering Lower Bound -- Weak version]
\label{lem:exact-HC-weak}
Any single-pass streaming algorithm that outputs the optimal hierarchical clustering together with the cost of splitting at each node with probability at least $5/6$ requires a memory of $\Omega(n^2)$ bits.
\end{lemma}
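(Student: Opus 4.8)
The plan is to reduce from the one-way communication complexity of \textsc{Index} on a universe of size $\Theta(n^{2})$. Let Alice hold a uniformly random bipartite graph $G$ on parts $L,R$ with $|L|=|R|=\Theta(n)$ (each of the $|L|\cdot|R|=\Theta(n^{2})$ potential edges present independently with probability $1/2$), and let Bob hold a pair $(i^{*},j^{*})\in L\times R$; Bob must decide whether $(i^{*},j^{*})\in E(G)$, i.e.\ recover the indicator bit $b$ of that event. This is exactly \textsc{Index} on $\Theta(n^{2})$ bits, so any Alice$\to$Bob protocol with error at most $1/6$ needs $\Omega(n^{2})$ communication~\cite{Ablayev93}. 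A single-pass streaming algorithm for HC induces such a protocol on the stream that lists Alice's edges first and Bob's gadget edges second: Alice runs the algorithm on her part, sends its memory state to Bob, who then streams his gadget edges and reads the output. Thus it suffices to exhibit a gadget from which Bob can recover $b$ from the clustering \emph{together with the per-node split costs} that the algorithm outputs.

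The gadget consists of two cliques. Put $C_{1}:=\{i^{*},j^{*}\}$ and let Bob place between $i^{*}$ and $j^{*}$ an edge of very large weight $W$ (a sufficiently large polynomial in $n$, say $W=n^{5}$). Put $C_{2}:=(L\cup R)\setminus\{i^{*},j^{*}\}$ and let Bob turn $C_{2}$ into a clique with all edges of weight $W$ as well. All of Alice's edges are retained with weight $1$. Since $G$ is bipartite, every $G$-edge other than $(i^{*},j^{*})$ has an endpoint in $L\setminus\{i^{*}\}$ or in $R\setminus\{j^{*}\}$, hence lies either inside $C_{2}$ or between $C_{1}$ and $C_{2}$; the only $G$-edge that can lie inside $C_{1}$ is $(i^{*},j^{*})$ itself. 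So the graph $H$ fed to the algorithm is two weight-$W$ cliques (with $C_{1}=K_{2}$ degenerate) joined by $\Theta(n)$ unit-weight ``cross'' edges — the $G$-edges incident to $i^{*}$ or $j^{*}$ — and the total weight on the pair $\{i^{*},j^{*}\}$ equals $W+b$.

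The heart of the proof is the ``two sparsely-connected cliques'' structural claim: for $W$ a large enough polynomial in $n$, \emph{every} optimal HC-tree $\TT$ of $H$ separates $C_{1}$ from $C_{2}$ at the root. The argument is a counting inequality using the elementary fact from~\cite{Dasgupta16} that any HC-tree pays, for each weight-$W$ clique edge, a cost equal to $W$ times the number of leaves in the subtree of its lowest common ancestor. The tree $\TT_{0}$ that first splits $C_{1}$ from $C_{2}$ and then uses any tree inside $C_{2}$ cuts the weight-$W$ edge of $C_{1}$ at multiplicity $2$ and each weight-$W$ edge of $C_{2}$ at multiplicity at most $|C_{2}|$; any tree whose root split differs must instead cut some weight-$W$ clique edge at a strictly larger multiplicity (if the root separates $i^{*}$ from $j^{*}$, the $C_{1}$-edge jumps to multiplicity $n$; if the $C_1$-side of the root contains a vertex of $C_{2}$, that vertex's $\Theta(n)$ clique edges jump in multiplicity), incurring an extra $\Omega(W)$, which dominates the $O(n^{3})$ total contribution of \emph{all} unit-weight edges once $W=\omega(n^{3})$. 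Hence every optimal $\TT$ has a unique internal node $z$ with $\leaves(\TT[z])=\{i^{*},j^{*}\}$, whose children are $\{i^{*}\}$ and $\{j^{*}\}$, and whose reported split cost equals $w_{H}(\{i^{*}\},\{j^{*}\})\cdot|\{i^{*},j^{*}\}|=2(W+b)$. Bob knows $i^{*},j^{*}$ and $W$, so from the output he locates $z$, reads its split cost $s_{z}$, and outputs $b=s_{z}/2-W$.

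Assembling these pieces: whenever the algorithm returns a correct optimal clustering with correct split costs — which it does with probability at least $5/6$ — Bob recovers $b$, so the induced \textsc{Index} protocol errs with probability at most $1/6$; fixing the algorithm's coins by an averaging argument yields a deterministic one-way protocol of the same length, whence the memory is $\Omega(n^{2})$ bits. The step I expect to be the main obstacle is the structural claim: one must rule out \emph{all} competing trees — in particular those that peel a $C_{2}$-vertex off alongside $C_{1}$, or that begin to (balanced-)split $C_{2}$ before the $C_{1}$-vs-$C_{2}$ separation is complete — and must do so for \emph{every} optimal tree rather than a single canonical one (this is why the claim must be stated about the forced split pattern, not about one specific tree), all in the presence of the $\Theta(n^{2})$ unit-weight ``noise'' edges coming from Alice's random graph. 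Everything else (the \textsc{Index} bound, the choice of $W$, and the error accounting) is routine.
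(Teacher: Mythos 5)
Your proposal is correct, but it takes a genuinely different route from the paper's. The paper first proves a lower bound for a communication problem in which Bob must recover the \emph{exact cut value} $w(S,\bar S)$ of his partition in Alice's random graph (itself reduced from Index), and then reduces that problem to HC: Bob completes $S$ and $\bar S$ into two balanced $\Theta(n)$-vertex \emph{unit-weight} cliques, invokes a combinatorial ``sparsity split'' lemma (proved by induction, with no weight separation to lean on) showing that any two cliques joined by at most $\card{S}\cdot\card{\bar S}/2$ cross edges must be separated at the root of every optimal tree, and reads the cut value off the first split cost divided by $2n$. You instead reduce directly from Index on edge existence, isolate the queried pair $\{i^*,j^*\}$ as a two-vertex cluster tied together by a weight-$W$ edge with $W=n^5$, and force the split structure by brute weight domination: any tree that fails to peel $\{i^*,j^*\}$ off first overpays $\Omega(nW)$ on some heavy clique edge, which swamps the $O(n^3)$ total contribution of all unit-weight edges. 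Your case analysis (root separates $i^*$ from $j^*$, versus root cuts into the big clique) together with the standard ``restriction of $\TT$ to an induced subgraph is a valid tree for that subgraph'' argument does close the structural claim, so the step you flagged as the main obstacle is in fact routine under your weighting. The trade-off is that your argument establishes the lemma only for weighted instances with a $\poly(n)$ spread of edge weights, whereas the paper's construction uses unit-weight (multi-)graphs, making the lower bound more robust; moreover the paper's unweighted structural lemma is reused as a building block in the stronger $\Omega(n^2)$ bound for outputting just the optimal \emph{value}, where a weight-separation shortcut would not suffice. Conversely, your reduction recovers the edge-existence bit directly (closer in spirit to the paper's subsequent strong theorem) and is substantially shorter.
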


To this end, we adopt the following one-way communication game as the machinery.
\begin{problem}
\label{pbm:rand-graph-cut-com}
Suppose we give Alice a random graph $G=(V,E)$ such that there is an edge between each pair of vertices with probability half. 
Furthermore, we give Bob a partition of $V=(S \cup \bar{S})$. Alice sends a single message to Bob, and Bob outputs the \emph{exact} cut value of $\delta(S, \bar{S})$ in the end.
\end{problem}
We lower bound the communication complexity of \Cref{pbm:rand-graph-cut-com} in the following. 
\begin{lemma}[Communication Complexity of \Cref{pbm:rand-graph-cut-com}]
\label{lem:rand-graph-cut-com}
Any algorithm that solves \Cref{pbm:rand-graph-cut-com} with probability at least $4/5$ requires $\Omega(n^{2})$ communication. 
\end{lemma}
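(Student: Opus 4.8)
The plan is to prove \Cref{lem:rand-graph-cut-com} by a reduction from the one-way \textsf{Index} problem on a universe of size $\binom{n}{2}$. Recall that in \textsf{Index} on universe $[N]$, Alice holds $y\in\{0,1\}^N$, Bob holds $k\in[N]$, Alice sends a single message, and Bob must output $y_k$; any protocol succeeding with a constant probability bounded away from $1/2$ needs $\Omega(N)$ communication~\cite{Ablayev93}. I would identify $[N]$ with the $\binom{n}{2}$ unordered vertex pairs and let $y$ be the edge-indicator vector of $G$, so that when $y$ is uniform, $G$ is distributed exactly as the random graph in \Cref{pbm:rand-graph-cut-com}.

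Given a protocol $\mathcal{P}$ that solves \Cref{pbm:rand-graph-cut-com} with probability at least $4/5$ on every partition, I would build an \textsf{Index} protocol as follows. Alice, holding $y$, forms $G$, runs $\mathcal{P}$'s (Alice-side) algorithm on $G$ to obtain its message $M$, and \emph{additionally} appends the exact degree sequence $(\deg_G(v))_{v\in V}$ of $G$, which costs only $O(n\log n)$ extra bits. Bob, holding a pair $\{i,j\}$, feeds the partition $S=\{i,j\}$ into $\mathcal{P}$'s decoder applied to $M$ and obtains a value $\widehat{c}$. The key identity is
\[
 w\big(\{i,j\},\,V\setminus\{i,j\}\big)\;=\;\deg_G(i)+\deg_G(j)-2\,y_{ij},
\]
since the edge $\{i,j\}$ (if present) contributes to both endpoint degrees but does not cross the cut. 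Hence Bob outputs $\widehat{y}:=\tfrac12\big(\deg_G(i)+\deg_G(j)-\widehat{c}\big)$. Because $\mathcal{P}$ is correct on the partition $\{i,j\}$ with probability at least $4/5$, we get $\widehat{c}=w(\{i,j\},V\setminus\{i,j\})$ and therefore $\widehat{y}=y_{ij}$ with probability at least $4/5$. This yields an \textsf{Index} protocol with error at most $1/5$, which by the cited lower bound must communicate $\Omega\big(\binom{n}{2}\big)=\Omega(n^2)$ bits. Since the appended degree sequence contributes only $O(n\log n)=o(n^2)$ bits, $\mathcal{P}$ itself must use $\Omega(n^2)$ communication, as claimed.

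The only genuine idea here — and the step I expect to be the main obstacle — is the degree augmentation: once the two endpoint degrees are known (at the negligible $O(n\log n)$ cost), a \emph{single} cut query, namely that of the two-vertex set $\{i,j\}$, recovers the single bit $y_{ij}$ exactly, so the reduction incurs no loss in success probability. Without this trick one would have to recover $y_{ij}$ from the three cut values $w(\{i\},\cdot)$, $w(\{j\},\cdot)$, $w(\{i,j\},\cdot)$, and a union bound over three queries each failing with probability $1/5$ only guarantees success $2/5<1/2$, which is useless for an \textsf{Index} reduction; the degree augmentation is exactly what sidesteps this. Everything else — verifying that $G$ drawn from a uniform $y$ matches the input distribution of \Cref{pbm:rand-graph-cut-com}, and invoking the standard one-way \textsf{Index} lower bound — is routine.
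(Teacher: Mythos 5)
Your proposal is correct and follows essentially the same route as the paper: a reduction from one-way \textsf{Index} on a universe of size $\binom{n}{2}$, with Alice appending the $O(n\log n)$-bit degree sequence so that the cut value of a constant-size set reveals the queried edge bit. The only (harmless) difference is that the paper has Bob query the two cuts $S_1=\{u\}$ and $S_2=\{u,v\}$ and read off $y_{uv}$ from their difference (paying a union bound, success $\geq 3/5$), whereas you query the single cut $\{i,j\}$ and use $w(\{i,j\},V\setminus\{i,j\})=\deg(i)+\deg(j)-2y_{ij}$, preserving the full $4/5$ success probability.
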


We prove the lower bound via reduction from Index.  As a reminder (and for completeness), the definition of Index is as follows.
\begin{problem}
\label{pbm:index}
Alice is given a random $N$-bit string $x \in \set{0,1}^{N}$ and Bob is given a random index $i \in [N]$. Alice sends a single message to Bob and Bob outputs $x_i$. 
\end{problem}
It is well-known that any communication protocol with success probability $1/2+\Omega(1)$ for Index requires $\Omega(N)$ communication~\cite{Ablayev93}. 
We can now prove~\Cref{lem:rand-graph-cut-com}.
\begin{proof}[Proof of~\Cref{lem:rand-graph-cut-com}]
Let $\ALG$ be a protocol for~\Cref{pbm:rand-graph-cut-com} that uses $o(n^2)$ communication and suppose towards a contradiction that it solves the problem with probability at least $4/5$. 
 
The reduction goes as follows. Given an index $x \in \set{0,1}^N$ for $N={{n}\choose{2}}$, Alice creates a random graph $G=(V,E)$ on $n$ vertices such that there is an edge between $(u,v)$ iff $x_{uv} = 1$. 
By the distribution of $x$, $G$ is also a random graph as desired in~\Cref{pbm:rand-graph-cut-com}. Alice then runs $\ALG$ on $G$ and sends its message, together with degrees of all vertices, to Bob. 
This requires $o(n^2) + O(n\log{n}) = o(n^2)$ communication. 

Bob let $(u,v) \in {{V}\choose{2}} $ be the vertex pair corresponding to index $i \in [N]$ of his input in Index problem. Bob considers the following two cuts: the cut $S_1 = \set{u}$ and the cut $S_2 = \set{u,v}$ and 
run $\ALG$ for both these cuts separately. The choice of these cuts implies that 
\begin{itemize}
\item If an edge $(u,v)$ exists, then $\delta(S_{2}, V\setminus S_{2}) - \delta(S_{1}, V\setminus S_{1})=\deg(v)-2$;
\item If for $(u,v)$ there is not an edge, then $\delta(S_{2}, V\setminus S_{2}) - \delta(S_{1}, V\setminus S_{1})=\deg(v)$.
\end{itemize}

By the guarantee of $\ALG$, the probability that Bob finds the right answer to both cuts is $\geq 1-1/5-1/5 =3/5$. Moreover, Bob can know $\deg{v}$ exactly as it is sent by Alice separately. 
Thus, with probability at least $3/5$, Bob can determine whether or not the edge $(u,v) \in E$ which is equivalent to checking if $x_i = 1$ or not, i.e., solve Index. Given the lower bound for Index, we obtain a contradiction, concluding
the proof of the lemma. 
\end{proof}

\subsubsection*{Proof of \Cref{lem:exact-HC-weak}}
We now establish the lower bound in \Cref{lem:exact-HC-weak}. To this end, we design the following reduction that forms a communication protocol for \Cref{pbm:rand-graph-cut-com}, conditioning on a streaming algorithm \texttt{ALG-HC} for HC that outputs the desired information as prescribed in \Cref{lem:exact-HC-weak} with a memory of $o(n^2)$ bits and a success probability of at least $\frac{99}{100}$. 

The reduction goes as follows. We first create $n$ \emph{additional} vertices, and send them to both Alice and Bob. Alice runs \texttt{ALG-HC} with her input graph (with her random edges, the original vertices and the isolated additional vertices), and send the memory of the algorithm to Bob. Bob will perform the following operations from his end: Bob assigns the additional vertices to $S$ and $\bar{S}$ to make the augmented partition balanced (denote them as $S'$ and $\bar{S}'$). Furthermore, Bob adds edges between the vertex pairs inside $S'$ and $\bar{S}'$ to create two complete graphs on his side. Then, Bob receives the message from Alice, and Bob runs \texttt{ALG-HC} from Alice's memory and the input he creates. Finally, Bob examines the cost of the first split (denote it as $C$), and return $C/2n$ as the value of $\delta(S, \bar{S})$.

To prove the above protocol solves \Cref{pbm:rand-graph-cut-com} with probability at least $\frac{49}{50}$, we only need to show that with high probability, the optimal hierarchical clustering tree will split $S$ and $\bar{S}$. As the first step, we bound the number of edges between $S$ and $\bar{S}$ (and resp. $S'$ and $\bar{S}'$):
\begin{claim}
\label{clm:edges-ub-S-bar-S}
In the graph jointly created by Alice and Bob, the number of edges between $S$ and $\bar{S}$ (and resp. $S'$ and $\bar{S}'$) is at most $\frac{n^2}{4}$ with probability $1-o(1)$.
\end{claim}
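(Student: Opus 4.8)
\textbf{Proof proposal for \Cref{clm:edges-ub-S-bar-S}.} The plan is a short first-moment-plus-concentration argument; the only substantive step is bookkeeping which edges actually cross the cut. First I would pin down the set of cross edges. By construction, every edge Bob inserts lies entirely within $S'$ or entirely within $\bar{S}'$ (he turns each side into a clique), and the $n$ padding vertices are isolated in Alice's random graph $G$. Hence the edges of the jointly-created graph that go between $S'$ and $\bar{S}'$ are \emph{exactly} the edges of $G$ that go between the original sets $S$ and $\bar{S}$. In particular the two quantities in the claim (``between $S$ and $\bar{S}$'' in $G$, and ``between $S'$ and $\bar{S}'$'') coincide, so it suffices to bound one of them.

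Second, I would count the potential cross pairs. Writing $a := \card{S}$, there are $a(n-a)$ vertex pairs with one endpoint in $S$ and one in $\bar{S}$, and $a(n-a) \le n^2/4$ by the AM--GM inequality. Since $G$ is the random graph in which each pair of vertices is an edge independently with probability $1/2$, the number $X$ of cross edges is a sum of $a(n-a)$ independent $\set{0,1}$ random variables with $\expect{X} = a(n-a)/2 \le n^2/8$.

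Third, I would conclude by concentration: a Chernoff bound (e.g.\ \Cref{prop:chernoff-general}) gives $\Pr\paren{X > n^2/4} \le \exp(-\Omega(n^2)) = o(1)$ whenever $a(n-a) = \omega(1)$, and in the remaining $O(1)$-size range of $a(n-a)$ the deterministic bound $X \le a(n-a) \le n^2/4$ already settles the claim. (In fact $X \le a(n-a)\le n^2/4$ always, so the stated bound even holds with probability $1$; I would keep the probabilistic phrasing only to match how the bound is invoked later.) The ``resp.'' statement about $(S,\bar{S})$ in $G$ is literally the same computation.

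The only obstacle is the first step: one must verify carefully that Bob's clique edges never cross the cut $(S',\bar{S}')$ and that the padding vertices contribute no edges, so that the cut $(S',\bar{S}')$ inherits precisely $G$'s cut edges across $(S,\bar{S})$. Everything after that is a routine Chernoff/counting estimate.
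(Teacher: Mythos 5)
Your proposal is correct and follows essentially the same route as the paper: observe that only Alice's random edges can cross the cut, bound the expectation by $n^2/8$, and apply a Chernoff bound. Your additional remark that the bound in fact holds deterministically (since $X \le \card{S}\cdot\card{\bar{S}} \le n^2/4$ always) is valid and makes the concentration step strictly unnecessary, but it does not change the argument in any essential way.
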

\begin{proof}
Let $X$ be the random variable that denotes the number of edges between $S$ and $\bar{S}$. Note that $X$ is only affected by the randomness of Alice's edges and the partition of Bob (and \emph{not} affected by the edges Bob adds). Therefore, we have
\begin{align*}
\expect{X} \leq \frac{n^2}{4}\cdot \frac{1}{2}\leq \frac{n^2}{8}.
\end{align*}
Furthermore, $X$ is a sum of independent indicator random variables. Hence, by  Chernoff bound, we have
\begin{align*}
\Pr\paren{X\geq \frac{n^2}{4}} &= \Pr\paren{X\geq (1+1)\cdot \expect{X}} \leq \exp\paren{-\frac{1/8\cdot \expect{X}}{2+1}} \leq \exp\paren{-\frac{1/8\cdot n}{3}} = o(1),
\end{align*}
as $\expect{X} \geq n$.
\end{proof}
We now show that conditioning on the event of \Cref{clm:edges-ub-S-bar-S}, the optimal hierarchical clustering tree always first split the edges between $S$ and $\bar{S}$. To this end, we show the following proposition:

\begin{proposition}[Sparsity Split Lemma -- Weak Version]
\label{prop:sparse-split-weak}
Suppose a graph $G$ has 2 cliques $S$ and $\bar{S}$ such that $\card{S}+\card{\bar{S}}=n$, and suppose the number of edges between $S$ and $\bar{S}$ (denote as $E(S, \bar{S})$) is at most $\frac{\card{S}\cdot \card{\bar{S}}}{2}$. Then, the optimal hierarchical clustering tree always first split $S$ and $\bar{S}$.
\end{proposition}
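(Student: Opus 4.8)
The plan is to prove the following sharper quantitative fact, from which the proposition follows: writing $a=\card{S}$, $b=\card{\bar S}$ (so $a+b=n$) and $F=E(S,\bar S)$, I claim that for \emph{every} graph $G$ that is the vertex‑disjoint union of a clique on $S$, a clique on $\bar S$, and an arbitrary set $F$ of $S$–$\bar S$ edges,
\[
\OPT(G)\;=\;n\cdot\card{F}\;+\;\OPT(G[S])\;+\;\OPT(G[\bar S]),
\]
and that the tree $\TT^{*}$ which first splits $(S,\bar S)$ and then places an \emph{optimal} HC‑tree on each of the two cliques attains this value (below the root no $F$‑edge is ever cut again, so by the cut form of the cost, \Cref{eq:hc-cost-cut}, $\cost_G(\TT^{*})=n\card F+\OPT(G[S])+\OPT(G[\bar S])$). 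This already shows that first splitting $(S,\bar S)$ is an optimal move; the hypothesis $\card F\le \tfrac{ab}{2}$ is used afterwards to rule out any \emph{other} root behaviour.

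First I would prove the displayed formula by strong induction on $n$. The inequality ``$\le$'' is witnessed by $\TT^{*}$. For ``$\ge$'', let $\TT$ be an optimal binary tree (\Cref{obs:opt-binary}) with root cut $(A,B)$, and use the recursion $\cost_G(\TT)=n\cdot w_G(A,B)+\cost_{G[A]}(\TT[A])+\cost_{G[B]}(\TT[B])$, since every internal node other than the root lies in one subtree and all of its cut‑edges stay inside that part. Now $G[A]$ is itself a disjoint union of two cliques (on $A\cap S$ and $A\cap\bar S$) plus some $S$–$\bar S$ edges $F_A$, on fewer than $n$ vertices, so the induction hypothesis gives $\OPT(G[A])=\card A\cdot\card{F_A}+\OPT(K_{s_1})+\OPT(K_{t_1})$ with $s_1=\card{A\cap S}$, $t_1=\card{A\cap\bar S}$, and likewise for $B$ with $s_2,t_2$. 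The same hypothesis applied to $K_a=G[S]$ and $K_b=G[\bar S]$ (each viewed as two sub‑cliques plus a complete bipartite $F$) yields $\OPT(K_a)=a\,s_1 s_2+\OPT(K_{s_1})+\OPT(K_{s_2})$ and $\OPT(K_b)=b\,t_1 t_2+\OPT(K_{t_1})+\OPT(K_{t_2})$ — this is the only fact about clique costs needed, so no closed form is required and \Cref{lem:disjoint} is subsumed. Substituting everything and cancelling, the target $\cost_G(\TT)\ge n\card F+\OPT(K_a)+\OPT(K_b)$ collapses to $b\,s_1 s_2+a\,t_1 t_2\;\ge\;\card B\cdot\card{F_A}+\card A\cdot\card{F_B}$; since $\card{F_A}\le s_1 t_1$, $\card{F_B}\le s_2 t_2$, $\card A=s_1+t_1$, $\card B=s_2+t_2$, the right‑hand side is at most $(s_2+t_2)s_1 t_1+(s_1+t_1)s_2 t_2=s_1 s_2(t_1+t_2)+t_1 t_2(s_1+s_2)=b\,s_1 s_2+a\,t_1 t_2$, closing the induction.

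For the ``first split'' conclusion I would trace the equality case. Suppose an optimal $\TT$ has root cut $(A,B)\neq(S,\bar S)$. Since the clusters of $\TT$ form a laminar family, $S$ and $\bar S$ cannot both be clusters (else they would be the two root children), so one of them — say $S$ — is genuinely split, i.e.\ $s_1,s_2\ge 1$. Equality in the last chain then forces $\card{F_A}=s_1 t_1$ and $\card{F_B}=s_2 t_2$ (the coefficients $s_2+t_2,\ s_1+t_1$ being strictly positive), so $F$ contains the complete bipartite graphs between $A\cap S,A\cap\bar S$ and between $B\cap S,B\cap\bar S$; in particular $\card F\ge s_1 t_1+s_2 t_2$. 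With $\card F\le\tfrac{ab}{2}$ this gives $(s_1-s_2)(t_1-t_2)\le 0$ and pins the remaining $F$‑edges (they must cross $(A,B)$) and the subtrees $\TT[A],\TT[B]$ into a rigid configuration in which $F$ is essentially two complete bipartite blocks. One then checks this is incompatible with $\card{E(S,\bar S)}\le\tfrac{\card S\,\card{\bar S}}{2}$ except in degenerate edge cases, which in the application are excluded with high probability: by \Cref{clm:edges-ub-S-bar-S} the relevant $\card F$ is at most $n^2/4$ whereas $\tfrac{\card S\,\card{\bar S}}{2}=n^2/2$ there, a constant‑factor margin, and a random $F$ of that density is not of the special block form. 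Hence every optimal tree first splits $(S,\bar S)$, and its root cost equals $n\cdot\card{E(S,\bar S)}$.

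The main obstacle is the algebraic core of the induction step: one must organise the recursion so that the $F$‑edge contributions cancel \emph{exactly} against the clique‑splitting terms $a\,s_1 s_2$ and $b\,t_1 t_2$ — a crude bound such as ``each crossing edge costs at least $2$'' is off by a factor $\Theta(n)$ and is useless, so the recursive clique identity is essential. The secondary, and genuinely delicate, point is the equality analysis: the proposition is essentially tight (for $F$ of size exactly $\tfrac{ab}{2}$, or $F$ a single complete bipartite block, other optimal trees exist), so the hypothesis $\card{E(S,\bar S)}\le\tfrac{\card S\,\card{\bar S}}{2}$ — together with the randomness of $G$ in the reduction — must be invoked precisely to discard those degenerate roots.
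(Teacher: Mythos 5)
Your argument is correct in its core and shares the same algebraic heart as the paper's proof, but it is organized in a genuinely different (and in some ways cleaner) way. The paper proceeds by induction on $n$ directly on the statement ``the optimal tree splits $(S,\bar S)$ first,'' comparing the cost $\mathcal{C}_1$ of the tree that splits $(S,\bar S)$ at the root against the cost $\mathcal{C}_2$ of any tree whose root cut $(A,B)$ mixes the two cliques; after cancellation the comparison reduces to exactly the inequality you isolate, namely $\card{B}\cdot\card{F_A}+\card{A}\cdot\card{F_B}\le b\,s_1s_2+a\,t_1t_2$, verified via $\card{F_A}\le s_1t_1$, $\card{F_B}\le s_2t_2$. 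You instead strengthen the induction hypothesis to the exact identity $\OPT(G)=n\card{F}+\OPT(K_a)+\OPT(K_b)$, valid for \emph{arbitrary} crossing sets $F$, and recover the clique recursion $\OPT(K_a)=a\,s_1s_2+\OPT(K_{s_1})+\OPT(K_{s_2})$ as the special case $F$ complete bipartite. This buys you two things the paper's write-up does not have: the argument works for arbitrary $\card{S},\card{\bar S}$ (the paper's displayed formulas for $\mathcal{C}_1,\mathcal{C}_2$ silently assume $\card{S}=\card{\bar S}=n/2$, e.g.\ the $\tfrac{n^3}{8}$ terms), and the bookkeeping is exact rather than a one-sided comparison. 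I verified your cancellation; it is correct.

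The one genuine gap is the same one the paper has: both arguments establish only that splitting $(S,\bar S)$ at the root is \emph{an} optimal choice (a non-strict inequality), whereas the proposition asserts that the optimal tree \emph{always} splits $(S,\bar S)$ first, which requires strictness. You are right that the statement is tight -- for $a=b=2$ and $F$ two disjoint crossing edges ($\card{F}=2=ab/2$), $G$ is a $4$-cycle and the ``wrong'' balanced root cut is also optimal -- so some use of the hypothesis $\card{F}\le ab/2$ (with strict inequality, or a perturbation/randomness argument) is unavoidable to rule out ties. Your equality analysis correctly identifies that ties force $\card{F_A}=s_1t_1$ and $\card{F_B}=s_2t_2$, but the final step (``one then checks this is incompatible\dots except in degenerate edge cases, which in the application are excluded with high probability'') is not carried out and retreats from the deterministic statement to the probabilistic setting of \Cref{clm:edges-ub-S-bar-S}. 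Since the paper's own proof stops at the non-strict inequality and asserts the conclusion, I do not count this against you relative to the paper, but be aware that neither proof, as written, fully justifies the word ``always.''
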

\begin{proof}
We prove this by induction. As the base case, suppose when $n=3$, and $\card{E(S,\bar{S})}=1$. Then, to first split the $E(S,\bar{S})$ edge is optimal. Now suppose for $n<n'$ this holds. For the graph with $n'$ vertices, if we first split $S$ and $\bar{S}$, the cost is at most
\begin{align*}
\mathcal{C}_{1} = \card{E(S, \bar{S})}\cdot n + 2\cdot \frac{1}{3} \cdot (\frac{n^3}{8}-\frac{n}{2}).
\end{align*}
On the other hand, suppose the optimal clustering starts with splitting some other vertices, one can denote the components after the first split as follows:
\begin{itemize}
\item $S_{a} \cup \bar{S}_{a}$: let $E_{a}$ be the set of edges edges between them.
\item $S_{b} \cup \bar{S}_{b}$: let $E_{b}$ be the set of edges edges between them.
\item The set of edges $E_{c}$ between 1). $S_{a}$ and $\bar{S}_{b}$ and 2). $S_{b}$ and $\bar{S}_{a}$.
\end{itemize}

Note that with the induction hypothesis, the optimal clustering tree will split $S_{a} \cup \bar{S}_{a}$ and $S_{b} \cup \bar{S}_{b}$ in the way that $E_{a}$ and $E_{b}$ are cut first. Therefore, the cost induced by \emph{not} splitting $\card{E(S, \bar{S})}$ is
\begin{align*}
\mathcal{C}_{2} = \paren{\card{S_{a}}\cdot \card{S_{b}} + \card{\bar{S}_{a}}\cdot \card{\bar{S}_{b}} + \card{E_{c}}} \cdot n + (\card{S_{a}}+\card{\bar{S}_{a}})\cdot \card{E_{a}} + (\card{S_{b}}+\card{\bar{S}_{b}})\cdot \card{E_{b}} \\
+ \frac{1}{3}(\card{S_{a}}^3 - \card{S_{a}} + \card{\bar{S_{a}}}^3 - \card{\bar{S_{a}}} + \card{S_{b}}^3 - \card{S_{b}} + \card{\bar{S_{b}}}^3 - \card{\bar{S_{b}}}),
\end{align*}
such that $\card{S_{a}}+\card{S_{b}}=\frac{n}{2}$, $\card{\bar{S}_{a}}+\card{\bar{S}_{b}}=\frac{n}{2}$, and $\card{E_{a}} + \card{E_{b}} + \card{E_{c}} = \card{E(S, \bar{S})} \leq \frac{n^2}{4}$. 
By merging and canceling out different terms, we can show that
\begin{align*}
\mathcal{C}_{1} - \mathcal{C}_{2} = \card{S_{a}}^2\cdot \card{S_{b}} &+ \card{S_{a}}\cdot \card{S_{b}}^2 +  \card{\bar{S}_{a}}^2\cdot \card{\bar{S}_{b}} + \card{\bar{S}_{a}}\cdot \card{\bar{S}_{b}}^2 - (\card{S_{a}} + \card{\bar{S}_{a}})\cdot \card{E_{a}} - (\card{S_{b}} + \card{\bar{S}_{b}})\cdot \card{E_{b}}\\
&+ \card{E(S, \bar{S})}\cdot n - (\card{S_{a}}\cdot \card{S_{b}} + \card{\bar{S}_{a}}\cdot \card{\bar{S}_{b}}+\card{E_{c}})\cdot n.
\end{align*}
By switching terms in the above inequality, we note that to show $\mathcal{C}_{1} - \mathcal{C}_{2} \leq 0$, it suffices to show 
\begin{align*}
(n-\card{S_{a}}-\card{\bar{S}_{a}})\cdot \card{E_{a}} + (n-\card{S_{b}}-\card{\bar{S}_{b}})\cdot \card{E_{b}} \leq \frac{n}{2}\cdot \paren{\card{S_{a}}\cdot \card{S_{b}}+\card{\bar{S}_{a}}\cdot \card{\bar{S}_{b}}}.
\end{align*}
We show the above inequality is indeed true. Note that by our constraints, there is $\card{E_{a}}\leq \card{S_{a}}\cdot \card{\bar{S}_{a}}$ and $\card{E_{b}}\leq \card{S_{b}}\cdot \card{\bar{S}_{b}}$. Hence, we have
\begin{align*}
(n-\card{S_{a}}-\card{\bar{S}_{a}})\cdot \card{E_{a}} + (n-\card{S_{b}}-\card{\bar{S}_{b}})\cdot \card{E_{b}} &\leq n\cdot \card{E_{a}} -(\frac{n}{2}-\card{S_{b}})\cdot \card{S_{a}}\cdot \card{\bar{S}_{a}}\\
& -(\frac{n}{2}-\card{\bar{S}_{b}})\cdot \card{S_{a}}\cdot \card{\bar{S}_{a}} + n \card{S_{b}}\cdot \card{\bar{S}_{b}} \\
& - (\frac{n}{2}-\card{S_{a}})\cdot \card{S_{b}}\cdot \card{\bar{S}_{b}} - (\frac{n}{2}-\card{\bar{S}_{a}})\cdot \card{S_{b}}\cdot \card{\bar{S}_{b}}\\
&= \card{S_{a}}\cdot \card{S_{b}}(\card{\bar{S}_{a}} + \card{\bar{S}_{b}}) + \card{\bar{S}_{a}} \cdot \card{\bar{S}_{b}}(\card{S_{a}}+ \card{S_{b}})\\
&= \frac{n}{2}\cdot\paren{\card{S_{a}}\cdot \card{S_{b}} + \card{\bar{S}_{a}} \cdot \card{\bar{S}_{b}}}.
\end{align*}
That is to say, for graphs in the form as prescribed in \Cref{prop:sparse-split-weak}, the strategy to first split $S$ and $\bar{S}$ results in the minimum cost. Therefore, the optimal HC tree must split $S$ and $\bar{S}$ first. 
\end{proof}

We can now finalize the proof of \Cref{lem:exact-HC-weak}. By \Cref{clm:edges-ub-S-bar-S}, with probability $1-o(1)$, the number of edges between $S'$ and $\bar{S}'$ created by Alice and Bob satisfies the condition as in \Cref{prop:sparse-split-weak}. Moreover, although the joint graph has some multi-edges on the top of the complete graph, it does not change the order of split. Therefore, we can apply \Cref{prop:sparse-split-weak} to argue that the edges between $S'$ and $\bar{S}'$ are those to be first split. Finally, the failure probability is bounded by a union bound over the event of \Cref{clm:edges-ub-S-bar-S} not happening and the event that the algorithm fails, which is at most $o(1) + 1/6 < 1/5$. 
The lower bound now follows from \Cref{lem:rand-graph-cut-com}. 

\subsection{A Lower Bound for Outputting the Optimal Value of HC}
\label{subsec:main-exact-lb}

We now proceed to the proof of the main result of this section. Similar to the proof of \Cref{lem:exact-HC-weak}, here we give the following communication game to reduce the hardness from.
\begin{problem}
\label{pbm:rand-graph-edge-exist}
Suppose we give Alice a random bipartite graph $G=(L \cup R,E)$ such that for every pair of vertices $(u,v) \in L \times R$, there is
\begin{equation*}
\begin{cases}
(u,v) \in E, \quad \text{w.p.} \, \frac{1}{2};\\
(u,v) \not\in E, \quad \text{w.p.} \, \frac{1}{2}.
\end{cases}
\end{equation*}
Furthermore, we give Bob an index of vertex pair $(i,j) \in L\times R$. Alice is allowed to send a message to Bob \emph{once}, and one of the two players has to output if $(i,j)$ is an edge.
\end{problem}

The rest of this section is to prove \Cref{thm:exact-HC-strong} in steps.

\subsubsection*{Step 1: Complexity of \Cref{pbm:rand-graph-edge-exist}}
Intuitively, \Cref{pbm:rand-graph-edge-exist} answers in the same way of INDEX if we treat each vertex pair as an entry in the array of INDEX. We now formalize this complexity result to show that it requires $\Omega(n^2)$ bits to solve \Cref{pbm:rand-graph-edge-exist}.

\begin{lemma}[Communication Complexity of \Cref{pbm:rand-graph-edge-exist}]
\label{lem:rand-graph-edge-com}
Any algorithm that solves \Cref{pbm:rand-graph-edge-exist} with probability at least $\probone$ requires a communication complexity of $\Omega(n^{2})$ bits. 
\end{lemma}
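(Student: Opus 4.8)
The plan is to reduce from the Index communication problem (\Cref{pbm:index}) over a universe of size $N = \Theta(n^2)$, following essentially the template of the proof of \Cref{lem:rand-graph-cut-com} but with an even more direct encoding. Concretely, take $|L| = |R| = n/2$, so that $L \times R$ contains exactly $N = n^2/4$ pairs, and fix an arbitrary bijection between $[N]$ and $L \times R$.

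Given an Index instance, Alice holds a uniformly random string $x \in \set{0,1}^N$ and Bob holds a uniformly random index $\ell \in [N]$. Alice builds the bipartite graph $G = (L \cup R, E)$ by declaring the pair corresponding to coordinate $m$ to be an edge iff $x_m = 1$. Since $x$ is uniform, each pair $(u,v) \in L \times R$ is independently an edge with probability $\tfrac12$, so $G$ is distributed exactly as Alice's input in \Cref{pbm:rand-graph-edge-exist}. Bob's index $\ell$ corresponds to a vertex pair $(i,j) \in L \times R$, and the event ``$x_\ell = 1$'' is identical to the event ``$(i,j) \in E$''. Because the communication is one-way (Alice to Bob) and Bob's input is independent of $G$, Alice on her own cannot beat random guessing, so without loss of generality Bob is the player who outputs. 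Hence any protocol that solves \Cref{pbm:rand-graph-edge-exist} with probability $\probone$ and communication $c$ yields an Index protocol on $N$ bits with success probability $\probone \geq \tfrac12 + \Omega(1)$ and communication $c$. The standard $\Omega(N)$ lower bound for Index~\cite{Ablayev93} then gives $c = \Omega(N) = \Omega(n^2)$, as claimed.

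There is essentially no technical obstacle in this step: the only points needing (minor) care are verifying that the two distributions coincide exactly (immediate once $|L|\,|R| = \Theta(n^2)$ and $x$ is uniform) and observing that one-wayness pins the decision on Bob, so the reduction is genuinely to the classical one-way Index problem rather than to some two-party variant. The real work in establishing \Cref{thm:exact-HC-strong} lies elsewhere, namely in the structural lemma that relates the exact optimal HC cost of the four-clique gadget to the presence of the edge $(i,j)$ on Bob's side; \Cref{lem:rand-graph-edge-com} is merely the easy communication-complexity backbone of that argument.
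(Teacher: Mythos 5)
Your reduction is correct and is essentially identical to the paper's own proof: both map an Index instance on $N=n^2/4$ bits to the $|L|=|R|=n/2$ bipartite graph via a fixed bijection between coordinates and pairs in $L\times R$, observe the distributions coincide exactly, and invoke the $\Omega(N)$ Index lower bound. Your extra remark that one-wayness forces Bob to be the outputting player is a minor point the paper glosses over, but it does not change the argument.
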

\begin{proof}
Again, we are going to design a reduction to use the the complexity of INDEX. Suppose we have a streaming algorithm \texttt{ALG} that solves \Cref{pbm:rand-graph-edge-exist} with probability at least $\probone$. We use this to design a protocol that solves INDEX.

The protocol is as follows. Alice constructs a random graph with $n$ vertices such that $N=\frac{n^2}{4}$, i.e. every possible vertex pair for a bipartite graph with $\card{L}=\card{R}=\frac{n}{2}$. Bob is given the same vertices, and he transform his index $i^{*}$ to the corresponding index of the vertex pair $(u,v)$. Alice runs \texttt{ALG} from her end, send the memory to Bob; Bob runs \texttt{ALG} conditioning on Alice's message, and output $0$ if $(u,v)\not\in E$, and $1$ otherwise.

It is straightforward to see that the index value exactly corresponds to the existence of the edge. Therefore, the protocol succeeds with the same probability of \texttt{ALG}. This implies any such streaming algorithm \texttt{ALG} has to use a memory of $\Omega(N)=\Omega(n^2)$ bits.
\end{proof}

\subsubsection*{Step 2: A Reduction to Hierarchical Clustering}
We now proceed to the reduction from \Cref{pbm:rand-graph-edge-exist} to hierarchical clustering. Given a streaming hierarchical clustering algorithm \texttt{ALG}, we can make a protocol for \Cref{pbm:rand-graph-edge-exist} as follows:

\clearpage

\begin{tbox}
\texttt{PORT}: a communication protocol for \Cref{pbm:rand-graph-edge-exist}.

\smallskip

\paragraph{Input:} \texttt{ALG} -- a streaming algorithm that outputs the optimal hierarchical clustering for any signed complete graph with probability at least $\frac{99}{100}$.

\smallskip

\paragraph{The Construction:} Alice and Bob construct a graph $G=(V,\, E)$ as follows.
\begin{enumerate}[label=$(\roman*)$]
    \item Both Alice and Bob are given $n=16 N$ vertices.
	\item Alice constructs the random bipartite graph on $2N$ vertices ($G'=(L\cup R, E)$, $\card{L}=\card{R}=N$) as in \Cref{pbm:rand-graph-edge-exist}; Bob holds an index $i^{*}=(u,v)\in [N^2]$.
	\item Bob adds edges to $G$ in the following manner:
	\begin{enumerate}
	    \item Bob connects $u$ with a set of $4N-1$ vertices (call them $\tilde{S}_{1}$), and make $S_{1}= \{u\}\cup \tilde{S}_{1}$ a clique.
	    \item In the same manner, Bob connects $v$ with another set of $4N-1$ vertices (call them $\tilde{S}_{2}$), and make $S_{2}= \{v\}\cup \tilde{S}_{2}$ a clique.
	    \item Bob connects every vertex in $L$ \emph{except} $u$ with a set of $3N+1$ vertices (call them $\tilde{S}_{3}$), and make $S_{3}=L\setminus \{u\}\cup \tilde{S}_{3}$.
	    \item In the same manner, Bob connects every vertex in $R$ \emph{except} $v$ with another set of $3N+1$ vertices (call them $\tilde{S}_{4}$), and make $S_{4}=R\setminus \{v\}\cup \tilde{S}_{4}$.
	\end{enumerate}
  \item We emphasize that $\tilde{S}_{1}$, $\tilde{S}_{2}$, $\tilde{S}_{3}$, and $\tilde{S}_{4}$ are disjoint, and Bob can pick the vertices based on the lexicographical orders.
\end{enumerate}

\smallskip

\paragraph{The Message and Output: }
\begin{enumerate}[label=$(\roman*)$]
	\item Alice runs \texttt{ALG}, and sends the memory of the algorithm as the message to Bob.
	\item Alice further sends the degrees of the first $2N$ vertex to Bob (with $\tilde{O}(N)$ bits). 
	\item Bob runs \texttt{ALG} based on Alice's message, and outputs based on the cost:
	\begin{enumerate}
	    \item Assume w.log. that $\deg(u)<\deg(v)$.
	    \item If the cost equals to
    	    \begin{align*}
    	    \cost = \deg(u)\cdot 16N + (\deg(v)-1)\cdot 12N + \frac{1}{2}\cdot \sum_{w \neq u,v}\deg(w)\cdot 8N +\frac{4}{3}\paren{(4N)^3 - 4N},
	    \end{align*}
	    then return $(u,v)\in E$.
	    \item Otherwise, if the cost equals to 
    	    \begin{align*}
    	    \cost = \deg(u)\cdot 16N + \deg(v)\cdot 12N + \frac{1}{2}\cdot \sum_{w \neq u,v}\deg(w)\cdot 8N +\frac{4}{3}\paren{(4N)^3 - 4N},
    	    \end{align*}
	    then return $(u,v)\not \in E$.
	    \item Otherwise, return FAIL.
	\end{enumerate}
\end{enumerate}

\end{tbox}

An illustration of the constructed graph $G$ can be found as the left plot of \Cref{fig:hc-reduction-4-clique}. It is straightforward to see that the reduction does not increase the communication complexity as long as the memory of \texttt{ALG} is $\Omega(n)$. As such, our task now is to prove that with high constant probability, the optimal cost agrees with the desired value. To this end, we introduce the following proposition which characterizes the optimal tree on a graph constructed by Alice and Bob. 

\begin{figure}[h!]
\centering
\includegraphics[scale=0.22]{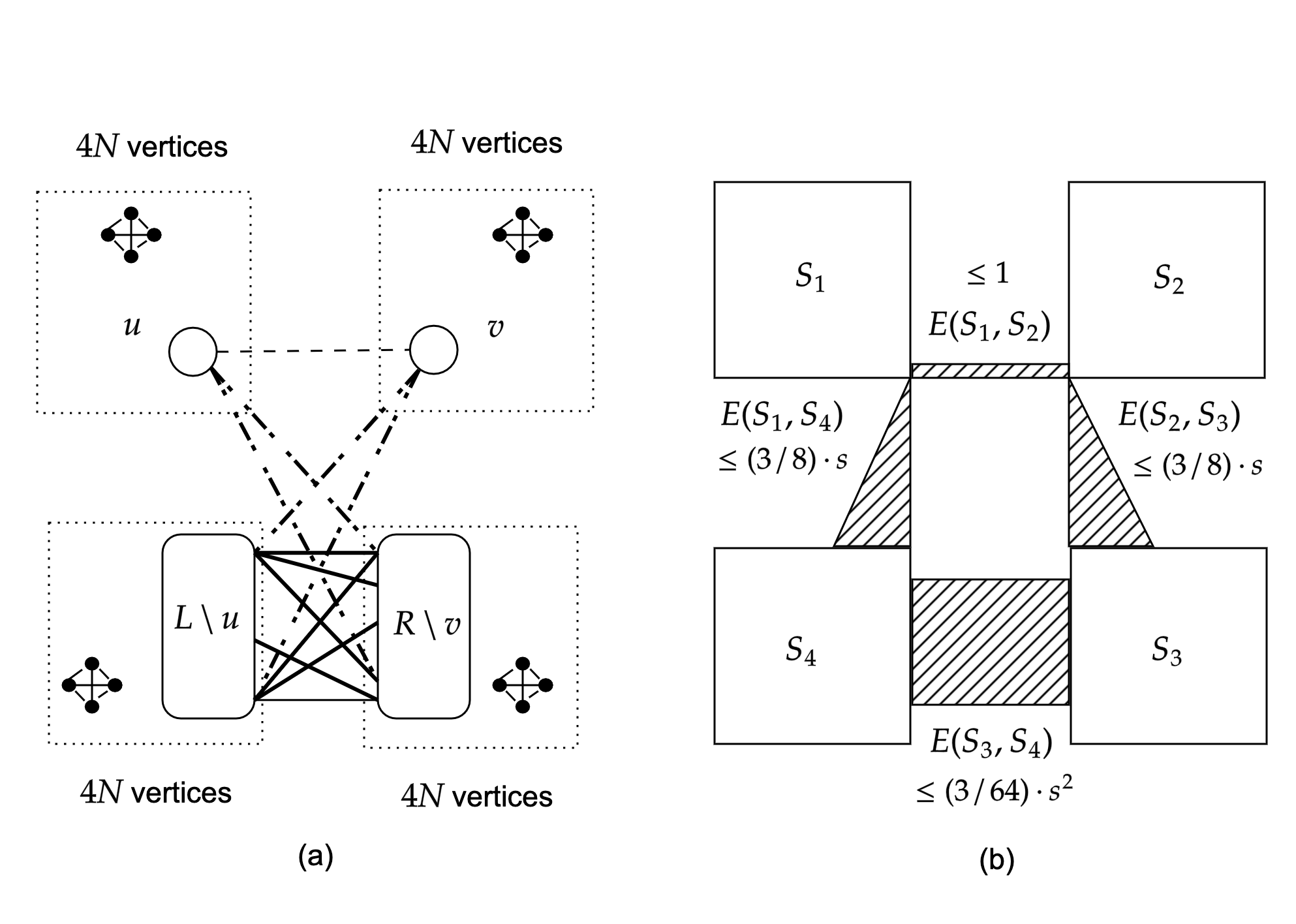}
\caption{An illustration of the graph constructed by Alice and Bob (a) and the desired graph property of \Cref{prop:sparse-split-strong} (b). In (a), the bold solid lines are the random edges added by Alice, minus the edges indent on $u$ and $v$. The bold dashed lines are the random edges indent on $u$ and $v$, minus the (possible) edge $(u,v)$ itself. And finally, the dashed line indicates the (possible) edge between $u$ and $v$. Bob puts the four parts into four components with $4N$ vertices each, and add edges in the components to make them cliques. In (b), the edges of a graph are denoted as the shaded areas. With high constant probability (over the randomness of Alice), the graph $G$ constructed in $(a)$ satisfies the conditions illustrated in $(b)$. }\label{fig:hc-reduction-4-clique}
\end{figure}

\begin{proposition}[Sparsity Split Lemma -- Strong Version]
\label{prop:sparse-split-strong}
Suppose a graph $G$ has 4 cliques $S_{1}$, $S_{2}$, $S_{3}$ and $S_{4}$ such that $\card{S_{1}}=\card{S_{2}}=\card{S_{3}}=\card{S_{4}}=s$, and suppose there are only 4 set of edges between them: $E(S_{1}, S_{2})$, $E(S_{2}, S_{3})$, $E(S_{3}, S_{4})$ and $E(S_{1}, S_{4})$. Furthermore, assume w.log. that $\card{E(S_{1}, S_{2})}+\card{E(S_{1}, S_{4})}\leq \card{E(S_{2}, S_{3})}\leq \card{E(S_{3}, S_{4})}$, the edges of $G$ be with the following properties:
\begin{enumerate}[label=\alph*).]
\item~\label{line:sparse-u-v-edge}$\card{E(S_{1}, S_{2})}\leq 1$, $1\leq\card{E(S_{1}, S_{4})}\leq \card{E(S_{2}, S_{3})}\leq \frac{3}{8}\cdot s$.
\item~\label{line:single-v-out}Among all vertices $v \in S_{1}$ (resp. $v\in S_{2}$), only a \emph{single} vertex $v$ has neighbors $u \not\in S_{1}$ (resp. neighbors $u \not\in S_{2}$).
\item~\label{line:sparse-s3-s4}$\frac{1}{64}\cdot s^2 \leq \card{E(S_{3}, S_{4})}\leq \frac{3}{64}\cdot s^2$. Furthermore, for any $S'_{3}\subseteq S_{3}$ and $S'_{4}\subseteq S_{4}$, $\card{E(S'_{3}, S'_{4})}\leq \card{S'_{3}}\cdot \card{S'_{4}}$.
\end{enumerate}
An illustration of such a graph $G$ can be found in the right column of \Cref{fig:hc-reduction-4-clique}. Then, the optimal hierarchical clustering tree on $G$ follows the below pattern:
\begin{enumerate}
\item The first split separates $S_{1}$ from the rest of the graph.
\item The second split separates $S_{2}$ from the rest of the graph.
\item The third split separates $S_{3}$ and $S_{4}$.
\item Each clique is clustered by the induced hierarchical clustering tree after it is separated from the rest of the graph.
\end{enumerate}
In other words, the hierarchical clustering tree is as \Cref{fig:hc-tree-4-cliques}.
\end{proposition}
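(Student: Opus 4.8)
The plan is to prove the proposition by exhibiting the candidate optimum explicitly, reducing all competitors to it, and handling the reduction through a short induction that builds on~\Cref{prop:sparse-split-weak}. By~\Cref{obs:opt-binary} we may restrict to binary trees. I would first record the elementary fact that \emph{every} binary HC-tree of a clique $K_s$ has cost exactly $\tfrac13(s^3-s)$: by~\Cref{eq:hc-cost-cut}, if the top split of such a tree has parts of sizes $a$ and $s-a$ its cost is $a(s-a)\,s + f(a) + f(s-a)$, which solves to $f(s)=\tfrac13(s^3-s)$ independently of $a$. Hence once a clique $S_i$ is realized as a cluster node of $\TT$, its internal structure is irrelevant, and — again by~\Cref{eq:hc-cost-cut} — for any tree $\TT$ that realizes all four of $S_1,\dots,S_4$ as cluster nodes one gets the clean decomposition $\cost_G(\TT) = \tfrac43(s^3-s) + s\cdot\cost_{C}(\hat\TT)$, where $C$ is the weighted $4$-cycle $S_1\!-\!S_2\!-\!S_3\!-\!S_4\!-\!S_1$ with edge weights $\card{E(S_1,S_2)},\card{E(S_2,S_3)},\card{E(S_3,S_4)},\card{E(S_1,S_4)}$ and $\hat\TT$ is the tree induced on the four ``super-leaves''. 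Thus, \emph{restricted to such trees}, minimizing $\cost_G$ is the same as minimizing $\cost_{C}$ over the constantly many binary trees on four leaves; a direct case check — using $\card{E(S_1,S_2)}+\card{E(S_1,S_4)}\le\card{E(S_2,S_3)}\le\card{E(S_3,S_4)}$ together with $\card{E(S_3,S_4)}=\Omega(s^2)\gg\card{E(S_1,S_4)},\card{E(S_2,S_3)}$ from condition~\ref{line:sparse-s3-s4} (these make the relevant comparisons strict for $s$ large, strictly so whenever $\card{E(S_1,S_2)}=1$) — shows the minimizing shape is the caterpillar that peels $S_1$, then $S_2$, then splits $S_3\mid S_4$, i.e., exactly the claimed pattern.

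It then remains to show an optimal $\TT$ must realize each $S_i$ as a cluster node. I would prove this by induction on $s$, in the style of~\Cref{prop:sparse-split-weak}, through a hierarchy of structured instances. Base step: the two-clique graph $S_3\cup S_4$ satisfies $\card{E(S_3,S_4)}\le\tfrac{3}{64}s^2<\tfrac12\card{S_3}\card{S_4}$ and, by condition~\ref{line:sparse-s3-s4}, has no parallel edges between subsets of $S_3$ and $S_4$, so~\Cref{prop:sparse-split-weak} applies verbatim (to it and to every sub-instance in its recursion): any optimal tree on $S_3\cup S_4$ first separates $S_3$ and $S_4$. Next, for the three-clique path $S_2\!-\!S_3\!-\!S_4$ (all of $E(S_2,S_3)$ incident to a single vertex of $S_2$ by condition~\ref{line:single-v-out}, and $\card{E(S_2,S_3)}\le\card{E(S_3,S_4)}$), one shows the optimal top split may be taken not to split any clique, leaving the finitely many clean options $\{S_2\}$, $\{S_3\}$, $\{S_4\}$, or $\{S_2,S_4\}$ against the rest; comparing them via the base step and~\Cref{obs:partition,lem:disjoint} gives that $\{S_2\}\mid S_3\cup S_4$ is cheapest since $\card{E(S_2,S_3)}\le\card{E(S_3,S_4)}$, and the base step pins down the subtree on $S_3\cup S_4$. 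Finally the full four-clique cycle $G$ is handled the same way: reduce to a clean top split, compare the constantly many clean options using $\card{E(S_1,S_2)}+\card{E(S_1,S_4)}\le\card{E(S_2,S_3)}\le\card{E(S_3,S_4)}$ and $\card{E(S_3,S_4)}\ge\tfrac1{64}s^2$, conclude the top split is $\{S_1\}\mid S_2\cup S_3\cup S_4$, and invoke the previous step for the remainder. Combining with the first paragraph yields the tree in~\Cref{fig:hc-tree-4-cliques}.

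The crux — the one step I expect to be genuinely delicate — is the uncrossing claim used throughout: that an optimal tree need not split any clique. Its two-clique instance is precisely what the proof of~\Cref{prop:sparse-split-weak} establishes, and the needed generalization uses conditions~\ref{line:single-v-out} and~\ref{line:sparse-s3-s4} in an essential way, so that after peeling one clique the residual graph is again one of the structured forms above. The quantitative heart is that splitting a clique $S_i$ forces at least $\card{S_i}-1$ of its (present) internal edges to be cut at the large multiplier of the subtree containing $S_i$, which \emph{strictly} increases the within-clique part of the cost, while the only possible compensating gain — rerouting the at most $O(s^2)$ between-clique edges — is too small, and for the $S_1$- and $S_2$-edges is further constrained by the single-endpoint condition~\ref{line:single-v-out}; so no clique-splitting tree can beat the clean caterpillar. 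Carrying out the bookkeeping of this comparison (essentially replaying the algebra in the proof of~\Cref{prop:sparse-split-weak} with four cliques rather than two) is the main remaining work.
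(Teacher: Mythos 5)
Your overall strategy is the same as the paper's: first determine the best tree among those whose top splits never cut inside a clique (your super-leaf contraction plus the identity $\cost(K_s)=\tfrac13(s^3-s)$ is a clean reformulation of the paper's \Cref{lem:cut-sparsest-connect}), and then argue by induction through two-, three-, and four-clique instances that an optimal tree never splits a clique before the cliques have been separated (the paper's \Cref{lem:s1-s2-improp,lem:s3-s4-opt-cost}). The accounting you set up is also the right one: splitting a clique of size $s$ into parts $a$ and $s-a$ inside a cluster of size $N>s$ costs an extra $a(s-a)(N-s)$ relative to realizing the clique as a cluster node.

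The genuine gap is precisely at what you flag as the crux and then defer as ``remaining work'': ruling out trees whose first cut splits $S_3$ or $S_4$. Your stated reason --- that the compensating gain from rerouting the $O(s^2)$ between-clique edges ``is too small'' --- is not correct even at the level of orders of magnitude. Splitting $S_3$ and $S_4$ incurs a penalty of $\Theta(s^3)$, but it can also lower the multiplier of up to $\tfrac{3}{64}s^2$ edges of $E(S_3,S_4)$ by $\Theta(s)$ each, a potential gain that is likewise $\Theta(s^3)$; the comparison is decided by constants, not by orders. The paper's \Cref{lem:s3-s4-opt-cost} resolves it by a term-by-term comparison that crucially uses the simplicity condition $\card{E(S'_3,S'_4)}\le\card{S'_3}\cdot\card{S'_4}$ through the identity $\card{S_3^{in}}\card{S_4^{in}}(\card{S_3^{out}}+\card{S_4^{out}})+\card{S_3^{out}}\card{S_4^{out}}(\card{S_3^{in}}+\card{S_4^{in}})=s\,(\card{S_3^{in}}\card{S_3^{out}}+\card{S_4^{in}}\card{S_4^{out}})$, showing the gain is at most one quarter of the dominant $4s\,(\card{S_3^{in}}\card{S_3^{out}}+\card{S_4^{in}}\card{S_4^{out}})$ penalty, together with the inductive hypothesis to control the order of splits in the residual subgraphs. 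Until that computation (and the easier one for $S_1,S_2$, where condition (b) forces any clique-splitting first cut to lie entirely inside the clique) is carried out, the proposition is not proved. A minor further caveat: your four-leaf case check is only weakly in favor of the caterpillar --- when $\card{E(S_1,S_2)}=1$ and $\card{E(S_2,S_3)}=2$, the split $(S_1\cup S_2\mid S_3\cup S_4)$ ties it --- which does not affect the optimal value (all the reduction needs) but does mean the claimed strictness, and the literal ``the optimal tree follows this pattern'' conclusion, should be stated more carefully.
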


\begin{figure}[h!]
\centering
\includegraphics[scale=0.22]{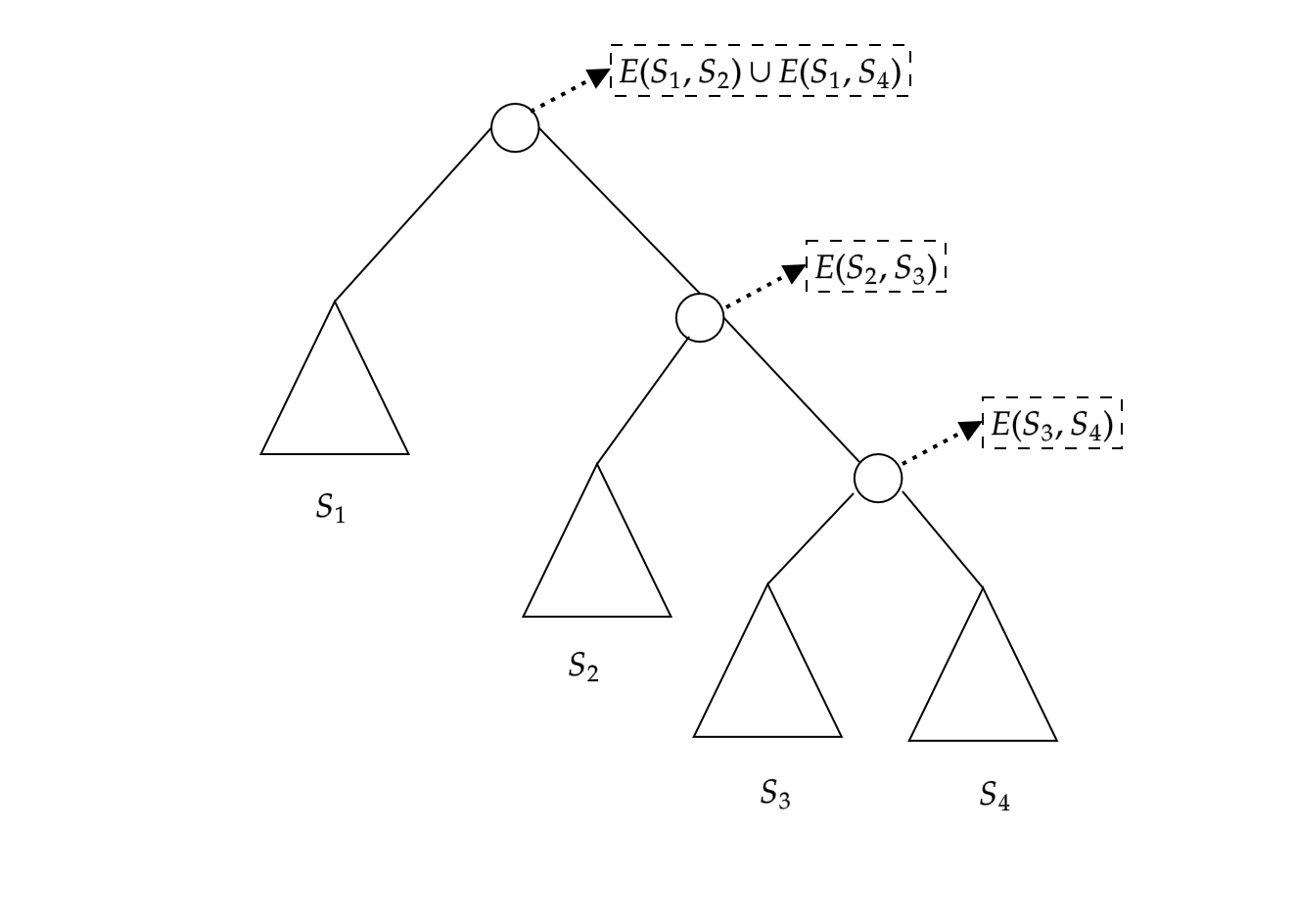}
\caption{An illustration of the hierarchical clustering tree as described in \Cref{prop:sparse-split-strong}.}\label{fig:hc-tree-4-cliques}
\end{figure}

We defer the proof of \Cref{prop:sparse-split-strong} to the next step. Conditioning on the statement of \Cref{prop:sparse-split-strong}, we can show the correctness of \texttt{PROT}. We first show that with high probability, the graph $G$ constructed by Alice and Bob satisfies the conditions required by \Cref{prop:sparse-split-strong}. More formally, we have
\begin{claim}
\label{clm:random-bipartite-conditions}
With probability at least $\frac{99}{100}$, the graph $G$ created by Alice and Bob satisfies the conditions prescribed in \Cref{prop:sparse-split-strong} for sufficiently large $N$.
\end{claim}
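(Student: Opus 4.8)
The plan is to check, one hypothesis at a time, that the graph $G$ assembled by the reduction \texttt{PORT} meets each of the requirements~\ref{line:sparse-u-v-edge}--\ref{line:sparse-s3-s4} of~\Cref{prop:sparse-split-strong}, isolating the parts that hold deterministically from the handful of parts that depend on Alice's random bipartite graph and can be controlled by Chernoff bounds and a union bound. First I would record the structure of the joint graph: the four sets $S_1,S_2,S_3,S_4$ all have size $s := 4N$; there are no multi-edges, since Bob's clique edges lie strictly inside the $S_i$'s while Alice's bipartite edges only join $\{u\}=S_1\cap L$, $\{v\}=S_2\cap R$, $L\setminus\{u\}\subseteq S_3$ and $R\setminus\{v\}\subseteq S_4$; and the only non-empty cross-clique edge sets are $E(S_1,S_2)=E\cap\{(u,v)\}$, $E(S_1,S_4)=\{\text{edges of }u\text{ into }R\setminus\{v\}\}$, $E(S_2,S_3)=\{\text{edges of }v\text{ into }L\setminus\{u\}\}$ and $E(S_3,S_4)=\{\text{Alice's edges between }L\setminus\{u\}\text{ and }R\setminus\{v\}\}$; every other pair among $S_1,\dots,S_4$ has no edges between them because the vertices added in $\tilde S_1,\dots,\tilde S_4$ are incident only to clique edges.

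Several of the required properties are then immediate. Hypothesis~\ref{line:single-v-out} holds because the only vertex of $S_1$ (resp.\ $S_2$) with a neighbour outside the set is $u$ (resp.\ $v$); the ``product'' part of~\ref{line:sparse-s3-s4}, $\card{E(S_3',S_4')}\le\card{S_3'}\cdot\card{S_4'}$, is trivial in a simple graph; and $\card{E(S_1,S_2)}\le 1$ holds by construction. What remains to establish is (i) the chain $\card{E(S_1,S_2)}+\card{E(S_1,S_4)}\le\card{E(S_2,S_3)}\le\card{E(S_3,S_4)}$ together with $1\le\card{E(S_1,S_4)}\le\card{E(S_2,S_3)}\le\tfrac38 s$ (the rest of~\ref{line:sparse-u-v-edge}), and (ii) the two-sided bound $\tfrac{1}{64}s^2\le\card{E(S_3,S_4)}\le\tfrac{3}{64}s^2$ from~\ref{line:sparse-s3-s4}.

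For (i) the governing identities are that $E(S_1,S_2)\cup E(S_1,S_4)$ is exactly the edge set at $u$ and $E(S_1,S_2)\cup E(S_2,S_3)$ is exactly the edge set at $v$ in Alice's graph, so $\card{E(S_1,S_2)}+\card{E(S_1,S_4)}=\deg(u)$ and $\card{E(S_2,S_3)}=\deg(v)-\card{E(S_1,S_2)}$. Since $\deg(u)$ and $\deg(v)$ are independent $\mathrm{Bin}(N,\tfrac12)$ random variables, their collision probability $\Pr[\deg(u)=\deg(v)]=\binom{2N}{N}/4^N=\Theta(1/\sqrt N)$ is $o(1)$; conditioning on $\deg(u)\ne\deg(v)$, we invoke the ``w.\,log.'' freedom in~\Cref{prop:sparse-split-strong} --- relabelling $u\leftrightarrow v$ interchanges $(S_1,S_3)$ with $(S_2,S_4)$ and hence $E(S_1,S_4)$ with $E(S_2,S_3)$ --- to assume $\deg(u)<\deg(v)$. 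Then $\deg(u)\le\deg(v)-1\le\deg(v)-\card{E(S_1,S_2)}$ gives both $\card{E(S_1,S_2)}+\card{E(S_1,S_4)}\le\card{E(S_2,S_3)}$ and $\card{E(S_1,S_4)}\le\card{E(S_2,S_3)}$; deterministically $\card{E(S_2,S_3)}\le N-1<\tfrac{3N}{2}=\tfrac38 s$ since $v$ has at most $\card{L\setminus\{u\}}=N-1$ neighbours in $S_3$; and $\card{E(S_1,S_4)}=\deg(u)-\card{E(S_1,S_2)}\ge 1$ fails only when $\deg(u)\le 1$, an event of probability $O(N2^{-N})$. For (ii), $\card{E(S_3,S_4)}\sim\mathrm{Bin}((N-1)^2,\tfrac12)$ has mean $(N-1)^2/2\sim N^2/2$, which lies with $\Theta(N^2)$ slack inside $[\tfrac{N^2}{4},\tfrac{3N^2}{4}]=[\tfrac{1}{64}s^2,\tfrac{3}{64}s^2]$, so a standard Chernoff bound places $\card{E(S_3,S_4)}$ in that window except with probability $e^{-\Theta(N^2)}$; this also yields $\card{E(S_3,S_4)}\ge N^2/4>N-1\ge\card{E(S_2,S_3)}$, closing the chain in (i). A union bound over the three bad events --- $\deg(u)=\deg(v)$ (probability $\Theta(1/\sqrt N)$), $\deg(u)\le 1$ (probability $O(N2^{-N})$), and $\card{E(S_3,S_4)}$ outside its window (probability $e^{-\Theta(N^2)}$) --- has probability $o(1)$, hence below $\tfrac{1}{100}$ for all sufficiently large $N$, which is exactly the claim.

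I expect the one genuinely delicate point to be the anti-concentration estimate $\Pr[\deg(u)=\deg(v)]=o(1)$: it is the only condition in the list that is neither deterministic nor protected by an exponentially small tail, and it is exactly what licenses the ``w.\,log.'' relabelling that forces $G$ into the regime $\card{E(S_1,S_2)}+\card{E(S_1,S_4)}\le\card{E(S_2,S_3)}$ assumed by~\Cref{prop:sparse-split-strong}. Everything else reduces to careful bookkeeping of which random edge set corresponds to which pair of cliques, plus two routine Chernoff bounds.
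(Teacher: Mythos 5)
Your proof is correct and follows the same overall strategy as the paper's (verify the deterministic conditions, apply Chernoff bounds to the random ones, union bound), but it is noticeably more thorough on two points the paper's proof glosses over. First, you observe that the upper bounds $\card{E(S_1,S_4)},\card{E(S_2,S_3)}\le \tfrac38 s$ are in fact deterministic (each is at most $N-1<\tfrac{3N}{2}$), so the paper's Chernoff bound for property (a) is not even needed there; and you verify the lower bound $\card{E(S_1,S_4)}\ge 1$, which the paper's proof omits entirely. Second, and more substantively, you correctly recognize that the ordering chain $\card{E(S_1,S_2)}+\card{E(S_1,S_4)}\le\card{E(S_2,S_3)}\le\card{E(S_3,S_4)}$ in \Cref{prop:sparse-split-strong} is \emph{not} purely a relabeling convention: the only symmetry available swaps $E(S_1,S_4)$ with $E(S_2,S_3)$, so when $(u,v)\in E$ and $\deg(u)=\deg(v)$ neither labeling satisfies the hypothesis, and one genuinely needs the anti-concentration estimate $\Pr[\deg(u)=\deg(v)]=\Theta(1/\sqrt N)$ that you supply. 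The paper's proof of \Cref{clm:random-bipartite-conditions} never checks this, so your argument actually patches a small gap. One minor inaccuracy: $\deg(u)$ and $\deg(v)$ are not independent $\mathrm{Bin}(N,\tfrac12)$ variables, since both include the indicator of the edge $(u,v)$; however, their difference cancels this shared term and equals the difference of two independent $\mathrm{Bin}(N-1,\tfrac12)$ variables, so your collision-probability bound stands.
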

\begin{proof}
We first verify property~\ref{line:single-v-out}. Note that by our construction, all the edges that go outside $S_{1}$ are indent on $u$, and all edges that go outside $S_{2}$ are indent on $v$. Therefore, property~\ref{line:single-v-out} holds deterministically. 

We now turn to properties~\ref{line:sparse-u-v-edge} and \ref{line:sparse-s3-s4}. For property~\ref{line:sparse-u-v-edge}, note that $E(S_{1}, S_{2})$ can only contain the (possible) edge $(u,v)$, which means $\card{E(S_{1}, S_{2})}\leq 1$ always holds. Let $s=4N$ as constructed by Alice and Bob. For both $E(S_{1}, S_{4})$ and $E(S_{2}, S_{3})$, the expectation of there size is
\begin{align*}
\Exp{\card{E(S_{1}, S_{4})}} = \Exp{\card{E(S_{2}, S_{3})}} = \frac{1}{2}\cdot (\frac{s}{4}-1) = \frac{s}{8}-\frac{1}{2}.
\end{align*}
Therefore, by a Chernoff bound argument, we can show that with probability at least $1-2^{-O(s)}$, there are $\card{E(S_{1}, S_{4})}\leq \frac{3}{8}\cdot s$ and $\card{E(S_{2}, S_{3})}\leq \frac{3}{8}\cdot s$. With a sufficiently large $s$, this probability is at least $\frac{199}{200}$.

Finally, for property~\ref{line:sparse-s3-s4}, note that the expected number of edges between $S_{3}$ and $S_{4}$ is
\begin{align*}
\expect{\card{E(S_{3}, S_{4})}} = \frac{1}{2}\cdot (\frac{s}{4}-1)^2 = \frac{s^2}{32}-\frac{s}{4}+\frac{1}{2}.
\end{align*}
Hence, by a Chernoff bound argument, with probability at least $1-2^{-O(s)}$, there is $\frac{1}{64}\cdot s^2\leq \card{E(S_{3}, S_{4})}\leq \frac{3}{64}\cdot s^2$. The probability is at least $\frac{199}{200}$ for sufficiently large $s$. Furthermore, since the graph is simple, the second statement of property~\ref{line:sparse-s3-s4} trivially holds.

A union bound over the failure probability of the above events gives us the desired conclusion.
\end{proof}

With \Cref{prop:sparse-split-strong} and \Cref{clm:random-bipartite-conditions}, we can establish the correctness of \texttt{PROT} as follows.
\begin{lemma}
\label{lem:exact-cost-reduction}
\texttt{PROT} solves \Cref{pbm:rand-graph-edge-exist} correctly with probability at least $\frac{49}{50}$.
\end{lemma}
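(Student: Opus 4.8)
The plan is to combine the structural characterization of~\Cref{prop:sparse-split-strong} with the probabilistic guarantee of~\Cref{clm:random-bipartite-conditions} to compute the exact optimal HC cost in both cases (edge $(u,v)$ present or absent) and verify that the two values are precisely the two expressions in the output rule of \texttt{PROT}. First I would condition on the event $\event$ that the graph $G$ constructed by Alice and Bob satisfies all three properties required by~\Cref{prop:sparse-split-strong}; by~\Cref{clm:random-bipartite-conditions} this event has probability at least $\tfrac{99}{100}$. Note that the only feature of $G$ distinguishing the two cases of~\Cref{pbm:rand-graph-edge-exist} is whether the single possible edge between $S_1$ and $S_2$ (namely $(u,v)$) is present, and in either case property~\ref{line:sparse-u-v-edge} holds since $\card{E(S_1,S_2)}\le 1$ always; so conditioned on $\event$ we may apply~\Cref{prop:sparse-split-strong} to $G$ regardless of the case.

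Next I would plug the forced split pattern from~\Cref{prop:sparse-split-strong} into the cost formula~\Cref{eq:hc-cost-cut}. The optimal tree first peels off $S_1$ (a cut of weight $\card{E(S_1,S_2)}+\card{E(S_1,S_4)}$ against the remaining $3s=12N$ vertices, with $\card{A\cup B}=16N$), then peels off $S_2$ (cut weight $\card{E(S_2,S_3)}$, since the $(u,v)$ edge — if present — was already charged at the first split; $\card{A\cup B}=12N$), then splits $S_3$ from $S_4$ (cut weight $\card{E(S_3,S_4)}$, $\card{A\cup B}=8N$), and finally each of the four cliques $S_i$ (each on $s=4N$ vertices) is hierarchically clustered internally, each contributing $\OPT(K_{4N})$. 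Using the standard fact that the optimal HC cost of the complete graph $K_p$ is $\tfrac13(p^3-p)$ (which follows from balanced splitting as in~\Cref{lem:path}, and is implicit in~\cite{Dasgupta16}), the four cliques together contribute $4\cdot\tfrac13((4N)^3-4N)=\tfrac43((4N)^3-4N)$. The cross-edge contributions rearrange: $\card{E(S_1,S_4)}$ is charged at level $16N$ (peeling $S_1$), $\card{E(S_2,S_3)}$ at level $12N$ (peeling $S_2$), $\card{E(S_3,S_4)}$ at level $8N$; and $\card{E(S_1,S_2)}$ at level $16N$ if the edge exists and $0$ otherwise. Here I would match these up with the degree identities: $\deg(u)$ counts exactly the edges of $S_1$ leaving $S_1$ (all incident to $u$ by property~\ref{line:single-v-out}), i.e. $\deg(u)=\card{E(S_1,S_2)}+\card{E(S_1,S_4)}$, and similarly $\deg(v)=\card{E(S_1,S_2)}+\card{E(S_2,S_3)}$, and $\sum_{w\ne u,v}\deg(w)=2\card{E(S_3,S_4)}$ since the only other cross edges are those between $S_3$ and $S_4$. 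Substituting, the total cost becomes exactly $\deg(u)\cdot 16N+(\deg(v)-\mathbb{I}[(u,v)\in E])\cdot 12N+\tfrac12\sum_{w\ne u,v}\deg(w)\cdot 8N+\tfrac43((4N)^3-4N)$, which is precisely the first formula when $(u,v)\in E$ and the second when $(u,v)\notin E$ (after checking the w.l.o.g.\ assumption $\deg(u)<\deg(v)$ matches the ordering hypothesis $\card{E(S_1,S_2)}+\card{E(S_1,S_4)}\le\card{E(S_2,S_3)}$ up to the trivial $\pm 1$, which can be absorbed since Bob knows both degrees exactly).

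Finally I would bound the failure probability: \texttt{PROT} fails only if $\event$ fails (probability $\le\tfrac1{100}$) or \texttt{ALG} fails on the constructed instance (probability $\le\tfrac1{100}$ by hypothesis), so by a union bound \texttt{PROT} succeeds with probability at least $1-\tfrac1{100}-\tfrac1{100}\ge\tfrac{49}{50}$, and on the success event the returned answer correctly reports whether $(u,v)\in E$. The main obstacle I anticipate is the bookkeeping in the cost computation — in particular, making sure the $(u,v)$ edge is charged at exactly one level (the first split, where it contributes $1\cdot 16N$) and that the difference between the two target formulas is exactly $12N$, which corresponds to $\deg(v)$ versus $\deg(v)-1$; a careful accounting of which edge appears in $\delta(A,B)$ at each of the three top splits, together with correctly invoking the $K_p$ formula for the internal clique costs, is the crux. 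Everything else is a union bound.
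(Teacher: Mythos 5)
Your proposal follows the paper's proof of this lemma exactly: condition on the event of \Cref{clm:random-bipartite-conditions} (probability at least $99/100$), invoke \Cref{prop:sparse-split-strong} to pin down the split order of the optimal tree, read off the resulting cost in the two cases, and union-bound over the failure of that event and the failure of \texttt{ALG}. The paper's own write-up is in fact terser than yours: it does not carry out the degree bookkeeping at all, merely asserting that the remaining splits ``confirm to the value described in the reduction.''

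One concrete slip in the part you correctly identify as the crux: the identity $\sum_{w \neq u,v} \deg(w) = 2\card{E(S_3,S_4)}$ is false. The vertices $w \neq u,v$ among the first $2N$ are exactly $L\setminus\{u\} \subseteq S_3$ and $R\setminus\{v\} \subseteq S_4$, and their degrees in Alice's bipartite graph also count the edges they share with $u$ and with $v$; the correct identity is $\sum_{w\neq u,v}\deg(w) = 2\card{E(S_3,S_4)} + \card{E(S_1,S_4)} + \card{E(S_2,S_3)}$. Consequently the term $\frac12\sum_{w\neq u,v}\deg(w)\cdot 8N$ overshoots $\card{E(S_3,S_4)}\cdot 8N$ by $\paren{\card{E(S_1,S_4)}+\card{E(S_2,S_3)}}\cdot 4N$, and your claimed exact match with the two displayed formulas in \texttt{PROT} does not go through as written (the same discrepancy is present in the paper's stated formulas). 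This does not threaten the lemma in substance --- Bob knows $\deg(u)$ and $\deg(v)$ exactly, the extra term is determined by them up to the single-edge ambiguity, and the \textbf{Yes} and \textbf{No} costs still differ by a fixed, Bob-computable amount --- but a fully rigorous version needs the corrected identity and correspondingly adjusted output thresholds.
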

\begin{proof}
Conditioning on the event of \Cref{clm:random-bipartite-conditions}, the optimal hierarchical clustering tree for $G$ follows the pattern prescribed by \Cref{prop:sparse-split-strong}. Therefore, if $(u,v)\in E$, which means $E(S_{1}, S_{2})=1$, the optimal tree will first split all the edges on $u$ (which cost $\deg(u)$ edges), then split the remaining edges on $v$ (which cost $\deg(v)-1$ edges). On the other hand, if $(u,v)\not\in E$, which means $E(S_{1}, S_{2})=0$, the optimal tree will first split all the edges on $u$ (which cost $\deg(u)$ edges), then split the all the edges on $v$ (which cost $\deg(v)$ edges). The remaining part of the splits are the same, and it always confirms to the value described in the reduction. Hence, conditioning on the success of \texttt{ALG}, \texttt{PROT} can correctly distinguish if the edge $(u,v)$ exist.

The failure probability for \texttt{PROT} is at most the union bound over the failure probability of the event of \Cref{clm:random-bipartite-conditions} and the failure probability of \texttt{ALG}, which is at most $\frac{1}{50}$.
\end{proof}

\begin{proof}[Proof of \Cref{thm:exact-HC-strong}]
Since \texttt{PROT} solves \Cref{pbm:rand-graph-edge-exist} with probability at least $\frac{49}{50}$, by \Cref{lem:rand-graph-edge-com}, \texttt{PROT} must send $\Omega(N^2)$ bits. Furthermore, by the reduction, we observe that $n=16N$. Hence, the message of Alice must be of size at least $\Omega((n/16)^2)=\Omega(n^2)$, which implies the memory of such \texttt{ALG} has to be $\Omega(n^2)$.
\end{proof}

The rest of our task is to prove \Cref{prop:sparse-split-strong}.

\subsubsection*{Step 3: Proof of \Cref{prop:sparse-split-strong}}
The proof of \Cref{prop:sparse-split-strong} shares a similar idea to the proof of \Cref{prop:sparse-split-weak}, albeit the process becomes much more involved. On the high-level, we prove this result in the following steps:
\begin{itemize}
\item We first show that conditioning the optimal clustering tree first split the edge \emph{between} cliques, then the optimal strategy is to follow the splits in \Cref{prop:sparse-split-strong}. This reduces our task to proving the optimal tree always first splits the edges between the cliques.
\item The desired statement now is very similar to \Cref{prop:sparse-split-weak}. However, we need more care in this proof: since a clustering tree splits multiple cliques \emph{in order}, directly applying the inductive proof as in \Cref{prop:sparse-split-weak} will create too many cases to handle. Therefore, we instead establish our argument in two steps. The first step is to show that conditioning on the inductive hypothesis, a optimal hierarchical clustering tree will \emph{not} start the edge cuts from $S_{1}$ or $S_{2}$. This step relies on the fact that there is only a single vertex inside $S_{1}$ and $S_{2}$ that connects to vertices outside, and any optimal tree that first splits the clique edges inside has to \emph{entirely} cut the clique edges. We show that this leads to sub-optimal costs.
\item The only concern now is the hierarchical clustering tree may start split from the edges inside $S_{3}$ or $S_{4}$. Since the pattern of split is now controlled, we can employ an inductive argument similar to the proof of \Cref{prop:sparse-split-weak}, and show that the graph will \emph{not} start split from edges inside $S_{1}$ and $S_{2}$.
\item Finally, we still have to control the behavior of the hierarchical clustering tree \emph{after} the first cut. This part is straightforward: for the subgraph with $3$ cliques, we can repeat the argument for $4$ cliques. And after we get the subgraph of $2$ cliques, we can employ \Cref{prop:sparse-split-weak} to get the desired split pattern.
\end{itemize}

We now formalize the above intuitions. We start with introducing the lemma that controls the behavior of the clustering tree if we restrict the cut to first split edges between cliques. 
\begin{lemma}
\label{lem:cut-sparsest-connect}
Let $G$ be a graph as prescribed in \Cref{prop:sparse-split-strong}. For any clustering tree $\TT$, if its first $2$ cuts are restricted to the edges among $E(S_{1}, S_{2})\cup E(S_{2}, S_{3}) \cup E(S_{1}, S_{4})\cup E(S_{3}, S_{4})$, then the optimal cost is induced by the following order: first cut $E(S_{1},S_{4})\cup E(S_{1}, S_{2})$, then cut $E(S_{2}, S_{3})$. 
\end{lemma}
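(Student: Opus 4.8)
The plan is to compute, in closed form, the cost of any hierarchical clustering tree whose first two cuts are restricted to the inter-clique edges, and then show that among the three possible orders in which the four ``blocks'' $S_1,\dots,S_4$ can be peeled off, the order ``first cut $E(S_1,S_4)\cup E(S_1,S_2)$ (peeling off $S_1$), then cut $E(S_2,S_3)$ (peeling off $S_2$), leaving $S_3\cup S_4$ joined by $E(S_3,S_4)$'' is the cheapest. Because the first two cuts only touch edges between cliques, each such tree must, after two steps, have isolated two of the cliques and left the other two joined by whatever inter-clique edge set connects them; the cliques themselves are then split by their own induced subtrees, contributing a fixed cost $\tfrac43\big((4N)^3-4N\big)$ (up to lower-order terms) that is the \emph{same} for every order and hence irrelevant to the comparison. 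So I first write $\cost_G(\TT)$ using~\Cref{eq:hc-cost-cut} as a sum over the two top-level cuts plus this common clique-internal term, isolating the part that depends on the peeling order.

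Next I enumerate the candidate orders. Up to the symmetry of the ``cycle'' $S_1\!-\!S_2\!-\!S_3\!-\!S_4\!-\!S_1$ of edge sets, a tree obtained by two inter-clique cuts either (i) peels off one clique and then an adjacent one, leaving the remaining two joined by a single edge set, or (ii) makes a first cut that splits the vertex set into two halves each consisting of two adjacent cliques (e.g. $S_1\cup S_2$ versus $S_3\cup S_4$), and then cuts each half. For each order I compute the order-dependent cost: the first cut contributes (number of edges crossing it) $\times\, n$, and the second cut contributes (number of edges crossing it) $\times\, \tfrac34 n$ or $\times\, \tfrac{n}{2}$ depending on the residual vertex count, using that all four cliques have equal size $s$ so $n=4s$. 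Using the orderings assumed in~\Cref{prop:sparse-split-strong}, namely $\card{E(S_1,S_2)}+\card{E(S_1,S_4)}\le \card{E(S_2,S_3)}\le \card{E(S_3,S_4)}$, together with property~\ref{line:sparse-u-v-edge} ($\card{E(S_1,S_2)}\le 1$ and $\card{E(S_1,S_4)},\card{E(S_2,S_3)}\le \tfrac38 s$) and property~\ref{line:sparse-s3-s4} ($\card{E(S_3,S_4)}\ge \tfrac1{64}s^2$), I argue that the target order strictly minimizes this order-dependent cost: cutting the small sets $E(S_1,S_4)\cup E(S_1,S_2)$ first and $E(S_2,S_3)$ second, and never paying the large multiplier $n$ on the big set $E(S_3,S_4)$ (which instead sits inside the last, cheapest residual cluster of size $2s$), beats every alternative because every alternative either multiplies one of $E(S_2,S_3)$ or $E(S_3,S_4)$ by a larger factor, or (in case (ii)) pays $n$ on a larger union of edge sets at the first cut.

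The computation reduces to a handful of inequalities of the form ``$a\cdot n + b\cdot \tfrac34 n \le a'\cdot n + b'\cdot \tfrac34 n + (\text{savings inside residual cluster})$'', and the main obstacle is bookkeeping: making sure I account correctly for the edges of a given set depending on which residual cluster they land in, and that the $\card{A\cup B}$ multipliers in~\Cref{eq:hc-cost-cut} are the \emph{current} cluster sizes ($n$, then $3s$ or $2s$), not the original $n$. The quantitative slack is comfortable — the quadratic lower bound $\card{E(S_3,S_4)}=\Omega(s^2)$ dominates the linear bounds $O(s)$ on the other sets by a full factor of $s$ — so once the cost expressions are written out the inequalities hold with room to spare; the care needed is purely in setting up those expressions. (I note that I do \emph{not} yet need to rule out trees whose first cuts go \emph{inside} a clique; that is handled in the later steps of the proof of~\Cref{prop:sparse-split-strong}, and here the ``restricted'' hypothesis is given.)
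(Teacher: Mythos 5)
Your proposal is correct and follows essentially the same route as the paper: express the cost of any such tree as a weighted sum of the inter-clique edge sets with multipliers $4s$, $3s$ (or $2s$), $2s$ determined by the peeling order, plus a common clique-internal term, and then compare the finitely many orders using the assumed ordering $\card{E(S_1,S_2)}+\card{E(S_1,S_4)}\le \card{E(S_2,S_3)}\le \card{E(S_3,S_4)}$. Two small points: the paper first invokes \Cref{prop:sparse-split-weak} to pin down the \emph{third} cut (the hypothesis only restricts the first two, so the optimal completion of the last joined pair of cliques must be argued rather than assumed), and the comparison with the split $S_1\cup S_2$ versus $S_3\cup S_4$ is tight rather than slack --- it reduces to $s\,(2\card{E(S_1,S_2)}-\card{E(S_2,S_3)})\le 0$, which relies on $\card{E(S_1,S_2)}+\card{E(S_1,S_4)}\le\card{E(S_2,S_3)}$ and can hold with equality, so not every required inequality has ``room to spare.''
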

\begin{proof}
Note that if the first two cuts of $\TT$ is restricted to edges among  $E(S_{1}, S_{2})\cup E(S_{2}, S_{3}) \cup E(S_{1}, S_{4})\cup E(S_{3}, S_{4})$, then the third cut should also be among $E(S_{1}, S_{2})\cup E(S_{2}, S_{3}) \cup E(S_{1}, S_{4})\cup E(S_{3}, S_{4})$ by \Cref{prop:sparse-split-weak}. Therefore, the cost of such hierarchical clustering trees is can be characterized as
\begin{align*}
\mathcal{C} = \card{E_{1}}\cdot 4s + \card{E_{2}}\cdot 3s +\card{E_{3}}\cdot 2s + \frac{4}{3}\cdot (s^3-s),
\end{align*}
where $E_{1}$, $E_{2}$ and $E_{3}$ are the set of edges to be split in the first, the second, and the third cuts that separates the graph into disconnected components.  As a result, it is easy to observe that the minimizer of the cost is attained by always splitting the edges with smaller weights. With the graph described in \Cref{prop:sparse-split-strong}, it means to first split $E(S_{1},S_{4})\cup E(S_{1}, S_{2})$, then split $(S_{2}, S_{3})$.
\end{proof}

We now proceed to the next step, which aims to show that the cuts never start with a cut that splits the clique edges inside $S_{1}$ or $S_{2}$. More formally, we have

\begin{lemma}
\label{lem:s1-s2-improp}
Let $G$ be a graph as prescribed in \Cref{prop:sparse-split-strong}, and $\TT$ be the optimal clustering tree of $G$. Suppose for such a graph $G$ with sizes less than $4s$ (the sizes of $S_{i}$'s are not necessarily equal), the optimal clustering tree always restrict the first 2 cuts among $E(S_{1}, S_{2})\cup E(S_{2}, S_{3})\cup E(S_{1}, S_{4})\cup E(S_{3}, S_{4})$. Then, for the first cut $G\rightarrow(A,B)$ of $\TT$, there is either $S_{1}\subseteq A$ or $S_{1}\subseteq B$. The same statement holds for $S_{2}$. 

Furthermore, if we remove $S_{1}$ and all edges indent on $S_{1}$ and obtain an induced subgraph $G'$, and let $\TT'$ be the optimal tree on $G'$. Then, for the first cut $G'\rightarrow(A',B')$ of $\TT'$, there is either $S_{2}\subseteq A'$ or $S_{2}\subseteq B'$.
\end{lemma}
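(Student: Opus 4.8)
The plan is an \emph{uncrossing} (exchange) argument at the root of the optimal tree. Assume toward a contradiction that the first cut $G\to(A,B)$ of the optimal tree $\TT$ (which we may take binary by \Cref{obs:opt-binary}) splits $S_1$, and let $u$ be the unique vertex of $S_1$ with neighbours outside $S_1$ (property~\ref{line:single-v-out}); swapping the roles of $A$ and $B$ if needed, assume $u\in A$. Put $a:=\card{S_1\cap A}\ge 1$ and $b:=\card{S_1\cap B}\ge 1$, so $a+b=s$. The crucial consequence of property~\ref{line:single-v-out} is that $S_1\cap B\subseteq S_1\setminus\{u\}$ has \emph{no} edge leaving $S_1$; hence in the induced graph $G_B$ the set $S_1\cap B$ is a connected component inducing the clique $K_b$ (disjoint from $G_{B\setminus S_1}$), and moreover $w_G(A,B)=ab+w_G(A,B\setminus S_1)$ — the $ab$ clique edges of $S_1$ crossing the cut, and nothing else new.

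First I would lower bound $\cost_G(\TT)$ by separating off the root level (via \Cref{eq:hc-cost-cut}) and applying \Cref{obs:partition} on each side, namely $\OPT(G_A)\ge\OPT(K_a)+\OPT(G_{A\setminus S_1})$ and $\OPT(G_B)\ge\OPT(K_b)+\OPT(G_{B\setminus S_1})$, giving
\[
\cost_G(\TT)\ \ge\ \bigl(ab+w_G(A,B\setminus S_1)\bigr)\cdot 4s\ +\ \OPT(K_a)+\OPT(K_b)+\OPT(G_{A\setminus S_1})+\OPT(G_{B\setminus S_1}).
\]
Then I would exhibit the competitor $\TT'$ whose first cut is $\bigl(A\cup(S_1\cap B),\,B\setminus S_1\bigr)$ — which, since $S_1\cap B$ has no edge to $B\setminus S_1$, has root weight exactly $w_G(A,B\setminus S_1)$ — using on the $(B\setminus S_1)$-side a tree optimal for $G_{B\setminus S_1}$, and on the $\bigl(A\cup(S_1\cap B)\bigr)$-side a tree that first peels off the now-reunited clique $S_1$ (cut weight $d_1:=\card{E(S_1,S_2)}+\card{E(S_1,S_4)}\le 1+\tfrac38 s$ by property~\ref{line:sparse-u-v-edge}, against a factor $\le 4s$) and then clusters $K_s$ and $G_{A\setminus S_1}$ optimally. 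This yields $\cost_G(\TT')\le w_G(A,B\setminus S_1)\cdot 4s + 4d_1 s + \OPT(K_s)+\OPT(G_{A\setminus S_1})+\OPT(G_{B\setminus S_1})$. (The degenerate cases where $A\setminus S_1=\emptyset$ or $B\setminus S_1=\emptyset$ are handled by the even simpler competitor that peels all of $S_1$ off at the root, producing the same final inequality.)

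Subtracting the two bounds — the $w_G(A,B\setminus S_1)\cdot 4s$ term and the $\OPT(G_{A\setminus S_1})+\OPT(G_{B\setminus S_1})$ terms cancel — and using the standard clique cost $\OPT(K_m)=\tfrac13(m^3-m)$, so that $\OPT(K_s)-\OPT(K_a)-\OPT(K_b)=\tfrac13(s^3-a^3-b^3)=abs$ since $a+b=s$, I obtain, using $ab\ge s-1$,
\[
\cost_G(\TT)-\cost_G(\TT')\ \ge\ 4abs-abs-4d_1 s\ =\ 3abs-4d_1 s\ \ge\ 3s(s-1)-4s\bigl(1+\tfrac38 s\bigr)\ =\ \tfrac32 s^2-7s\ >\ 0
\]
for all $s$ large enough (as holds in the reduction, where $s=4N$). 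This contradicts optimality of $\TT$, so its first cut does not split $S_1$; the argument for $S_2$ is verbatim, its unique external vertex having external degree $\card{E(S_1,S_2)}+\card{E(S_2,S_3)}\le 1+\tfrac38 s$. For the ``furthermore'' part, $G'=G\setminus S_1$ is the three-clique path $S_2-S_3-S_4$ on $3s$ vertices whose $S_2$-side external vertex now has external degree $\card{E(S_2,S_3)}\le\tfrac38 s$; re-running the identical exchange inside $G'$ with global size $3s$ in place of $4s$ gives surplus at least $2s(s-1)-\tfrac98 s^2=\tfrac78 s^2-2s>0$, so the optimal tree on $G'$ does not split $S_2$ at the root (alternatively, $G'$ is an instance of size $<4s$, so the hypothesis of the lemma already forces its first two cuts to use only inter-clique edges, which never split a clique).

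The step I expect to be the main obstacle is the bookkeeping of the exchange: one must verify that reuniting $S_1$ introduces \emph{no} new crossing edge at the root — precisely the point where property~\ref{line:single-v-out} is indispensable — and then confirm that the $4abs$ charged to the $ab$ clique edges of $S_1$ that $\TT$ cuts at the very top strictly dominates the $abs$ one \emph{loses} by recombining the two clique pieces (via $\OPT(K_s)-\OPT(K_a)-\OPT(K_b)=abs$) together with the merely $O(s^2)$ cost of the new ``peel off $S_1$'' cut. Everything else is arithmetic with the clique cost formula.
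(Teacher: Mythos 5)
Your proposal is correct, and it proves the statement by the same underlying exchange idea as the paper — compare the tree that splits $S_1$ against one that keeps $S_1$ intact, gaining $4s\cdot ab$ at the root against a loss of roughly $abs$ from reuniting the two clique pieces plus an $O(s^2)$ term from the at most $1+\tfrac38 s$ external edges of $S_1$ — but the execution is genuinely different and, in two respects, cleaner. First, you uncross the root cut of the optimal tree directly (moving $S_1\cap B$ across, which by property (b) introduces no new crossing edge), whereas the paper first asserts that any optimal first split touching clique edges of $S_1$ must have one side entirely inside $S_1\setminus\{u\}$ and then compares two ``pattern-following'' trees; your route avoids having to justify that structural claim. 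Second, you do not use the lemma's inductive hypothesis at all: your lower bound on $\cost_G(\TT)$ needs only \Cref{obs:partition} and the exact identity $\OPT(K_s)-\OPT(K_a)-\OPT(K_b)=abs$, while the paper invokes the hypothesis to control the subtree below the split. The arithmetic checks out ($3abs-4d_1s\ge\tfrac32 s^2-7s>0$, valid for $s\ge 5$, which holds in the intended regime $s=4N$; the paper's own constants impose a comparable implicit lower bound on $s$), the degenerate cases $A\setminus S_1=\emptyset$ or $B\setminus S_1=\emptyset$ are correctly dispatched, and the ``furthermore'' part is handled soundly by rerunning the argument on the three-clique graph with $n'=3s$. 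The only cosmetic caveat is your parenthetical fallback for the ``furthermore'' part (treating $G'$ as a four-clique instance of size $<4s$) — $G'$ has only three cliques, so it does not literally fit the hypothesis — but your primary, direct rerun of the exchange makes that remark unnecessary.
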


\begin{proof}
We first observe that there is only one vertex with non-clique edges in $S_{1}$ (resp. $S_{2}$); therefore, if an optimal tree starts the first split that involves clique edges in $S_{1}$ and $S_{2}$, it must be \emph{entirely} inside the clique, as the optimal tree never splits the graph into more than two disconnected components. The same holds for $S_{2}$. 

We now show that restricting the first cut to clique edges inside $S_{1}$ or $S_{2}$ is sub-optimal. To see this, let a tree $\TT'$ be a tree that first splits clique edges in $S_{1}$ to induce $S_{1}\rightarrow (S^{in}_{1}, S^{out}_{1})$. Based on the assumption, the optimal tree of the subgraph $G\setminus S^{out}_1$ will restrict its first two cuts among edges of $E(S^{in}_{1}, S_{2})\cup E(S_{2}, S_{3})\cup E(S^{in}_{1}, S_{4})\cup E(S_{3}, S_{4})$. As such, comparing with an optimal tree $\TT$ that restrict its first two splits among the edges of $E(S_{1}, S_{2})\cup E(S_{2}, S_{3})\cup E(S_{1}, S_{4})\cup E(S_{3}, S_{4})$, the cost of $\TT'$ has the following changes:
\begin{itemize}
\item An extra cost of $\card{S^{in}_{1}}\cdot \card{S^{out}_{1}}\cdot 4s$ for the first cut.
\item A decreased cost of at most $3s$ multiplicative factor for each of the edges in $E(S_{1}, S_{2})\cup E(S_{2}, S_{3})\cup E(S_{1}, S_{4})$. Hence, this part of decreased cost is at most 
\begin{align*}
\card{E(S_{1}, S_{2})\cup E(S_{2}, S_{3})\cup E(S_{1}, S_{4})}\cdot 3s\leq \frac{9}{4}\cdot s^2.
\end{align*}
\item A decreased cost between splitting $S_{1}$ (where the cost is $\frac{1}{3}\cdot (s^3-s)$) and the cost of splitting $S^{in}_{1}$ and $S^{out}_{1}$ separately. This part is at most $\card{S^{in}_{1}}^2\cdot \card{S^{out}_{1}} + \card{S^{in}_{1}}\cdot \card{S^{out}_{1}}^2 =  \card{S^{in}_{1}}\cdot \card{S^{out}_{1}}\cdot s$.
\end{itemize}
As such, the gap between the costs of $\TT'$ and $\TT$ is at least 
\begin{align*}
\cost(\TT') - \cost(\TT) \geq \card{S^{in}_{1}}\cdot \card{S^{out}_{1}}\cdot 4s - \frac{9}{4}\cdot s^2 - \card{S^{in}_{1}}\cdot \card{S^{out}_{1}}\cdot s >0. \tag{$\card{S^{in}_{1}}\cdot \card{S^{out}_{1}} \geq s-1$}
\end{align*}
Therefore, such a $\TT'$ cannot be an optimal tree on $G$.
\end{proof}

We emphasize that the only condition for \Cref{lem:s1-s2-improp} to hold is the behavior of the graph in this family with size less than $4s$ , which is crucial in our inductive argument of the proof of \Cref{prop:sparse-split-strong}, established as follows.

\begin{lemma}
\label{lem:s3-s4-opt-cost}
Let $G$ be a graph as prescribed in \Cref{prop:sparse-split-strong} (the sizes of $S_{i}$'s are not necessarily equal), and let $\TT$ be the optimal tree whose first 2 cuts are restricted to the edges among $E(S_{1}, S_{2})\cup E(S_{2}, S_{3}) \cup E(S_{1}, S_{4})\cup E(S_{3}, S_{4})$, and let $\TT'$ be the optimal tree that whose first cut involved the clique edges. Then, we have
$
\cost(\TT)<\cost(\TT').
$
\end{lemma}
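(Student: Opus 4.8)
The plan is to prove the statement by strong induction on the total number of vertices $n=\card{S_1}+\card{S_2}+\card{S_3}+\card{S_4}$, working inside the overall induction of \Cref{prop:sparse-split-strong}; the base case of small $n$ is a finite check in the spirit of the base case of \Cref{prop:sparse-split-weak}. In the inductive step I may assume that every strictly smaller graph in this family is clustered optimally in the form described by \Cref{prop:sparse-split-strong} (so that, in particular, each clique $S_i$, once isolated, contributes cost $\tfrac13(\card{S_i}^3-\card{S_i})$, and that the hypothesis of \Cref{lem:s1-s2-improp} is met, so \Cref{lem:s1-s2-improp} applies to $G$).

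First I would write $\cost(\TT)$ in closed form. By \Cref{lem:cut-sparsest-connect}, the first two cuts of $\TT$ remove $E(S_1,S_4)\cup E(S_1,S_2)$ (isolating $S_1$) and then $E(S_2,S_3)$ (isolating $S_2$); the remaining two-clique graph on $S_3\cup S_4$ satisfies the hypothesis of \Cref{prop:sparse-split-weak} by property~\ref{line:sparse-s3-s4}, so its first cut removes $E(S_3,S_4)$; and each isolated clique is then split internally by the inductive hypothesis. Hence
\[
\cost(\TT)=\card{E(S_1,S_4)\cup E(S_1,S_2)}\cdot n+\card{E(S_2,S_3)}\cdot(n-\card{S_1})+\card{E(S_3,S_4)}\cdot(n-\card{S_1}-\card{S_2})+\tfrac13\sum_{i=1}^{4}\big(\card{S_i}^3-\card{S_i}\big).
\]
Next, consider $\TT'$, whose first cut $G\to(A,B)$ involves clique edges. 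If this cut splits $S_1$ or $S_2$, then exactly the computation in the proof of \Cref{lem:s1-s2-improp} (which only uses that each of $S_1,S_2$ has a single externally-incident vertex, property~\ref{line:single-v-out}) shows $\cost(\TT')>\cost(\TT)$, and we are done; so we may assume the first cut of $\TT'$ keeps $S_1$ and $S_2$ each entirely on one side and therefore splits $S_3$ and/or $S_4$ into two nonempty pieces.

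The heart of the argument is then a case analysis on how $(A,B)$ distributes $S_1,S_2$ and cuts $S_3,S_4$, comparing $\cost(\TT')$ term-by-term against the expression for $\cost(\TT)$ above, where the two sub-instances produced by the first cut are again graphs in this family (with smaller parameters) and so, by the inductive hypothesis, are clustered in the canonical form. The key point is that if the first cut splits a clique $S_i$ into nonempty $S_i^{\mathrm{in}},S_i^{\mathrm{out}}$, it severs at least $\card{S_i^{\mathrm{in}}}\cdot\card{S_i^{\mathrm{out}}}\ge\card{S_i}-1$ clique edges, each paying the full multiplier $n$; this overhead dominates (i) any saving $\TT'$ might gain by removing some cross-clique edges at a higher level than in $\TT$, since those edge sets are tiny by properties~\ref{line:sparse-u-v-edge} and~\ref{line:sparse-s3-s4}, and (ii) the cubic discrepancy between splitting $S_i$ as a single clique versus as $S_i^{\mathrm{in}}$ plus $S_i^{\mathrm{out}}$, which is at most $\card{S_i^{\mathrm{in}}}\cdot\card{S_i^{\mathrm{out}}}\cdot\card{S_i}$, exactly as bounded in the proof of \Cref{lem:s1-s2-improp}. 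Adding these inequalities over all cliques that $(A,B)$ breaks yields $\cost(\TT')-\cost(\TT)>0$ in every case.

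I expect the bookkeeping of this final case analysis to be the main obstacle: when the first cut of $\TT'$ breaks both $S_3$ and $S_4$ while simultaneously routing $S_1$ and $S_2$ to the two sides in one of several configurations, one must attribute the weights of $E(S_1,S_2),E(S_1,S_4),E(S_2,S_3),E(S_3,S_4)$ to the correct levels of $\TT'$ and check in each configuration that the clique-splitting overhead wins — analogous to, but strictly messier than, the telescoping computation in the proof of \Cref{prop:sparse-split-weak}. Once the terms are lined up the inequalities are elementary, so the difficulty here is organizational rather than conceptual.
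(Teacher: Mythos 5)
Your plan follows the same route as the paper's proof: induction on the instance size, \Cref{lem:s1-s2-improp} to dispose of first cuts that break $S_1$ or $S_2$, and then a comparison in which the clique-splitting overhead $\card{S_i^{\mathrm{in}}}\cdot\card{S_i^{\mathrm{out}}}\cdot n$ is played off against the savings on cross-clique edges and the cubic discrepancy. The quantities you identify are exactly the ones the paper manipulates. However, two of the steps you treat as routine are not, and one of them is stated in a form that is actually false.

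First, you assert that the two sub-instances produced by the first cut of $\TT'$ "are again graphs in this family (with smaller parameters) and so, by the inductive hypothesis, are clustered in the canonical form." The defining conditions of \Cref{prop:sparse-split-strong} are not hereditary under an arbitrary split of $S_3$ and $S_4$: for instance, after splitting $S_3$ into $S_3^{\mathrm{in}},S_3^{\mathrm{out}}$ there is no guarantee that $\card{E(S_2,S_3^{\mathrm{in}})}\le\frac38\card{S_3^{\mathrm{in}}}$, nor that the ordering assumption $\card{E(S_1,S_4^{\mathrm{in}})}+\card{E(S_2,S_3^{\mathrm{in}})}\le\card{E(S_3^{\mathrm{in}},S_4^{\mathrm{in}})}$ survives. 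The paper is careful here: it applies the induction hypothesis only to conclude that the sub-instance does not begin by cutting inside $S_3^{\mathrm{in}}$ or $S_4^{\mathrm{in}}$, and then, because the order of the remaining between-clique cuts can no longer be pinned down, it abandons the exact closed form and lower-bounds $\cost(\TT')$ using the crude uniform multipliers ($2s$ for every edge of $E(S_1,S_2)\cup E(S_1,S_4)\cup E(S_2,S_3)$, and $\card{S_3^{\mathrm{in}}}+\card{S_4^{\mathrm{in}}}$, $\card{S_3^{\mathrm{out}}}+\card{S_4^{\mathrm{out}}}$ for $E^{\mathrm{in}}$, $E^{\mathrm{out}}$). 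Your term-by-term comparison against the exact formula for $\cost(\TT)$ would need the same retreat to a uniform lower bound, or it collapses in the configurations where the sub-instance leaves the family.

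Second, the lemma compares $\TT$, whose first \emph{two} cuts are restricted to between-clique edges, against $\TT'$; correspondingly, the induction hypothesis feeding \Cref{lem:s1-s2-improp} is about the first two cuts of smaller instances. Your plan only argues about the first cut. The paper's proof has a separate second half showing that, conditioned on the first cut being between cliques, the second cut is also between cliques (a fresh computation on the residual three-clique graph, using $\card{E(S_2,S_3)}\le\frac{3s}{8}$), and this part cannot be absorbed into "the sub-instances are clustered canonically by induction" for the same non-heredity reason as above. Without it the induction does not close.
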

\begin{proof}
We first prove that a tree $\TT$ whose \emph{first} cut is restricted to $E(S_{1}, S_{2})\cup E(S_{2}, S_{3}) \cup E(S_{1}, S_{4})\cup E(S_{3}, S_{4})$ induces a smaller optimal cost than a tree $\TT'$ that first splits edges inside cliques. We can prove this by induction. For the base case, consider all $S_{i}$ to be single vertices, and there is no clique edges. As such, the statement trivially holds. 

For the induction step, suppose the statement holds on such a $G$ with size less than $4s$ (and the sizes of the cliques are not necessarily equal). By \Cref{lem:s1-s2-improp}, we know that the first cut will \emph{not} start from edges inside $S_{1}$ or $S_{2}$. Now when $\card{S_{3}}=\card{S_{4}}=s$, if a clustering tree $\TT$ do \emph{not} first split from edges inside $S_{3}$ and $S_{4}$, the optimal cost is at most
\begin{align*}
\mathcal{C}_{1} = \paren{\card{E(S_{1}, S_{2})} + \card{E(S_{1}, S_{4})}}\cdot 4s + \card{E(S_{2}, S_{3})}\cdot 3s + \card{E(S_{2}, S_{3})}\cdot 2s + \frac{4}{3}\cdot (s^3-s).
\end{align*}

On the other hand, suppose a clustering tree $\TT'$ starts with splitting edges that involve the clique edges of $S_{3}$ and $S_{4}$. One can denote the components after the first split as follows:
\begin{itemize}
\item $S_{1} \cup S_{2} \cup S^{in}_{3} \cup S^{in}_{4}$: let $E^{in}(S_{3}, S_{4})$ be the set of edges between $S^{in}_{3}$ and $S^{in}_{4}$ after the split.
\item $S^{out}_{3} \cup S^{out}_{4}$: let $E^{out}(S_{3}, S_{4})$ be the set of edges edges between $S^{out}_{3}$ and $S^{out}_{4}$.
\item The set of edges $E^{cross}:=E^{cross}(S_{3}, S_{4})\cap E^{cross}(S_{1}, S_{4})\cup E^{cross}(S_{2}, S_{3})$ that split to separate $S^{in}_{3}$ from $S^{out}_{4}$ and to separate $S^{in}_{4}$ from $S^{out}_{3}$, plus the edges among $E(S_{1}, S_{4})$ that have one vertex in $S^{out}_{4}$ and the edges among $E(S_{2}, S_{3})$ that have one vertex in $S^{out}_{3}$.
\end{itemize}

We can apply the induction hypothesis such that the subgraphs of $S_{1} \cup S_{2} \cup S^{in}_{3} \cup S^{in}_{4}$ will \emph{not} start with cutting edges inside $S^{in}_{3}$ and $S^{in}_{4}$. Also, not that the optimal tree \emph{unconditionally} will \emph{not} cut the clique edges of $S_{1}$ and $S_{2}$. 

Assuming $\card{E(S_{1}, S^{in}_{4})}+\card{E(S_{2}, S^{in}_{3})}\leq \card{E(S^{in}_{3}, S^{in}_{4})}$, the order of split on the subgraph $S_{1} \cup S_{2} \cup S^{in}_{3} \cup S^{in}_{4}$ does \emph{not} change. As such, the optimal cost induced by this strategy is

\begin{align*}
\mathcal{C}_{2}  = & \paren{\card{S^{in}_{3}}\cdot \card{S^{out}_{3}} + \card{S^{in}_{4}}\cdot \card{S^{out}_{4}} + \card{E^{cross}(S_{3}, S_{4})} +\card{E^{cross}(S_{1}, S_{4})} + \card{E^{cross}(S_{2}, S_{3})}} \cdot 4s \\
& + (\card{E(S_{1}, S_{2})}+\card{E(S_{1},S_{4})}-\card{E^{cross}(S_{1}, S_{4})}) \cdot (4s-\card{S^{out}_{3}}-\card{S^{out}_{4}})\\
& + (\card{E(S_{2},S_{3})}-\card{E^{cross}(S_{2}, S_{3})}) \cdot (3s-\card{S^{out}_{3}}-\card{S^{out}_{4}}) \\
& + \card{E^{in}(S_{3}, S_{4})}\cdot(\card{S^{in}_{3}}+\card{S^{in}_{4}}) + \card{E^{out}(S_{3}, S_{4})}\cdot(\card{S^{out}_{3}}+\card{S^{out}_{4}})\\
&+\frac{2}{3}(s^3-s) + \frac{1}{3}(\card{S^{in}_{3}}^3+\card{S^{in}_{4}}^3+\card{S^{out}_{3}}^3+\card{S^{out}_{4}}^3-2s),
\end{align*}
such that the conditions prescribed in \Cref{prop:sparse-split-strong} are satisfied. A lower bound of $\mathcal{C}_{2}$ can be obtained by ignoring the higher cost of $E^{cross}(S_{1}, S_{4})$ and $E^{cross}(S_{2}, S_{3})$:
\begin{align*}
\mathcal{C}_{2} \geq & \paren{\card{S^{in}_{3}}\cdot \card{S^{out}_{3}} + \card{S^{in}_{4}}\cdot \card{S^{out}_{4}} + \card{E^{cross}(S_{3}, S_{4})}} \cdot 4s \\
& + (\card{E(S_{1}, S_{2})}+\card{E(S_{1},S_{4})}) \cdot (4s-\card{S^{out}_{3}}-\card{S^{out}_{4}})\\
& + (\card{E(S_{2},S_{3})}) \cdot (3s-\card{S^{out}_{3}}-\card{S^{out}_{4}}) \\
& + \card{E^{in}(S_{3}, S_{4})}\cdot(\card{S^{in}_{3}}+\card{S^{in}_{4}}) + \card{E^{out}(S_{3}, S_{4})}\cdot(\card{S^{out}_{3}}+\card{S^{out}_{4}})\\
&+\frac{2}{3}(s^3-s) + \frac{1}{3}(\card{S^{in}_{3}}^3+\card{S^{in}_{4}}^3+\card{S^{out}_{3}}^3+\card{S^{out}_{4}}^3-2s).
\end{align*}
Note that the above expressions are based on the assumption that $\card{E(S_{1}, S^{in}_{4})}+\card{E(S_{2}, S^{in}_{3})}\leq \card{E(S^{in}_{3}, S^{in}_{4})}$, and we now remove this assumption by using a uniform lower bound. Note that no matter how we switch the order of split, the edges $E(S_{1},S_{2})$, $E(S_{1},S_{4})$, and $E(S_{2},S_{3})$ have to pay a multiplicative factor of at least $2s$. Furthermore, the edges $E^{in}(S_{3}, S_{4})$ and $E^{out}(S_{3}, S_{4})$ have to pay the multiplicative factors of $(\card{S^{in}_{3}}+\card{S^{in}_{4}})$ and $(\card{S^{out}_{3}}+\card{S^{out}_{4}})$, respectively. Therefore, we can establish a lower bound for $\mathcal{C}_{2}$ regardless the order of split:
\begin{align*}
\mathcal{C}_{2} > & \paren{\card{S^{in}_{3}}\cdot \card{S^{out}_{3}} + \card{S^{in}_{4}}\cdot \card{S^{out}_{4}} + \card{E^{cross}(S_{3}, S_{4})}} \cdot 4s \\
& + (\card{E(S_{1}, S_{2})}+\card{E(S_{1},S_{4})}+\card{E(S_{2},S_{3})}) \cdot 2s \\
& + \card{E^{in}(S_{3}, S_{4})}\cdot(\card{S^{in}_{3}}+\card{S^{in}_{4}}) + \card{E^{out}(S_{3}, S_{4})}\cdot(\card{S^{out}_{3}}+\card{S^{out}_{4}})\\
&+\frac{2}{3}(s^3-s) + \frac{1}{3}(\card{S^{in}_{3}}^3+\card{S^{in}_{4}}^3+\card{S^{out}_{3}}^3+\card{S^{out}_{4}}^3-2s).
\end{align*}

By merging and canceling out different terms, we can show that
\begin{align*}
\mathcal{C}_{2} - \mathcal{C}_{1} > & \paren{\card{S^{in}_{3}}\cdot \card{S^{out}_{3}} + \card{S^{in}_{4}}\cdot \card{S^{out}_{4}}}\cdot 4s + \card{E^{cross}(S_{3},S_{4})}\cdot 3s \\
& - \paren{\card{E(S_{1}, S_{2})}+\card{E(S_{1},S_{4})} +\card{E(S_{2}, S_{3})}}\cdot 2s\\
& - \card{E^{in}(S_{3}, S_{4})}\cdot (\card{S^{out}_{3}}+\card{S^{out}_{4}}) - \card{E^{out}(S_{3}, S_{4})}\cdot (\card{S^{in}_{3}}+\card{S^{in}_{4}})\\
& -(\card{S^{in}_{3}}\cdot \card{S^{out}_{3}} + \card{S^{in}_{4}}\cdot \card{S^{out}_{4}})\cdot s,
\end{align*}
where the second line comes from the gap of the costs for the edges $E(S_{1}, S_{2})$, $E(S_{1}, S_{4})$ and $E(S_{2}, S_{3})$, the third line comes from the gap for the $\card{E^{in}(S_{3}, S_{4})}$ and $\card{E^{out}(S_{3}, S_{4})}$ edges, and the final line comes from the gap between $\frac{2}{3}(s^3-s)$ and $\frac{1}{3}(\card{S^{in}_{3}}^3+\card{S^{in}_{4}}^3+\card{S^{out}_{3}}^3+\card{S^{out}_{4}}^3-2s)$. We can upper bound the absolute value of the third line by
\begin{align*}
\card{E^{in}(S_{3}, S_{4})}\cdot & (\card{S^{out}_{3}}+\card{S^{out}_{4}}) + \card{E^{out}(S_{3}, S_{4})}\cdot (\card{S^{in}_{3}}+\card{S^{in}_{4}})\\
&\leq \card{S^{in}_{3}}\card{S^{in}_{4}}\cdot (\card{S^{out}_{3}}+\card{S^{out}_{4}}) + \card{S^{out}_{3}}\card{S^{out}_{4}}\cdot (\card{S^{in}_{3}}+\card{S^{in}_{4}}) \tag{since $\card{E^{in}(S_{3}, S_{4})}\leq \card{S^{in}_{3}}\card{S^{in}_{4}}$ and $\card{E^{out}(S_{3}, S_{4})}\leq\card{S^{out}_{3}}\card{S^{out}_{4}}$}\\
&= s\cdot \paren{\card{S^{in}_{3}}\cdot \card{S^{out}_{3}} + \card{S^{in}_{4}}\cdot \card{S^{out}_{4}}}.
\end{align*}
Therefore, the gap between the costs is at least 
\begin{align*}
\mathcal{C}_{2} - \mathcal{C}_{1} & > \paren{\card{S^{in}_{3}}\cdot \card{S^{out}_{3}} + \card{S^{in}_{4}}\cdot \card{S^{out}_{4}}}\cdot 2s - (\card{E(S_{1}, S_{2})}+\card{E(S_{1},S_{4})}+\card{E(S_{2}, S_{3})})\cdot 2s \\
&\geq \paren{\card{S^{in}_{3}}\cdot \card{S^{out}_{3}} + \card{S^{in}_{4}}\cdot \card{S^{out}_{4}}-\frac{3}{4}\cdot s}\cdot 2s \tag{$\card{E(S_{1}, S_{2})}+\card{E(S_{1},S_{4})}+\card{E(S_{2}, S_{3})}\leq \frac{3}{4}\cdot s$} \\
&\geq 2s^2 >0. \tag{since $\card{S^{in}_{3}}\cdot \card{S^{out}_{3}}\geq s-1$ and $\card{S^{in}_{4}}\cdot \card{S^{out}_{4}}\geq s-1$}
\end{align*}

That is, the quantity of $\mathcal{C}_{2} - \mathcal{C}_{1}$ is always \emph{positive}. Hence, a tree the pattern of $\TT$ is strictly better than following the pattern of $\TT'$, which means the optimal tree will not start the first split involving edges inside $S_{3}$ and $S_{4}$. Hence, the first cut of a optimal tree should be restricted to $E(S_{1}, S_{2})\cup E(S_{2}, S_{3}) \cup E(S_{1}, S_{4})\cup E(S_{3}, S_{4})$.

We then prove that, conditioning the first cut splits the edges among $E(S_{1}, S_{2})\cup E(S_{2}, S_{3}) \cup E(S_{1}, S_{4})\cup E(S_{3}, S_{4})$, the second cut is also restricted to the edges between cliques. Once again, let $\TT$ be the best possible tree that first splits the edges between cliques (and therefore edges among $E(S_{2}, S_{3})$), and let $\TT'$ be the best possible tree that first split edges inside $S_{3}$ and $S_{4}$. By the same induction argument, we show the second cut is not inside $S_{3}$ or $S_{4}$. Denote $\mathcal{C}_{1}$ as the cost of $\TT$ and $\mathcal{C}_{2}$ as the cost of $\TT'$, we have
\begin{align*}
\mathcal{C}_{2} - \mathcal{C}_{1} &\geq \paren{\card{S^{in}_{3}}\cdot \card{S^{out}_{3}} + \card{S^{in}_{4}}\cdot \card{S^{out}_{4}}}\cdot 2s -(2s-\card{S^{out}_{3}}-\card{S^{out}_{4}})\cdot(\card{S^{out}_{3}}\cdot \card{S^{out}_{4}}) \\
& \qquad - (2s-\card{S^{in}_{3}}-\card{S^{in}_{4}})\cdot(\card{S^{in}_{3}}\cdot \card{S^{in}_{4}}) - \card{E(S_{2}, S_{3})}\cdot 2s\\
&= \paren{\card{S^{in}_{3}}\cdot \card{S^{out}_{3}} + \card{S^{in}_{4}}\cdot \card{S^{out}_{4}}}\cdot s - \card{E(S_{2}, S_{3})}\cdot 2s\\
&\geq 2(s-1)\cdot s - \frac{3s}{8}\cdot 2s >0, \tag{$\card{S^{in}_{3}}\cdot \card{S^{out}_{3}}\geq s-1$, $\card{E(S_{2}, S_{3})}\leq \frac{3s}{8}$}
\end{align*}
where the equality is obtained by again using $(\card{S^{in}_{3}}-\card{S^{in}_{4}})\cdot(\card{S^{out}_{3}}\cdot \card{S^{out}_{4}}) + (\card{S^{out}_{3}}-\card{S^{out}_{4}})\cdot\card{S^{in}_{3}}\cdot (\card{S^{in}_{4}}) =  \paren{\card{S^{in}_{3}}\cdot \card{S^{out}_{3}} + \card{S^{in}_{4}}\cdot \card{S^{out}_{4}}}\cdot s$. As a result, the second cut among any optimal tree is restricted to the edges among $E(S_{1}, S_{2})\cup E(S_{2}, S_{3}) \cup E(S_{1}, S_{4})\cup E(S_{3}, S_{4})$.
\end{proof}

\begin{proof}[Proof of \Cref{prop:sparse-split-strong}]
By \Cref{lem:s1-s2-improp} and \Cref{lem:s3-s4-opt-cost}, we effectively rule out any optimal clustering tree $\TT$ that cuts clique edges in its first two splits. Therefore, by \Cref{lem:cut-sparsest-connect}, the optimal HC tree first separates $S_{1}$ from the rest of the graph and obtains $G'$, and then separate $S_{2}$ from the rest of the graph to obtain graph $G''$. Finally, by \Cref{prop:sparse-split-weak}, the optimal hierarchical clustering tree on $G''$ must first separate $S_{3}$ and $S_{4}$. As such, the behavior of the optimal HC tree is exactly as characterized in \Cref{prop:sparse-split-strong}.
\end{proof}

\subsection*{Acknowledgement} 
We thank Sanjeev Khanna for communicating their results in~\cite{AgarwalKLP22} to us and helpful conversations about their work and its connection to ours. 

\clearpage

\bibliographystyle{alpha}
\bibliography{general}

\clearpage

\appendix
\part*{Appendix}

\section{Standard Technical Tools}
\label{app:info-theoretic-facts}

\subsection{Concentration Inequalities}
\label{sub-app:concentration-inequ}
We now present the standard concentration inequalities used in our proofs. We start from the following standard variant of Chernoff-Hoeffding bound. 

\begin{proposition}[Chernoff-Hoeffding bound]\label{prop:chernoff}
	Let $X_1,\ldots,X_n$ be $n$ independent random variables with support in $[0,1]$. Define $X := \sum_{i=1}^{n} X_i$. Then, for every $\delta \in (0,1]$, there is 
	\begin{align*}
		\Pr\paren{\card{X - \expect{X}} > \delta\cdot \expect{X}} \leq 2 \cdot \exp\paren{-\frac{\delta^{2} \expect{X}}{3}}. 
	\end{align*}
\end{proposition}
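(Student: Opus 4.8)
The plan is to prove both tail bounds separately via the standard exponential-moment (Chernoff) method and then combine them by a union bound. Write $\mu := \E[X] = \sum_{i=1}^n \E[X_i]$. For the upper tail, fix $t > 0$ and apply Markov's inequality to $e^{tX}$:
\[
	\Pr\paren{X > (1+\delta)\mu} = \Pr\paren{e^{tX} > e^{t(1+\delta)\mu}} \leq e^{-t(1+\delta)\mu} \cdot \E\bracket{e^{tX}}.
\]
By independence, $\E[e^{tX}] = \prod_{i=1}^n \E[e^{tX_i}]$. The key per-variable estimate uses $X_i \in [0,1]$: since $x \mapsto e^{tx}$ is convex, on $[0,1]$ it lies below the chord, so $e^{tX_i} \leq 1 + (e^t - 1)X_i$, hence $\E[e^{tX_i}] \leq 1 + (e^t-1)\E[X_i] \leq \exp\paren{(e^t - 1)\E[X_i]}$. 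Multiplying over $i$ gives $\E[e^{tX}] \leq \exp\paren{(e^t-1)\mu}$, and therefore $\Pr(X > (1+\delta)\mu) \leq \exp\paren{((e^t - 1) - t(1+\delta))\mu}$. Choosing $t = \ln(1+\delta) > 0$ minimizes the exponent and yields the classical bound $\Pr(X > (1+\delta)\mu) \leq \paren{e^{\delta}/(1+\delta)^{1+\delta}}^{\mu}$.

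Next I would carry out the analogous computation for the lower tail with $t < 0$: the same convexity estimate gives $\Pr(X < (1-\delta)\mu) \leq \exp\paren{((e^{-s} - 1) + s(1-\delta))\mu}$ for $s = -t > 0$, and taking $s = -\ln(1-\delta)$ gives $\Pr(X < (1-\delta)\mu) \leq \paren{e^{-\delta}/(1-\delta)^{1-\delta}}^{\mu}$. Then it remains to show the two elementary scalar inequalities valid for $\delta \in (0,1]$:
\[
	\frac{e^{\delta}}{(1+\delta)^{1+\delta}} \leq e^{-\delta^2/3} \qquad\text{and}\qquad \frac{e^{-\delta}}{(1-\delta)^{1-\delta}} \leq e^{-\delta^2/2} \leq e^{-\delta^2/3}.
\]
Taking logarithms, the first is equivalent to $g(\delta) := \delta - (1+\delta)\ln(1+\delta) + \delta^2/3 \leq 0$, which I would verify by noting $g(0) = 0$ and analyzing $g'(\delta) = -\ln(1+\delta) + \tfrac{2}{3}\delta$; using $\ln(1+\delta) \geq \delta - \delta^2/2$ one gets $g'(\delta) \leq \delta^2/2 - \delta/3$, which is $\leq 0$ precisely on $[0, 2/3]$, so a small additional check on $[2/3, 1]$ (where $\ln(1+\delta) \geq \ln(5/3) > 2/3$ and the bound is easy to verify directly, or one uses the sharper $\ln(1+\delta)\ge \frac{2\delta}{2+\delta}$) closes the gap. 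The second inequality follows similarly from $(1-\delta)\ln(1-\delta) \geq -\delta + \delta^2/2$. Finally, combining: $\Pr(|X - \mu| > \delta\mu) \leq \Pr(X > (1+\delta)\mu) + \Pr(X < (1-\delta)\mu) \leq 2 e^{-\delta^2 \mu / 3}$, as claimed.

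I expect the main obstacle to be the elementary inequality $\delta - (1+\delta)\ln(1+\delta) \leq -\delta^2/3$ on all of $(0,1]$: the naive second-order logarithm expansion $\ln(1+\delta) \geq \delta - \delta^2/2$ is only strong enough to establish it on $[0, 2/3]$, so I would need either the tighter rational bound $\ln(1+\delta) \geq \frac{2\delta}{2+\delta}$ (which handles the whole interval cleanly, since then $\delta - (1+\delta)\ln(1+\delta) \leq \delta - \frac{2\delta(1+\delta)}{2+\delta} = \frac{-\delta^2}{2+\delta} \leq -\delta^2/3$ for $\delta\leq 1$), or a short case split near the endpoint. Everything else — the Markov step, the convexity chord estimate on $[0,1]$, the factorization by independence, and the optimization in $t$ — is routine, and the lower-tail side is strictly easier since $\delta^2/2 \geq \delta^2/3$ gives slack. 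Alternatively, if one prefers to avoid the endpoint subtlety entirely, I would cite that the bound $e^\delta/(1+\delta)^{1+\delta} \le e^{-\delta^2/3}$ for $\delta \in (0,1]$ is standard (e.g., Motwani–Raghavan), but I would include the $\frac{2\delta}{2+\delta}$ argument for a self-contained proof.
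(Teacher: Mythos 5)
Your proof is correct and complete: the Markov/exponential-moment step, the convexity chord bound $e^{tX_i}\leq 1+(e^t-1)X_i$, the optimization at $t=\ln(1+\delta)$, and in particular the rational bound $\ln(1+\delta)\geq \frac{2\delta}{2+\delta}$ (which gives $\delta-(1+\delta)\ln(1+\delta)\leq -\delta^2/(2+\delta)\leq -\delta^2/3$ on all of $(0,1]$) are all valid, and the lower tail and union bound go through as you describe. The paper itself states this proposition as a standard fact with no proof, so there is nothing to compare against; your argument is the standard textbook derivation with the constant $1/3$ handled correctly.
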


The standard Chernoff bound works on \emph{independent} random variables. Going beyond the independent case, it is also known that Chernoff bound applies to \emph{negatively correlated} random variables. Informally speaking, two random variables $X_{i}$ and $X_{2}$ are negatively correlated if conditioning on $X_{i}=1$, the probability for $X_{j}=1$ decreases. Formally, we define negatively correlated random variables as follows.

\begin{definition}[Negatively Correlated Random Variables]
\label{def:neg-cor-rvs}
Random variables $X_1,\ldots,X_n$ are said to be \emph{negatively correlated} if and only if
\begin{align*}
\expect{\prod_{i=1}^{n} X_{i}} \leq \prod_{i=1}^{n}\expect{X_{i}}.
\end{align*}
In particular, if $X_{i}$'s are independent, we have $\expect{\prod_{i=1}^{n} X_{i}} = \prod_{i=1}^{n}\expect{X_{i}}$.
\end{definition}

\begin{proposition}[Generalized Chernoff]\label{prop:chernoff-general}
Let $X_1,\ldots,X_n$ be $n$ negatively correlated random variables supported on $\{0,1\}$. Then, the concentration inequality in \Cref{prop:chernoff} still holds.
\end{proposition}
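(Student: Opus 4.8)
The plan is to reproduce, essentially verbatim, the textbook moment-generating-function proof of the Chernoff--Hoeffding bound, replacing the single appeal to independence (the factorization of the MGF) by an inequality that survives under negative correlation. Throughout, $X = \sum_{i=1}^n X_i$ and $\mu := \E[X] = \sum_{i=1}^n \E[X_i]$ as in \Cref{prop:chernoff}, and every $X_i$ is supported on $\set{0,1}$, so it suffices to match the two-sided conclusion of that proposition.

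For the upper tail, fix $t>0$ and apply Markov's inequality to the nonnegative variable $e^{tX}$, which gives $\Pr\paren{X \ge (1+\delta)\mu} \le e^{-t(1+\delta)\mu}\cdot \E[e^{tX}]$. The only non-routine step is the bound $\E[e^{tX}] \le \prod_{i=1}^n \E[e^{tX_i}]$. Since $X_i \in \set{0,1}$, we have $e^{tX_i} = 1 + (e^t-1)X_i$, an affine and (for $t>0$) increasing function of $X_i$, and expanding the product gives
\[
	\E\bracket{\prod_{i=1}^n e^{tX_i}} = \sum_{S \subseteq [n]} (e^t-1)^{\card{S}}\cdot \E\bracket{\prod_{i\in S}X_i} \le \sum_{S\subseteq[n]}(e^t-1)^{\card{S}}\prod_{i\in S}\E[X_i] = \prod_{i=1}^n \E[e^{tX_i}],
\]
where the inequality uses $(e^t-1)^{\card{S}} > 0$ together with the negative-correlation inequality $\E[\prod_{i\in S}X_i] \le \prod_{i\in S}\E[X_i]$ applied to \emph{every} subset $S\subseteq[n]$ (this is precisely what the inductive arguments in our applications produce; cf.~\Cref{clm:neg-correlation}). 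From here the argument is identical to the independent case: $1+x \le e^x$ gives $\E[e^{tX_i}] \le e^{(e^t-1)\E[X_i]}$, hence $\E[e^{tX}] \le e^{(e^t-1)\mu}$, and choosing $e^t = 1+\delta$ yields $\Pr\paren{X\ge(1+\delta)\mu} \le \paren{e^\delta/(1+\delta)^{1+\delta}}^{\mu} \le \exp\paren{-\delta^2\mu/3}$ for $\delta \in (0,1]$.

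For the lower tail, fix $s>0$ and apply Markov to $e^{-sX}$, obtaining $\Pr\paren{X \le (1-\delta)\mu} \le e^{s(1-\delta)\mu}\cdot \E[e^{-sX}]$. Writing $e^{-sX_i} = 1 - (1-e^{-s})X_i$ and running the same expansion, the coefficients $(e^{-s}-1)^{\card{S}}$ now alternate in sign, so the bare hypothesis $\E[\prod_{i\in S}X_i]\le\prod_{i\in S}\E[X_i]$ no longer suffices on its own; instead one applies the analogous negative-correlation inequality to the complementary indicators $\set{1-X_i}$ (which holds in all of our applications, e.g.\ for uniformly random matchings), yielding $\E[e^{-sX}] \le \prod_i \E[e^{-sX_i}] \le e^{(e^{-s}-1)\mu}$ and hence $\Pr\paren{X\le(1-\delta)\mu} \le \exp\paren{-\delta^2\mu/2} \le \exp\paren{-\delta^2\mu/3}$. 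A union bound over the two tails then reproduces the two-sided statement of \Cref{prop:chernoff}.

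The main obstacle — really the only delicate point — is this lower tail: the convenient linearity of $e^{tX_i}$ in $X_i$ creates alternating signs once the exponent is negative, so one genuinely needs the complementary negative-correlation statement for $\set{1-X_i}$ (equivalently, the notion of negative cylinder dependence of Panconesi--Srinivasan). I would therefore either strengthen \Cref{def:neg-cor-rvs} to include this complementary condition explicitly, or simply invoke the black-box theorem of Panconesi--Srinivasan / Dubhashi--Panconesi that all-subset negative correlation of $\set{0,1}$ variables (on both $X_i$ and $1-X_i$) implies the full Chernoff--Hoeffding bound, keeping the appendix self-contained modulo a citation.
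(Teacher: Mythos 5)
Your argument is sound, and it is worth noting that the paper gives no proof of \Cref{prop:chernoff-general} at all --- it is stated in the appendix as a standard fact with no argument attached. Your upper-tail derivation via the moment generating function is the right one, with one caveat that you implicitly handle: \Cref{def:neg-cor-rvs} as written only asserts $\expect{\prod_{i=1}^n X_i} \leq \prod_{i=1}^n \expect{X_i}$ for the full index set, whereas your expansion of $\E[e^{tX}]$ needs this inequality for \emph{every} subset $S \subseteq [n]$. That stronger, all-subset form is what \Cref{clm:neg-correlation} actually establishes in the paper's applications (and is the standard meaning of the term), so your reading is the intended one, but the definition ought to be stated that way for the proposition to be provable as claimed.

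Your observation about the lower tail is a genuine and correct point, not a pedantic one. The alternating signs of $(e^{-s}-1)^{\card{S}}$ really do break the naive expansion, and the all-subset product inequality for the $X_i$ alone does not formally imply the corresponding inequality for the complements $1-X_i$. Moreover, this is load-bearing for the paper itself, not only for your write-up: the proof of \Cref{lem:OME-expander} invokes precisely the \emph{lower} tail of \Cref{prop:chernoff-general}, to bound $\Pr\paren{X_S \leq \card{S}\cdot\log n}$. The complementary condition does hold in all of the paper's uses (indicators of edges in a uniformly random matching are negatively associated, so both $\set{X_i}$ and $\set{1-X_i}$ satisfy the all-subset inequality), and either of your proposed fixes --- strengthening \Cref{def:neg-cor-rvs} to include the complementary condition, or invoking the Panconesi--Srinivasan / Dubhashi--Panconesi theorem as a black box --- is the standard resolution and closes the argument.
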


\subsection{Standard Tools for Lower Bound Proofs}
\label{sub-app:info-theoretic-facts}

We shall use the following standard properties of KL-divergence and TVD defined in~\Cref{subsec:info}. For the proof of this results, see the excellent textbook by Cover and Thomas~\cite{CoverT06}. 

The following facts state the chain rule property and convexity of KL-divergence. 

\begin{fact}[Chain rule of KL divergence]
\label{fact:kl-chain-rule}
For any random variables $X=(X_{1}, X_{2})$ and $Y=(Y_{1},Y_{2})$ be two random variables, 
\begin{align*}
\kl{X}{Y} = \kl{X_{1}}{Y_{1}} + \Exp_{x \sim X_1} \kl{X_{2}\mid X_{1}=x}{Y_{2}\mid Y_{1}=x}.
\end{align*}
\end{fact}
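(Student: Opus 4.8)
The plan is the standard one: expand both KL-divergences via \Cref{def:kl-div} and factor each joint distribution into its first-coordinate marginal times its conditional on the second coordinate. Write $\mu_X$ and $\mu_Y$ for the distributions of $X = (X_1, X_2)$ and $Y = (Y_1, Y_2)$ over a common product domain $\Omega_1 \times \Omega_2$. We may assume $\kl{X}{Y} < \infty$: otherwise $\mu_Y$ fails to dominate $\mu_X$ on some atom, and tracing this through the factorization below shows the right-hand side is also $+\infty$ (using the standard conventions $0\log\tfrac{0}{q} = 0$ and $p\log\tfrac{p}{0} = +\infty$). Under this assumption $\mu_{Y_1}(x_1) > 0$ on the support of $\mu_{X_1}$, so every conditional probability appearing below is well-defined.

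First I would substitute $\mu_X(x_1,x_2) = \mu_{X_1}(x_1)\,\mu_{X_2 \mid X_1}(x_2 \mid x_1)$, and the analogous factorization of $\mu_Y$, into $\sum_{x_1,x_2} \mu_X(x_1,x_2) \log\tfrac{\mu_X(x_1,x_2)}{\mu_Y(x_1,x_2)}$, then use $\log\tfrac{ab}{cd} = \log\tfrac{a}{c} + \log\tfrac{b}{d}$ to split the logarithm into a ``marginal part'' $\log\tfrac{\mu_{X_1}(x_1)}{\mu_{Y_1}(x_1)}$ and a ``conditional part'' $\log\tfrac{\mu_{X_2 \mid X_1}(x_2 \mid x_1)}{\mu_{Y_2 \mid Y_1}(x_2 \mid x_1)}$, and correspondingly split the double sum into two pieces.

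Next I would evaluate the two pieces. In the marginal piece, summing $\mu_{X_2 \mid X_1}(x_2 \mid x_1)$ over $x_2$ contributes a factor $1$, leaving exactly $\sum_{x_1} \mu_{X_1}(x_1)\log\tfrac{\mu_{X_1}(x_1)}{\mu_{Y_1}(x_1)} = \kl{X_1}{Y_1}$. In the conditional piece, I would fix $x_1$ and sum over $x_2$ first, recognizing the inner sum as $\kl{X_2 \mid X_1 = x_1}{Y_2 \mid Y_1 = x_1}$; the outer sum over $x_1$ weighted by $\mu_{X_1}(x_1)$ is then precisely $\Exp_{x \sim X_1} \kl{X_2 \mid X_1 = x}{Y_2 \mid Y_1 = x}$. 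Adding the two pieces gives the identity.

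There is no substantive obstacle; the only thing that deserves care is the bookkeeping of zero-probability atoms and the regrouping of the double sum, which is clean here because each of the two pieces decomposes into nonnegative blocks (for the conditional piece, each block is $\mu_{X_1}(x_1)$ times a genuine KL-divergence), so no cancellation issue arises and the identity holds with both sides in $[0,+\infty]$. Alternatively, one can simply cite the treatment in Cover and Thomas~\cite{CoverT06}, as the statement is entirely classical.
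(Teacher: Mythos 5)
Your proof is correct and is the standard derivation: factor each joint law into marginal times conditional, split the logarithm, and regroup the double sum. The paper itself does not prove this fact at all — it simply cites Cover and Thomas — and your argument is exactly the classical one from that reference, with appropriate care about zero-probability atoms, so there is nothing to reconcile.
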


\begin{fact}[Convexity KL-divergence]
\label{fact:kl-convexity}
For any distributions $\mu_1,\mu_2$ and $\nu_1,\nu_2$ and any $\lambda \in (0,1)$, 
\begin{align*}
\kl{\lambda \cdot \mu_1 + (1-\lambda) \cdot \mu_2}{\lambda \cdot \nu_1 + (1-\lambda) \cdot \nu_2} \leq \lambda \cdot \kl{\mu_1}{\nu_1} + (1-\lambda) \cdot \kl{\mu_2}{\nu_2}. 
\end{align*}
\end{fact}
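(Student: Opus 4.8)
The plan is to derive the claimed inequality — which is exactly the statement that $\kl{\cdot}{\cdot}$ is jointly convex in its two arguments — from the two-term \emph{log-sum inequality} applied coordinate-wise over the sample space $\Omega$. Recall the log-sum inequality: for nonnegative reals $a_1,a_2,b_1,b_2$ (with the usual conventions $0\log 0=0$ and $a\log(a/0)=+\infty$ for $a>0$),
\[
  (a_1+a_2)\log\frac{a_1+a_2}{b_1+b_2}\;\le\;a_1\log\frac{a_1}{b_1}+a_2\log\frac{a_2}{b_2}.
\]
This is itself an immediate consequence of the convexity of $t\mapsto t\log t$: with $b:=b_1+b_2$, Jensen's inequality applied to $t\mapsto t\log t$ at the points $a_1/b_1,a_2/b_2$ with weights $b_1/b,b_2/b$ rearranges precisely into the displayed bound, so I would include that one-line derivation for self-containedness.

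For the statement itself, set $\mu:=\lambda\mu_1+(1-\lambda)\mu_2$ and $\nu:=\lambda\nu_1+(1-\lambda)\nu_2$, and fix $\omega\in\Omega$. Apply the log-sum inequality with $a_1=\lambda\mu_1(\omega)$, $a_2=(1-\lambda)\mu_2(\omega)$, $b_1=\lambda\nu_1(\omega)$, $b_2=(1-\lambda)\nu_2(\omega)$; the common factors $\lambda$ and $1-\lambda$ cancel inside the logarithms on the right-hand side, leaving
\[
  \mu(\omega)\log\frac{\mu(\omega)}{\nu(\omega)}\;\le\;\lambda\,\mu_1(\omega)\log\frac{\mu_1(\omega)}{\nu_1(\omega)}+(1-\lambda)\,\mu_2(\omega)\log\frac{\mu_2(\omega)}{\nu_2(\omega)}.
\]
Summing this over $\omega\in\Omega$ and invoking the definition of KL-divergence (\Cref{def:kl-div}) gives $\kl{\mu}{\nu}\le\lambda\,\kl{\mu_1}{\nu_1}+(1-\lambda)\,\kl{\mu_2}{\nu_2}$, as desired. (An alternative route would be through the Donsker–Varadhan variational formula $\kl{\mu}{\nu}=\sup_f\big(\Exp_\mu[f]-\log\Exp_\nu[e^f]\big)$, each summand being affine in $\mu$ and convex in $\nu$, hence a supremum of jointly convex functions; but the log-sum argument is shorter and more elementary.)

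The only point requiring care is the treatment of degenerate support: if $\nu(\omega)=0$ then $\nu_1(\omega)=\nu_2(\omega)=0$, and either $\mu(\omega)=0$ (both sides contribute $0$ at $\omega$) or $\mu(\omega)>0$ (both sides are $+\infty$), so the inequality holds trivially there; on the remaining $\omega$ the convention $0\log 0=0$ keeps every term finite and the pointwise step applies verbatim. I do not expect a genuine obstacle — once the log-sum inequality is established, the argument is purely routine bookkeeping of these boundary cases.
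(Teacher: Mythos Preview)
Your proof is correct. The paper does not actually prove this fact; it is stated in the appendix as a standard tool with the remark ``For the proof of these results, see the excellent textbook by Cover and Thomas~\cite{CoverT06}.'' Your log-sum inequality argument is precisely the standard textbook derivation (and indeed the one given in Cover and Thomas), so there is nothing to compare.
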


\begin{fact}[Conditioning cannot decrease KL-divergence]\label{fact:kl-conditioning}
	For any random variables $X,Y,Z$, 
	\[
		\kl{X}{Y} \leq \Exp_{z \sim Z} \kl{X \mid Z=z}{Y \mid Z=z}. 
	\]
\end{fact}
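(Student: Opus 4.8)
The plan is to deduce this from the chain rule for KL-divergence (\Cref{fact:kl-chain-rule}) applied in two different orders, together with the non-negativity of KL-divergence (Gibbs' inequality). Throughout, $Z$ is understood as a random variable with a fixed marginal law that is common to the ``$X$-world'' and the ``$Y$-world'', so that $(Z,X)$ and $(Z,Y)$ are well-defined joint distributions whose $Z$-marginals coincide; every application of this fact in the paper is of this form (for instance $Z=\Sigma(M^*)$, whose distribution does not depend on whether the instance is a \textbf{Yes}- or \textbf{No}-case, or $Z$ a message drawn from a law that is identical under \textbf{Yes} and \textbf{No}).

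First I would apply \Cref{fact:kl-chain-rule} to the pair $(Z,X)$ versus $(Z,Y)$ with $Z$ as the \emph{first} coordinate:
\[
  \kl{(Z,X)}{(Z,Y)} = \kl{Z}{Z} + \Exp_{z\sim Z}\kl{X\mid Z=z}{Y\mid Z=z} = \Exp_{z\sim Z}\kl{X\mid Z=z}{Y\mid Z=z},
\]
using $\kl{Z}{Z}=0$, which holds precisely because $Z$ has the same marginal on both sides. Next I would establish the ``marginalization'' bound $\kl{X}{Y}\le \kl{(Z,X)}{(Z,Y)}$. Since KL-divergence of a joint law is invariant under relabeling the coordinates, $\kl{(Z,X)}{(Z,Y)}=\kl{(X,Z)}{(Y,Z)}$, and now applying \Cref{fact:kl-chain-rule} with $X$ (resp.\ $Y$) as the first coordinate,
\[
  \kl{(X,Z)}{(Y,Z)} = \kl{X}{Y} + \Exp_{x\sim X}\kl{Z\mid X=x}{Z\mid Y=x} \ge \kl{X}{Y},
\]
where the last inequality holds because each term $\kl{Z\mid X=x}{Z\mid Y=x}\ge 0$ by Gibbs' inequality (itself an instance of Jensen's inequality applied to the convex function $-\log$). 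Chaining the two displays gives
\[
  \kl{X}{Y} \le \kl{(Z,X)}{(Z,Y)} = \Exp_{z\sim Z}\kl{X\mid Z=z}{Y\mid Z=z},
\]
which is exactly the asserted inequality.

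The one point requiring care — and the natural place for a hidden pitfall — is the common-marginal hypothesis on $Z$: were the law of $Z$ genuinely different in the two worlds, the first display would acquire an extra additive term $\kl{Z_X}{Z_Y}\ge 0$, and the clean inequality would fail (one would only get it with that term on the right-hand side). As noted, in every use of this fact $Z$ is sampled before, or independently of, the \textbf{Yes}/\textbf{No} choice, so the hypothesis is met; I would therefore simply state the fact under this assumption. Everything else is routine bookkeeping: the chain rule is quoted verbatim as \Cref{fact:kl-chain-rule}, and the non-negativity of KL-divergence is classical.
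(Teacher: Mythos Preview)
Your proof is correct and is in fact the standard argument. Note, however, that the paper does not prove this statement at all: it is listed among the ``standard tools'' in the appendix with the blanket remark that proofs can be found in Cover and Thomas~\cite{CoverT06}. So there is no paper proof to compare against; your derivation via two applications of the chain rule (\Cref{fact:kl-chain-rule}) plus non-negativity of KL is exactly what one would expect, and your explicit identification of the common-marginal hypothesis on $Z$ is a useful clarification that the paper leaves implicit.
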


Pinsker's inequality relates KL-divergence to TVD. 

\begin{fact}[Pinsker's inequality]
\label{fact:pinsker}
For any random variables $X$ and $Y$ supported over the same $\Omega$, 
\begin{align*}
\tvd{X}{Y} \leq \sqrt{\frac{1}{2}\cdot \kl{X}{Y}}.
\end{align*}
\end{fact}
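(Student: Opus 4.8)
The plan is to establish Pinsker's inequality by the textbook two-step route: (i) reduce the general statement to the two-point (Bernoulli) case by a data-processing argument, and (ii) dispatch the resulting scalar inequality by one-variable calculus. Throughout I will prove the natural-logarithm form $\tvd{X}{Y} \le \sqrt{\tfrac12 \cdot \kl{X}{Y}}$; if the $\log$ in~\Cref{def:kl-div} is read as $\log_2$, the inequality as stated is only weaker (since $\ln 2 < 1$) and follows \emph{a fortiori}, so this loses nothing.

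For step (i), let $A := \set{\omega \in \Omega : \mu_X(\omega) \ge \mu_Y(\omega)}$, so that by the standard characterization of total variation distance (which follows from~\Cref{def:tvd}) we have $\tvd{X}{Y} = \mu_X(A) - \mu_Y(A)$. Form the Bernoulli variables $\widetilde{X} := \mathbf{1}[X \in A]$ and $\widetilde{Y} := \mathbf{1}[Y \in A]$. By the choice of $A$, $\tvd{\widetilde{X}}{\widetilde{Y}} = \mu_X(A) - \mu_Y(A) = \tvd{X}{Y}$, so the left-hand side is unchanged; and by the log-sum inequality applied separately to the indices in $A$ and the indices outside $A$, we get $\kl{\widetilde{X}}{\widetilde{Y}} \le \kl{X}{Y}$ (pushing both laws through the same deterministic map cannot increase KL-divergence). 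Hence it suffices to treat $\Omega = \set{0,1}$.

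For step (ii), write $p := \Pr[X = 1]$ and $q := \Pr[Y = 1]$; the target is $(p-q)^2 \le \tfrac12 \cdot \paren{p\ln\tfrac{p}{q} + (1-p)\ln\tfrac{1-p}{1-q}}$. When $q \in \set{0,1}$ the right side is $+\infty$ unless $p = q$ (in which case both sides vanish), so assume $q \in (0,1)$, and take $p \in (0,1)$ as well, the endpoints following by continuity. Fix $q$ and set $\varphi(p) := p\ln\tfrac{p}{q} + (1-p)\ln\tfrac{1-p}{1-q} - 2(p-q)^2$. A direct computation yields $\varphi(q) = 0$, then $\varphi'(p) = \ln\tfrac{p(1-q)}{q(1-p)} - 4(p-q)$ hence $\varphi'(q) = 0$, and finally $\varphi''(p) = \tfrac{1}{p(1-p)} - 4 \ge 0$ because $p(1-p) \le \tfrac14$. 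So $\varphi$ is convex on $(0,1)$ with a critical point at $p = q$, whence $\varphi \ge 0$ throughout; taking square roots and composing with step (i) completes the proof.

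The only step that is not a mechanical computation is the data-processing bound $\kl{\widetilde{X}}{\widetilde{Y}} \le \kl{X}{Y}$; I would prove it by invoking the log-sum inequality $\sum_i a_i\ln\tfrac{a_i}{b_i} \ge \paren{\sum_i a_i}\ln\tfrac{\sum_i a_i}{\sum_i b_i}$ (itself just convexity of $t \mapsto t\ln t$) once on $\set{\omega \in A}$ and once on $\set{\omega \notin A}$, while noting that any $\omega$ with $\mu_Y(\omega) = 0 < \mu_X(\omega)$ makes $\kl{X}{Y} = \infty$ so the bound is vacuous there. Everything else — the TVD identity for the set $A$ and the Bernoulli convexity argument — is routine, so I do not anticipate a real obstacle.
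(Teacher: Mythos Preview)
Your proof is correct and follows the standard textbook route. The paper itself does not prove this fact at all; it simply states it and refers the reader to Cover and Thomas~\cite{CoverT06}, so there is nothing to compare against.
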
 

The following fact characterizes the error of MLE for the source of a sample based on the TVD of the originating distributions. 

\begin{fact}
\label{fact:distinguish-tvd}
	Suppose $\mu$ and $\nu$ are two distributions over the same support $\Omega$; then, given one sample $s$ from either $\mu$ or $\nu$, the best probability we can decide whether $s$ came from $\mu$ or $\nu$ 
	is 
	\[
	\frac12 + \frac12\cdot\tvd{\mu}{\nu}.
	\]
\end{fact}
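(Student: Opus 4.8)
The plan is to recast the task as a uniform-prior Bayesian hypothesis test and then optimize the success probability pointwise over the sample space. First I would formalize the setup: a hidden bit $b$ is chosen uniformly at random; if $b=0$ the sample $s \in \Omega$ is drawn from $\mu$, and if $b=1$ it is drawn from $\nu$; a ``decision procedure'' is any (possibly randomized) map from the observed $s$ to a guess $\hat b \in \{0,1\}$. Since a randomized procedure is a convex combination (over its internal coins) of deterministic procedures, its success probability is a convex combination of those of deterministic procedures and hence no larger than the best deterministic one; so it suffices to analyze deterministic rules. A deterministic rule is specified by the set $S \subseteq \Omega$ of samples on which it outputs ``$\mu$''.

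Next I would write the success probability of the rule $S$ explicitly. Conditioned on $b=0$ (probability $\tfrac12$) the rule is correct exactly when $s \in S$; conditioned on $b=1$ (probability $\tfrac12$) it is correct exactly when $s \notin S$. Hence the success probability equals $\tfrac12 \sum_{\omega \in S} \mu(\omega) + \tfrac12 \sum_{\omega \notin S} \nu(\omega)$. Each element $\omega$ contributes $\tfrac12\mu(\omega)$ to this sum if it is placed in $S$ and $\tfrac12\nu(\omega)$ if it is not, and these choices are independent across $\omega$; therefore the maximum over all $S$ is attained by the maximum-likelihood rule $S^\star := \{\omega \in \Omega : \mu(\omega) \ge \nu(\omega)\}$ (ties broken arbitrarily), and the optimal success probability equals $\tfrac12 \sum_{\omega \in \Omega} \max\{\mu(\omega),\nu(\omega)\}$. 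This shows the stated value is both an upper bound for every procedure and achieved by one, which justifies the word ``best''.

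Finally I would evaluate $\tfrac12 \sum_{\omega} \max\{\mu(\omega),\nu(\omega)\}$ using the elementary identity $\max\{a,b\} = \tfrac{a+b}{2} + \tfrac{|a-b|}{2}$. Summing over $\omega$, the first piece contributes $\tfrac12 \cdot \tfrac12 \sum_\omega (\mu(\omega)+\nu(\omega)) = \tfrac12$ because $\mu$ and $\nu$ are probability distributions, and the second piece contributes $\tfrac12 \cdot \tfrac12 \sum_\omega |\mu(\omega)-\nu(\omega)| = \tfrac12\tvd{\mu}{\nu}$ by the definition of total variation distance (\Cref{def:tvd}). Adding the two pieces yields $\tfrac12 + \tfrac12\tvd{\mu}{\nu}$, as claimed. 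The argument is entirely elementary and there is no real obstacle; the only point needing a word of care is the reduction from randomized to deterministic rules, which is a one-line averaging argument, together with the observation that the final formula is simultaneously an upper bound on every procedure's success probability and a value achieved by the maximum-likelihood rule.
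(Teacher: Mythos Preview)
Your proof is correct and is the standard elementary argument. The paper does not actually prove this fact; it lists it among standard tools and refers the reader to a textbook, so there is no ``paper's own proof'' to compare against --- your Bayesian-MLE derivation with the identity $\max\{a,b\}=\tfrac{a+b}{2}+\tfrac{|a-b|}{2}$ is exactly the expected justification.
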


\paragraph{Fourier analysis on Boolean hypercube.} 
For any two functions $f,g: \set{0,1}^{n} \rightarrow \IR$, we define the \emph{inner product} between $f$ and $g$ as: 
\[
	\langle f, g\rangle = \Exp_{x \in \set{0,1}^{n}} \bracket{f(x) \cdot g(x)} = \sum_{x \in \set{0,1}^n} \frac{1}{2^n} \cdot f(x) \cdot g(x). 
\]
For a set $S \subseteq \set{0,1}$, we define the \emph{character} function $\mathcal{X}_S : \set{0,1}^n \rightarrow \set{-1,+1}$ as: 
\[
	\mathcal{X}_S(x) = (-1)^{(\sum_{i \in S} x_i)} = \begin{cases} 1 & \text{if $\oplus_{i \in S}~ x_i = 0$} \\ -1 &  \text{if $\oplus_{i \in S}~ x_i = 1$} \end{cases}. 
\]
The \emph{Fourier transform} of $f: \set{0,1}^{n} \rightarrow \IR$ is a function $\hf : 2^{[n]} \rightarrow \IR$ such that: 
\[
	\hf(S) =  \langle f,\mathcal{X}_S \rangle = \sum_{x \in \set{0,1}^n} \frac{1}{2^n} \cdot f(x) \cdot \mathcal{X}_S(x). 
\]
We refer to each $\hf(S)$ as a \emph{Fourier coefficient}. 

\medskip
We use \emph{KKL inequality} of~\cite{KahnKL88} for bounding sum of \emph{squared} of Fourier coefficients. 

\begin{proposition}[\!\!\cite{KahnKL88}]\label{prop:kkl}
	For every function $f \in \set{0,1}^n \rightarrow \set{-1,0,+1}$ and every $\gamma \in (0,1)$
	\[
		\sum_{S \subseteq [n]} \gamma^{\card{S}} \cdot \hf(S)^2 \leq \paren{\frac{\supp{f}}{2^n}}^{\frac{2}{1+\gamma}}. 
	\]
\end{proposition}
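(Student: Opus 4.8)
The plan is to derive \Cref{prop:kkl} from the Bonami--Beckner hypercontractive inequality for the noise operator on the Boolean hypercube, together with Parseval's identity and the fact that $f$ takes values in $\set{-1,0,+1}$. Throughout, for $\rho\in[0,1]$ let $T_\rho$ be the linear operator on functions $g:\set{0,1}^n\to\IR$ defined on the Fourier side by $T_\rho g:=\sum_{S\subseteq[n]}\rho^{\card{S}}\,\widehat{g}(S)\,\mathcal{X}_S$ (equivalently, $(T_\rho g)(x)=\Exp_y[g(y)]$, where $y$ is obtained from $x$ by independently resampling each coordinate uniformly with probability $1-\rho$), and let $\norm{g}_r:=\big(\Exp_{x\in\set{0,1}^n}\card{g(x)}^r\big)^{1/r}$.

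First I would rewrite the left-hand side. Splitting $\gamma^{\card{S}}=(\sqrt\gamma)^{\card{S}}\cdot(\sqrt\gamma)^{\card{S}}$ and applying Parseval's identity gives
\[
\sum_{S\subseteq[n]}\gamma^{\card{S}}\cdot\hf(S)^2=\sum_{S\subseteq[n]}\Paren{(\sqrt\gamma)^{\card{S}}\hf(S)}^2=\norm{T_{\sqrt\gamma}f}_2^2,
\]
so it suffices to show $\norm{T_{\sqrt\gamma}f}_2\le(\supp{f}/2^n)^{1/(1+\gamma)}$. Next I would invoke hypercontractivity: the Bonami--Beckner inequality asserts that $\norm{T_\rho g}_q\le\norm{g}_p$ whenever $1\le p\le q$ and $\rho^2\le(p-1)/(q-1)$. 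Applying this with $q=2$, $p=1+\gamma$, and $\rho=\sqrt\gamma$ is legitimate precisely because $\gamma\in(0,1)$ forces $1<p<q$, and the constraint reads $\rho^2=\gamma\le(p-1)/(q-1)=\gamma$, which holds with equality. Hence $\norm{T_{\sqrt\gamma}f}_2\le\norm{f}_{1+\gamma}$. Finally, since $\card{f(x)}\in\set{0,1}$ we have $\card{f(x)}^{1+\gamma}=\card{f(x)}=\mathbf{1}[f(x)\neq 0]$ pointwise, so $\norm{f}_{1+\gamma}^{1+\gamma}=\Pr_x[f(x)\neq 0]=\supp{f}/2^n$, i.e.\ $\norm{f}_{1+\gamma}=(\supp{f}/2^n)^{1/(1+\gamma)}$. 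Chaining the three displays yields $\sum_{S}\gamma^{\card{S}}\hf(S)^2=\norm{T_{\sqrt\gamma}f}_2^2\le\norm{f}_{1+\gamma}^2=(\supp{f}/2^n)^{2/(1+\gamma)}$, which is the claim.

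The only nontrivial input is the Bonami--Beckner inequality, which the paper is content to cite via \cite{KahnKL88}; modulo that citation the argument above is a few lines. If instead one wants a self-contained treatment, the hard part will be establishing hypercontractivity itself, which I would do in the classical two steps: (i) the one-dimensional (``two-point'') case --- for $h(z)=a+bz$ on a single bit, $\norm{T_\rho h}_q\le\norm{h}_p$ under the stated constraint on $\rho,p,q$, which reduces to an elementary real inequality provable by calculus after normalizing; and (ii) tensorization to $n$ coordinates by induction, peeling off one variable at a time and using Minkowski's integral inequality to interchange the $L^q$-norm in one coordinate with the $L^p$-norm in the complementary block. Both steps are standard, so the overall obstacle is purely in reproving (or invoking) Bonami--Beckner --- the reduction of \Cref{prop:kkl} to it is routine.
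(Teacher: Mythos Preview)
The paper does not give its own proof of \Cref{prop:kkl}; it is listed in the appendix as a standard tool and attributed directly to~\cite{KahnKL88} without argument. Your derivation via Bonami--Beckner hypercontractivity is the standard (and correct) route: the identification $\sum_S\gamma^{\card{S}}\hf(S)^2=\norm{T_{\sqrt\gamma}f}_2^2$, the application of $\norm{T_{\sqrt\gamma}f}_2\le\norm{f}_{1+\gamma}$ with the sharp parameter choice, and the observation $\card{f}^{1+\gamma}=\card{f}$ for $\set{-1,0,+1}$-valued $f$ are exactly how this inequality is obtained. There is nothing to compare against in the paper itself, and your proposal stands on its own.
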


\section{Missing Proofs of \Cref{thm:beta-exist} and \Cref{prop:beta-compute}}
\label{app:proof-beta-tree-approx}

If we use the balanced minimum cut to establish a lower bound on the value of the optimal cost, we can obtain a clean proof of $O(\log n)$ approximation. We include this proof in \Cref{app:weaker-beta-tree}. Proving the stronger bound in \Cref{thm:beta-exist} requires some more involved techniques first developed by \cite{charikar2017approximate}.

\begin{definition}
\label{def:edge-footprint}
Let $G=(V,E)$ be a graph, $\TT$ be a HC-tree of $G$ and $(u,v)\in E$. The \emph{footprint of $(u,v)$ at size $t$}, denoted by $\ftE{t}{\TT}{(u,v)}$, is defined as follows.
\begin{align*}
\ftE{t}{\TT}{(u,v)} = 
\begin{cases}
0, \, \text{if there exists a cluster $C$ induced by $\TT$, s.t. $|C| \leq t$ and  $|\{u,v\} \cap C| = 1$}\\
w(u,v), \, \text{otherwise}.
\end{cases}
\end{align*}
\end{definition}

We first observe the relationship between the edge footprint and the cost of a hierarchical clustering. Consider  an edge $(u,v)$ and let $C = \leaves(\TT[u \wedge v])$. Recall that $(u, v)$ contributes a cost of $|C| \cdot w((u,v))$.
Thus, the footprint of the edge is equal to its weight for any $t < |C|$.
As such, we have the following:

\begin{lemma}
\label{clm:cost-to-footprint}
Using the assumptions of \Cref{def:edge-footprint}, and assuming that $G$ has $n$ vertices, 
\begin{align*}
\frac{1}{3}\cdot \sum_{t=0}^{n}\sum_{(u,v)\in E}\ftE{(2/3)\cdot t}{\TT}{(u,v)} \leq \sum_{t=0}^{n}\sum_{(u,v)\in E}\ftE{t}{\TT}{(u,v)} =  \cost(\TT).
\end{align*}
\end{lemma}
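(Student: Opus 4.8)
The statement has two parts: an equality relating $\cost(\TT)$ to the sum of footprints over all sizes $t$, and an inequality relating the full-size sum to the sum of footprints at the shrunken sizes $(2/3)t$. I would handle the equality first, then the inequality.

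For the equality, the plan is to fix an edge $(u,v)\in E$ and analyze its contribution to $\sum_{t=0}^{n}\ftE{t}{\TT}{(u,v)}$. Let $C := \leaves(\TT[u\vee v])$ be the cluster induced by the lowest common ancestor of $u$ and $v$, and let $c := |C|$. By the definition of the cost function in~\Cref{eq:hc-cost}, the edge $(u,v)$ contributes exactly $w(u,v)\cdot c$ to $\cost(\TT)$. On the other hand, I claim that for a clean binary tree $\TT$ (which we may assume by~\Cref{obs:opt-binary}, or we just argue directly about induced clusters), we have $\ftE{t}{\TT}{(u,v)} = w(u,v)$ for exactly the values $t = 0, 1, \ldots, c-1$ and $\ftE{t}{\TT}{(u,v)} = 0$ for $t \geq c$. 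The ``$=0$ for $t\ge c$'' direction holds because $C$ itself is an induced cluster with $|C| = c \leq t$ and $|\{u,v\}\cap C| = 2 \neq 1$ — wait, I need to be careful: the footprint is $0$ if there \emph{exists} a cluster $C'$ with $|C'|\le t$ separating $u$ from $v$. For $t \geq c$: the clusters on the path from leaf $u$ up to (but not including) $u\vee v$ each contain $u$ but not $v$, and the smallest such cluster (the leaf $\{u\}$, or more relevantly the child of $u\vee v$ containing $u$) has size $\le c - 1 < c \le t$, so indeed a separating cluster of size $\le t$ exists and the footprint is $0$. For $t \le c-1$: any cluster containing exactly one of $u,v$ must be a strict subset of... no — a cluster $C'$ with $|C' \cap \{u,v\}| = 1$ is one that lies strictly below $u\vee v$ on one of the two branches; the largest such cluster is a child of $u \vee v$, but there is no immediate size bound forcing it to be large. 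The correct argument: if $C'$ separates $u$ and $v$ then $C' \subsetneq C$ is impossible to bound from below directly, so instead I should argue the other inclusion — every cluster containing \emph{both} $u$ and $v$ contains $C$ (it is an ancestor cluster of $u\vee v$), hence has size $\geq c > t$; and every cluster containing \emph{neither} is irrelevant; so for $t\le c-1$, is there necessarily NO separating cluster of size $\le t$? No — there could be a small cluster, e.g. $\{u\}$ of size $1 \le t$. So actually the footprint formula must be re-examined: I think the intended reading makes $\ftE{t}{\TT}{(u,v)} = w(u,v)$ when $u,v$ are ``still together at resolution $t$'', i.e. when the \emph{smallest} cluster containing both is of size $> t$ — hmm, but the definition quantifies over separating clusters. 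Let me reconsider: a cluster $C'$ with $|C'| \le t$ and $|C'\cap\{u,v\}|=1$ exists iff the branch of $\TT[u\vee v]$ leading to $u$ (or to $v$) has some node with $\le t$ leaves; the child of $u\vee v$ on $u$'s side always works once $t \ge$ (size of that child), and since that child has $< c$ leaves, for \emph{all} $t$ in a suitable range... Actually for the equality to come out as $\cost(\TT) = \sum_t \sum_e \ftE{t}{}{}$ we need precisely $c$ values of $t$ giving $w(u,v)$. I would resolve this by checking the boundary indices carefully (likely $t$ ranges so that footprint $= w(u,v)$ for $t \in \{c,\ldots\}$ using a ``$\ge$'' convention, or equivalently $u\wedge v$ vs $u\vee v$ notation as in~\Cref{clm:cost-to-footprint}), summing $\sum_{e}\sum_{t} \ftE{t}{\TT}{e} = \sum_e w(e)\cdot|\leaves(\TT[e])| = \cost(\TT)$.

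For the inequality $\frac13 \sum_t\sum_e \ftE{(2/3)t}{\TT}{e} \le \sum_t\sum_e\ftE{t}{\TT}{e}$: the key observation is that $\ftE{s}{\TT}{e}$ is \textbf{monotone non-increasing in $s$} (a larger threshold makes it easier to find a small separating cluster, so the footprint can only drop from $w(e)$ to $0$). Therefore, reindexing, $\sum_{t=0}^n \ftE{(2/3)t}{\TT}{e} \le 3 \sum_{t=0}^{n}\ftE{t}{\TT}{e}$: indeed, the value $\ftE{(2/3)t}{}{}$ for $t$ in a block of $3$ consecutive integers is at most $\ftE{s}{}{}$ for the smallest $s$ in the corresponding block on the right, and each right-hand index is hit at most $3$ times; more precisely, $\sum_{t}\ftE{\lfloor 2t/3\rfloor}{}{} \le 3\sum_{s}\ftE{s}{}{}$ by a direct counting of how many $t$ map to each $s$, using monotonicity to bound $\ftE{\lfloor 2t/3\rfloor}{}{}$ by the $s$-term. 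Summing over $e\in E$ finishes it.

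\textbf{Main obstacle.} The genuinely fiddly part is the equality and getting the index bookkeeping exactly right — pinning down, for a fixed edge, precisely which values of $t$ yield footprint $w(e)$ versus $0$, reconciling the ``$|C|\le t$'' convention in~\Cref{def:edge-footprint} with the ``$|\leaves(\TT[u\vee v])|$'' multiplier in the cost, and making sure the off-by-one at the boundary $t = |C|$ comes out consistent with summing $t$ from $0$ to $n$. The inequality is a routine monotonicity-plus-reindexing argument once that is settled; the constant $3$ and the factor $2/3$ match up exactly because $\lceil (3/2)\cdot k\rceil$-type gaps between consecutive relevant thresholds get covered at most $3$ times.
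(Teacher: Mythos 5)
Your proposal is correct and follows essentially the same route as the paper: the equality is obtained by observing that each edge $(u,v)$ has nonzero footprint for exactly $c=\card{\leaves(\TT[u\vee v])}$ values of $t$ (namely $t=0,\dots,c-1$), and the inequality follows from monotonicity of $t\mapsto \ftE{t}{\TT}{(u,v)}$ together with a count of how many $t$ map to each integer threshold at most $(2/3)t$, which is exactly the paper's "charge $B$ into three copies of $A$" argument. The definitional snag you hit is real — \Cref{def:edge-footprint} read literally (existence of \emph{any} separating cluster of size $\le t$, including singleton leaves) would kill the footprint for every $t\ge 1$ and break the equality — but the reading you converge on (footprint equals $w(u,v)$ iff $t$ is smaller than the size of the smallest cluster containing both endpoints) is precisely the one the paper's own one-line proof of the equality, and its later use in \Cref{clm:split-cost-charge-balanced-min}, implicitly assumes.
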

\begin{proof}
We first prove the equality.
By the definition of $\ftE{t}{\TT}{(u,v)}$, the weight of an edge $(u,v)$ is counted $r$ times where $r = |\leaves(\TT[u \wedge v])|$ (note that the first sum starts with $t=0$).

To prove the inequality, let us assume WLOG that $n$ is a multiplier of $3$. Note that the number of terms between $A:=\sum_{t=0}^{n}\sum_{(u,v)\in E}\ftE{t}{\TT}{(u,v)}$ and $B:=\sum_{t=0}^{n}\sum_{(u,v)\in E}\ftE{(2/3)\cdot t}{\TT}{(u,v)}$ are the same, and we charge $B$ into $3$ copies of $A$. Note that when $(1/3)\cdot t\in\{0,2,4,\cdots, \frac{2}{3}\cdot n\}$, we can charge this part of $B$ to $C:=\sum_{t=0}^{\frac{2}{3}\cdot n}\sum_{(u,v)\in E}\ftE{t}{\TT}{(u,v)}$, which in tern is at most $A$. For the second case, consider $(2/3)\cdot t\in\{\frac{2}{3}, \frac{8}{3}, \cdots, \frac{2}{3}\cdot n-2\}$, we introduce another copy of $A$, and since $\ftE{t}{\TT}{(u,v)}<\ftE{t-2/3}{\TT}{(u,v)}$, this part of $B$ is also upper-bounded by $A$. Finally, we consider the case $(2/3)\cdot t\in\{\frac{4}{3}, \frac{10}{3}, \cdots, \frac{2}{3}\cdot n-1\}$. With the same reasoning as above, the quantity of this part of $B$ is again at most $A$. Therefore, we have $3 A\geq B$, as claimed.
\end{proof}

Note that a result similar to \Cref{clm:cost-to-footprint} was first obtained by \cite{charikar2017approximate}. However, the subtle difference makes their statement not directly applicable for our purpose.

\begin{lemma}
\label{clm:split-cost-charge-balanced-min}
Let $\TT^*$ be a tree of cost $\OPT(G)$ and $\TT$ be a tree obtained by  recursively applying $\frac13$-balanced min cut.
Let $w$ be an internal node of $\TT$, $S = \leaves(\TT[w])$ and $(S_1, S_2) = \cut(\TT[w])$. Denote $r:=|S|$, $s:=|S_{1}|$. Then,
\begin{align*}
r\cdot w(S_{1}, S_{2}) \leq 3\cdot s \cdot \sum_{(u,v) \in S(E)} \ftE{(2/3)\cdot r}{\TT^{*}}{(u,v)}.
\end{align*}
\end{lemma}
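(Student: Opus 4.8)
\textbf{Proof plan for Lemma~\ref{clm:split-cost-charge-balanced-min}.}
The plan is to use the defining property of the $\frac13$-balanced minimum cut to lower-bound the right-hand side, and to bound the left-hand side by the number of cut edges times the cluster size. First I would observe that the set of edges counted on the right, namely those $(u,v)\in S(E)$ (edges with both endpoints in $S$) whose footprint at size $(2/3)\cdot r$ is nonzero, corresponds exactly to those pairs $u,v$ that are \emph{not} separated by any cluster of $\TT^*$ of size at most $(2/3)\cdot r$. Restricting $\TT^*$ to the vertex set $S$ (i.e., contracting away all leaves outside $S$, or equivalently considering the induced forest structure on $S$) gives a hierarchical decomposition of $S$ in which, at the level where clusters first drop below size $(2/3)\cdot r$, we obtain a partition of $S$ into parts each of size at most $(2/3)\cdot r = (2/3)\cdot |S|$. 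Each such part has size at most $(2/3)|S|$, so by greedily merging parts we can form a cut $(T, S\setminus T)$ of $S$ with $|T|, |S\setminus T| \le (2/3)|S|$, i.e. a $\frac13$-balanced cut of the induced subgraph $G[S]$; any edge crossing this induced cut has its endpoints in different parts of the size-$(2/3)r$ partition, hence its footprint at size $(2/3)r$ in $\TT^*$ equals $w(u,v)$. Therefore $\sum_{(u,v)\in S(E)} \ftE{(2/3)\cdot r}{\TT^*}{(u,v)} \ge w_{\text{induced}}(T, S\setminus T) \ge w(S_1,S_2)$, where the last inequality is because $(S_1,S_2)$ is the \emph{minimum} $\frac13$-balanced cut of $G[S]$ and the cut $(T,S\setminus T)$ is some $\frac13$-balanced cut of $G[S]$.

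Given this, the inequality $r\cdot w(S_1,S_2) \le 3 s\cdot \sum \ftE{(2/3)r}{\TT^*}{\cdot}$ would follow provided $r \le 3s$; since $(S_1,S_2)$ is $\frac13$-balanced with $S_1$ the smaller side, we have $s = |S_1| \ge \frac13 |S| = \frac13 r$, i.e. $r \le 3s$, which closes the argument. So the skeleton is: (i) extract from $\TT^*$ a $\frac13$-balanced cut of $G[S]$ whose crossing edges all have full footprint at level $(2/3)r$; (ii) invoke minimality of the balanced min cut $(S_1,S_2)$; (iii) use $\frac13$-balancedness to get $r\le 3s$ and combine.

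The main obstacle I anticipate is step (i): carefully checking that the level of $\TT^*$ where cluster sizes first fall below $(2/3)r$ really does yield a partition of $S$ into parts each of size $\le (2/3)r$ that can be recombined into a balanced two-way cut of $G[S]$, and that \emph{every} edge crossing that recombined cut indeed has nonzero footprint at size $(2/3)r$ (this needs the footprint definition applied to clusters of $\TT^*$ restricted to $S$, not clusters of $\TT^*$ on the whole graph — one must be slightly careful that a cluster $C$ of $\TT^*$ with $|C\cap S|$ small but $|C|$ large does not cause trouble; the footprint is defined via $|C|$, so I would want the edges' footprint nonzero because no \emph{small} cluster separates them, and small clusters on $S$ come from small-enough clusters of $\TT^*$). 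A clean way to handle this is to note that for an edge $(u,v)$ with both endpoints in $S$, the lowest common ancestor $u\wedge v$ in $\TT^*$ has $\leaves(\TT^*[u\wedge v])\subseteq S$ is false in general, so instead I would work directly: $\ftE{(2/3)r}{\TT^*}{(u,v)}=w(u,v)$ iff every cluster $C$ of $\TT^*$ with $|C|\le (2/3)r$ contains both or neither of $u,v$; then define the partition of $S$ by the maximal clusters of $\TT^*$ of size $\le (2/3)r$ that intersect $S$, together with singletons for leftover vertices, and argue each block has size $\le (2/3)r$ and edges across blocks have full footprint. Modulo this bookkeeping, the proof is short; I expect this lemma to then feed directly into summing over all internal nodes $w$ of $\TT$ (using \Cref{clm:cost-to-footprint}) to establish \Cref{thm:beta-exist}.
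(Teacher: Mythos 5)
Your proposal is correct and follows essentially the same route as the paper: take the maximal clusters of $\TT^*$ of size at most $\frac{2}{3}r$ that meet $S$, merge their intersections with $S$ into a $\frac13$-balanced cut of $G[S]$ whose crossing edges all retain full footprint at level $\frac{2}{3}r$, invoke minimality of $(S_1,S_2)$, and finish with $r\le 3s$ from $\frac13$-balancedness. The bookkeeping you flag in step (i) is handled in the paper by exactly the device you describe (maximal small clusters of $\TT^*$, whose laminarity and maximality guarantee both the size bound on blocks and the nonzero footprint of crossing edges), so no further work is needed.
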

\begin{proof}
Let $S^{\star}_{1}$, $S^{\star}_{2}$, $\cdots$, $S^{\star}_{k}$ be the maximal (w.r.t. inclusion) clusters induced by $\TT^*$, which have size at most $\frac23 \cdot r$ and a nonempty intersection with $S$.
Observe that these clusters are all disjoint.

We claim there exist two sets of indices $L = \{l_{1}, l_{2}, \cdots\}$ and  $R = \{r_{1}, r_{2}, \cdots\}$, such that 
\begin{itemize}
\item $L \cup R = [k]$;
\item If we denote $A:=(\cup_{i \in L} S^{\star}_{i})\cap S$ and $B:=(\cup_{i \in R} S^{\star}_{i})\cap S$, we have $\max\{A, B\} \leq \frac{2}{3}\cdot r$. 
\end{itemize}

To prove this claim we use the fact that the intersection of each $S^{\star}_i$ with $S$ is at most $\frac{2}{3}\cdot r$, and the following observation.

\begin{observation}
Let $x_1, \ldots, x_k$ be a sequence of positive real numbers, such that $\sum_{i=1}^k x_i= 1$, and $x_i \leq \frac23$.
There exists a sequence $1 \leq j_1 < \ldots < j_l \leq k$, such that $\frac13 \leq \sum_{i=1}^l x_{j_i} \leq \frac23$.
\end{observation}

This implies that the cut $(A, B)$ is $\frac13$-balanced, so in particular its weight is at least the weight of the minimum $\frac13$-balanced cut.
We have
\begin{align}
w(S_{1}, S_{2}) &\leq w\paren{\cup_{i \in L} S^{\star}_{i})\cap S, (\cup_{i \in R} S^{\star}_{i})\cap S} \tag{by balanced minimum cut}\\
&\leq \sum_{(u,v) \in S(E)} \ftE{(2/3)\cdot r}{\TT^{*}}{(u,v)}.\label{equ:balanced-min-charge}
\end{align}
The second inequality holds, since each edge in the cut $\paren{\cup_{i \in L} S^{\star}_{i})\cap S, (\cup_{i \in R} S^{\star}_{i})\cap S}$ has exactly one endpoint in some $S^{\star}_i$ (whose size is at most $(2/3)\cdot r$), and thus a nonzero footprint at level $(2/3)\cdot r$

Since the cut $(S_1, S_2)$ is $\frac13$-balanced and, by the definition of $\cut$, $|S_1| \leq |S_2|$, we have $\frac{r}{3}\leq s$.
Hence, we get
\begin{align*}
r\cdot w(S_{1}, S_{2}) \leq 3\cdot s \cdot \sum_{(u,v) \in S(E)} \ftE{(2/3)\cdot r}{\TT^{*}}{(u,v)},
\end{align*}
as claimed.
\end{proof}

\begin{corollary}
\label{cor:split-cost-charge-balanced-min}
Using the assumptions of~\Cref{clm:split-cost-charge-balanced-min}:
$$
r\cdot w(S_{1}, S_{2}) \leq 3\cdot \sum_{t=|S_2|+1}^{|S_1|+|S_2|} \sum_{(u,v) \in S(E)} \ftE{(2/3)\cdot t}{\TT^{*}}{(u,v)}
$$
\end{corollary}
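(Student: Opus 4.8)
Looking at this corollary, I need to derive a bound on $r \cdot w(S_1, S_2)$ by summing footprints over a range of sizes $t$ from $|S_2|+1$ to $|S_1|+|S_2|$, starting from the single-size bound in Lemma \ref{clm:split-cost-charge-balanced-min}.

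The plan is straightforward: Lemma \ref{clm:split-cost-charge-balanced-min} already gives us, for $r = |S|$ and $s = |S_1|$,
\[
r \cdot w(S_1, S_2) \leq 3 \cdot s \cdot \sum_{(u,v) \in S(E)} \ftE{(2/3) \cdot r}{\TT^*}{(u,v)}.
\]
The idea is to observe that the range of $t$ values from $|S_2|+1$ to $|S_1|+|S_2| = r$ has exactly $|S_1| = s$ terms. So if I can show that for each such $t$, the inner sum $\sum_{(u,v) \in S(E)} \ftE{(2/3) \cdot t}{\TT^*}{(u,v)}$ is at least $\sum_{(u,v) \in S(E)} \ftE{(2/3) \cdot r}{\TT^*}{(u,v)}$, then summing over the $s$ values of $t$ in that range gives at least $s$ copies of the footprint sum at level $r$, which recovers the right-hand side of the lemma (up to the factor of $3$).

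The key monotonicity fact I would invoke is that the footprint $\ftE{t}{\TT^*}{(u,v)}$ is non-increasing in $t$: as the threshold size $t$ grows, it becomes easier to find a cluster $C$ of size at most $t$ separating $u$ and $v$, so the footprint can only drop from $w(u,v)$ to $0$, never increase. (This is implicit in Definition \ref{def:edge-footprint} and the discussion preceding Lemma \ref{clm:cost-to-footprint}.) Since every $t$ in the range $\{|S_2|+1, \ldots, |S_1|+|S_2|\}$ satisfies $t \leq |S_1| + |S_2| = r$, we get $(2/3) \cdot t \leq (2/3) \cdot r$, hence $\ftE{(2/3)\cdot t}{\TT^*}{(u,v)} \geq \ftE{(2/3)\cdot r}{\TT^*}{(u,v)}$ for each edge $(u,v)$. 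Summing this over the $s$ values of $t$:
\[
\sum_{t=|S_2|+1}^{|S_1|+|S_2|} \sum_{(u,v) \in S(E)} \ftE{(2/3)\cdot t}{\TT^*}{(u,v)} \geq s \cdot \sum_{(u,v) \in S(E)} \ftE{(2/3)\cdot r}{\TT^*}{(u,v)}.
\]
Multiplying by $3$ and chaining with Lemma \ref{clm:split-cost-charge-balanced-min} finishes the proof.

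I don't anticipate a serious obstacle here — this is essentially a repackaging step that converts a "pointwise at level $r$" bound into a "telescoping-friendly, summed over a contiguous range" bound, which is exactly the form needed to later sum over all internal nodes of $\TT$ and relate $\cost(\TT)$ to $\OPT(G)$ via Lemma \ref{clm:cost-to-footprint}. The only minor point of care is making sure the count of terms in $\{|S_2|+1, \ldots, |S_1| + |S_2|\}$ is exactly $|S_1| = s$ and that all these $t$ are indeed $\leq r$, both of which are immediate. If the paper wants to be fully explicit about footprint monotonicity, I would state it as a one-line observation before the main calculation.
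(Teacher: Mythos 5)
Your proof is correct and is exactly the argument the paper gives: the range $\{|S_2|+1,\ldots,|S_1|+|S_2|\}$ contains $s=|S_1|$ values of $t$, each with $(2/3)t \le (2/3)r$, so footprint monotonicity ($\ftE{i}{\TT^{*}}{(u,v)}\leq \ftE{i-1}{\TT^{*}}{(u,v)}$) lower-bounds the double sum by $s$ copies of the level-$(2/3)r$ sum, and chaining with Lemma~\ref{clm:split-cost-charge-balanced-min} finishes. No gaps.
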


\begin{proof}
This follows directly from \Cref{clm:split-cost-charge-balanced-min} and since for any $i$, $\ftE{i}{\TT^{*}}{(u,v)}\leq \ftE{i-1}{\TT^{*}}{(u,v)}$.
\end{proof}

\begin{lemma}
\label{clm:cluster-disjoint}
Let $\TT$ be a tree obtained by recursively applying $\frac13$-balanced minimum cut.
Consider the sum
$$
\sum_{\substack{\\ \\ \text{$(S_1,S_2):= \cut(\TT[w])$ for} \\ \text{internal nodes $w$ of $\TT$}}}  \sum_{t=|S_2|+1}^{|S_1|+|S_2|} \sum_{(u,v) \in (S_1 \cup S_2)(E)} F(t, u, v),
$$
Then, for any $0 \leq t' \leq n$, and any  $u',v'$, the term $F(t', u', v')$ appears at most once in the sum.
\end{lemma}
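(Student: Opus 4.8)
The statement asserts that in the sum over internal nodes $w$ of $\TT$ of $\sum_{t=|S_2|+1}^{|S_1|+|S_2|}\sum_{(u,v)\in(S_1\cup S_2)(E)} F(t,u,v)$, each fixed triple $(t',u',v')$ contributes at most once. The plan is to identify, for a fixed edge $(u',v')$ and a fixed size $t'$, which internal nodes $w$ could possibly contribute the term $F(t',u',v')$, and show there is at most one such node.

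\emph{First}, I would observe that for the term $F(t',u',v')$ to even appear in the inner double sum at node $w$ with $\cut(\TT[w])=(S_1,S_2)$, we need two things: (i) the edge $(u',v')$ lies in $(S_1\cup S_2)(E)$, i.e. both $u'$ and $v'$ are leaves of $\TT[w]$ (equivalently $w$ is an ancestor of the lowest common ancestor $u'\vee v'$, or $w = u'\vee v'$ itself); and (ii) $|S_2|+1 \le t' \le |S_1|+|S_2| = |\leaves(\TT[w])|$. So among all ancestors of $u'\vee v'$ in $\TT$ (a single root-to-node path), I must show at most one has its cluster size $r_w := |\leaves(\TT[w])|$ and its larger child size $|S_2^w|$ satisfying $|S_2^w| < t' \le r_w$.

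\emph{Second}, the key structural fact: along the path from $u'\vee v'$ up to the root, the cluster sizes $r_w$ are strictly increasing, and — crucially — for a node $w$ with child $w'$ also on this path, the larger part $S_2^w$ of the cut at $w$ has size $|S_2^w| \ge r_{w'}$ (since one of the two children of $w$ is $w'$, and $S_2^w$ is the \emph{larger} of the two children's leaf-sets). Hence the half-open intervals $(|S_2^w|, r_w]$ for consecutive nodes $w' \prec w$ on the path satisfy: the interval for $w$ starts at $|S_2^w| \ge r_{w'} \ge$ (the right endpoint of the interval for $w'$). Therefore these intervals are pairwise disjoint (consecutive ones abut or are separated; non-consecutive ones are separated by the monotonicity of $r_w$). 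A fixed $t'$ lies in at most one of them, which pins down at most one node $w$, and for that $w$ the term $F(t',u',v')$ is added exactly once (the innermost sum over edges lists $(u',v')$ once). This gives the claim.

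\emph{The main obstacle} is making the interval-disjointness argument airtight, in particular handling the boundary case where $w'$ is the \emph{smaller} child of $w$ (so $|S_2^w| = r_{w''}$ for the \emph{other} child $w''$, which may be smaller or larger than $r_{w'}$) versus the larger child, and confirming that in either case $|S_2^w| \ge r_{w'}$ still holds because $S_2^w$ is defined as the larger-size child — so $|S_2^w| \ge \max(r_{w'}, r_{w''}) \ge r_{w'}$ regardless. I would also double-check the edge-case $w = u'\vee v'$: here $u'$ and $v'$ lie in different children, so both children's leaf-sets meet $\{u',v'\}$, the edge is still in $(S_1\cup S_2)(E)$, and the interval $(|S_2|,r]$ at this bottom node is the "first" interval on the path; the same disjointness reasoning applies. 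Finally I would note the sum ranges only over internal nodes, so leaves never contribute, which is consistent since $F(t,u,v)$ with $u\ne v$ is associated with cutting them apart. Assembling these observations yields the lemma.
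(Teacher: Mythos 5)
Your proposal is correct and uses essentially the same argument as the paper: the paper also reduces to two nested nodes $w' \prec w$ and observes that $|\leaves(\TT[w'])| \le |S_2^w|$ (a descendant's leaf-set sits inside one child of $w$, hence inside a set of size at most the larger child), so the $t$-range $[|S_2|+1, |S_1|+|S_2|]$ at the ancestor starts strictly above where the descendant's range ends. Your reorganization around the fixed edge's ancestor path is only a cosmetic difference.
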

\begin{proof}
Clearly, any possible overlap in the terms can only come from two nodes $w \neq w'$, such that $S = \leaves(\TT[w])$, $S' = \leaves(\TT[w'])$ and $S \cap S' \neq \emptyset$.
WLOG we can assume that $w$ is an ancestor of $w'$.

Denote $(S_1, S_2) := \cut(\TT[w])$, where $|S_1| \leq |S_2|$.
Since $S'$ is either a subset of $S_1$ or $S_2$, we have $|S'| \leq |S_2|$.
But then, the largest index $t$ we can obtain when we consider all summands corresponding to $w'$ is $|S_2|$.
However, the smallest index $t$ corresponding to $w$ is $|S_2|+1$.
\end{proof}

\begin{proof}[Proof of \Cref{thm:beta-exist}]
With \Cref{clm:cost-to-footprint}, \Cref{cor:split-cost-charge-balanced-min} and \Cref{clm:cluster-disjoint} in our hands, now we can establish the approximation ratio of $\TT$ that is obtained by recursive $\frac13$-balanced min-cut. 

\begin{align*}
\hspace{-1cm}\sum_{\substack{\\ \\ \text{$(S_1,S_2):= \cut(\TT[u])$ for} \\ \text{internal nodes $u$ of $\TT$}}} \hspace{-1cm} \!\!\ w(S_{1}, S_{2})\cdot r &\leq 3\cdot \hspace{-1cm}\sum_{\substack{\\ \\ \text{$(S_1,S_2):= \cut(\TT[u])$ for} \\ \text{internal nodes $u$ of $\TT$}}} \sum_{t=|S_2|+1}^{|S_1|+|S_2|} \sum_{(u,v) \in S(E)} \ftE{(2/3)\cdot t}{\TT^{*}}{(u,v)} \tag{By \Cref{cor:split-cost-charge-balanced-min} }\\
&= 3\cdot \sum_{t=0}^{n} \sum_{(u,v) \in E} \ftE{(2/3)\cdot t}{\TT^{*}}{(u,v)} \tag{By \Cref{clm:cluster-disjoint} and the disjointness}\\
&\leq 9\cdot \cost(\TT^{*}). \tag{By \Cref{clm:cost-to-footprint}} \qed
\end{align*}

\end{proof}

\begin{proof}[Proof of \Cref{prop:beta-compute}]
The polynomial-time algorithm is to \emph{recursively apply the $O(\sqrt{\log{n}})$ approximation algorithm for balanced minimum cuts on the subgraphs of $G$}. Suppose $S_{1}$ and $S_{2}$ are obtained by applying the $O(\sqrt{\log{n}})$-approximation of the balanced minimum cut, by changing the line in \Cref{equ:balanced-min-charge}, we have
\begin{align}
w(S_{1}, S_{2}) &\leq O(\sqrt{\log{n}})\cdot w\paren{\cup_{i \in L} S^{\star}_{i})\cap S, (\cup_{i \in R} S^{\star}_{i})\cap S} \tag{by $O(\sqrt{\log{n}})$-approximation of balanced minimum cut}\\
&\leq O(\sqrt{\log{n}})\cdot \sum_{(u,v) \in S(E)} \ftE{(2/3)\cdot r}{\TT^{*}}{(u,v)}. \label{equ:balanced-min-approx-charge}
\end{align}
As such, for a tree $\TT$ obtained by recursive $O(\sqrt{\log{n}})$ approximation of the balanced minimum cut, 
\begin{align*}
\hspace{-1cm}\sum_{\substack{\\ \\ \text{$(S_1,S_2):= \cut(\TT[u])$ for} \\ \text{internal nodes $u$ of $\TT$}}} \hspace{-1cm} \!\!\ w(S_{1}, S_{2})\cdot r &\leq O(\sqrt{\log{n}})\cdot \hspace{-1cm}\sum_{\substack{\\ \\ \text{$(S_1,S_2):= \cut(\TT[u])$ for} \\ \text{internal nodes $u$ of $\TT$}}} \sum_{t=|S_2|+1}^{|S_1|+|S_2|} \sum_{(u,v) \in S(E)} \ftE{(2/3)\cdot t}{\TT^{*}}{(u,v)} \tag{By \Cref{equ:balanced-min-approx-charge} and the fact that $\ftE{i}{\TT^{*}}{(u,v)}\leq \ftE{i-1}{\TT^{*}}{(u,v)}$}\\
&= O(\sqrt{\log{n}})\cdot \sum_{t=0}^{n} \sum_{(u,v) \in E} \ftE{(2/3)\cdot t}{\TT^{*}}{(u,v)} \tag{By \Cref{clm:cluster-disjoint} and the disjointness}\\
&\leq O(\sqrt{\log{n}})\cdot \cost(\TT^{*}). \tag{By \Cref{clm:cost-to-footprint}}
\end{align*}
We now analyze the time complexity. Note that each approximate balanced minimum cut takes polynomial time. Furthermore, there are at most polynomially-many nodes in a HC-tree since there are at most $n$ leaves. Therefore, the algorithm runs in polynomial time.
\end{proof}

\section{A Weaker Version of \Cref{thm:beta-exist}}
\label{app:weaker-beta-tree}

In this section, we present a weaker version of \Cref{thm:beta-exist} with an $O(\log n)$ approximation factor. The value of the weaker version is that $(i)$ the proof is much simpler; and $(ii)$ it gives some results on binary tree analysis in addition to the edge cost charging as we used in \Cref{thm:beta-exist}, which may be of independent interests.

The formal statement of the weaker result is as follows.

\begin{proposition}
\label{prop:beta-exist-weak}
	For any graph $G=(V,E,w)$, there exists a $(1/3)$-balanced tree $\TT_{balanced}$ such that 
	\[
	\cost_G(\TT_{balanced}) \leq O(\log n)\cdot \OPT(G).
	\]
\end{proposition}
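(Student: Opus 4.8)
The plan is to prove \Cref{prop:beta-exist-weak} by a direct recursive construction together with a recursion-tree charging argument, in the spirit of Dasgupta's original analysis but using \emph{balanced} cuts to keep the tree balanced. First I would define $\TT_{balanced}$ to be the tree obtained by the recursive $(1/3)$-balanced min-cut procedure of \Cref{def:beta-min-procedure}: at the root, take a minimum-weight $(1/3)$-balanced cut $(S_1,S_2)$ of $G$, recurse on the induced subgraphs $G[S_1]$ and $G[S_2]$, and stop at singletons. By construction this is a $(1/3)$-balanced tree in the sense of \Cref{def:beta-tree}, so the only thing to prove is the cost bound.

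The key step is a lower bound on $\OPT(G)$ in terms of the weights of balanced minimum cuts. Using the cut formulation \Cref{eq:hc-cost-cut}, for \emph{any} HC-tree $\TT^*$ and any subgraph $G[S]$ with $|S|=r$ appearing as $\leaves(\TT^*[z])$ for some node $z$, I would observe that the two children of $z$ in $\TT^*$ together with a greedy grouping of the maximal sub-clusters of size $\le \tfrac23 r$ yield a $(1/3)$-balanced cut of $G[S]$ (this is exactly the grouping observation used in \Cref{clm:split-cost-charge-balanced-min}: a collection of positive reals summing to $1$, each at most $\tfrac23$, can be partitioned into two groups each of total mass in $[\tfrac13,\tfrac23]$). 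Hence the weight of the balanced minimum cut on $G[S]$ is at most the weight of cut edges charged by $\TT^*$ at the corresponding levels. Summing this over all nodes of $\TT_{balanced}$, and using \Cref{obs:partition} (so that $\OPT$ of induced subgraphs at the same depth sum to at most $\OPT(G)$) together with the fact that $\TT_{balanced}$ has depth $O(\log n)$ because every balanced cut shrinks the larger side by a constant factor, gives $\cost_G(\TT_{balanced}) = \sum_{z} w(\cut(\TT_{balanced}[z])) \cdot |\leaves(\TT_{balanced}[z])| \le \sum_{\text{depth } d} O(\OPT(G)) \le O(\log n)\cdot \OPT(G)$.

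Concretely the steps in order are: (i) show $\TT_{balanced}$ is well-defined and $(1/3)$-balanced and has depth $O(\log n)$; (ii) prove the grouping lemma and deduce that for each node $z$ at depth $d$, $w(\cut(\TT_{balanced}[z])) \le \OPT(G[\leaves(\TT_{balanced}[z])])$ divided by the cluster size, or more directly bound the contribution of all nodes at a fixed depth $d$ by $\OPT(G)$ using \Cref{obs:partition}; (iii) sum over the $O(\log n)$ depths. An alternative, even simpler route: for each internal node $z$ with cut $(S_1,S_2)$ and $|S_1\cup S_2| = r$, bound $w(S_1,S_2)\cdot r$ by $O(1)\cdot \OPT(G[S_1\cup S_2])$ using the grouping argument and the observation that in $\TT^*$ the edges of this balanced cut each pay a factor $\ge \tfrac13 r$; then sum over nodes grouped by depth, where each depth contributes $\sum_z \OPT(G[\leaves(\TT_{balanced}[z])]) \le \OPT(G)$ by \Cref{obs:partition}.

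I expect the main obstacle to be step (ii): making the charging of $w(S_1,S_2)\cdot r$ against $\OPT(G[S])$ precise without double-counting, since the natural argument compares $\TT_{balanced}$ against the \emph{optimal} tree on each induced subgraph, and one must ensure these induced-subgraph optima telescope correctly through \Cref{obs:partition} rather than being summed naively across depths (which would give an extra $\log n$). The balanced-cut grouping lemma itself is a short exercise, and the depth bound is immediate from $\max\{|S_1|,|S_2|\}\le \tfrac23 r$; the care is entirely in organizing the charging so that each depth costs only $O(\OPT(G))$ and the $O(\log n)$ factor comes solely from the number of depths.
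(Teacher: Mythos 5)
Your proposal is correct and follows essentially the same route as the paper's proof in \Cref{app:weaker-beta-tree}: build $\TT_{balanced}$ by recursive $(1/3)$-balanced minimum cuts, lower-bound $\OPT$ of each induced subgraph by (a constant times) the cluster size times its balanced min-cut weight (the paper's \Cref{lem:balanced-cut}), charge each depth against $\OPT(G)$ via \Cref{obs:partition}, and multiply by the $O(\log n)$ depth. The only cosmetic difference is that you derive the balanced-cut lower bound via the grouping-of-maximal-subclusters observation, whereas the paper uses the simpler binary-tree separator-node argument; both yield the same lemma.
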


We first use the balanced minimum cut problem to lower bound the cost of optimum solution. 

\begin{lemma}\label{lem:balanced-cut}
	For any graph $G$ and any tree HC-tree $\TT$, 
	\begin{align*}
		\cost_G(\TT) \geq n/3 \cdot \min_{S,\bar{S} \subseteq V} \,\, w(S,\bar{S}) \quad \text{s.t.} \quad n/3 \leq \card{S},\card{\bar{S}} \leq 2n/3. 
	\end{align*}
\end{lemma}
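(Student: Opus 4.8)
## Proof Proposal for Lemma~\ref{lem:balanced-cut}

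The plan is to exploit the cut-based reformulation of the HC cost in~\Cref{eq:hc-cost-cut}, namely that $\cost_G(\TT) = \sum_{z} w(A_z,B_z)\cdot|A_z\cup B_z|$ where $(A_z,B_z) = \cut(\TT[z])$ ranges over internal nodes of $\TT$. The key observation is that, starting from the root and repeatedly descending into the child with the larger leaf-set, we can locate a single internal node $z^\star$ whose associated cut $(A_{z^\star},B_{z^\star})$ is ``balanced enough'' in the required sense, and that this one node already contributes enough to the cost.

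Concretely, first I would walk down the tree: let $z_0$ be the root (with $|{\rm leaves}(\TT[z_0])| = n \geq n/3$), and having chosen $z_i$ with $|{\rm leaves}(\TT[z_i])| \geq n/3$, move to the child $z_{i+1}$ with the larger leaf-set. Since the leaf-set sizes strictly decrease along this path but each step at most roughly halves the size (the larger child always has at least half the leaves of its parent, and a leaf node has size $1 < n/3$ for $n\ge 4$), there is a last node $z^\star$ on this path with $|{\rm leaves}(\TT[z^\star])| \geq n/3$; its two children then have leaf-sets of sizes $s_1 \geq s_2$ with $s_1 + s_2 = |{\rm leaves}(\TT[z^\star])|$, where $s_1 < n/3$ would force the parent to have size $<2n/3$ — wait, more carefully: we need $s_1 \le 2n/3$. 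Since we picked $z^\star$ as the last node of size $\ge n/3$, its larger child has size $< n/3 \le 2n/3$; and the node $z^\star$ itself has size $\le 2\cdot(\text{size of larger child of its parent})$... The clean way is: take $z^\star$ to be any internal node with $n/3 \le |{\rm leaves}(\TT[z^\star])| \le 2n/3$, which exists by the descent argument since sizes drop from $n$ to $1$ and never shrink by more than a factor $2$ when following the larger child (hence cannot ``jump over'' the interval $[n/3,2n/3]$). Set $S := {\rm leaves}(\TT[z^\star])$ and $\bar S := V\setminus S$; then $n/3 \le |S| \le 2n/3$, so $(S,\bar S)$ is a feasible cut in the minimization, giving $w(S,\bar S) \ge \min_{n/3\le|S'|\le 2n/3} w(S',\bar{S'})$.

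The second step is to lower-bound $\cost_G(\TT)$ by the single term coming from $z^\star$. From~\Cref{eq:hc-cost-cut}, since all summands are nonnegative, $\cost_G(\TT) \ge w(A_{z^\star},B_{z^\star})\cdot|A_{z^\star}\cup B_{z^\star}| = w(S,\bar{S}\cap\cdot)$... here one must be slightly careful: $\cut(\TT[z^\star]) = (A,B)$ is the partition of $S$ itself into the two child leaf-sets, not the cut $(S,\bar S)$. That contributes $w_G(A,B)\cdot|S|$, which is not directly what we want. The right fix is to instead descend in the tree until the current node's leaf-set $S$ first drops into $[n/3, 2n/3]$, and observe that the cut $(S,\bar S)$ equals $\cut(\TT[z^{\star\star}])$ for the \emph{parent} $z^{\star\star}$ of $z^\star$ — no, that is also not literally a single term. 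The genuinely correct approach: $\cost_G(\TT) = \sum_{e=(u,v)} w(e)\cdot|{\rm leaves}(\TT[u\vee v])|$ and for every edge $e$ with one endpoint in $S$ and one in $\bar S$, the LCA $u\vee v$ is an ancestor of $z^\star$ (strictly), so ${\rm leaves}(\TT[u\vee v]) \supseteq S \cup \{$something$\}$, in particular $|{\rm leaves}(\TT[u\vee v])| \ge |S| \ge n/3$; wait we also need to cover edges inside $S$. So in fact $\cost_G(\TT) \ge \sum_{e\in\delta(S,\bar S)} w(e)\cdot n/3 = (n/3)\, w(S,\bar S) \ge (n/3)\min_{S'} w(S',\bar{S'})$, which is exactly the claim.

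The main obstacle is getting the descent argument and the feasibility sizes exactly right: I must ensure the node $z^\star$ with ${\rm leaves}(\TT[z^\star])$ landing in $[n/3,2n/3]$ genuinely exists (handling parity and small $n$, possibly by allowing the closed interval and noting the larger-child rule gives size at least half the parent), and then argue cleanly that \emph{every} edge crossing $(S,\bar S)$ has its LCA strictly above $z^\star$ and hence subtree-size at least $|S|\ge n/3$. Once that structural fact is pinned down, the inequality $\cost_G(\TT)\ge (n/3)\,w(S,\bar S)$ and hence the lemma follow immediately; no real computation is involved.
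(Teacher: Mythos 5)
Your proposal is correct and follows essentially the same route as the paper: locate an internal node whose leaf-set has size in $[n/3,2n/3]$ via the standard binary-tree separator/descent argument, take $S$ to be that leaf-set, and observe that every edge crossing $(S,\bar S)$ has its LCA strictly above that node, so each such edge pays a multiplier of at least $|S|\ge n/3$ in \Cref{eq:hc-cost}. Despite the false starts in your write-up, the final argument you settle on is exactly the paper's proof (the paper phrases the last step via the cut at the parent node, but the content is identical).
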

\begin{proof}
	Since $\TT$ is a binary tree with $n$ leaf-nodes, there should exists a node $u$ in $\TT$ with 
	\[
	n/3 \leq \card{\leaves(\TT[u])} \leq 2n/3; 
	\] 
	(the proof is a standard vertex separator argument for binary trees). Let us fix that node $u$ and consider the node $w$ as the parent of $u$ in $\TT$; let $(A,B) = \cut(\TT[w])$ with $A$ being the side of cut assigned to $u$, i.e., $A = \leaves(\TT[u])$.
	
	Now consider the cut $(A,\bar{A})$ which is a global cut of $G$. For any edge $(x,y)$ of this cut, $x \vee y$ is either $w$ or some node on the path from the root to $w$. This, combined with~\Cref{eq:hc-cost}, implies that
	\[
		\cost_G(\TT) \geq w(A,\bar{A}) \cdot \card{A \cup B} \geq w(A,\bar{A}) \cdot n/3,
	\]
	as $\card{A} \geq n/3$. Moreover, the cut $(A,\bar{A})$ satisfies the property that $n/3 \leq \card{A},\card{\bar{A}} \leq 2n/3$. As such, the minimum in RHS of the lemma statement is at most $w(A,\bar{A})$, 
	which implies the lemma. 
\end{proof}

\begin{proof}[Proof of~\Cref{prop:beta-exist-weak}]
	Consider the following process for constructing $\TT_{balanced}$: 
	\begin{enumerate}[label=$(\roman*)$]
		\item Pick a cut $(S,\bar{S})$ of $G$ minimizing $w(S,\bar{S})$ subject to $n/3 \leq \card{S},\card{\bar{S}} \leq 2n/3$. Let the root $r$ of $\TT_{balanced}$ be such that $\cut(\TT_{balanced}[r]) = (S,\bar{S})$ (this uniquely identifies the root). 
		\item Let $G_{S}$ and $G_{\bar{S}}$ be the induced subgraphs of $G$ on $S$ and $\bar{S}$, respectively. Recursively run the same process for $G_S$ and $G_{\bar{S}}$ and let the root of their corresponding trees be the left-child and 
		right-child node of $r$, respectively (the base case is when the sets have size $1$ in which case they form leaf-nodes of $\TT_{balanced}$).  
	\end{enumerate}

	It is clear that $\TT_{balanced}$ is valid HC-tree for $G$ and that it is $(1/3)$-balanced by~\Cref{def:beta-tree}, simply by the ``splitting rule'' of part $(i)$. Moreover, $\TT_{balanced}$ being $(1/3)$-balanced implies that
	the depth of this tree is $O(\log{n})$, which we will use in proving the upper bound on the cost of the tree. 
	
	Consider all nodes $u_1,\ldots,u_t$ at some depth $d$ of the tree (for some $d=O(\log{n})$). For $i \in [t]$, let $G_i$ be the induced subgraph of $G$ on vertices in $\leaves(\TT[u_i])$ and $(S_i,\bar{S_i})$ be the cut 
	chosen for this node in the process above. 
	By~\Cref{obs:partition}, 
	\[
		\sum_{i=1}^{t} \OPT(G_i) \leq \OPT(G). 
	\] 
	At the same time, by~\Cref{lem:balanced-cut}, for every $i \in [t]$, 
	\[
		\OPT(G_i) \geq 1/3 \cdot \card{S_i \cup \bar{S_i}} \cdot w(S_i,\bar{S_i}), 
	\]
	which, together with the previous bound, implies that 
	\[
		\sum_{i=1}^{t} \card{S_i \cup \bar{S_i}} \cdot w(S_i,\bar{S_i}) \leq 3 \cdot \OPT(G). 
	\]
	Finally, by combining this with~\Cref{eq:hc-cost-cut}, we have that, 
	\begin{align*}
		\cost_G(\TT_{balanced}) = \sum_{d=1}^{O(\log{n})} \sum_{i=1}^{t_d} \card{S_i \cup \bar{S_i}} \cdot w(S_i,\bar{S_i}) \leq \sum_{d=1}^{O(\log{n})} 3 \cdot \OPT(G) = O(\log{n}) \cdot \OPT(G),
	\end{align*}
	as the  depth of the tree is $O(\log{n})$. This concludes the proof. 
\end{proof}

\end{document}